\definecolor{systemcolor}{HTML}{A9A9A9}
\definecolor{usercolor}{HTML}{2980B9}
\definecolor{assistantcolor}{HTML}{27AE60}
\newtcolorbox{systemmessagebox}{
  colback=systemcolor!10, colframe=systemcolor,
  sharp corners, boxrule=1pt, left=5pt, right=5pt, top=5pt, bottom=5pt,
  fonttitle=\bfseries, title=System
}
\newtcolorbox{usermessagebox}{
  colback=usercolor!10, colframe=usercolor,
  sharp corners, boxrule=1pt, left=5pt, right=5pt, top=5pt, bottom=5pt,
  fonttitle=\bfseries, title=User
}
\newtcolorbox{assistantmessagebox}{
  colback=assistantcolor!10, colframe=assistantcolor,
  sharp corners, boxrule=1pt, left=5pt, right=5pt, top=5pt, bottom=5pt,
  fonttitle=\bfseries, title=Assistant
}
\def\R{\mathbb{R}}
\def\N{\mathbb{N}}
\def\STOP{\texttt{STOP}}
\newtheorem{theorem}{Theorem}[section]
\newtheorem{lemma}[theorem]{Lemma}
\newtheorem{proposition}[theorem]{Proposition}
\theoremstyle{definition}
\newtheorem{definition}[theorem]{Definition}
\def\gemma{\textsf{gemma2}}
\def\llamaone{\textsf{llama3.1}}
\def\llamatwo{\textsf{llama3.2}}
\def\mistral{\textsf{mistral}}
\def\nemotron{\textsf{nemotron}}
\def\phithree{\textsf{phi3.5}}
\def\qwen{\textsf{qwen2.5}}
\def\cogview{\textsf{cogview4}}
\def\flux{\textsf{FLUX.1-dev}}
\def\pixart{\textsf{pixart-sigma}}
\def\sd{\textsf{stable-diffusion3.5}}
\def\lpips{\textsf{LPIPS}}
\def\dreamsim{\textsf{DreamSim}}
\def\game{\mathscr{G}}
\def\eqset{\mathscr{E}}
\newcommand\eqsetn[1]{\mathscr{E}(#1, \bd, s_\gamma)}
\newcommand{\WI}{\mathsf{WI}}
\newcommand{\WIargs}[2]{\WI(#1 \, \| \, #2)}
\newcommand{\WIavgargs}[2]{\WI^{\text{avg}}(#1, #2)}
\def\dTV{d_{\text{TV}}}
\newcommand{\Var}{\mathrm{Var}}
\def\taueq{\tau_{\textsc{eq}}}
\def\tauopt{\tau_{\textsc{opt}}}
\def\popt{\bp_{\textsc{opt}}}
\def\tpopt{\tilde{\bp}_{\textsc{opt}}}
\def\peq{\bp_{\textsc{eq}}}
\def\dgam{\,d \gamma}
\newcommand{\dpat}[1]{\Big|_{p=#1}}
\def\maxm{\overline{\mathcal{M}}}
\def\minm{\underline{\mathcal{M}}}
\def\sup{\mathcal{S}_{\scalebox{0.4}{$\nearrow$}}}
\def\sdown{\mathcal{S}_{\scalebox{0.4}{$\searrow$}}}
\def\sall{\mathcal{S}}
\newcommand{\pin}[1]{\pi^{(#1)}}
\def\mc{\mathcal}
\def\bp{\mathbf{p}}
\def\bplim{\bp^{\dagger}(\gamma)}
\def\bq{\mathbf{q}}
\def\bP{\mathbf{P}}
\def\bPp{\mathbf{P}^+}
\def\tbp{\mathbf{\hat{p}}}
\newcommand{\bpn}[1]{\bp^{(#1)}}
\newcommand{\bPn}[1]{\bP^{(#1)}}
\newcommand{\bqn}[1]{\bq^{(#1)}}
\def\gz{\gamma_0}
\def\go{\gamma_1}
\newcommand{\bps}{\bp^*}
\newcommand{\bpg}{\bp^{(\gz)}}
\newcommand{\bpgp}{\bp^{(\go)}}
\newcommand{\pkn}[1]{\bp_k^{(#1)}}
\newcommand{\pn}[1]{\bp^{(#1)}}
\newcommand{\qkn}[1]{\bq_k^{(#1)}}
\newcommand{\qkpn}[1]{\bq_{k'}^{(#1)}}
\newcommand{\qn}[1]{\bq^{(#1)}}
\def\Cn{C^{(n)}}
\def\ckn{c_k^{(n)}}
\def\ckpn{c_{k'}^{(n)}}
\def\dkn{\delta_k^{(n)}}
\def\bd{\mathbf{d}}
\def\ba{\mathbf{a}}
\def\tbd{\mathbf{\tilde{d}}}
\newcommand\barpn[1]{\bar{\bp}^{(#1)}}
\def\du{\bar{u}}
\def\dw{\bar{w}}
\def\df{\bar{f}}
\def\ddf{\bar{\bar{f}}}
\def\dg{\bar{g}}
\def\dz{\bar{z}}
\def\ddg{\bar{\bar{g}}}
\def\ddu{\bar{\bar{u}}}
\def\given{\; | \;}
\newtheorem{assumption}{Assumption}
\Crefname{assumption}{Assumption}{Assumptions}
\newtheorem{hypothesis}{Hypothesis}
\Crefname{hypothesis}{Hypothesis}{Hypotheses}
\newcommand{\p}[1]{\left( #1 \right)}
\renewcommand{\b}[1]{\left[ #1 \right]}
\newcommand{\E}[1]{\mathbb{E} \b{#1}}
\newcommand{\EE}[2]{\mathbb{E}_{#1} \b{#2}}
\newcommand{\ind}[1]{\mathbbm{1}\b{#1}}
\newcommand\numberthis{\addtocounter{equation}{1}\tag{\theequation}}
\DeclareMathOperator{\avg}{avg}
\DeclareMathOperator*{\argmax}{arg\,max}
\DeclareMathOperator*{\argmin}{arg\,min}
\title{Competition and Diversity in Generative AI}
\author{Manish Raghavan\thanks{MIT Sloan School of Management \& Department of
Electrical Engineering and Computer Science. \texttt{mragh@mit.edu}}}
\date{}
\newenvironment{acks}
{
  \paragraph*{Acknowledgements.}  
}
{}
\begin{document}
\maketitle
\begin{abstract}

  Recent evidence, both in the lab and in the wild, suggests that the use of generative artificial intelligence reduces the diversity of content produced. The use of the same or similar AI models appears to lead to more homogeneous behavior. Our work begins with the observation that there is a force pushing in the opposite direction: competition. When producers compete with one another (e.g., for customers or attention), they are incentivized to create novel or unique content. We explore the impact competition has on both content diversity and overall social welfare. Through a formal game-theoretic model, we show that competitive markets select for diverse AI models, mitigating monoculture. We further show that a generative AI model that performs well in isolation (i.e., according to a benchmark) may fail to provide value in a competitive market. Our results highlight the importance of evaluating generative AI models across the breadth of their output distributions, particularly when they will be deployed in competitive environments. We validate our results empirically by using language models to play Scattergories, a word game in which players are rewarded for answers that are both correct and unique. Overall, our results suggest that homogenization due to generative AI is unlikely to persist in competitive markets, and instead, competition in downstream markets may drive diversification in AI model development.

\end{abstract}

\section{Introduction}
\label{sec:intro}

A growing body of literature on generative artificial intelligence reveals a
surprisingly consistent stylized fact: when people use generative AI tools, the
set of content they produce tends to be more homogeneous than content produced
by more traditional means~\citep{padmakumar2023does, liang2024monitoring,
anderson2024homogenization, zhang2024generative, zhou2023generative,
doshi2024generative, kirk2023understanding, park2024diminished,
shur2024growing}. Across a wide range of domains including peer
review~\citep{liang2024monitoring}, writing~\citep{padmakumar2023does}, digital
art~\citep{zhou2023generative}, and survey
responses~\citep{zhang2024generative}, access to generative AI tools (GAITs)
leads to less diverse outcomes. Researchers refer to this phenomenon---where the
use of similar or identical underlying AI tools leads to convergence in
outcomes---as \textit{algorithmic monoculture}~\citep{kleinberg2021algorithmic}
or \textit{homogenization}~\citep{bommasani2022picking}.

Much of the empirical evidence on homogenization via generative AI comes from
lab settings, or from relatively recent AI adoption. Should we expect
homogenization to persist over time in the real world? While monoculture indeed
pushes towards homogenization, there is a force pushing in the other direction:
\textit{competition}. In applications like marketing, creative writing, and art,
users of generative AI tools compete with one another for attention and market
share. The primary intuition for our work can be summarized as follows:
\textbf{In markets that reward novelty, competition will mitigate
homogenization and promote diversity.}

In this paper, we formalize this statement and explore its implications. We make
three broad contributions. First, we propose a formal model of algorithmic
monoculture via generative AI in \Cref{sec:model}. Our model captures the idea
that two users of the same generative AI tool are likely to get similar
responses or information.
Notably, the existing theoretical literature on
algorithmic monoculture tends to instantiate it in terms of correlated
errors~\citep{kleinberg2021algorithmic, peng2024wisdom, peng2023monoculture,
jo2025homogeneous}, making it clear how correlated or identical
algorithms induce similar behavior. The same is not obviously true for generative
AI; users have far more degrees of freedom in the inputs they provide, and it is
not clear why two users of the same GAIT should get ``similar'' outcomes.

Second, under this framework, we build and analyze a formal game-theoretic model
of competition via generative AI in \Cref{sec:model,sec:theory,sec:inter-tool}.
At a high level, we formally prove that our intuition holds: stronger
competition leads to more diversity
(\Cref{thm:diversity-n,thm:diversity-gamma}), but equilibria are less diverse
than socially optimal behavior (\Cref{thm:diversity-opt}). Interestingly, we
find that the quality of an AI tool is not one-dimensional; some tools may
perform better in low-competition environments, while others might perform
better in the presence of strong competition (\Cref{lem:dominant-strength}).

We empirically validate our model in \Cref{sec:empirical}.\footnote{Our code is
available at \url{https://github.com/mraghavan/llm-scattergories}.} We find that
our theoretical results hold when using large language models (LLMs) to play a
competitive word game, described below. We find strong evidence that the
``best'' LLM in this setting depends on the level of competition, and moreover,
our findings suggest that ``worse'' models will still capture non-zero market
share. Finally, we conduct a set of experiments in \Cref{sec:continuous} under a
variant of our model, where players compete via text-to-image models to match
target images (inspired by \citet{vafa2025s,jahani2024prompt}). Again, we find
that the qualitative predictions of our theoretical model continue to hold
empirically. We discuss the implications of our findings for the development,
evaluation, and deployment of generative AI in \Cref{sec:discussion}.

\paragraph*{Scattergories as a metaphor.}

To build intuition for the interaction between monoculture and competition, we
introduce a prototypical example of generative competition: the word game
Scattergories.\footnote{A variant of Scattergories has also been proposed as an
  alternative to the Turing test in prior
work~\citep{kennedy2023participation}.}
In Scattergories, players must come up with a word or phrase that begins with a
given letter and matches a given category.\footnote{The validity of an answer is
typically determined by some agreement within the group of players.} Critically,
a player only scores a point if \textit{their answer is unique}. The ``market''
therefore rewards novelty.
For example, given the letter `S' and the category ``Fish,'' a player could
respond with ``Salmon'' or ``Swordfish.'' But a player might be better off
choosing a less common answer like ``Snapper'' since others might be less
likely to think of it.

Suppose Alice and Bob are playing Scattergories with a few other friends. Alice
notices that Bob has been using ChatGPT under the table to come up with his
responses. Alice wonders: Should she use ChatGPT as well? While it appears to be
providing Bob with valid answers, she worries that if she uses it, she'll get
\textit{the same} valid answers that Bob does, and neither of them will score
points. Alice decides to use a different chatbot instead, under the
intuition that its answers will be slightly different.

Is Alice's intuition valid? As long as Alice and Bob use slightly different
prompts, they could in principle receive completely different (distributions
over) answers from ChatGPT, despite the fact that they have the same ``intent.''
And yet, experiments that show AI homogenization in practice
\citep[e.g.,~][]{anderson2024homogenization} suggest that input variation alone
does not suffice to yield substantial output variation. While it is hard to
quantify a GAIT's sensitivity to input variations on an \textit{absolute} scale,
we can at least formulate a testable \textit{relative} statement informed by
Alice's intuition above: as long as the user's intent is fixed, varying the
input produces less output variability than simply switching to a different
GAIT.

\begin{hypothesis}[Monoculture via low within-GAIT variability]
  \label{hyp:variability}
  For a given user intent, within-GAIT variability is significantly lower than
  across-GAIT variability in responses.
\end{hypothesis}

In other words, two users of the same GAIT are more likely to get similar
outcomes than two users of different GAITs, regardless of the precise expression
of their intent. \Cref{hyp:variability} provides a starting point for us to
reason formally about algorithmic monoculture among generative AI users.

Beyond anthropomorphizing about different GAITs having different ``opinions,''
what reason do we have to believe \Cref{hyp:variability}? In
\Cref{sec:as-validation}, we provide experimental evidence that different GAITs
produce substantially different outputs, while different prompts do little to
induce variability. There are a number of reasons why this might be the case in
modern GAITs.
First, GAITs are typically trained on slightly different data, meaning we should
expect them to exhibit at least some disagreement, particularly over tasks for
which there is no single correct answer.
Second, the literature on stability and consistency in generative AI takes
minimizing within-GAIT variability as an objective~\citep{petroni2019language,
elazar2021measuring, singh2024robustness, yang2025enhancing, zhou2022prompt,
fu2025same, zhou2024larger}, seeking to ensure that GAIT responses are
insensitive to variations in inputs that do not change their meaning. To the
extent that these techniques are adopted in practice, they will reduce a GAIT's
sensitivity to slightly different inputs. Finally, prompt rewriting techniques
have the potential to reduce individual variability in prompts by ``smoothing''
out differences~\citep{abhishek2025enhancemyprompt, deng2023rephrase,
hsieh2024automatic}.

\paragraph*{Technical overview.}

Our formal model, which we present in \Cref{sec:model}, draws inspiration from
and generalizes our AI-assisted Scattergories vignette.
We seek to capture competition between producers who use GAITs to create content
and compete over demand for that content.
While we formalize competition via exact collisions in a discrete space for
tractability, the underlying economic principle is more general: competition
incentivizes differentiation.
Producers benefit by creating unique, distinctive content.
In \Cref{sec:continuous}, we study an extension to more general forms of spatial
competition.

In our model,
producers generate content from one of $K$ distinct types, where $K$ might be
very large. In Scattergories, $K$ would represent the number of possible
responses, valid or otherwise, to a (letter, category) pair.\footnote{In our
experiments, $K$ can be on the order of $10^{30}$.} We model competition as
follows: A producer who generates
content of type $k \in [K]$ derives utility that depends on (1) the value of
that type of content $\bd_k$ (e.g., 1 if it is a valid Scattergories answer and
0 otherwise) and (2) the total number of players $C_k$ who produce content of
the same type. If more players choose the same output, they each derive lower
utility, capturing negative externalities for similarity.\footnote{Assigning
  penalties for similarity has some parallels with the creativity literature,
  which seeks to capture originality by awarding higher scores to novel
responses~\citep{guilford1967nature}.}

Players use GAITs to produce content. We model each player's strategy as a
distribution $\bp$ over $[K]$, which results from sampling an output from a
GAIT. Critically, in order to model monoculture, we introduce an assumption
(\Cref{as:ranked}) based on \Cref{hyp:variability} that constrains players using
the same GAIT to have similar distributions. Thus, players can partially control
a GAIT's output distribution by varying the prompt or decoding parameters, but
monoculture prevents GAITs from producing arbitrary probability distributions.
Note that an assumption of this form is critical; without it, players could get
arbitrarily different outputs out of the same GAIT, and we wouldn't see any
homogenization at all.

\paragraph*{Limitations of our setting.}

We focus on the case in which all competitors express the same intent, meaning
they have identical preferences from the GAIT's perspective. This may be because
they truly have the same preferences (e.g., profit-maximization), because
expressing their full preferences is costly \citep[e.g.,~][]{castro2024human}, or
because they do not know their own preferences. Regardless of the explanation,
we assume that there are sufficiently many competitors with (near-)identical
expressed intents who form a competitive market.

Thus far, we have implicitly assumed that externalities are negative: Producers
are penalized for generating similar content. This is not true in all markets.
When goods are complementary, producers benefit when their behaviors align.
We explicitly study markets in which this is not the case. In many
of the domains discussed earlier---creative ideation, writing, art,
etc.---content producers typically face negative externalities when creating
similar content. There are several plausible mechanisms here. The demand for a
particular micro-genre is finite, and as more producers crowd the space, their
audiences may fragment. Prices could also fall as lookalikes enter the market,
reducing producer surplus. Abstracting away from the exact mechanisms
at play, we restrict our attention to \textit{settings in which externalities
are negative}, and conformity reduces producers' utilities.

\section{Related Work}
\label{sec:related}

\paragraph*{Algorithmic monoculture and homogenization.}

Theoretical work on algorithmic monoculture has typically focused on prediction
or allocation tasks, where multiple previously independent decision-makers begin
to correlate their behavior via identical or similar algorithmic
components~\citep{kleinberg2021algorithmic,bommasani2022picking,creel2022algorithmic,peng2023monoculture,peng2024wisdom,jain2024algorithmic,toups2024ecosystem}.
In these settings, results often suggest that monoculture harms the welfare of
decision-makers by reducing the overall information in a
system~\citep{kleinberg2021algorithmic,kaashoek2024impact,peng2023monoculture}.
Related work also considers the trade-off (or lack thereof) between accuracy and
diversity~\citep{peng2024reconciling}.

Recent work has studied homogenization through generative AI empirically in a
number of different settings~\citep{padmakumar2023does, liang2024monitoring,
  anderson2024homogenization, zhang2024generative, zhou2023generative,
  doshi2024generative, kirk2023understanding, park2024diminished,
shur2024growing}. Many of these studies argue implicitly or explicitly that
similarity imposes negative externalities. Researchers have articulated
objections to homogenization via AI beyond the purely economic considerations
discussed in this work, including its effects on
language~\citep{levent2023model,hancock2020ai},
creativity~\citep{buschek2021nine},
culture~\citep{epstein2023art,manovich2018ai}, and
science~\citep{fishman2022should,messeri2024artificial}. Others argue that we
can and should use generative AI to explicitly introduce diverse and stochastic
ideas~\citep{cai2023designaid,ugander2024art}.
\citet{zhang2025noveltybench} introduce a benchmark that seeks to measure
novelty over a set of responses. \citet{kreminski2025endless} argues that AI
homogeneity stems from the limited information a user provides to a tool, which
formalizes the broad strokes put forth by \citet{chiang2024why}.

\paragraph*{Alignment.}

A slightly different perspective on diversifying generative AI distributions
comes from \textit{alignment}, in which researchers seek to align output
distributions from AI tools with some notion of the ``right''
distribution~\citep{sorensen2024roadmap}. This research often indicates that the
use of reinforcement learning with human feedback (RLHF) is responsible for
misalignment between a model's output distribution and its training
distribution, leading to a form of mode collapse~\citep{wu2024generative,
kirk2023understanding, park2024diminished, padmakumar2023does}. All of the tools
we experiment with in \Cref{sec:empirical} have been instruction-tuned, so we
should expect some sort of ``narrowing'' of their distributions relative to what
their pre-trained baselines might produce. Other work on distributional
alignment seeks to modify training or fine-tuning processes to promote output
diversity~\citep{siththaranjan2023distributional, xiao2024algorithmic,
wang2023beyond}. Simple strategies like temperature
scaling~\citep{guo2017calibration}, which we use in our experiments in
\Cref{sec:empirical}, sometimes improve the calibration of output
distributions~\citep{kadavath2022language, tian2023just, shur2024growing}.

Empirical studies on distributional alignment tend to differ from our work in
their conception of what the ``right'' distribution is. Examples from the
literature include reference comparisons to training
distributions~\citep{wu2024generative}, events with known distributions (e.g.,
coin flips)~\citep{lovering2024language}, and public
opinion~\citep{santurkar2023whose}. In contrast, we make no assumptions about
what the distribution ``should'' be; our analysis takes downstream economic
consequences as the primary measure of interest. As we will see, it may be
optimal for a tool to be quite \textit{misaligned} as long as it fills a niche
left open by existing tools. Despite this, the Scattergories setting we
introduce could be an interesting testbed for future research on alignment.

\paragraph*{Related game-theoretic models.}

Much of the prior work on competition in content production studies incentives
in platforms mediated by recommender systems~\citep[e.g.,~][]{basat2017game,
  ben2018game, ben2019recommendation, mladenov2020optimizing,
  ghosh2011incentivizing, yao2023bad, yao2024user}.\footnote{Other related
  models study algorithmic competition in curation~\citep{immorlica2011dueling}
and decision-making \citep{jagadeesan2024improved}.} The goal of this work is
often to design mechanisms (i.e.,~recommender systems) that induce desirable
incentives for content producers.

More similar to ours is a recent line of work, inspired by
\citet{hotelling1929stability}, modeling competition between producers who
generate content represented in a vector space to meet consumer demand, also
represented as vectors~\citep{jagadeesan2024supply,hron2022modeling,
hu2023incentivizing, acharya2024producers}. In these models, consumers choose
the content that most closely aligns with their preference vectors. Our model
has some similar features: we could think of a GAIT's distribution over $K$
discrete types to be a $K$-dimensional vector. However, our ``choice model'' is
quite different. Demand is not fixed, and there is no cross-type competition. We
study an extension in \Cref{sec:continuous} that adopts many of the features of
these models.

Recent research also models competition between humans and generative AI.
\citet{yao2024human} and \citet{esmaeili2024strategize} study a setting where AI
continually learns from past content. \citet{taitler2024braess} analyze the
impacts of generative AI on incentives for human content producers in a
question-answering forum like StackOverflow. In contrast, all players in our
model use generative AI tools.

Analogous models to ours appear in a number of settings. For example, models of
spatial competition in the ride-sharing literature often feature discrete
locations where drivers choose to operate, and in those locations drivers impose
externalities on one another~\citep{brancaccio2023search, ghili2021spatial,
chin2023unified, buchholz2022spatial}. Parimutuel or racetrack betting markets
also share some features, where winning bettors divide the pot in proportion to
the sizes of their bets~\citep{thaler1988anomalies,hausch1981efficiency}.

Our work also has connections to more traditional game theory. We can interpret
our model as a class of atomic congestion games
~\citep[e.g.,~][]{rosenthal1973class,koutsoupias1999worst} with parallel edges,
particular cost functions, and restrictions on the strategy space: Each player
must choose a mixed strategy that obeys a ranking constraint. The game we study
turns to be a \textit{valid utility game}~\citep{vetta2002nash}, which allows us
to leverage existing price of anarchy results. It is also a \textit{potential
game}~\citep{rosenthal1973class,monderer1996potential} (see
\Cref{lem:potential-game}), so best-response dynamics converge.

\section{Model}
\label{sec:model}

We begin by introducing our formal model. At a high level, each player in an
$n$-player game will sample an output from a GAIT. Their utility will depend on
the quality of that output as well as the number of other players who produce
the same output. We next describe how we model our two opposing forces:
monoculture and competition.

\paragraph*{Modeling algorithmic monoculture.}

First, we must formalize our intuition in \Cref{hyp:variability}, that users of
the same GAIT get ``similar'' outputs. In what follows, we will formalize limits
on the extent to which players can strategically tailor their GAIT use to
produce different output distributions. We assume each player provides an input
$\theta$ to a GAIT $T$ and receives a sample from $T$'s output distribution
$\bpn{T}(\theta)$. We assume that there are $K$ possible outputs (which we'll
refer to as ``types'') that $T$ can sample from, where $K$ can be very large.
For a given intent $I$, there is a set $\Theta_I$ of ways to express that intent
through different prompts or decoding parameters. We will introduce formal
assumptions to encode the idea that varying $\theta \in \Theta_I$ does not
fundamentally change the information that $T$ provides via $\bpn{T}(\theta)$.

We summarize the information that $\bpn{T}(\theta)$ provides via the
\textit{ranking} it induces over the types $[K]$.
For a distribution $\bp$, let $\bp_k$ be the probability $\bp$ places on type
$k$.
Define $\pi(\bp)$ to be the permutation over $[K]$ induced by the distribution
$\bp$, meaning $\bp_{\pi_1} \ge \bp_{\pi_2} \ge \dots \ge \bp_{\pi_K}$.
Intuitively, $\pi(\bp)$ summarizes $\bp$'s notion of quality: ``better'' types
should be sampled with higher likelihood than ``worse'' types. If $T$ is stable
to input variations, we would expect that its notion of response quality should
be invariant to $\theta \in \Theta_I$. In other words, we can formalize
\Cref{hyp:variability} as follows:

\begin{assumption}[GAITs preserve output rank]
  \label{as:ranked}
  Given a GAIT $T$ and intent $I$,
  for all $\theta, \theta' \in \Theta_I$, $\pi(\bpn{T}(\theta)) =
  \pi(\bpn{T}(\theta'))$.
\end{assumption}

In other words, while players can strategically choose their inputs $\theta$,
they cannot change $T$'s ranking of ``better'' responses over ``worse''
responses given intent $I$. \Cref{as:ranked} is fairly stylized, and will not be
true in practice; models are not perfectly stable. Very detailed, strategic
prompts or the inclusion of personal information could lead to meaningful
differences in the information that GAITs provide. On the other hand,
\Cref{as:ranked} captures the fact that empirically, idiosyncratic prompting
does not suffice to break GAITs out of their default behaviors, as evidenced by
much of the experimental literature on AI homogeneity
\citep[e.g.,~][]{anderson2024homogenization, doshi2024generative}. We provide
evidence in \Cref{sec:as-validation} that \Cref{as:ranked} is
\textit{directionally} correct by studying how rankings vary across different
prompts $\theta$. We find that $\pi(\bpn{T}(\theta))$ does not change much as
$\theta$ varies, but it can change dramatically across different GAITs $T, T'$.

For a given GAIT $T$ and intent $I$, \Cref{as:ranked} implies that there is some
canonical ordering over responses $\pi^{(T, I)}$ such that $\pi(\bpn{T}(\theta))
= \pi^{(T, I)}$ for all $\theta \in \Theta_I$.
Let $\Delta(\pi)$ denote the set of probability distributions on $[K]$ that have
the same ranking as $\pi$. We make one further assumption:
\begin{assumption}[GAITs are expressive]
  \label{as:expressive}
  For any $\bp \in \Delta(\pi^{(T, I)})$, there exists $\theta \in \Theta_I$
  such that $\bpn{T}(\theta) = \bp$.
\end{assumption}

Informally, we're assuming that $\Theta_I$ is sufficiently expressive: there is
some input $\theta$ that elicits any particular distribution $\bp$, subject to
the constraint that $\bp$ provides the same ranking $\pi^{(T, I)}$. Again, this
is a stylized assumption that will not be strictly true in practice, although
$\Theta_I$ contains a high degree of expressivity in both the exact wording of a
prompt and decoding parameters. It is possible that a weaker or approximate
version of \Cref{as:expressive} would suffice for our results, though we do not
investigate this here. While our theoretical results rely on
\Cref{as:ranked,as:expressive}, our experiments in \Cref{sec:empirical} do not.
Nevertheless, we find that our results hold empirically.

\paragraph*{Modeling competition.}

We study a symmetric $n$-player game and fix the intent $I$. For now, we'll
focus on the case in which all players use the same GAIT $T$, represented by a
permutation $\pi$, and we'll drop $(T, I)$ from our notation.
We'll generalize this to the multi-GAIT
case in \Cref{sec:inter-tool}. Players strategically choose a distribution $\bp
\in \Delta(\pi)$ (which is equivalent to strategically choosing an input $\theta
\in \Theta_I$ under \Cref{as:ranked,as:expressive}) over the $K$ types. Then,
each player samples type $k \in [K]$ independently with probability $\bp_k$.

Each type $k$ has a known nonnegative expected value $\bd_k$, which is the
utility a producer receives if she is the only producer who samples it. For
example, if consumers have heterogeneous preferences, $\bd_k$ might represent
the fraction of consumers with demand for type $k$ and only type $k$. This
captures the idea that some types are more valuable than others, perhaps because
they have more consumer demand.
But
because producers impose negative externalities on one another, producers
receive utility that decreases in the number of competitors who sample the same
type. In particular, let $C_k$ be the number of players who produce type $k$.
Then, each receives utility $\bd_k / s(C_k)$, where $s(\cdot)$ is an increasing
function.\footnote{In \Cref{sec:continuous}, we extend our model to handle
  competition in a continuous output space. Restricting ourselves to discrete
  types allows us to impose structural assumptions on GAIT behavior (in
particular, \Cref{as:ranked}), which enables our theoretical analysis.} For
example, if $s(x) = x$, then players receive an equal share of the overall value
$\bd_k$. Choosing $s(x) = x^\infty$ yields the scoring rule for Scattergories:
$\bd_k$ if the response is unique, and 0 otherwise. We refer to $s$ as a
\textit{score function}, and it plays a role similar to a classical congestion
function. The mechanics of our game $\game(n, \bd, s)$ are summarized in
\Cref{fig:game-mechanics}.

\begin{figure}[ht]
  \centering
  \includegraphics[width=0.95\textwidth]{./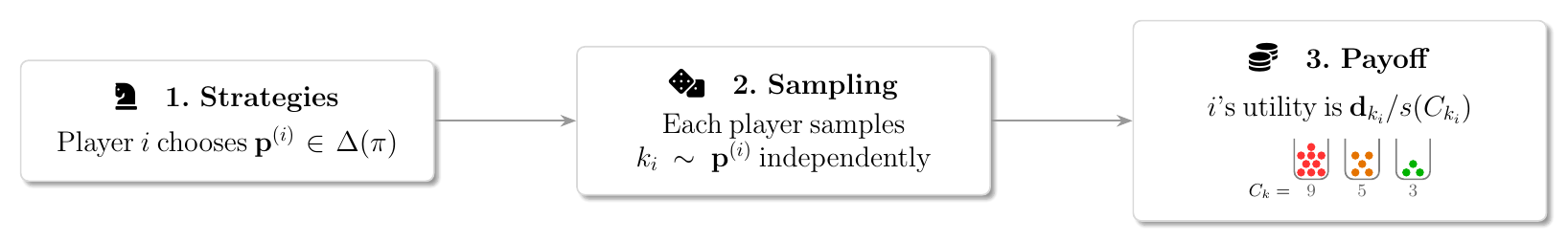}
  \caption{Description of game mechanics.}
  \label{fig:game-mechanics}
\end{figure}

\paragraph*{Score functions and social welfare.}

To fully specify the model, we must provide a class of score functions.
Throughout this paper, we will frequently use score functions of the form
$s_\gamma(x) \triangleq x^\gamma$ for $\gamma > 0$. (Observe that when $\gamma =
0$, there is no competition, and players behave independently from one another.)
The functions $s_\gamma$ have several desirable properties. $s_\gamma(1) = 1$,
meaning if a single player samples type $k$, they face no competition.
$s_\gamma$ is increasing, so more competition implies higher congestion. And
$1/s_\gamma$ is convex, so marginal congestion is decreasing in the amount of
competition. Intuitively, the jump from no competition to a single competitor is
significant, but if competition is already strong, one additional competitor
makes less of a marginal difference.

Our definition of $s_\gamma$ reveals a slight subtlety that will impact how we
think about social welfare, reflecting modeling differences in Hotelling games.
In one formulation (for example, Hotelling games on disconnected
graphs~\citep{knoblauch1991generalizing,fournier2019location}), prices are fixed
(or 0), but demand served is not. Social welfare is the total demand served,
which is simply the sum of player utilities:\footnote{By convention, we let
$s(0) = 1$ for any $s$ to ensure that denominators are always nonzero.}
\begin{equation}
  \label{eq:sw-up}
  \sum_{i \in [n]} \bd_{k_i} \frac{1}{s_\gamma(C_{k_i})} 
  = \sum_{k \in [K]} \bd_k \frac{C_k}{s_\gamma(C_k)}
  = \sum_{k \in [K]} \bd_k C_k^{1-\gamma}.
\end{equation}
This maps nicely to the case where $\gamma \le 1$. While society experiences
diminishing marginal gains from items of type $k$, overall demand met for type
$k$ increases in $C_k$. We could microfound this, for example, with a
model in which consumers can derive benefit from more than one item. Thus, more
production of type $k$ increases overall demand for type $k$ met.

In another formulation (e.g.,~Hotelling games with endogenous
prices~\citep{hotelling1929stability}), demand is fixed but prices are not.
Players who sample the same type experience (potentially imperfect) price
competition, meaning the total surplus players derive from type $k$ (i.e., total
demand met $\times$ price per unit demand) \textit{decreases} in $C_k$. As is
common in the literature, we ignore transfers between producers and consumers in
our definition of social welfare, meaning social welfare is simply the overall
demand met. Taking $\bd_k$ to be the fixed demand for each $k$, social welfare
is
\begin{equation}
  \label{eq:sw-down}
  \sum_{k \in [K]} \bd_k \ind{C_k > 0}.
\end{equation}
This corresponds to the case where $\gamma \ge 1$. Here, we could microfound a
model in which demand for type $k$ is fixed at $\bd_k$, and as $C_k$ increases,
price competition drives down producer surplus per unit of demand met. Perfect
(Bertrand) price competition would imply that producer $i$'s utility is
$\bd_{k_i}$ when $C_{k_i} = 1$ and $0$ if $C_{k_i} > 1$, corresponding to the
score function $s_\infty(x) = x^\infty$.

We have thus described two slightly different classes of score functions
($s_\gamma$ for $\gamma \le 1$ and $\gamma \ge 1$) with different definitions of
social welfare~\eqref{eq:sw-up} and~\eqref{eq:sw-down} respectively.
These definitions coincide at $\gamma = 1$. See \Cref{fig:gamma-funcs} for
further intuition: For a particular type, individual utilities (given by
$1/s_\gamma(x)$) are always decreasing in congestion (\Cref{fig:congestion}),
but the sum of player utilities (given by $x/s_\gamma(x)$) can either increase
or decrease (\Cref{fig:x-over-s}). In cases where the sum of utilities
decreases,~\eqref{eq:sw-down} defines social welfare to be equivalent to the
$\gamma = 1$ case.

In \Cref{app:score-funcs}, we define a more general class of score functions
$\sall$ for which our results hold. As with our score functions $s_\gamma$, if
the sum of player utilities is increasing in congestion (i.e., $x/s(x)$ is
increasing), we apply the social welfare definition~\eqref{eq:sw-up}. We denote
this subset of score functions $\sup \subset \sall$. Analogously, we use $\sdown
\subset \sall$ to denote score functions for which $x/s(x)$ is decreasing and
apply social welfare definition~\eqref{eq:sw-down}. By construction, $\sall =
\sup \cup \sdown$, and the identity score function $s_1(x) = x$ is the only
function in their intersection $\sup \cap \sdown$.

\begin{figure}[ht]
  \centering
  \begin{subfigure}[t]{0.49\textwidth}
    \centering
    \includegraphics[width=\textwidth]{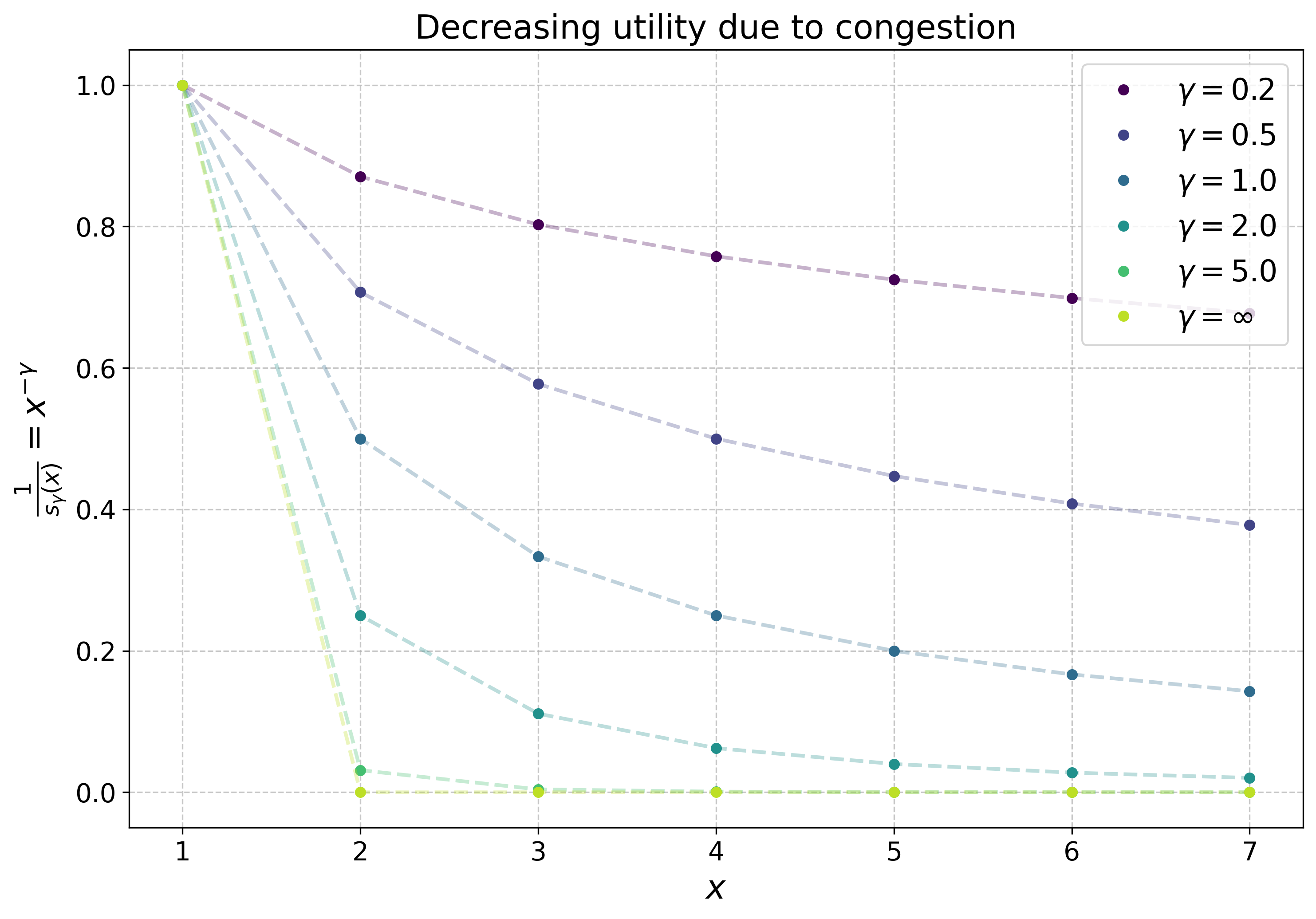}
    \caption{Players' utilities decrease with congestion.}
  \label{fig:congestion}
  \end{subfigure}
  \hfill
  \begin{subfigure}[t]{0.49\textwidth}
    \centering
    \includegraphics[width=\textwidth]{./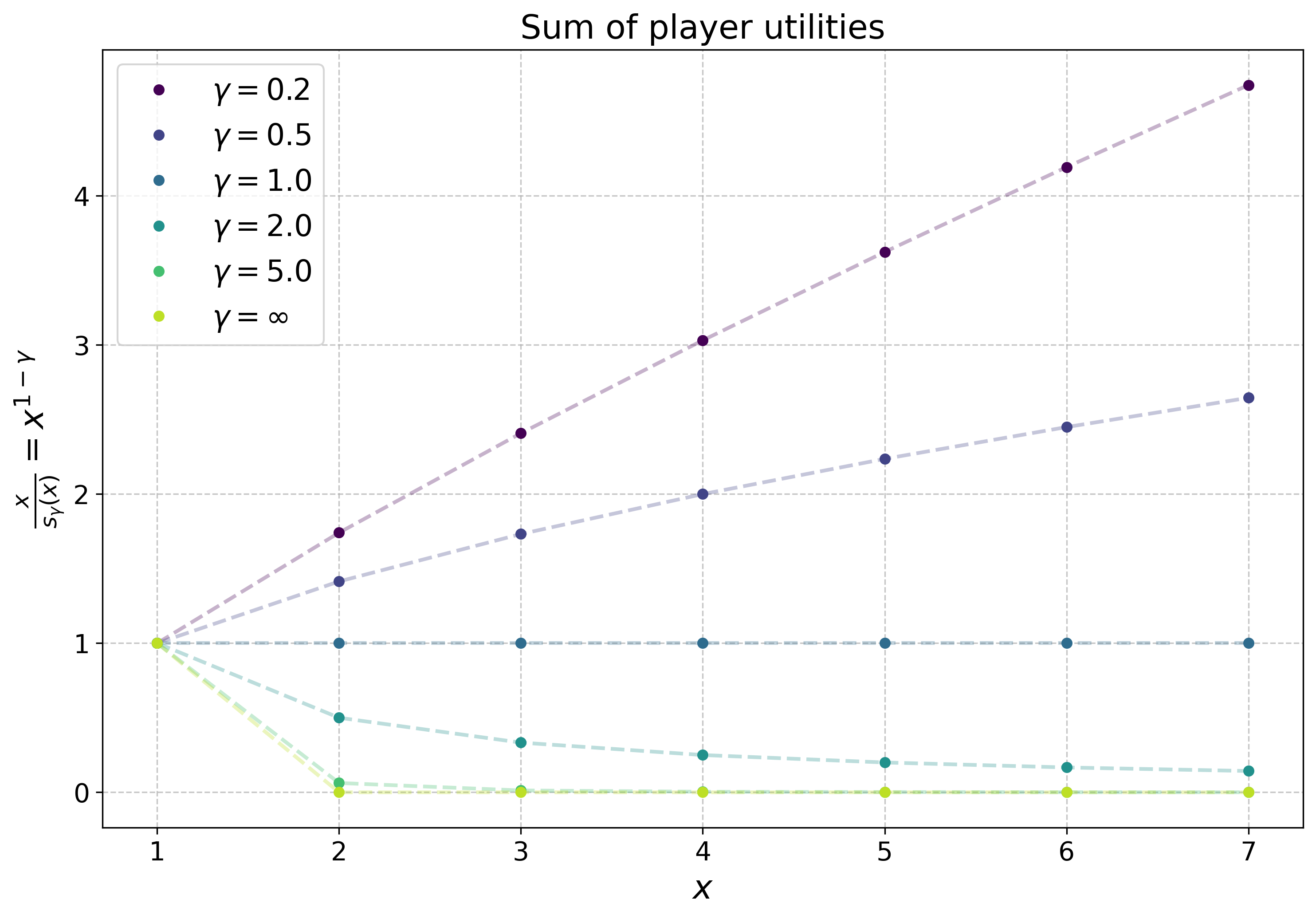}
    \caption{The \textit{sum} of players' utilities may increase or decrease
      with congestion. If $x/s(x)$ is increasing, then $s \in \sup$. If $x/s(x)$
    is decreasing, then $s \in \sdown$.}
    \label{fig:x-over-s}
  \end{subfigure}
  \caption{Sample score functions $s_\gamma(x) = x^\gamma$.}
  \label{fig:gamma-funcs}
\end{figure}

\section{Competition under a Single Tool}
\label{sec:theory}

We are now ready to analyze behavior under our model. We begin in this section
with the case where all players have access to a single GAIT with ordering $\pi
= [1, 2, \dots, K]$ and can choose any $\bp \in \Delta(\pi)$. In
\Cref{sec:inter-tool}, we will generalize to the case where players can choose
from a set $\Pi$ of available GAITs. First, we discuss our solution concept,
symmetric Nash equilibria, in \Cref{sec:equilibria}. Then, in
\Cref{sec:diversity}, we study the \textit{diversity} of equilibria, finding
that stronger competition induces more diversity, but equilibria are less
diverse than socially optimal behavior. We characterize equilibria in the limit
of infinite competition in \Cref{sec:compeitition-limit}. We show in
\Cref{sec:poa} that the \textit{price of anarchy}~\citep{koutsoupias1999worst}
is 2, limiting the extent to which equilibria can be suboptimal. Throughout this
section, we will assume $n \ge 2$, $\bd \in \R_{\ge 0}^K$, and $s \in \sall$.

\subsection{Equilibria}
\label{sec:equilibria}

Our focus in this section will be on symmetric behavior, though we will show
later that asymmetric equilibria converge to symmetric equilibria
(\Cref{lem:sym-asym-converge}).
Because $\game$ is symmetric and utilities are continuous, it is guaranteed to
have a symmetric mixed Nash
equilibrium~\citep{glicksberg1952further,becker2006existence}. Every mixed
strategy is equivalent to a pure strategy since $\Delta(\pi)$ is convex (see
\Cref{lem:mixed-pure-eq}). Thus, $\game$ is guaranteed to have a symmetric pure
Nash equilibrium.

\paragraph*{Utility and social welfare.}
Let $X(n, p)$ denote a binomial random variable with
those parameters. Given that the $n-1$ other players choose the strategy $\bp$,
player $i$'s expected utility for strategy $\bp'$ is
\ifthenelse{\boolean{smallEqs}}{
\begin{align*}
  U(\bp', \bp; n, \bd, s)
   &\triangleq
  \sum_{k \in [K]} \bp_k' \bd_k \E{\frac{1}{s(C_k)} \given k_i = k} \\
  &= \sum_{k \in [K]} \bp_k' \bd_k \E{\frac{1}{s(1 + X(n-1, \bp_k))}}.
\end{align*}
}{
\begin{align*}
  U(\bp', \bp; n, \bd, s)
  &\triangleq
  \sum_{k \in [K]} \bp_k' \bd_k \E{\frac{1}{s(C_k)} \given k_i = k}
  =
  \sum_{k \in [K]} \bp_k' \bd_k \E{\frac{1}{s(1 + X(n-1, \bp_k))}}.
\end{align*}
}
When it is clear from context, we will omit the $(n, \bd, s)$. In a slight abuse
of notation, we will sometimes write $U(\bp, \bp)$ as simply $U(\bp)$. $\bp$
is a symmetric equilibrium for $\game(n, \bd, s)$ if and only if
$U(\bp, \bp; n, \bd, s) \ge U(\bp', \bp; n, \bd, s)$ for all $\bp' \in
\Delta(\pi)$.
We next define expected social welfare for symmetric strategies, taking care to
account for the different interpretations discussed in \Cref{sec:model}. In the
case where $s \in \sup$, following~\eqref{eq:sw-up},
\ifthenelse{\boolean{smallEqs}}{
\begin{align*}
  W(\bp; n, \bd, s)
  &\triangleq \sum_{k \in [K]} \bd_k \E{\frac{C_k}{s(C_k)}} \\
  &= \sum_{k \in [K]} \bd_k \E{\frac{X(n, \bp_k)}{s(X(n, \bp_k))}} \\
  &= \sum_{i \in [n]} U(\bp; n, \bd, s).
\end{align*}
}{\begin{align*}
  W(\bp; n, \bd, s)
  &\triangleq \sum_{k \in [K]} \bd_k \E{\frac{C_k}{s(C_k)}}
  = \sum_{k \in [K]} \bd_k \E{\frac{X(n, \bp_k)}{s(X(n, \bp_k))}}
  = \sum_{i \in [n]} U(\bp; n, \bd, s).
\end{align*}
}
When $s \in \sdown$, following~\eqref{eq:sw-down},
\ifthenelse{\boolean{smallEqs}}{
\begin{align*}
  W(\bp; n, \bd, s)
  &\triangleq \sum_{k \in [K]} \bd_k \E{\ind{C_k > 0}} \\
  &= \sum_{k \in [K]} \bd_k \Pr[X(n, \bp_k) > 0] \\
  &= \sum_{k \in [K]} \bd_k \p{1 - (1 - \bp_k)^n}.
\end{align*}
}{\begin{align*}
  W(\bp; n, \bd, s)
  &\triangleq \sum_{k \in [K]} \bd_k \E{\ind{C_k > 0}}
  = \sum_{k \in [K]} \bd_k \Pr[X(n, \bp_k) > 0]
  = \sum_{k \in [K]} \bd_k \p{1 - (1 - \bp_k)^n}.
\end{align*}
}
Again, we will omit $(n, \bd, s)$ when it is clear from context.
Note that $W$ no longer depends on $s$ when $s \in \sdown$, so
welfare-maximizing strategies are identical for all $s \in \sdown$.

\paragraph*{Without loss of generality, a simplifying assumption.}

Consider the special case where $\bd$ is decreasing in $k$, meaning $\pi$
perfectly ranks outputs by their expected value:
\begin{assumption}
  \label{as:d-ranked}
  $\bd_k$ is weakly decreasing in $k$.
\end{assumption}
\Cref{as:d-ranked} turns out to dramatically simplify the characterization of
symmetric equilibrium and optimal strategies. As we show in
\Cref{app:equilibria}, the constraint $\bp \in \Delta(\pi)$ is no longer
binding, and we can express symmetric equilibrium and optimal strategies
(denoted $\peq$ and $\popt$ respectively) as the unique solutions to relatively
simple convex optimization problems (see
\Cref{lem:unique-eq-decreasing,lem:unique-opt-decreasing}).

Perhaps surprisingly, we can make \Cref{as:d-ranked} \textit{without loss of
generality}: For every $\bd$, there exists some decreasing $\tbd$ with identical
strategies and utilities for symmetric equilibrium and optimal play.

\begin{restatable}{theorem}{uniqueeq}
  \label{thm:unique-eq}
  For every $\bd \in \R_{\ge 0}^K$, there is some $\tbd \in \R_{\ge 0}^K$ such
  that $\tbd$ is decreasing in $k$, and for all $n \ge 2$, $s \in \sall$,
  \begin{enumerate}

    \item $\peq(n, \bd, s) = \peq(n, \tbd, s)$, and per-player utilities are
      equal.
      
    \item $\popt(n, \bd, s) = \popt(n, \tbd, s)$, and per-player utilities and
      social welfare are equal.

  \end{enumerate}
  By \Cref{lem:unique-eq-decreasing,lem:unique-opt-decreasing}, $\peq(n,
  \bd, s)$ and $\popt(n, \bd, s)$ are both unique.
\end{restatable}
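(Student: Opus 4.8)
The plan is to reduce \emph{both} assertions to a single statement about maximizing a separable concave function over the ordered simplex $\Delta(\pi)$, and to obtain $\tbd$ by ``ironing'' $\bd$ via a classical pool-adjacent-violators construction.

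\textbf{Construction of $\tbd$.} Let $S_j := \sum_{i\le j}\bd_i$ with $S_0=0$, and let $\hat S$ be the least concave majorant of the points $\{(j,S_j):0\le j\le K\}$; set $\tbd_j := \hat S_j - \hat S_{j-1}$. Concavity of $\hat S$ makes $\tbd$ weakly decreasing; since $\bd\ge 0$, $\hat S$ is nondecreasing and nonnegative, so $\tbd\ge 0$; and $\hat S$ is piecewise linear, so $\tbd$ is constant on each maximal linear piece $B$ (a ``block''), where it equals $\frac1{|B|}\sum_{i\in B}\bd_i$. The facts I will use are: $S_j\le\hat S_j$ for all $j$, with equality at block endpoints and at $j=K$. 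Note $\tbd$ is independent of $n$ and $s$.

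\textbf{Reduction to a common optimization.} I will show that for any $\bd$, both $\peq(n,\bd,s)$ and $\popt(n,\bd,s)$ equal the unique maximizer over $\bp\in\Delta(\pi)$ of some $F_{\bd}(\bp)=\sum_k\bd_k\,g(\bp_k)$, where $g:[0,1]\to\R$ is increasing, concave, $g(0)=0$, and depends only on $(n,s)$. For the optimum this is immediate from the welfare formulas: $g(p)=1-(1-p)^n$ when $s\in\sdown$, and $g(p)=\E{X(n,p)/s(X(n,p))}$ when $s\in\sup$, whose monotonicity and concavity come from \labelcref{def:S-concave} together with the identity $\frac{d}{dp}\E{h(X(n,p))}=n\,\E{h(1+X(n-1,p))-h(X(n-1,p))}$. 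For equilibria I use that $\game$ is a potential game (\Cref{lem:potential-game}): with $\Psi_s(c)=\sum_{j=1}^c 1/s(j)$ and $g(p)=\E{\Psi_s(X(n,p))}$, the same identity gives $g'(p)=n\,u_s(n-1,p)$, so $\nabla F_{\bd}(\bp)=n\,(\bd_k u_s(n-1,\bp_k))_k$ is exactly $n$ times a player's payoff gradient; since $F_{\bd}$ is concave (using \labelcref{def:increasing,def:convex,def:bounds}), $\bp$ is a symmetric equilibrium iff $\bp\in\argmax_{\Delta(\pi)}F_{\bd}$. For the \emph{decreasing} vector $\tbd$, \Cref{lem:unique-eq-decreasing,lem:unique-opt-decreasing} already guarantee that $\bp^\ast:=\argmax_{\Delta(\pi)}F_{\tbd}$ exists and is unique.

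\textbf{Core argument.} (i) \emph{Domination.} For any $\bp\in\Delta(\pi)$, $F_{\tbd}(\bp)-F_{\bd}(\bp)=\sum_k(\tbd_k-\bd_k)g(\bp_k)$; since $g(\bp_k)$ is nonincreasing in $k$ and the partial sums $\sum_{i\le k}(\tbd_i-\bd_i)=\hat S_k-S_k$ are nonnegative and vanish at $k=K$, summation by parts yields $F_{\tbd}(\bp)\ge F_{\bd}(\bp)$. (ii) \emph{Block-constancy of $\bp^\ast$.} On the support $\{k:\bp^\ast_k>0\}$, the per-type quantity in \Cref{lem:unique-eq-decreasing,lem:unique-opt-decreasing} ($\tbd_k u_s(n-1,\bp^\ast_k)$ for equilibria; $\tbd_k g'(\bp^\ast_k)$ or $\tbd_k(1-\bp^\ast_k)^{n-1}$ for optima) is constant; as $\tbd_k$ is constant on each block and $u_s(n-1,\cdot)$, $g'$, $(1-\cdot)^{n-1}$ are strictly monotone, $\bp^\ast$ is constant on each block — a short argument (using $u_s\le 1$, $g'\le g'(0)$, and the ``$c\ge\bd_k$ off the support'' condition) rules out a block straddling the support boundary. (iii) \emph{Value preservation.} Since $\bd$ and $\tbd$ have equal sums on each block, every block-constant $\bp$ satisfies $F_{\tbd}(\bp)=F_{\bd}(\bp)$; in particular $F_{\tbd}(\bp^\ast)=F_{\bd}(\bp^\ast)$. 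Chaining: for every $\bp\in\Delta(\pi)$, $F_{\bd}(\bp)\le F_{\tbd}(\bp)\le F_{\tbd}(\bp^\ast)=F_{\bd}(\bp^\ast)$, so $\bp^\ast$ maximizes $F_{\bd}$; and if $\bq$ is any maximizer of $F_{\bd}$ then $F_{\tbd}(\bq)\ge F_{\bd}(\bq)=F_{\bd}(\bp^\ast)=F_{\tbd}(\bp^\ast)=\max_{\Delta(\pi)}F_{\tbd}$, forcing $\bq\in\argmax_{\Delta(\pi)}F_{\tbd}=\{\bp^\ast\}$. Hence $\argmax_{\Delta(\pi)}F_{\bd}=\{\bp^\ast\}=\argmax_{\Delta(\pi)}F_{\tbd}$, which by the reduction gives $\peq(n,\bd,s)=\peq(n,\tbd,s)$ and $\popt(n,\bd,s)=\popt(n,\tbd,s)$, both unique; and since $\bp^\ast$ is block-constant with matching block sums, the per-player utility $\sum_k\bd_k\bp^\ast_k u_s(n-1,\bp^\ast_k)$ is unchanged when $\bd$ is replaced by $\tbd$.

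\textbf{Main obstacle.} The delicate step is the reduction: casting the equilibrium condition as maximization of a concave separable potential requires the potential-game structure plus verifying that the kernel $g$ is increasing and concave, which is exactly where the conditions of \Cref{def:S-class} are consumed and where the $\sup$ versus $\sdown$ dichotomy must be carried in parallel. The second fiddly point is the block-constancy step (ii), specifically the edge case of a block containing both support and non-support indices, handled by the monotonicity argument above. The concave-majorant bookkeeping and the summation-by-parts estimate are routine.
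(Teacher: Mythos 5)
Your proposal is correct and takes a genuinely different route from the paper. The paper proves this via \Cref{lem:dec-d}, which works iteratively: it averages $\bd$ over one maximal increasing run at a time (Pool Adjacent Violators), and at each step shows directly that an equilibrium (resp.\ optimum) under $\bd$ remains one under the averaged $\tbd$ by a deviation argument — the key technical input is \Cref{lem:dec-d-helper}/\Cref{lem:inc-product}, a Lagrangian calculation showing that block-averaging a strategy over an increasing-$\bd$ run weakly improves both utility and welfare, together with \Cref{lem:dec-d-equal-U} for value preservation on block-constant strategies. You instead construct the ironed vector in one shot via the least concave majorant of the partial sums, and — this is the real departure — repurpose the potential-game structure (\Cref{lem:potential-game}, which the paper only invokes to mention best-response convergence) to recast the symmetric-equilibrium condition as the first-order optimality condition for maximizing the concave separable function $\sum_k \bd_k\,\E{\Psi_s(X(n,\bp_k))}$ over $\Delta(\pi)$. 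This puts equilibria and optima on identical footing — both are $\argmax_{\Delta(\pi)}\sum_k\bd_k\,g(\bp_k)$ for an appropriate increasing concave kernel $g$ with $g(0)=0$ — so a single ``dominate, block-constancy, value-preserve'' chain handles both, with the ironing inequality $F_{\tbd}\ge F_{\bd}$ following from a clean summation-by-parts against the nonnegative gaps $\hat S_k - S_k$. Your argument buys conceptual unification and a shorter core estimate; the paper's buys not needing to verify concavity of the potential kernel or the equivalence of symmetric equilibria with restricted potential maximizers, and it is more modular in that the key averaging lemma (\Cref{lem:inc-product}) is reused elsewhere. One minor point worth stating explicitly in your write-up: the equivalence ``symmetric NE iff maximizer of the restricted potential'' relies on $U(\bp',\bp)$ being linear in $\bp'$ and on $\nabla F_\bd(\bp)$ being a positive multiple of the per-type conditional-utility vector, so that the variational inequality characterizing the NE coincides with the first-order condition for the concave $F_\bd$ on the convex set $\Delta(\pi)$; you gesture at this but should spell it out, since it is the load-bearing step absent from the paper's proof.
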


We defer all proofs in this section to \Cref{app:proofs}. By
\Cref{thm:unique-eq}, every instance $\bd$ has a corresponding instance $\tbd$
satisfying \Cref{as:d-ranked} with identical equilibria and optimal strategies.
This means that we can restrict our attention to the analytically simpler case
where \Cref{as:d-ranked} holds, and \Cref{thm:unique-eq} allows us to
immediately lift all of our results to the more general setting. A version of
\Cref{thm:unique-eq} holds for asymmetric equilibria as well
(\Cref{lem:dec-d-asym}).

We briefly provide some intuition for what the equivalent game $\tbd$ looks
like. Suppose $\bd = [1; ~ 0; ~ 1]$. Ideally, a player could choose a
distribution like $\bp' = [\nicefrac{1}{2}; ~ 0; ~ \nicefrac{1}{2}]$, but of
course, $\bp' \notin \Delta(\pi)$. Under \Cref{as:ranked}, $\bp$ must respect
the constraint $\bp_2 \ge \bp_3$. In this example, it is never optimal or
rational to set $\bp_2 > \bp_3$, since a player's utility strictly increases if
they transfer mass from $\bp_2$ to $\bp_3$ or $\bp_1$.
Thus, for any equilibrium or socially optimal strategy, $\bp_2 = \bp_3$.

The final step of the proof is to observe that if $\bp_2 = \bp_3$, then $\bd =
[1; ~ 0; ~ 1]$ is effectively the same as $\tbd = [1; ~ \nicefrac{1}{2}; ~
\nicefrac{1}{2}]$. More generally, we can replace sequences of (weakly)
increasing values of $\bd$ by their means, since $\peq$ and $\popt$ will assign
equal probabilities to all of those elements anyways. Iteratively applying the
argument yields the desired result. This procedure is often known as the ``Pool
Adjacent Violators'' algorithm~\citep{barlow1972statistical} and is commonly
used for isotonic regression~\citep{van1956maximum, ayer1955empirical,
brunk1955maximum}.

\paragraph*{A simple example.}

To build intuition, consider an example with $n = 2$ players and $K = 2$ types
with values $\bd = [3; ~ 2]$. In the absence of competition, each player would
choose the distribution $\bp = [1; ~ 0]$, since type 1 is higher-quality than
type 2. But with the score function $s_1(x) = x$, these strategies would yield
utility $3/2$ per player. They would be better off diversifying: there is
unmet demand $\bd_2 = 2$, and by shifting probability mass from $\bp_1$ to
$\bp_2$, players could increase their utilities.

In fact, the unique symmetric pure Nash equilibrium for this example is $\bp =
[4/5; ~ 1/5]$: by \Cref{lem:unique-eq-decreasing}, the symmetric equilibrium
satisfies
\begin{align*}
  \E{\frac{\bd_1}{s(1+X(n-1, \bp_1))}}
  &= \E{\frac{\bd_2}{s(1+X(n-1, \bp_2))}} \\
  3 \p{1 - \bp_1 + \frac{\bp_1}{2}}
  &= 2 \p{1 - \bp_2 + \frac{\bp_2}{2}} \\
  \bp_1 &= \frac{4}{5}.
\end{align*}
Competition between 2 players has diversified equilibrium
behavior. And with more competition (as $n$ increases), we might
expect even more diverse behavior. On the other hand, the
symmetric socially optimal strategy is $\bp^* = [3/5; ~ 2/5]$, which by
\Cref{lem:unique-opt-decreasing} satisfies
\begin{align*}
  \bd_1 (1 - \bp_1^*)^{n-1}
  &= \bd_2 (1 - \bp_2^*)^{n-1} \\
  3 (1 - \bp_1^*)
  &= 2 (1 - \bp_2^*) \\
  \bp_1^*
  &= \frac{3}{5}.
\end{align*}
In this example, while competition induces diversity, socially optimal behavior
is even more diverse. We will show that these statements hold more generally in
what follows.

\subsection{Diversity of equilibria}
\label{sec:diversity}

Our main results in this section formalize our intuition from the above example.
We show in \Cref{thm:diversity-n,thm:diversity-gamma} that
more competitive environments do indeed induce more diverse behavior. On the
other hand, we show in \Cref{thm:diversity-opt} that the equilibrium
distribution $\peq(n, \bd, s)$ is less diverse than the optimal distribution
$\popt(n, \bd, s)$. We then discuss the limiting behavior of competition in
\Cref{sec:compeitition-limit}.

First, we must define ``diversity'' more precisely. All of our results will use
a particularly strong notion of the relative diversity of distributions known as
\textit{majorization}:

\begin{definition}[Majorization]
  \label{def:majorization}
  For distributions $\bp, \bq \in \Delta(\pi)$, $\bp$ majorizes $\bq$ if
  \begin{align*}
    \sum_{\ell=1}^k \bp_\ell \ge \sum_{\ell=1}^k \bq_\ell.
    \tag{$\forall k \in [K]$}
  \end{align*}
  We write this as $\bp \succ \bq$.
\end{definition}

Majorization provides a useful notion of the relative diversity of discrete
distributions. By definition, if $\bp \succ \bq$, then $\phi(\bp) \le \phi(\bq)$
for any Schur-concave function $\phi$ \citep[e.g.,~][]{olkin2014inequalities}.
(And similarly, $\bp \succ \bq$ implies $\phi(\bp) \ge \phi(\bq)$ for any
Schur-convex function $\phi$.) Standard measures of diversity, such as
Shannon~\citep{shannon1948mathematical}, R{\'e}nyi~\citep{renyi1961measures},
and Tsallis~\citep{tsallis1988possible} entropies, are Schur-concave.
Thus, the following holds:
\begin{proposition}[Majorization implies entropy ordering
  \citep{olkin2014inequalities}]
  \label{fact:majorization}
  If $\bp \succ \bq$, then $H(\bp) \le H(\bq)$ for Shannon, R{\'e}nyi, or
  Tsallis entropy $H$.
\end{proposition}
Similarly, measures of inequality like the Gini
coefficient~\citep{gini1912variabilita} are Schur-convex, so we can also
interpret $\bp \succ \bq$ to say that $\bp$ is more ``unequal'' than $\bq$.
Majorization is equivalent to first-order stochastic dominance over ranks in the
case where the permutation $\pi$ is fixed.

Given that our goal will be to show that competition increases entropy, it
suffices to state our results in terms of majorization.
Intuitively, we can understand ``$\bp$ majorizes $\bq$'' ($\bp \succ \bq$)
as ``$\bp$ is less diverse than $\bq$.'' We begin by showing that increasing the
number of players $n$ increases equilibrium diversity.
All proofs in this
section are deferred to \Cref{app:diversity}.

\begin{restatable}[More players $\Longrightarrow$ more diverse
  equilibria]{theorem}{divn}
  \label{thm:diversity-n}
  For all $n \ge 2$, $\bd \in \R_{\ge 0}^K$, $s \in \sall$,
  \begin{align*}
    \peq(n, \bd, s) \succ \peq(n+1, \bd, s).
  \end{align*}
\end{restatable}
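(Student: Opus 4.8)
The plan is to reduce to the perfectly-ranked case, write each equilibrium coordinate as an explicit decreasing function of its type value, and then prove a single-crossing relation between the two equilibrium profiles, from which majorization follows mechanically.

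By \Cref{thm:unique-eq} we may assume \Cref{as:d-ranked}, so $\bd$ is weakly decreasing and \Cref{lem:unique-eq-decreasing} governs both $\game(n,\bd,s)$ and $\game(n+1,\bd,s)$; write $U_n$ for the equilibrium per-player utility of $\game(n,\bd,s)$ and $h(x)=1/s(1+x)$. Standard binomial derivative identities, together with $h$ decreasing, $1/s$ discrete-convex (\labelcref{def:convex}), and $s(2)>1$, show that $u_s(m,\cdot)$ is a strictly decreasing, convex, continuous bijection $[0,1]\to[1/s(m+1),1]$ and that $u_s(m+1,p)<u_s(m,p)$ on $(0,1)$ (couple $X(m+1,p)=X(m,p)+\Ber{p}$). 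Let $\phi_m$ denote its inverse, extended by $1$ below $1/s(m+1)$ and by $0$ above $1$; then each $\phi_m$ is decreasing and $\phi_{n-1}\ge\phi_n$ pointwise. Parts 2--3 of \Cref{lem:unique-eq-decreasing} say precisely that, for every $k$,
\[
  \peq(n,\bd,s)_k=\phi_{n-1}(U_n/\bd_k),\qquad \peq(n+1,\bd,s)_k=\phi_n(U_{n+1}/\bd_k),
\]
with $U/\bd_k:=\infty$ when $\bd_k=0$.

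Next I would show $U_{n+1}<U_n$: if instead $U_{n+1}\ge U_n$, the displays force $\peq(n,\bd,s)_k\ge\peq(n+1,\bd,s)_k$ for every $k$ (since $\phi_{n-1}$ is decreasing and $\phi_{n-1}\ge\phi_n$), hence the two probability vectors coincide, and evaluating at the (positive) top coordinate contradicts strict monotonicity of $u_s$ in $m$ unless that coordinate equals $1$, in which case the equilibria are the same point mass. Granting $U_{n+1}<U_n$, it then suffices to prove the single-crossing claim: regarding $\bp_k:=\peq(n,\bd,s)_k$ and $\bq_k:=\peq(n+1,\bd,s)_k$ as functions of the weakly decreasing parameter $\bd_k$, the difference $\bp_k-\bq_k$ changes sign at most once, from $\ge0$ to $\le0$, as $k$ grows. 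Given this, the partial sums of $\bp-\bq$ are nonnegative on the initial block and, the total being $0$, also on the remaining block, i.e. $\peq(n,\bd,s)\succ\peq(n+1,\bd,s)$.

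To establish single-crossing, set $g(d)=\phi_{n-1}(U_n/d)-\phi_n(U_{n+1}/d)$. It is $\equiv0$ for $d\le U_{n+1}$, strictly negative on $(U_{n+1},U_n]$, and nonnegative wherever $\phi_{n-1}(U_n/d)=1$; a short case analysis over the pieces where some $\phi_m$ saturates handles the non-smooth points, so it remains to show that at any zero $d_0$ where $g$ is smooth---i.e. the common value $y:=\phi_{n-1}(U_n/d_0)=\phi_n(U_{n+1}/d_0)$ lies in $(0,1)$---the curve crosses upward. There the equilibrium relations give $d_0u_s(n-1,y)=U_n$ and $d_0u_s(n,y)=U_{n+1}$, and differentiating with $\phi_m'=1/\big(u_s'(m,\cdot)\circ\phi_m\big)$ shows that $g'(d_0)\ge0$ is equivalent to
\[
  \frac{\partial}{\partial p}\ln u_s(n,p)\Big|_{p=y}\ \le\ \frac{\partial}{\partial p}\ln u_s(n-1,p)\Big|_{p=y}.
\]

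This is the crux, and the step I expect to be hardest: showing that $\partial_p\ln u_s(m,p)$ is non-increasing in the competitor count $m$ for all $p\in(0,1)$---equivalently, that $u_s$ is log-submodular in $(m,p)$, so adding own probability mass to a type shrinks one's conditional utility proportionally more under stronger competition. I would attack it by expanding in terms of $N=X(m-1,p)$ and an independent $\Ber{p}$, using $u_s(m,p)=\E{h(N)}+p\,\E{h(N+1)-h(N)}$, $u_s'(m,p)=m\,\E{h(N+1)-h(N)}$, and the telescoping bound $\E{h(N+2)-2h(N+1)+h(N)}\le-\E{h(N+1)-h(N)}$, and then verifying that the resulting polynomial in $p$ is $\le0$; the delicacy is that the individual terms are not separately signed (for instance, in the Scattergories limit $s=s_\infty$ the whole expression collapses to $-(1-p)^{2m}$ only after cancellation), so they must be combined carefully. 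Once this inequality holds, $g$ can never cross zero downward, so on each piece it goes from $\le0$ to $\ge0$ as $d$ increases; translating back to the index $k$ (recalling that $\bd_k$ is weakly decreasing in $k$) yields the single-crossing claim and hence the theorem.
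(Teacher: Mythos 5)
Your overall architecture is sound and, modulo packaging, it is the paper's: reduce to weakly decreasing $\bd$ via \Cref{thm:unique-eq}, use the equilibrium characterization in \Cref{lem:unique-eq-decreasing} to compare the two profiles coordinatewise, establish a single sign change of $\peq(n,\bd,s)-\peq(n+1,\bd,s)$ in $k$, and conclude majorization (the last step is the paper's \Cref{lem:inc-majorization}, bundled with the equilibrium conditions in \Cref{lem:p-majorize}). Your inverse-function formulation $\bp_k=\phi_{n-1}(U_n/\bd_k)$, $\bq_k=\phi_n(U_{n+1}/\bd_k)$, the argument that $U_{n+1}<U_n$ (with the point-mass degenerate case), and the derivative computation at a smooth zero of $g$ are all fine, and they land you exactly where the paper lands: everything hinges on the inequality $\frac{\partial}{\partial p}\,\frac{u_s(n,p)}{u_s(n-1,p)}\le 0$, i.e.\ your log-submodularity condition, which is precisely the paper's \Cref{lem:dp-dn-ratio} (the paper reaches it by a slightly different route, comparing $\bd_k u(n,\peq(n,\bd,s)_k)$ across $k$ after dividing by the $n$-player equilibrium identity, rather than via $\phi_m$ and single-crossing in $d$).

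The genuine gap is that you do not prove this crux inequality; you only sketch an attack and explicitly concede it is the hard step and that the relevant terms are not individually signed. Since the remainder of your argument is essentially bookkeeping, the unproven inequality is the entire mathematical content separating your proposal from a complete proof. For the record, the paper's proof of \Cref{lem:dp-dn-ratio} proceeds by the same kind of ingredients you name: the Sah identity $\frac{\partial}{\partial p}u(m,p)=\frac{m}{p}\bigl(u(m,p)-u(m-1,p)\bigr)$ (\Cref{lem:sah-diff}) turns the log-derivative comparison into the algebraic inequality $u(n-1,p)\,u(n,p)+(n-1)\,u(n-2,p)\,u(n,p)\le n\,u(n-1,p)^2$, which is then closed using stochastic dominance ($u(n,p)\le u(n-1,p)$) together with the discrete convexity of $1/s$ controlling the second difference of $u$ in $n$. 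Your proposal, as written, stops short of any such closing argument, so the theorem is not established; additionally, the "short case analysis over the pieces" for the non-smooth points of $g$ (saturation at $0$ and $1$, possible flat zero intervals, ties in $\bd_k$) is asserted rather than carried out, though that part is routine once the derivative inequality is in hand.
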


We provide a sketch of the proof here. Consider the following thought
experiment: what if, in the $n+1$ player game $\game(n+1, \bd, s)$, all players
played $\bp = \peq(n, \bd, s)$? We show that if they did, expected utility
conditioned on sampling type $k$ would be
\textit{increasing} in $k$ (for $k$ such that $\bp_k > 0$): playing the
$n$-player equilibrium strategy in the $n+1$-player game would over-exploit the
types with high $\bd_k$ and under-exploit those with lower $\bd_k$.

By \Cref{lem:unique-eq-decreasing}, under the $n+1$-player equilibrium
$\peq(n+1, \bd, s)$, types should have equal conditional utility. So in order to
transform $\peq(n, \bd, s)$ into $\peq(n+1, \bd, s)$, we have to decrease
probabilities on the high-$\bd_k$ types and raise them on the low-$\bd_k$ types.
This seems like it should lead to a more diverse distribution, since we are
lowering the larger probabilities and raising the smaller probabilities, and
indeed, we show that this implies $\peq(n, \bd, s) \succ \peq(n+1, \bd,
s)$.\footnote{Intuitively, we might think the same holds for $\popt$;
\Cref{fig:ps-to-inf} shows that this is not the case.}

\Cref{thm:diversity-n} tells us that increased competition incentivizes more
diverse behavior. We show that a similar result holds when increased competition
takes the form of a stronger score function $s$. In particular, we consider
score functions of the form $s_\gamma(x) = x^\gamma$.

\begin{restatable}[Greater externalities $\Longrightarrow$ more diverse
  equilibria]{theorem}{divgamma}
  \label{thm:diversity-gamma}
  For all $n \ge 2$, $\bd \in \R_{\ge 0}^K$, $\gz \le \go$,
  \begin{align*}
    \peq(n, \bd, s_{\gz}) \succ \peq(n, \bd, s_{\go}).
  \end{align*}
\end{restatable}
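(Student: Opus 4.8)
The plan is to mirror the proof of \Cref{thm:diversity-n}, with ``strengthening the score function'' playing the role of ``adding a player.'' First I would reduce to \Cref{as:d-ranked}: the equivalent decreasing value vector $\tbd$ supplied by \Cref{thm:unique-eq} is the same for every $s \in \sall$ (it is just the isotonic pooling of $\bd$ and does not involve $s$), so $\peq(n,\bd,s_{\gz}) = \peq(n,\tbd,s_{\gz})$ and $\peq(n,\bd,s_{\go}) = \peq(n,\tbd,s_{\go})$, and it suffices to treat decreasing $\bd$. Write $\bp = \peq(n,\bd,s_{\gz})$ and $\bq = \peq(n,\bd,s_{\go})$, both in $\Delta(\pi)$ by \Cref{lem:unique-eq-decreasing}, with support prefixes $\{1,\dots,k^*\}$ and $\{1,\dots,m\}$; let $\bar U \triangleq U(\bp; n, \bd, s_{\gz})$ be the $\gz$-equilibrium per-player utility, and set $u_\gamma(n-1,p) \triangleq \E{(1+X(n-1,p))^{-\gamma}} = u_{s_\gamma}(n-1,p)$.

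The technical heart is the monotone-ratio claim: for $\gz \le \go$, the map $p \mapsto u_{\go}(n-1,p)/u_{\gz}(n-1,p)$ is non-increasing on $[0,1]$. To prove it, observe that this ratio equals $\EE{\tilde P_p}{(1+X)^{-(\go-\gz)}}$, the expectation of the non-increasing function $x \mapsto (1+x)^{-(\go-\gz)}$ under the tilted law $\tilde P_p(x) \propto (1+x)^{-\gz}\binom{n-1}{x}p^x(1-p)^{n-1-x}$ on $\{0,\dots,n-1\}$. For $p < p'$ the likelihood ratio $\tilde P_{p'}(x)/\tilde P_p(x)$ is proportional to $\p{\tfrac{p'(1-p)}{p(1-p')}}^{x}$, hence non-decreasing in $x$, so $\tilde P_{p'}$ first-order stochastically dominates $\tilde P_p$; the expectation of a non-increasing function therefore weakly decreases as $p$ grows. (Endpoints follow by continuity; the ratio attains its maximum $1$ at $p=0$.)

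Next I would combine this with the equilibrium characterization. By \Cref{lem:unique-eq-decreasing}, $\bd_k\, u_{\gz}(n-1,\bp_k) = \bar U$ for every $k \le k^*$, so for $k \le k^*$
\begin{align*}
  \bd_k\, u_{\go}(n-1,\bp_k) \;=\; \bar U \cdot \frac{u_{\go}(n-1,\bp_k)}{u_{\gz}(n-1,\bp_k)}.
\end{align*}
Since $\bp_k$ is non-increasing in $k$ and the ratio on the right is non-increasing in its argument, the right-hand side is \emph{non-decreasing} in $k$: if every player used $\bp$ in $\game(n,\bd,s_{\go})$, a type's conditional utility would be non-decreasing in its index $k$, whereas under $\bq = \peq(n,\bd,s_{\go})$ all conditional utilities on $\bq$'s support are equal (again by \Cref{lem:unique-eq-decreasing}). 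So transforming $\bp$ into $\bq$ must shift probability mass off the low-index (high-value) types and onto the high-index (low-value) types. Making this precise---using that $u_{\go}(n-1,\cdot)$ is strictly decreasing, together with the off-support inequality $\bar U \ge \bd_k$ for $k > k^*$---shows that $\bp_k - \bq_k$ changes sign at most once, from nonnegative to nonpositive, as $k$ increases (in particular $k^* \le m$); combined with $\sum_k \bp_k = \sum_k \bq_k = 1$, this single crossing yields $\sum_{\ell=1}^k \bp_\ell \ge \sum_{\ell=1}^k \bq_\ell$ for all $k$, i.e.\ $\bp \succ \bq$.

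I expect the monotone-ratio claim to be the main obstacle: differentiating $u_{\go}/u_{\gz}$ in $p$ directly is unpleasant, so the stochastic-dominance/exponential-tilting argument above is the clean route, and one must be careful about the boundary cases $p \in \{0,1\}$ and the degenerate case $\bd \equiv 0$. The remaining step---converting ``conditional utilities are non-decreasing in $k$ at $\bp$'' into the single crossing of $\bp - \bq$, hence into majorization, including the support nesting $k^* \le m$---is structurally identical to the corresponding argument in \Cref{thm:diversity-n} and can be reused.
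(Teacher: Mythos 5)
Your proof is correct, and the key technical step is resolved by a genuinely different—and cleaner—argument than the paper's. Both proofs share the same skeleton: reduce to decreasing $\bd$, show that the hypothetical conditional utilities $Q_k = \bd_k\, u_{\go}(n-1,\bpg_k)$ are nondecreasing on the support of $\bpg = \peq(n,\bd,s_{\gz})$, and invoke the single-crossing/majorization lemma (the paper's \Cref{lem:p-majorize}). Where they diverge is in establishing that $p \mapsto u_{\go}(n-1,p)/u_{\gz}(n-1,p)$ is nonincreasing. The paper writes $u_{\go} = u_{\gz} + \int_{\gz}^{\go}\partial_\gamma u_\gamma\,d\gamma$, reduces to a sign condition on $\partial_p(z_\gamma/u_{\gz})$, and after several applications of Sah's discrete-derivative identities arrives at a bracketed double sum whose nonnegativity hinges on $t(\ell)=\log(\ell)/\ell^{\gamma-\gz}$ being increasing; this only holds when $\go-\gz \le 1/\log n$, forcing the paper to iterate over small increments and to dispatch $\go=\infty$ separately via \Cref{lem:inf-most-diverse}. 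Your argument instead observes that $u_{\go}/u_{\gz} = \EE{\tilde P_p}{(1+X)^{-(\go-\gz)}}$ where $\tilde P_p(x)\propto (1+x)^{-\gz}b(x,n-1,p)$, notes that $\{\tilde P_p\}_p$ is an MLR family (the binomial tilting factor is untouched), and concludes by first-order stochastic dominance that the expectation of the nonincreasing function $(1+x)^{-(\go-\gz)}$ decreases in $p$. This handles all $\go\ge\gz$ at once, including $\go=\infty$ (where the test function degenerates to $\ind{x=0}$), with no iteration and no separate endpoint case, and it transparently isolates what is being used: binomials have monotone likelihood ratios, and the $\gz$-tilt does not disturb that. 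The paper's route is more self-contained with its existing discrete-calculus toolkit; yours is shorter, unified, and arguably more illuminating.
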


The proof idea behind \Cref{thm:diversity-gamma} is somewhat similar to that of
\Cref{thm:diversity-n}: we ask what would happen if players in $\game(n, \bd,
s_{\go})$ were to play the strategies $\peq(n, \bd, s_{\gz})$, finding that they
would again place too much probability mass on types with high values of
$\bd_k$.

Taken together, \Cref{thm:diversity-n,thm:diversity-gamma} provide some
intuition on when we should expect to see more diverse vs. more homogeneous
behavior from producers using GAITs. In settings where competitors do not impose
strong negative externalities on one another, behavior should be relatively
homogeneous. Suppose, for example, teachers use a GAIT to generate problem sets.
We might expect to see homogeneity across teachers if the value of a homework
problem is unaffected by other teachers' behavior. But if some teachers begin
publishing solution manuals to their problem sets online, a student may be able
to use another teacher's solutions for their homework. Similar assignments now
impose negative externalities (from the teachers' perspective). As a result, at
equilibrium, teachers should diversify their use of GAITs to avoid generating
similar problem sets.

Although competition induces diversity, equilibria are still more homogeneous
than would be socially optimal:

\begin{restatable}[Optimal strategies are more diverse than
  equilibria]{theorem}{divopt}
  \label{thm:diversity-opt}
  For all $n \ge 2$, $\bd \in \R_{\ge 0}^K$, $s \in \sall$,
  \begin{align*}
    \peq(n, \bd, s) \succ \popt(n, \bd, s).
  \end{align*}
\end{restatable}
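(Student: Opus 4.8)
The plan is to reduce to the case of decreasing $\bd$, recast both $\peq$ and $\popt$ as water-filling maximizers of separable concave programs on the probability simplex, and then read off the majorization from a single-crossing property of the two distributions. By \Cref{thm:unique-eq} I may assume \Cref{as:d-ranked} (replacing $\bd$ by the decreasing $\tbd$ it produces) and invoke \Cref{lem:unique-eq-decreasing,lem:unique-opt-decreasing}; writing $\bp:=\peq(n,\bd,s)$ and $\bq:=\popt(n,\bd,s)$, it then suffices by \Cref{def:majorization} to show $\sum_{\ell\le k}\bp_\ell\ge\sum_{\ell\le k}\bq_\ell$ for every $k$. The first step is to note that $\bp$ maximizes, over distributions $z$ on $[K]$, the separable concave objective $\sum_k\bd_k\Phi_{\mathrm{eq}}(z_k)$ with $\Phi_{\mathrm{eq}}(x):=\int_0^x u_s(n-1,t)\,dt$: this $\Phi_{\mathrm{eq}}$ is concave since $u_s(n-1,\cdot)$ is decreasing, and its KKT conditions---$\bd_k u_s(n-1,z_k)$ constant on the support and $\le$ that constant off it---are exactly the conditions of \Cref{lem:unique-eq-decreasing}, with common value $U(\bp)$. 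One then checks similarly that $\bq$ maximizes $\sum_k\bd_k\Phi_{\mathrm{opt}}(z_k)$, where $\Phi_{\mathrm{opt}}'(x)=\frac{d}{dx}[x\,u_s(n-1,x)]$ when $s\in\sup$ and $\Phi_{\mathrm{opt}}'(x)=(1-x)^{n-1}$ when $s\in\sdown$ (after rescaling $W$ by $1/n$, which does not change the maximizer); in both cases $\Phi_{\mathrm{opt}}$ is concave (this concavity is part of what makes \Cref{lem:unique-opt-decreasing} hold), and I normalize so that $\Phi_{\mathrm{opt}}'(0)=\Phi_{\mathrm{eq}}'(0)=1$. Hence on their supports $\bp_k=(\Phi_{\mathrm{eq}}')^{-1}(\lambda/\bd_k)$ and $\bq_k=(\Phi_{\mathrm{opt}}')^{-1}(c/\bd_k)$, with $\lambda=U(\bp)$ and $c=\frac1n\|\nabla W(\bq)\|_\infty$, each support being an initial segment of $[K]$ since $\bp,\bq$ are nonincreasing, with $\bp_k>0\iff\bd_k>\lambda$ and $\bq_k>0\iff\bd_k>c$.

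The heart of the argument is the comparison of the two marginal-value functions: (i) $\Phi_{\mathrm{opt}}'(x)\le\Phi_{\mathrm{eq}}'(x)$ for all $x$, with equality at $x=0$; and (ii) $\Phi_{\mathrm{opt}}'(x)/\Phi_{\mathrm{eq}}'(x)$ is nonincreasing in $x$. Part (i) is immediate: for $s\in\sdown$, $\Phi_{\mathrm{eq}}'(x)=\E{1/s(1+X(n-1,x))}\ge\Pr[X(n-1,x)=0]=(1-x)^{n-1}=\Phi_{\mathrm{opt}}'(x)$ since $s(1)=1$; for $s\in\sup$, $\Phi_{\mathrm{opt}}'(x)=u_s(n-1,x)+x\,\partial_x u_s(n-1,x)\le u_s(n-1,x)=\Phi_{\mathrm{eq}}'(x)$ because $u_s(n-1,\cdot)$ is decreasing. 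For (ii) with $s\in\sdown$ the ratio is also easy: with $t=x/(1-x)$ one gets $\Phi_{\mathrm{eq}}'(x)/\Phi_{\mathrm{opt}}'(x)=\sum_{j=0}^{n-1}\binom{n-1}{j}t^j/s(1+j)$, a polynomial in $t$ with nonnegative coefficients, hence nondecreasing in $x$.

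Given (i)--(ii), here is how I would finish. Put $\psi:=\Phi_{\mathrm{opt}}'\circ(\Phi_{\mathrm{eq}}')^{-1}$; then $\psi$ is increasing, $\psi(y)\le y$, and $\psi(y)/y$ is nondecreasing. A short computation shows that whenever $\bp_k,\bq_k>0$ one has $\bp_k\le\bq_k\iff\bd_k\psi(\lambda/\bd_k)\ge c$, and that $d\mapsto d\,\psi(\lambda/d)=\lambda\cdot\psi(\lambda/d)/(\lambda/d)$ is nonincreasing in $d$ and bounded above by $\lambda$. Since $\bd_k$ is nonincreasing in $k$, this forces $\{k:\bp_k>\bq_k\}$ to be an initial segment of $[K]$: if $\bp_k\le\bq_k$ and $\bp_{k+1}>0$, then also $\bq_{k+1}>0$ (otherwise $\lambda<\bd_{k+1}\le c\le\bd_k\psi(\lambda/\bd_k)\le\lambda$, a contradiction), and then $\bp_{k+1}\le\bq_{k+1}$ by monotonicity of $d\mapsto d\,\psi(\lambda/d)$; the cases where $\bp_{k+1}=0$ are trivial. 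Because $\bp$ and $\bq$ both sum to $1$, a coordinatewise difference that is positive on an initial segment and nonpositive afterward has nonnegative partial sums, so $\bp\succ\bq$, i.e., $\peq\succ\popt$. (If $\bp=[1;0;\dots;0]$ the conclusion is immediate.)

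The hard part will be (ii) for $s\in\sup$, namely that $x\mapsto\frac{d}{dx}[x\,u_s(n-1,x)]/u_s(n-1,x)$ is nonincreasing, equivalently that $x\,(\log u_s(n-1,\cdot))'(x)$ is nonincreasing. This is strictly stronger than log-concavity of $u_s(n-1,\cdot)$, which can fail near $x=1$; and a coefficientwise comparison of the two polynomials $\sum_j\binom{n-1}{j}t^j/s(1+j)$ and $\sum_j\binom{n-1}{j}t^j\big(\tfrac{j+1}{s(j+1)}-\tfrac{j}{s(j)}\big)$ is too lossy, since the coefficient ratio $s(j+1)\big(\tfrac{j+1}{s(j+1)}-\tfrac{j}{s(j)}\big)$ need not be monotone in $j$ even for $s\in\sup$. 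Proving (ii) will therefore have to use the discrete-concavity of $x/s(x)$ required by \Cref{def:S-class} (together with discrete-convexity of $1/s$) in a more global way---for example through a careful direct comparison of the two binomial expectations defining $\Phi_{\mathrm{opt}}'$ and $\Phi_{\mathrm{eq}}'$, or a polynomial-positivity argument---and that is where I expect most of the work to go.
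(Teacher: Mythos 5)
Your single-crossing scheme is sound and, once unwound, is essentially the same argument the paper gives. The paper proves the theorem by applying \Cref{lem:p-majorize}, which requires showing that $Q_k:=\bd_k u_s(n-1,\bp^*_k)$ is nonincreasing in $k$ for $\bp^*=\popt(n,\bd,s)$; your $\psi$-based initial-segment argument encodes exactly that fact in a concave-potential parametrization. Your reduction to decreasing $\bd$ via \Cref{thm:unique-eq}, the recasting of $\peq$ and $\popt$ as water-filling maximizers, and the finishing step are all fine. For $s\in\sdown$ your argument is a genuinely different and slightly cleaner route than the paper's: you show $\Phi_{\mathrm{eq}}'(x)/\Phi_{\mathrm{opt}}'(x)=\sum_{j=0}^{n-1}\binom{n-1}{j}t^j/s(1+j)$ in $t=x/(1-x)$ directly, whereas the paper instead chains \Cref{lem:inf-most-diverse} and \Cref{lem:opt-inf-equivalence} to get $\peq(n,\bd,s)\succ\peq(n,\bd,s_\infty)=\popt(n,\bd,s)$.

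The genuine gap is claim (ii) for $s\in\sup$, which you flag but do not prove. Using the Sah identity $\partial_p u_s(n-1,p)=\tfrac{n-1}{p}\bigl(u_s(n-1,p)-u_s(n-2,p)\bigr)$, your ratio $\Phi_{\mathrm{opt}}'(x)/\Phi_{\mathrm{eq}}'(x)$ equals $n-(n-1)\,u_s(n-2,x)/u_s(n-1,x)$, so (ii) is literally \Cref{lem:dp-dn-ratio}: that $u_s(n-1,p)/u_s(n-2,p)$ is nonincreasing in $p$. The paper's proof of that lemma reduces, after algebra, to the log-concavity claim $u_s(n-1,p)^2\ge u_s(n,p)\,u_s(n-2,p)$, asserted to follow from discrete-convexity of $1/s$ via the Sah second-difference identity. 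You should scrutinize that step before relying on it: discrete-convexity of $1/s$ actually gives $u_s(n,p)-2u_s(n-1,p)+u_s(n-2,p)\ge 0$, the wrong sign for the AM--GM argument used there, and numerically $u_s(n-1,p)^2\ge u_s(n,p)u_s(n-2,p)$ can in fact fail (e.g.\ $s=s_2$, $p=\nicefrac{1}{2}$, $n=3$), even though the full ratio-monotonicity still appears to hold. So your instinct that the obvious coefficient-wise or local-convexity comparisons are too lossy is warranted, and (ii) for $s\in\sup$ still needs a careful proof; until you supply one, the proposal is incomplete.
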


Again, the proof considers the hypothetical in which players play $\popt$,
showing that they could increase their own utility (at the expense of social
welfare) by placing more probability mass on high-$\bd_k$ types. To summarize,
we have shown so far that competition induces diversity
(\Cref{thm:diversity-n,thm:diversity-gamma}), but equilibria are less diverse
than would be socially optimal (\Cref{thm:diversity-opt}).

\subsection{Competition in the limit.}
\label{sec:compeitition-limit}

Next, we explore the effects of competition in the limit. For the
score functions $s_\gamma$, we consider how $\peq$ and $\popt$ relate as $n$
grows, finding a surprising discontinuity. For $\gamma < 1$, $\peq$ and $\popt$
approach the same distribution $\bplim$, which is proportional to
$\bd_k^{1/\gamma}$. But for $\gamma \ge 1$, they diverge: $\peq$ still
approaches $\bp^\dagger$, but $\popt$ approaches the uniform distribution.
Intuitively, we can explain this qualitative shift as follows. When $\gamma <
1$, there is no limit to social welfare. More production of type $k$ results in
greater welfare, though with diminishing returns. But for $\gamma \ge 1$, social
welfare is bounded by $\sum_k \bd_k$, and maximizing social welfare amounts to
minimizing the chance that any type fails to get produced. Note that this
discontinuity is not simply an artifact of our two definitions of social welfare
for $\sup$ and $\sdown$, since it arises even if we restrict ourselves
to $\sup$ (i.e., for $\gamma < 1$ vs. $\gamma = 1$). We state the result
formally below. See \Cref{fig:ps-to-inf} for further intuition.

\begin{restatable}[Strategies as $n \to \infty$]{lemma}{gammaconv}
  \label{lem:n-to-infty}

  Assume without loss of generality that $\bd \in \R_{\ge 0}^K$ is weakly
  decreasing. Let $\bpn{n} = \peq(n, \bd, s_\gamma)$, and let $\bqn{n} =
  \popt(n, \bd, s_\gamma)$. Let $\bplim_k \triangleq \bd_k^{1/\gamma} / \sum_{k' \in
  [K]} \bd_{k'}^{1/\gamma}$. Then,
  \begin{itemize}
    \item $\left\{\bpn{n}\right\}_{n \to \infty}$ approaches $\bplim$.
    \item If $\gamma < 1$, $\left\{\bqn{n}\right\}_{n \to \infty}$ approaches
      $\bplim$.
    \item If $\gamma \ge 1$ (or more generally, for any $s \in \sdown$),
      $\left\{\bqn{n}\right\}_{n \to \infty}$ approaches the uniform
      distribution.
  \end{itemize}
  Each converges at a rate of $O\p{\sqrt{\frac{\log n}{n}}}$.

\end{restatable}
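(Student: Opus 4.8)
Throughout assume, by restricting to the support of $\bd$, that $\bd_k > 0$ for all $k \in [K]$ (off the support all three limit distributions put zero mass, and ``uniform'' means uniform over the support). The plan is to extract sharp asymptotics from the first-order characterizations of \Cref{lem:unique-eq-decreasing,lem:unique-opt-decreasing} --- applicable since $\bd$ is decreasing --- and feed them into one binomial concentration estimate, used repeatedly: \emph{if $p_n$ satisfies $n p_n \to \infty$ and $\gamma \in (0, \infty)$ is fixed, then $\E{(1 + X(n, p_n))^{-\gamma}} = (1 + o(1))\,(n p_n)^{-\gamma}$}. The lower bound is Jensen's inequality (convexity of $x \mapsto x^{-\gamma}$); the upper bound splits on $\{X(n, p_n) \ge (1-\epsilon) n p_n\}$, whose complement has probability $e^{-\Omega(n p_n)} = o((n p_n)^{-\gamma})$, then lets $\epsilon \to 0$. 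Since $\bd$ has finitely many coordinates, it suffices to apply this to each coordinate sequence separately, so no uniformity is needed. A second observation used in all three parts: $\peq$ and $\popt$ lie in $\Delta(\pi)$, so their largest coordinate is at least $1/K$.

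\emph{Equilibrium, $\bpn{n} \to \bplim$.} By \Cref{lem:unique-eq-decreasing}, $U_n := U(\bpn{n}) = \bd_1 u(n-1, \bpn{n}_1) \le \bd_1 u(n-1, 1/K)$, which tends to $0$ by concentration; hence for large $n$ we have $U_n < \bd_K$, and property (3) of \Cref{lem:unique-eq-decreasing} forces $k^* = K$, so $\bd_k u(n-1, \bpn{n}_k) = U_n$ for every $k$. The Jensen lower bound on $u$ turns this into $1 + (n-1)\bpn{n}_k \ge (\bd_k / U_n)^{1/\gamma} \to \infty$, so $n \bpn{n}_k \to \infty$ and the full estimate gives $n \bpn{n}_k = (1 + o(1))(\bd_k / U_n)^{1/\gamma}$. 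Summing over $k$ and using $\sum_k \bpn{n}_k = 1$ determines $U_n$ up to $(1+o(1))$ and yields $\bpn{n}_k = (1 + o(1))\,\bd_k^{1/\gamma} / \sum_{k'} \bd_{k'}^{1/\gamma} \to \bplim_k$. (When $\gamma = \infty$, $u_{s_\infty}(n-1, p) = (1-p)^{n-1}$ and the equilibrium condition has exactly the form of the $\sdown$-optimum condition below, giving the limit $1/K = \bplim_k$.)

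\emph{Optimum.} For $s_\gamma \in \sup$ (i.e.\ $\gamma \in (0,1)$), write $\phi(x) = x^{1-\gamma}$, so $W(\bp) = \sum_k \bd_k \E{\phi(X(n, \bp_k))}$. Using the standard Bernstein identity $\tfrac{d}{dp}\E{\phi(X(n,p))} = n\,\E{\phi(X(n-1,p)+1) - \phi(X(n-1,p))}$, the optimality condition of \Cref{lem:unique-opt-decreasing} (property 2) reads $c_n = \bd_k\,\E{(X+1)^{1-\gamma} - X^{1-\gamma}}$ with $X = X(n-1, \bqn{n}_k)$. Since $\phi$ is discrete-concave, the second-difference version of the Bernstein identity shows $p \mapsto \E{\phi(X(n,p))}$ is concave, hence $W$ is concave; with $W(\mathbf 0) = 0$ this gives $n c_n = \langle \nabla W(\bqn{n}), \bqn{n} \rangle \le W(\bqn{n}) \le n^{1-\gamma}\sum_k \bd_k$, so $c_n \to 0$ and (as in the equilibrium case) $k^* = K$ eventually. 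Because $(x+1)^{1-\gamma} - x^{1-\gamma}$ lies between $(1-\gamma)(x+1)^{-\gamma}$ and $(1-\gamma)\max(x,1)^{-\gamma}$, the left inequality plus Jensen forces $(n-1)\bqn{n}_k \to \infty$, and then concentration gives $c_n = (1 + o(1))(1-\gamma)\,\bd_k (n \bqn{n}_k)^{-\gamma}$ for each $k$; summing with $\sum_k \bqn{n}_k = 1$ gives $\bqn{n}_k \to \bd_k^{1/\gamma}/\sum_{k'}\bd_{k'}^{1/\gamma} = \bplim_k$ (the $(1-\gamma)$ cancels in the normalization, which is why equilibrium and optimum share the limit). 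For $s \in \sdown$, $W(\bp) = \sum_k \bd_k(1 - (1-\bp_k)^n)$ is independent of $s$, and \Cref{lem:unique-opt-decreasing} (property 3) gives $\bd_k(1-\bqn{n}_k)^{n-1} = c_n$ for $k \le k^*$; here $c_n \le \bd_1(1-1/K)^{n-1} \to 0$ gives $k^* = K$ eventually, and solving $\bqn{n}_k = 1 - (c_n/\bd_k)^{1/(n-1)}$ subject to $\sum_k \bqn{n}_k = 1$ yields $c_n^{1/(n-1)} = (K-1)/\sum_k \bd_k^{-1/(n-1)} \to (K-1)/K$, so each $\bqn{n}_k \to 1/K$.

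\emph{Main obstacle.} The heart of the argument is the concentration bookkeeping: converting the exact first-order equalities --- expectations of convex or concave functions of binomials --- into the clean asymptotics $\E{(1+X(n,p))^{-\gamma}} \sim (np)^{-\gamma}$ and $\E{(X+1)^{1-\gamma} - X^{1-\gamma}} \sim (1-\gamma)(np)^{-\gamma}$, which requires care near $X = 0$ (where $x^{-\gamma}$ diverges) and requires first establishing \emph{non-circularly} that $np \to \infty$ on every relevant coordinate --- handled via the crude Jensen bounds, which do not presuppose concentration. Establishing concavity of $W$ to deduce $c_n \to 0$ in the $\gamma < 1$ case is the other nontrivial ingredient; the remaining steps are algebra driven by the constraint $\sum_k \bp_k = 1$.
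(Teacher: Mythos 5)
Your proof is correct in its substance and follows the same high-level template as the paper — extract first-order conditions from \Cref{lem:unique-eq-decreasing,lem:unique-opt-decreasing}, show the relevant $n p_k$ diverge, then apply binomial concentration — but the technical implementation differs in several genuine ways worth noting. The paper establishes $\pkn{n} \ge a > 0$ for all large $n$ by appealing to the majorization results (\Cref{thm:diversity-n,thm:diversity-opt}, which force $\bpn{n}_K$ to be nondecreasing in $n$ and $\qn{n}_K \ge \pn{n}_K$), whereas you instead get $n\pkn{n} \to \infty$ directly from the Jensen lower bound $u(n-1, p) \ge (1 + (n-1)p)^{-\gamma}$ fed back into the equilibrium condition $\bd_k u(n-1, \bpn{n}_k) = U_n$; your route is more self-contained and arguably cleaner, at the (harmless) cost of getting a divergent rather than a constant lower bound on $np_k$. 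Likewise, the paper never explicitly argues $c_n \to 0$ in the $\sup$ optimum case, while your argument via concavity of $W$ and $n c_n = \langle \nabla W(\bqn{n}), \bqn{n}\rangle \le W(\bqn{n}) \le n^{1-\gamma}\sum_k \bd_k$ is a neat way to get it. After concentration, you close the argument by summing the $K$ relations against $\sum_k p_k = 1$, where the paper takes pairwise ratios; your $\sdown$ computation in particular — solving $\bqn{n}_k = 1 - (c_n/\bd_k)^{1/(n-1)}$ explicitly and observing $\bd_k^{-1/(n-1)} \to 1$ — is more transparent than the ratio limit in the paper, which tacitly needs the extra normalization step to conclude $\qkn{\infty} = 1/K$.

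One small slip: your claimed sandwich $(1-\gamma)(x+1)^{-\gamma} \le (x+1)^{1-\gamma} - x^{1-\gamma} \le (1-\gamma)\max(x,1)^{-\gamma}$ fails on the right at $x = 0$, where the difference equals $1$ but the bound gives $1-\gamma$. This does not break the argument, since you only use the left inequality (plus Jensen) to force $(n-1)\bqn{n}_k \to \infty$, and the $X=0$ event contributes $\Pr[X=0] = (1-p)^{n-1} = o((np)^{-\gamma})$ to the concentration estimate once $np \to \infty$, exactly the kind of care near $X=0$ you flag in your final paragraph. But the inequality should not be stated as a pointwise sandwich; state the MVT bound for $x \ge 1$ and handle $x = 0$ separately.
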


\ifdefined\smallfigs
\else
\begin{figure}[ht]
  \centering
  \begin{subfigure}[ht]{0.49\textwidth}
    \centering
    \includegraphics[width=\textwidth]{./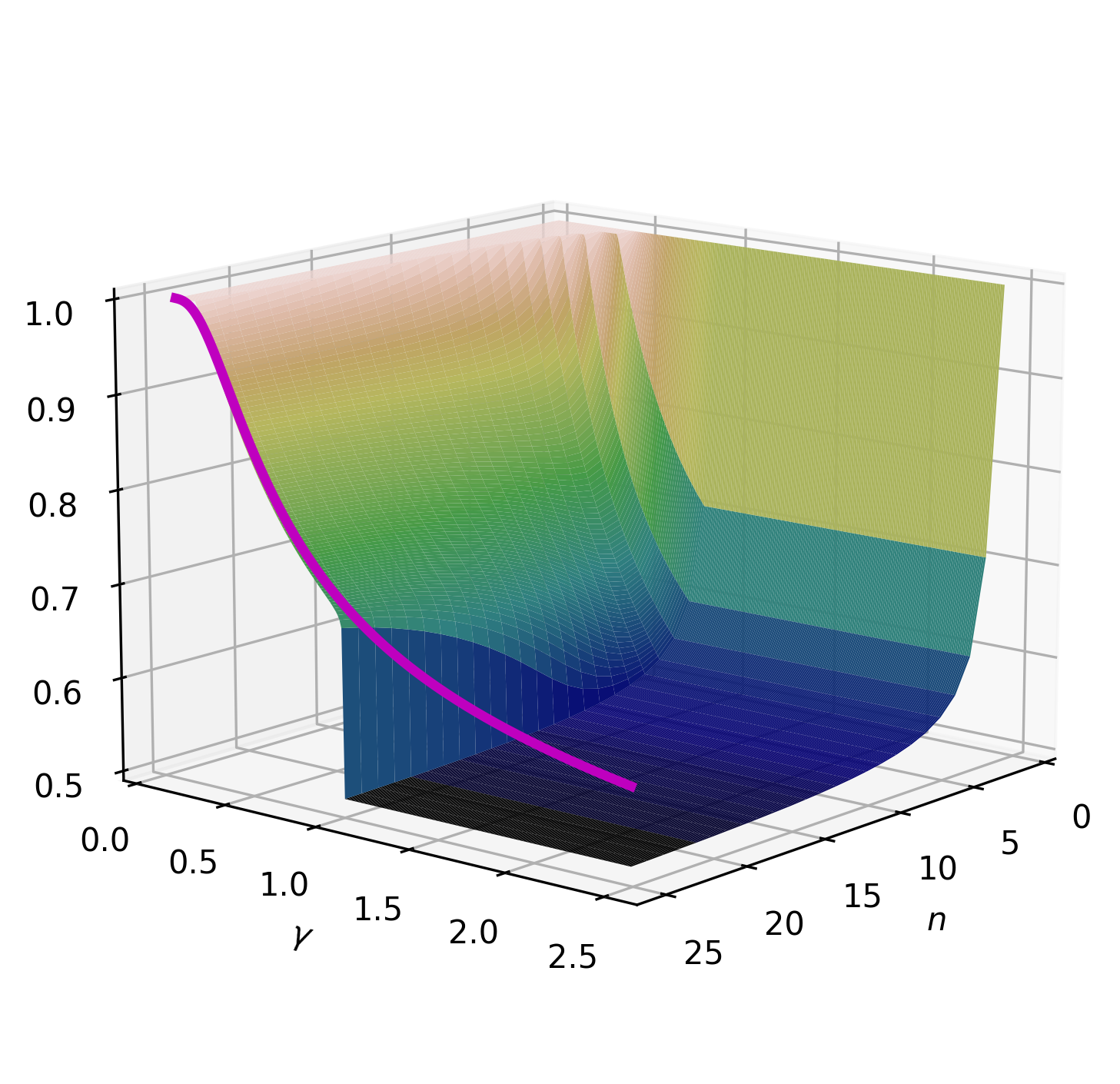}
    \caption{$\popt(n, \bd, s_\gamma)_1$ for $\bd = [5; ~ 2]$.}
    \label{fig:ps-to-inf-opt}
  \end{subfigure}
  \hfill
  \begin{subfigure}[ht]{0.49\textwidth}
    \centering
    \includegraphics[width=\textwidth]{./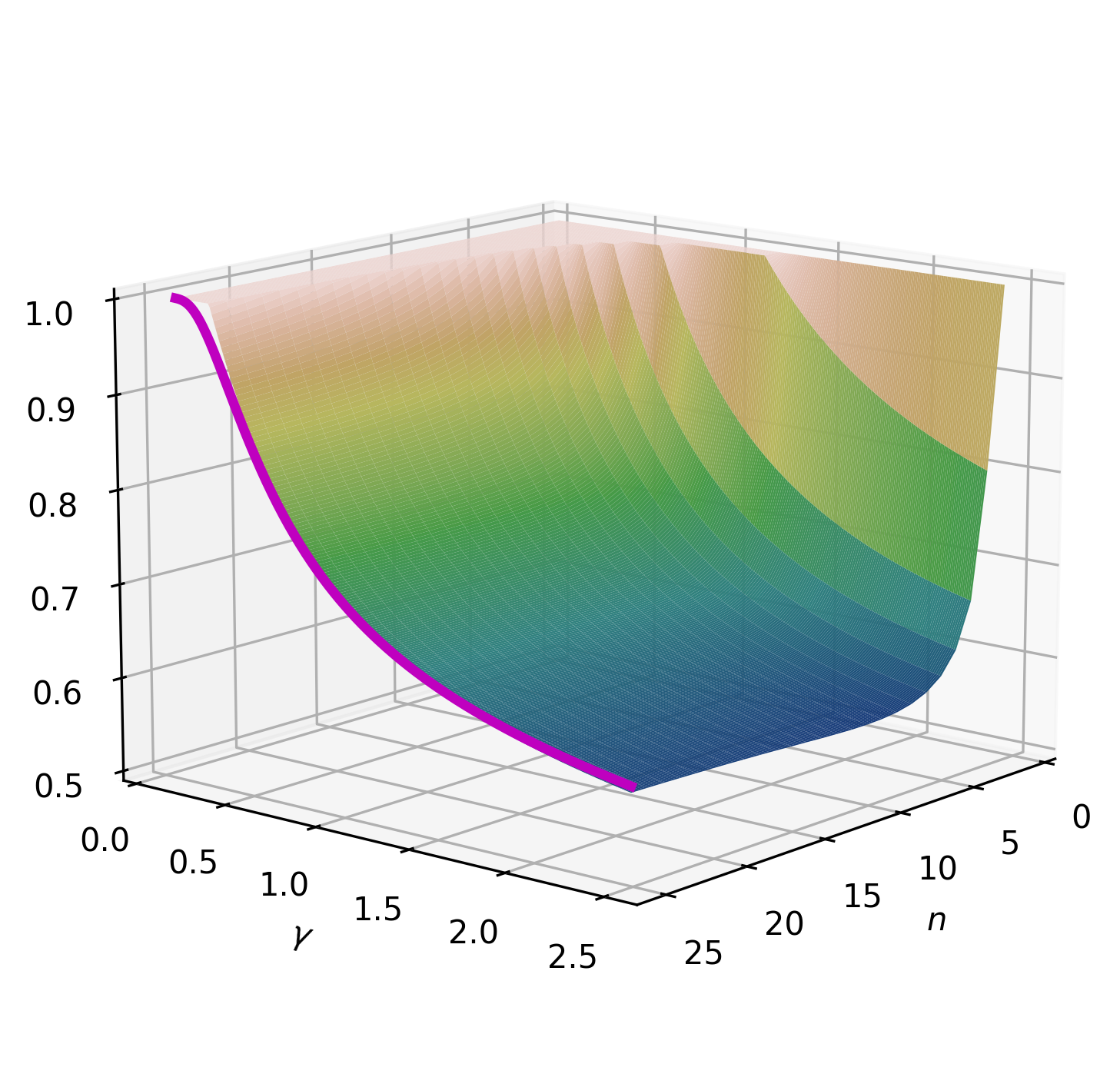}
    \caption{$\peq(n, \bd, s_\gamma)_1$ for $\bd = [5; ~ 2]$.}
    \label{fig:ps-to-inf-eq}
  \end{subfigure}
  \caption{For further intuition on \Cref{lem:n-to-infty},
    we show how $\popt(n, \bd, s_\gamma)_1$ and $\peq(n, \bd, s_\gamma)_1$
    respectively vary with both $n$ and $\gamma$ for the instance given by $\bd
    = [5; ~ 2]$. For $\gamma < 1$, both $\popt$ and $\peq$ converge to $\bplim$,
    shown as a magenta line in each panel. For $\gamma \ge 1$, $\popt$ converges
    to the uniform
    distribution, with a discontinuity at $\gamma = 1$. Further, while $\peq$ is
    monotone in both $n$ and $\gamma$
    (\Cref{thm:diversity-n,thm:diversity-gamma}), $\popt$ is not. Increasing $n$
  can lead to a \textit{less} diverse symmetric socially optimal strategy.}
  \label{fig:ps-to-inf}
\end{figure}
\fi

While the proof is quite involved, the idea is simple: for sufficiently large
$n$, $X(n, p)$ concentrates around $np$, so $s_\gamma(X(n, p)) \approx
(np)^{-\gamma}$. Another interesting limiting case is where $\gamma = \infty$,
i.e., players only derive utility when their sampled types are unique (which is
how Scattergories is intended to be played). In this case, equilibria coincide
with optimal strategies under any $s \in \sdown$.

\begin{restatable}[Equivalence when $\gamma = \infty$]{lemma}{infequiv}
  \label{lem:opt-inf-equivalence}
  For all $n \ge 2$, $\bd \in \R_{\ge 0}^K$, $s \in \sdown$,
  \begin{align*}
    \popt(n, \bd, s) = \peq(n, \bd, s_\infty).
  \end{align*}
\end{restatable}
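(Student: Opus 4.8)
The goal is to show that for $s \in \sdown$, the welfare-maximizing symmetric strategy $\popt(n,\bd,s)$ coincides with the equilibrium strategy $\peq(n,\bd,s_\infty)$ of the "hard uniqueness" game. The plan is to verify that both strategies satisfy the same characterizing first-order conditions, and then invoke uniqueness. First I would invoke \Cref{thm:unique-eq} to reduce to the case where $\bd$ is weakly decreasing in $k$, so that \Cref{lem:unique-eq-decreasing} and \Cref{lem:unique-opt-decreasing} both apply and both strategies lie in $\Delta(\pi)$.

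Next I would write down the two sets of conditions explicitly. On the optimal side, since $s \in \sdown$, \Cref{lem:unique-opt-decreasing}(3) says that $\popt = \bp^*$ is characterized by the existence of a constant $c$ with $\bd_k(1-\bp_k^*)^{n-1} = c$ for all $k \le k^*$ and $c \ge \bd_k$ for $k > k^*$ (recall that for $s \in \sdown$ the welfare objective $W(\bp) = \sum_k \bd_k(1-(1-\bp_k)^n)$ does not depend on the particular $s$, so this is the same for all $s \in \sdown$, in particular agreeing with the $\gamma=\infty$ welfare). On the equilibrium side, for $s_\infty(x) = x^\infty$ we have $u_{s_\infty}(n-1, p) = \E{1/s_\infty(1 + X(n-1,p))} = \Pr[X(n-1,p) = 0] = (1-p)^{n-1}$, since $s_\infty(1) = 1$ and $s_\infty(x) = \infty$ for $x \ge 2$. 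Plugging this into \Cref{lem:unique-eq-decreasing}, the equilibrium $\peq(n,\bd,s_\infty)$ is characterized by a constant $U$ (the per-type conditional utility) with $\bd_k(1-\bp_k)^{n-1} = U$ for all $k \le k^*$ and $U \ge \bd_k$ for $k > k^*$. These are literally the same system of equations and inequalities, so both strategies satisfy the conditions of \Cref{lem:unique-eq-decreasing} (and of \Cref{lem:unique-opt-decreasing}) with the identification $c = U$.

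To finish, I would note that each lemma asserts uniqueness of the strategy satisfying its conditions: \Cref{lem:unique-eq-decreasing} gives a unique symmetric equilibrium, and \Cref{lem:unique-opt-decreasing} gives a unique symmetric optimal strategy. Since $\popt(n,\bd,s)$ satisfies the equilibrium conditions for $s_\infty$, it must equal $\peq(n,\bd,s_\infty)$; equivalently, since $\peq(n,\bd,s_\infty)$ satisfies the optimality conditions for any $s \in \sdown$, it must equal $\popt(n,\bd,s)$. One small thing to check carefully is that the "threshold index" $k^*$ is forced to be the same in both characterizations — but this is immediate, since $k^*$ is determined by the same data (the sorted $\bd$ and the normalization $\sum_k \bp_k = 1$) via the same water-filling construction, so it is pinned down by the conditions themselves rather than being an independent choice.

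The main obstacle, such as it is, is purely notational: making sure the degenerate score function $s_\infty$ is handled rigorously, in particular that $u_{s_\infty}(n-1,p) = (1-p)^{n-1}$ holds with the convention $s_\infty(0) = 1$ and that $s_\infty \in \sdown$ so that \Cref{lem:unique-eq-decreasing} genuinely applies (it does: $x/s_\infty(x)$ equals $1$ at $x=1$ and $0$ for $x \ge 2$, hence is decreasing, giving \labelcref{def:S-convex}). Everything else is a direct comparison of two already-established characterizations, so there is no real analytic difficulty — the content is entirely in the observation that $\Pr[X(n-1,p)=0]$ is simultaneously the $\gamma=\infty$ conditional-utility factor and the $(1-p)^{n-1}$ marginal-welfare factor appearing in the $\sdown$ optimality condition.
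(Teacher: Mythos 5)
Your proposal is correct and follows essentially the same route as the paper's proof: reduce to decreasing $\bd$, observe that $u_{s_\infty}(n-1,p) = \Pr[X(n-1,p)=0] = (1-p)^{n-1}$ so the equilibrium conditions of \Cref{lem:unique-eq-decreasing} coincide with the $\sdown$ optimality conditions of \Cref{lem:unique-opt-decreasing}, and conclude by the water-filling/uniqueness argument that the two constants (and hence the two distributions) agree.
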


It turns out that $\gamma = \infty$ also yields a bound on the diversity of
equilibria. No $s \in \sall$ yields an equilibrium that is more diverse than the
one induced by $s_\infty$.

\begin{restatable}[$\gamma = \infty$ yields the most diverse
  equilibria]{lemma}{infmostdiverse}
  \label{lem:inf-most-diverse}
  For all $n \ge 2$, $\bd \in \R_{\ge 0}^K$, $s \in \sall$,
  \begin{align*}
    \peq(n, \bd, s) \succ \peq(n, \bd, s_\infty).
  \end{align*}
\end{restatable}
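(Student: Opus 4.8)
The plan is to handle the two sub-classes of $\sall$ separately, though a single argument covers both. For $s \in \sdown$ the statement is almost immediate: by \Cref{thm:diversity-opt}, $\peq(n,\bd,s) \succ \popt(n,\bd,s)$, and by \Cref{lem:opt-inf-equivalence}, $\popt(n,\bd,s) = \peq(n,\bd,s_\infty)$, so $\peq(n,\bd,s) \succ \peq(n,\bd,s_\infty)$. For $s \in \sup$ (and, in fact, for every $s \in \sall$), I would instead use the ``thought experiment'' template of \Cref{thm:diversity-n,thm:diversity-gamma}. By \Cref{thm:unique-eq} we may assume without loss of generality that $\bd$ is weakly decreasing, so $\bp \triangleq \peq(n,\bd,s)$ is weakly decreasing with support an initial segment $[k^*]$ (\Cref{lem:unique-eq-decreasing}). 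Imagine all $n$ players in $\game(n,\bd,s_\infty)$ playing $\bp$; the goal is to show the induced per-type conditional utility $v_k \triangleq \bd_k\, u_{s_\infty}(n-1,\bp_k)$ is weakly increasing in $k$ over $[k^*]$. Since at the genuine equilibrium $\peq(n,\bd,s_\infty)$ conditional utilities are equal across the support (\Cref{lem:unique-eq-decreasing}), passing from $\bp$ to $\peq(n,\bd,s_\infty)$ requires shifting probability mass off the low-index (high-$\bd_k$) types onto higher-index types, possibly enlarging the support; exactly as for \Cref{thm:diversity-n}, this yields $\bp \succ \peq(n,\bd,s_\infty)$.

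The crux is the monotonicity of $v_k$ over $[k^*]$. Only the $j=0$ term of the expectation defining $u_{s_\infty}$ is nonzero, so $u_{s_\infty}(m,p) = \Pr[X(m,p) = 0] = (1-p)^m$. On $[k^*]$, \Cref{lem:unique-eq-decreasing} gives $\bd_k\, u_s(n-1,\bp_k) = U(\bp) =: c$ for all $k \le k^*$, hence $v_k = c\cdot r(\bp_k)$ with $r(p) \triangleq u_{s_\infty}(n-1,p)/u_s(n-1,p)$. As $\bp_k$ is weakly decreasing in $k$, it is enough to show $r$ is weakly decreasing on $[0,1)$. Writing $m = n-1$ and substituting $t = p/(1-p)$, which is strictly increasing in $p$,
\[
\frac{1}{r(p)} \;=\; \frac{u_s(m,p)}{(1-p)^m} \;=\; \sum_{j=0}^{m} \binom{m}{j}\, \frac{t^{\,j}}{s(1+j)},
\]
a polynomial in $t$ with strictly positive coefficients (constant term $1/s(1) = 1$, coefficient of $t^1$ equal to $m/s(2) > 0$). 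Thus $1/r$ is strictly increasing in $t$, hence in $p$, so $r$ is strictly decreasing, and $v_k$ is weakly increasing in $k$. This step uses only $s(1) = 1$ and positivity of $s$; neither discrete-convexity nor the $\sup/\sdown$ split of \Cref{def:S-class} plays a role. The degenerate case $\bp_1 = 1$, where $[k^*]$ is a single type and $\bp$ majorizes every distribution, is trivial.

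To finish, I would feed this monotonicity into the same majorization lemma used to close \Cref{thm:diversity-n}: for a weakly decreasing $\bp \in \Delta(\pi)$ such that playing $\bp$ in $\game(n,\bd,s_\infty)$ yields conditional per-type utilities that are weakly increasing across the support of $\bp$, the unique (water-filling) equilibrium $\peq(n,\bd,s_\infty)$ is obtained from $\bp$ by transferring mass monotonically toward higher indices, whence $\bp \succ \peq(n,\bd,s_\infty)$. I expect this final bookkeeping step to be the main obstacle to writing out in full---not because the mechanism is subtle, but because one must handle the case where $\peq(n,\bd,s_\infty)$ has strictly larger support than $\bp$ (e.g.\ when $\bd$ has several positive coordinates but $\bp$ is a point mass, so $v_{k^*} = 0$ while $\bd_k > 0$ for $k > k^*$), and verify that the combination of redistribution within $[k^*]$ and mass added beyond it respects every partial-sum inequality of \Cref{def:majorization}. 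The ratio computation is the clean part; the support bookkeeping is where the care is needed, and it runs parallel to the proof of \Cref{thm:diversity-n}.
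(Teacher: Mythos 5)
Your proposal is correct and, at the level of overall structure, identical to the paper's: both reduce the claim to \Cref{lem:p-majorize} (the majorization lemma from \Cref{thm:diversity-n}), taking $Q_k = \bd_k\, u_{s_\infty}(n-1,\bp_k)$ for $\bp = \peq(n,\bd,s)$, and both reduce showing $\{Q_k\}$ increasing on the support of $\bp$ to showing that the ratio $u_{s_\infty}(n-1,p)/u_s(n-1,p) = (1-p)^{n-1}/u_s(n-1,p)$ is decreasing in $p$. Where you differ is in the proof of that last monotonicity. The paper differentiates $\log(\Pr[X(n-1,p)=0]/u_s(n-1,p))$, applies \citeauthor{sah1991effects}'s difference formulas (\Cref{lem:sah-diff}), and reduces to the one-step decomposition $u_s(n-1,p) \ge (1-p)\,u_s(n-2,p)$. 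You instead observe that
\[
\frac{u_s(m,p)}{(1-p)^m} \;=\; \sum_{j=0}^{m}\binom{m}{j}\frac{t^{\,j}}{s(1+j)}, \qquad t = \frac{p}{1-p},
\]
a polynomial in $t$ with nonnegative coefficients and a positive linear coefficient, hence strictly increasing in $p$ on $[0,1)$. This is cleaner: it avoids the derivative-ratio gymnastics and the Sah difference machinery entirely, using only $s(1)=1$, $s>0$, and the (derived) fact $s(2)>1$. It buys you a one-line, self-contained monotonicity argument at the cost of no additional assumptions.

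Two smaller observations. First, the $\sdown$ shortcut you open with (``by \Cref{thm:diversity-opt} and \Cref{lem:opt-inf-equivalence}'') would be circular as stated: the paper proves the $\sdown$ case of \Cref{thm:diversity-opt} \emph{by appealing to} \Cref{lem:inf-most-diverse}. Since your main argument covers all $s \in \sall$, this is harmless here, but it would not survive as a standalone proof of the $\sdown$ case. Second, the ``support bookkeeping'' you flag at the end is already fully absorbed by \Cref{lem:p-majorize}: its hypothesis only requires $\{Q_k\}$ increasing over indices where $\bp_k > 0$, and its Case 1 internally handles the situation where the genuine $s_\infty$-equilibrium has strictly larger support than $\bp$. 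No separate verification is needed.
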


We conclude by extending our asymptotic findings to asymmetric equilibria.
Let $\bP \in
\Delta(\pi)^n$ be a $K \times n$ matrix in which each column $i$ is player $i$'s
strategy.
While
each player may have different strategies in an asymmetric equilibrium $\bP$, we
show that the average strategy in asymmetric equilibria converges to $\peq$ as
the number of players grows.
Formally, let
$\eqsetn{n}$ be the set of (potentially asymmetric) equilibria to $\game(n, \bd,
s_\gamma)$.
\begin{restatable}[Asymmetric equilibria converge to symmetric
  equilibra]{lemma}{asymconverge}
  \label{lem:sym-asym-converge}
  Choose $\{\bPn{n}\}_{n=2}^\infty$ such that $\bPn{n} \in \eqsetn{n}$.
  Let $\barpn{n} = \frac{1}{n} \sum_{i=1}^n \bPn{n}(i)$ be the average strategy
  in $\bPn{n}$.
  Then, for all $k$,
  \begin{align*}
    |\peq(n, \bd, s_\gamma)_k - \barpn{n}_k| \le O\p{\sqrt{\frac{\log n}{n}}}.
  \end{align*}
\end{restatable}
As a result, a generalization of \Cref{lem:n-to-infty} holds for asymmetric
equilibria, which we formalize in \Cref{lem:p-convergence}.

\subsection{Price of anarchy}
\label{sec:poa}

Optimal behavior is more diverse than equilibrium behavior, and while they may
coincide in the limit, equilibria will in general be suboptimal. We conclude
this section by showing that they cannot be \textit{too} suboptimal: the price
of anarchy~\citep{koutsoupias1999worst,roughgarden2002bad} is at most 2.
This holds even if we broaden our scope to consider \textbf{asymmetric}
strategies $\bP \in \Delta(\pi)^n$ as well.
Social welfare is then defined as before, where each player
independently samples from potentially different distributions. We show in
\Cref{app:poa} that $\game$ is a \textit{valid utility
game}~\citep{vetta2002nash}, which directly yields the following result. We also
give a family of instances for which it is tight.

\begin{restatable}[Price of anarchy]{theorem}{poa}
  \label{thm:poa}
  The price of anarchy for $\game$ is 2. Formally, let $\bP$ be any equilibrium
  of $\game(n, \bd, s)$, and let $\bP^*$ be any socially optimal strategy
  matrix. Then,
  \begin{equation}
    \label{eq:poa}
    W(\bP^*) \le 2 W(\bP).
  \end{equation}
  As a special case, the same holds true for symmetric equilibria $\bp$ and optimal
  strategies $\bp^*$:
  \begin{align*}
    W(\bp^*) \le 2 W(\bp)
  \end{align*}
  There exists a family of instances $\{\game(n, \bd^{(n)}, s)\}_{n \to \infty}$
  for which~\eqref{eq:poa} is asymptotically tight.
\end{restatable}

Thus, while selfish behavior leads to lower diversity, its negative consequences
on social welfare are bounded. Having analyzed the setting in which all players
use a single GAIT, we next generalize to the case where players can
strategically choose a GAIT.

\section{Competition across Tools}
\label{sec:inter-tool}

In this section, we allow players to strategically choose a GAIT, represented by
a permutation $\pin{j}$ over $[K]$, from a set of choices $\Pi = \{\pin{1},
\pin{2}, \dots, \pin{J}\}$. Subject to their choice of GAIT, each player chooses
a distribution $\bp \in \Delta(\pin{j})$ that obeys the ordering given by
$\pin{j}$. This strictly generalizes the setting described in \Cref{sec:theory},
in which $\Pi$ was a singleton set. Our results here will not be as general as
before; indeed, symmetric pure strategy equilibria need not exist in this
setting. Our goal will instead be to provide illustrative examples of how
competition across tools might unfold, which will ground our subsequent
empirical results.

\paragraph*{Analyzing market share.}

Our primary focus in this section will be on \textit{market share at
equilibrium}. How many players choose to use each GAIT? A theme in both our
theoretical results here as well as our experiments in
\Cref{sec:pairwise,sec:continuous-exp} is that suboptimal tools can
succeed in competitive markets by capturing particular niches within the market.
Faced with heterogeneous demand, producers strategically exploit heterogeneity
across GAITs, increasing diversity in overall production.

In our theoretical analysis, we will restrict our
attention to pure partially symmetric equilibria, in which all players who
choose $\pin{j}$ also choose the same distribution $\bpn{j}$. Given a set of
GAITs $\Pi$ and values $\bd$, a pure partially symmetric equilibrium can be
specified by $\{m_j\}_{j=1}^J$, the number of players who use GAIT $j$, and
$\{\bpn{j}\}_{j=1}^J$, their respective strategies. Let $\mc E(n, \Pi, \bd)$ be
set of all such pure partially symmetric equilibria. With this, we can define
market share.
\begin{definition}
  \label{def:market-share}
  Given an instance $(n, \Pi, \bd)$, the market share of GAIT $j$ is
  \ifthenelse{\boolean{smallEqs}}{
  \begin{align*}
    \mc M_j(n, \Pi, \bd)
    \triangleq
    &\{m : \exists \text{ an equilibrium in } \\
    &\mc E(n, \Pi, \bd) \text{ such that } m_j = m\}.
  \end{align*}
}{\begin{align*}
    \mc M_j(n, \Pi, \bd)
    \triangleq \{m : \exists \text{ an equilibrium in } \mc E(n, \Pi, \bd) \text{
    such that } m_j = m\}.
  \end{align*}
}
\end{definition}

In other words, $\mc M_j$ is the set of possible market shares for tool $j$.
Define $\maxm_j(n, \Pi, \bd) \triangleq \max \mc M_j(n, \Pi, \bd)$ and
$\minm_j(n, \Pi, \bd) \triangleq \min \mc M_j(n, \Pi, \bd)$ to be the maximal
and minimal (respectively) market shares that tool $j$ can have at any
equilibrium in $\mc E$.

Intuitively, we might expect market share to be related to the quality of a
GAIT: more players should choose a ``better'' tool over a worse one. Because
tools may take a range of market shares across different equilibria, we need to
be careful in how we formalize this. Let $\Pi = \{\pin{1}, \pin{2}\}$. Tool $1$
is ``better'' than tool $2$ in isolation if $\minm_1(1, \Pi, \bd) = 1$ (and
therefore $\maxm_2(1, \Pi, \bd) = 0$). This means that a single player facing no
competition strictly prefers tool $1$ to tool $2$. Under this condition, we
might expect that with $n$ players, GAIT $1$'s market share would be at least as
high as $2$'s. Conservatively, we can express this as $\maxm_1(n, \Pi, \bd) \ge
\minm_2(n, \Pi, \bd)$, meaning that the ``best'' equilibrium for tool $1$ yields
market share at least as large as the ``worst'' equilibrium for tool $2$.

Our first result in this setting shows that this is not the case: it is possible
for a GAIT that appears better in isolation to get strictly less market share
than a worse one as competition (i.e., the number of players $n$) increases.

\begin{restatable}[The dominant GAIT depends on the strength of
  competition]{lemma}{dominantstrength}
  \label{lem:dominant-strength}
  There exists $(n, \Pi = \{\pin{1}, \pin{2}\}, \bd)$ such that $\minm_1(1, \Pi,
  \bd) = 1$ and $\maxm_1(n, \Pi, \bd) < \minm_2(n, \Pi, \bd)$.
\end{restatable}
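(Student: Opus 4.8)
The plan is to construct an instance in which one GAIT has a single outstanding answer while the other has several merely good answers, so that the first GAIT is strictly better in isolation but is crippled once it must be shared. Fix a large integer $L$, let $K = L+4$, and take the score function to be $s_\infty$ (the construction adapts to other $s \in \sall$). Let $\bd$ have one coordinate equal to $V$, three coordinates equal to $w$ with $0 < w < V$, and $L$ coordinates equal to $0$. Let $\pin1$ rank the $V$-coordinate first, then the $L$ zero-coordinates, then the three $w$-coordinates; let $\pin2$ rank the three $w$-coordinates first, then the $L$ zero-coordinates, then the $V$-coordinate. By the Pool Adjacent Violators reduction of \Cref{thm:unique-eq}, the single-tool game on $\pin1$ is equivalent to one with decreasing profile $(V,\delta,\dots,\delta)$ ($L+3$ copies of $\delta = 3w/(L+3)$) and the game on $\pin2$ to one with profile $(w,w,w,\eta,\dots,\eta)$ ($L+1$ copies of $\eta = V/(L+1)$), with $\delta,\eta \to 0$ as $L\to\infty$. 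Since a lone player earns $V$ from tool $1$ and only $w$ from tool $2$, the one-player game's unique equilibrium uses tool $1$, i.e.\ $\minm_1(1,\Pi,\bd) = 1$.

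For the $n$-player game, consider a pure partially symmetric equilibrium with $m_j$ users of tool $j$, $m_1+m_2 = n$. Two structural facts drive the argument. First, the best-response problem over $\Delta(\pin j)$ is the maximization of a \emph{linear} functional (each coordinate's conditional utility depends on opponents' strategies, not one's own), so a best response is attained at a vertex of $\Delta(\pin j)$, which is a uniform distribution on a prefix of the $\pin j$-ranking. Second, and crucially, in $\pin1$ the $V$-coordinate is followed by $L$ worthless zero-coordinates before any $w$-coordinate: any prefix reaching a $w$-coordinate must spread mass over at least $L+2$ coordinates, $L$ of which contribute nothing. Hence if $m_1 \ge 2$, the tool-$1$ users cannot all pile onto the $V$-coordinate (that yields conditional utility $0$ there, while the $w$-coordinates still carry positive conditional utility, so piling on is not a best response); their equilibrium strategy must therefore reach out past the zeros, and a short prefix-average computation shows their per-user utility is at most the full-uniform value $\le (V+3w)/(L+4) \to 0$. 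By contrast, for $L$ large the tool-$2$ users place $1/3$ on each of three good coordinates and earn $\approx w(2/3)^{m_2-1}$, bounded away from $0$. One also checks (using that \Cref{thm:unique-eq} makes within-pool distributions uniform) that the cross-tool collision mass on each valuable coordinate is $O(1/L)$, so the two sub-games essentially decouple.

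The conclusion then follows by ruling out every split except $m_1 = 1$, for $n\ge 3$ and $L$ large. If $m_1 \ge 2$: a tool-$1$ user earns $\le (V+3w)/(L+4)$ but could switch to tool $2$, play $1/3$ on each $w$-coordinate, and earn $\ge \tfrac12 w(2/3)^{n-2}$, which is profitable once $L$ is large. If $m_1 = 0$: a deviator to tool $1$ is alone there and, playing the $V$-coordinate, earns $\ge V - O(1/L)$, beating any tool-$2$ payoff, which is $\le w + O(1/L) < V$. Finally $m_1 = 1$ is an equilibrium: the lone tool-$1$ user earns $\approx V$ and will not switch (tool $2$ yields at most $w < V$), and a tool-$2$ user who switches to tool $1$ collides with that user on the $V$-coordinate (conditional utility $0$) and — since $\Delta(\pin1)$ forces mass on a $w$-coordinate to be matched on the $V$-coordinate and on all $L$ zeros — can collect only $O(1/L)$ from the $w$-coordinates, strictly worse than staying. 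Thus $\mc E(n,\Pi,\bd)\neq\emptyset$ and $\maxm_1(n,\Pi,\bd) = 1 < n-1 = \minm_2(n,\Pi,\bd)$ for $n \ge 3$; taking $n = 3$ proves the lemma.

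I expect the main obstacle to be the bookkeeping around the vertex/linearity characterization and the $O(1/L)$ cross-tool terms. Because market share is a set and equilibria need not be unique, one must argue uniformly over \emph{all} pure partially symmetric equilibria that tool $1$'s share is exactly $1$; this means carefully combining (i) the prefix-uniform structure of best responses subject to the ranking constraint, (ii) the guarantee from \Cref{thm:unique-eq} that within-pool distributions are uniform, so the valuable coordinates of one tool carry only $O(1/L)$ mass under the other tool's equilibrium strategy, and (iii) tight enough control of the various best-response deviation payoffs to make the strict inequalities hold for all splits simultaneously. Choosing $L$ as a function of $n$ and $V/w$ at the end absorbs the error terms.
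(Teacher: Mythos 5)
Your construction and argument mirror the paper's proof: the paper likewise buries tool 1's secondary good answers behind a long block of zero-value types (it takes $\bd = [1;~1-\varepsilon;~1-\varepsilon;~0;\dots;0]$ with $n=3$ and score $s_1$, shows a lone player strictly prefers tool 1, rules out $m_1\ge 2$ by a deviation to tool 2 exploiting that tool-1 users sharing the top type earn little while the ranking constraint makes the buried good types inaccessible, and then exhibits the $m_1=1$ equilibrium), which is exactly your skeleton, just with exact $\varepsilon$-computations in place of your $s_\infty$/large-$L$ asymptotics. Your version is correct as well; the only nit is that a tool-1 user's equilibrium utility when $m_1\ge 2$ is bounded by a long-prefix average $\le (V+3w)/(L+2)$ rather than the full-uniform value $(V+3w)/(L+4)$, an immaterial difference.
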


All proofs for this section can be found in \Cref{app:pairwise-proofs}.
\Cref{lem:dominant-strength} shows that GAIT quality is not one-dimensional.
A benchmark might suggest that $\pin{1}$ is better than $\pin{2}$, but in a
competitive marketplace, $\pin{2}$ might turn out to have an edge. We next show
conditions under which a tool gets completely locked out of the market.

\begin{restatable}[Strict domination is possible]{lemma}{strict}
  \label{lem:strict-domination}

  Suppose $s(x) < \infty$ for $x \in \N_{> 0}$, and suppose $\exists~\hat k$
  such that $\bd_{\hat k} = 0$. For any $\pin{1}$ such that $\hat k$ is not ranked
  first by $\pin{1}$, there exists $\pin{2}$ such that for all $\Pi \supseteq
  \{\pin{1}, \pin{2}\}$, $\maxm_2(n, \Pi, \bd) = 0$ for all $n$.

\end{restatable}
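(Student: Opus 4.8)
The plan is to construct $\pin{2}$ so that it ``dominates'' $\pin{1}$ in the strongest possible sense: on every type that $\pin{1}$ can possibly benefit from, $\pin{2}$ does at least as well, and $\pin{2}$ additionally keeps the worthless type $\hat k$ out of the way. Concretely, I would take $\pin{2}$ to be a permutation that agrees with $\pin{1}$ on the relative order of all types \emph{except} that it moves $\hat k$ to the very last position (rank $K$), shifting the others up by one where needed. Since $\hat k$ is not ranked first by $\pin{1}$, this is a genuine ``improvement'': the type ranked first is unchanged, and for every prefix, the set of top-$j$ types under $\pin{2}$ is obtained from that under $\pin{1}$ by possibly swapping in a type with weakly higher $\bd$-value (the one that was displaced by $\hat k$) for $\hat k$ itself, which has value $0$.

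The key step is then a \emph{deviation argument}: take any equilibrium in $\mc E(n, \Pi, \bd)$ in which some player $i$ uses $\pin{1}$ with strategy $\bp \in \Delta(\pin{1})$, and show that $i$ has a strictly profitable deviation to $\pin{2}$. Given the strategies of the other $n-1$ players (in particular, their induced marginal probabilities on each type $k$, which determine the congestion terms $\E{1/s(1 + X(n-1, \cdot))}$), I would map $\bp$ to a strategy $\bp' \in \Delta(\pin{2})$ by ``pushing'' the mass that $\bp$ placed on types ranked at or below $\hat k$ in $\pin{1}$ onto the correspondingly better-ranked types under $\pin{2}$, and in particular removing all mass from $\hat k$ (where $\bd_{\hat k}=0$ contributes nothing anyway). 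Because $\bd_{\hat k} = 0$ and $s(x) < \infty$ for all finite $x$ (so no type is ever ``free'' to pile onto and every positive-$\bd$ type contributes positively when reached), this reallocation strictly increases player $i$'s expected utility: it either moves mass off the worthless type $\hat k$ onto a type with positive value, or exploits the fact that under $\pin{2}$ the ordering constraint permits putting the same total mass on a weakly-better multiset of types. Hence no equilibrium can have a player on $\pin{1}$ while $\pin{2}$ is available and that player still has slack in their strategy — but I also need to handle the corner case where $\bp$ already puts zero mass on $\hat k$ and everything below it, in which case I would instead argue that such a $\bp$ is itself feasible under $\pin{2}$ and the player is indifferent, so we may assume WLOG they use $\pin{2}$; combined with the strict case this forces $m_2 \ge$ (number of players not on $\pin{1}$)\,---\,wait, more precisely it forces that $\pin{1}$ attracts nobody relative to $\pin{2}$, and since the claim only asserts $\maxm_2 = 0$, I have the direction backwards and should instead show $\maxm_1$ is forced to $\dots$ — so I will set up the construction symmetrically and argue directly that in \emph{every} equilibrium no player uses $\pin{2}$ is the wrong target; rather, I want $\pin{2}$ constructed to be the \emph{bad} tool, i.e.\ I should reverse the roles and build $\pin{2}$ to be strictly \emph{worse} than $\pin{1}$ everywhere, so that any player on $\pin{2}$ strictly prefers to switch to $\pin{1}$, yielding $m_2 = 0$ at every equilibrium.

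Revising accordingly: I would define $\pin{2}$ to rank $\hat k$ \emph{first} — this is allowed precisely because the hypothesis only forbids $\hat k$ from being first under $\pin{1}$, not under $\pin{2}$ — and otherwise agree with $\pin{1}$. Then any $\bp \in \Delta(\pin{2})$ is forced to put $\bp_{\hat k} \ge \bp_j$ for all $j$, i.e.\ it wastes its single largest coordinate on a type worth $0$. I would show any player using such a $\bp$ can strictly improve by switching to $\pin{1}$ and reallocating that wasted top mass: under $\pin{1}$ the constraint $\Delta(\pin{1})$ permits moving all of $\bp_{\hat k}$ onto the type $\pin{1}$ ranks first (which has $\bd > 0$ by \Cref{def:bounds}-type reasoning, or at least weakly positive value — and if it is also $0$ one pushes further down until a positive-value type is reached, which exists since $\bp$ summed to $1$ and utilities were presumably positive at equilibrium), strictly raising expected utility because $\bd_{\hat k}/s(\cdot) = 0$ while the new type contributes a strictly positive term, using $s(x) < \infty$. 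Therefore no equilibrium in $\mc E(n,\Pi,\bd)$ has any player on $\pin{2}$, giving $\maxm_2(n,\Pi,\bd) = 0$ for all $n$ and all $\Pi \supseteq \{\pin{1},\pin{2}\}$ (adding more tools only adds more deviation options, never removing the switch-to-$\pin{1}$ option). The main obstacle is handling the degenerate case where even the top type under $\pin{1}$ has value $0$: there I must argue that any equilibrium strategy still places mass somewhere with positive value — true as long as $\bd \not\equiv 0$, which should be assumed or folded into the hypothesis — and that the reallocation can be routed to such a type while respecting $\Delta(\pin{1})$, which follows because the feasible set $\Delta(\pin{1})$ is a full-dimensional ordered simplex allowing arbitrary mass on any prefix.
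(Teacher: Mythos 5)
After your mid-paragraph course correction, the construction and argument you settle on are exactly the paper's: take $\pin{2}$ to agree with $\pin{1}$ except that it promotes the zero-value type $\hat k$ to rank one, so any $\bp \in \Delta(\pin{2})$ must spend its largest coordinate on $\hat k$, and a player using $\pin{2}$ strictly gains by switching to $\pin{1}$ and transferring that mass onto $\pin{1}$'s top-ranked type (strict because $\bp_{\hat k} \ge 1/K > 0$ and $s(x) < \infty$), which rules out any equilibrium with $m_2 > 0$. The corner case you flag—where $\pin{1}$'s top-ranked type also has value $0$—is likewise passed over in the paper's own proof (which tacitly assumes the transferred mass lands on a positive-value type), so modulo deleting the initial false start your proposal matches the paper's argument.
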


\Cref{lem:strict-domination} is a bit dense, but the intuition is simple. As
long as there is a 0-value type that $\pin{1}$ doesn't rank first, we can find a
strictly worse GAIT $\pin{2}$ that never gets any market share. The construction
is fairly simple: we obtain $\pin{2}$ from $\pin{1}$ by simply taking the
0-value element $\hat k$ and putting it at the first position of $\pin{2}$. Our
final result states that the perfect GAIT weakly captures the entire market
against \textit{all possible} competition. On the other hand, for any imperfect
tool $\pin{1}$, there exists some other tool $\pin{2}$ that exploits $\pin{1}$'s
imperfection.

\begin{restatable}[Only the perfect tool weakly dominates]{lemma}{perfect}
  \label{lem:perfect-dominates}

  If $\pin{1}$ is a perfect ordering according to $\bd$, i.e., $\bd_{\pin{1}_1}
  \ge \bd_{\pin{1}_2} \ge \dots \ge \bd_{\pin{1}_K}$, then $\maxm_1(n, \Pi, \bd)
  = n$ for all $\Pi \ni \pin{1}$. Otherwise, there exist $\pin{2}$ and $n_0$
  such that for all $n \ge n_0$, $\maxm_1(n, \{\pin{1}, \pin{2}\}, \bd) < n$.
  
\end{restatable}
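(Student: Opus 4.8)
The plan is to handle the two cases separately. In both, the workhorse is that a single deviator's payoff is a \emph{linear} function of her own distribution: if the other $n-1$ players all use tool $\pin{1}$ with distribution $\bp$ and one player switches to tool $\pin{j}$ and plays $\bp'\in\Delta(\pin{j})$, she earns $\sum_k \bp'_k v_k$ with $v_k\triangleq\bd_k\,u(n-1,\bp_k)$, since her utility conditioned on drawing type $k$ sees only the $X(n-1,\bp_k)$ \emph{other} players on type $k$ and not $\bp'$. Thus she has no profitable deviation to $\pin{j}$ once $v_\ell\le U(\bp)$ for every $\ell$, whereas a single type $\ell$ with $v_\ell>U(\bp)$ on which $\pin{j}$ can place all its mass yields a profitable deviation.

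\emph{Perfect $\pin{1}$ weakly dominates.} Relabel types so that $\pin{1}$ is the identity permutation and $\bd$ is weakly decreasing, and consider the profile in which every player uses $\pin{1}$ with the single-tool equilibrium $\bp=\peq(n,\bd,s)$ from \Cref{lem:unique-eq-decreasing}. No within-$\pin{1}$ deviation helps, by definition of $\peq$. For any other tool $\pin{j}$: \Cref{lem:unique-eq-decreasing} gives $v_k=\bd_k u(n-1,\bp_k)=U(\bp)$ for $k$ in the support of $\bp$ and $v_k=\bd_k u(n-1,0)=\bd_k\le U(\bp)$ otherwise, so $v_k\le U(\bp)$ for all $k$ and no deviation to $\pin{j}$ helps. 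Hence this profile is a pure partially symmetric equilibrium with $m_1=n$, giving $\maxm_1(n,\Pi,\bd)=n$ for every $\Pi\ni\pin{1}$.

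\emph{Imperfect $\pin{1}$ does not.} Relabel so that $\pin{1}$ is the identity; then $\bd$ is not weakly decreasing and so has an adjacent inversion $\bd_j<\bd_{j+1}$. Let $\tbd$ be the weakly decreasing surrogate of $\bd$ produced by Pool Adjacent Violators, so by \Cref{thm:unique-eq} any profile in which all $n$ players use $\pin{1}$ with a common distribution must have them play the single-tool equilibrium $\bpn{1}=\peq(n,\tbd,s)=\peq(n,\bd,s)$, and PAV pools $j,j+1$ into a block $B$ with $\tbd_k=\bar{\bd}_B\triangleq\frac{1}{|B|}\sum_{k'\in B}\bd_{k'}$ for every $k\in B$. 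Since $\bd_j<\bd_{j+1}$, the values on $B$ are not all equal, so $\bd_{k^\circ}>\bar{\bd}_B$ for $k^\circ\in\argmax_{k\in B}\bd_k$; moreover $\bar{\bd}_B>0$ (because $\bd_{j+1}>0$), and $k^\circ$ is not $\pin{1}$'s top-ranked type — if $1\notin B$ this is immediate, and if $1\in B$ then $|B|\ge2$, so PAV, which pools position $1$ only when $\bd_1$ lies strictly below the resulting block mean, forces $\bd_1<\bar{\bd}_B\le\bd_{k^\circ}$. Fix $\pin{2}$ to be any permutation ranking $k^\circ$ first; then $\pin{2}\ne\pin{1}$, and both $k^\circ$ and $\pin{2}$ depend only on $(\pin{1},\bd)$. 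Choose $n_0$ so that $U(\peq(n,\tbd,s))<\bar{\bd}_B$ for all $n\ge n_0$, which is possible because equilibrium per-player utility tends to $0$ as $n\to\infty$ (for $s=s_\gamma$ this is \Cref{lem:n-to-infty}, which shows $\peq(n,\tbd,s_\gamma)$ converges to a distribution supported on $\{k:\tbd_k>0\}$; for general $s\in\sall$ it follows since competition grows without bound), while $\bar{\bd}_B$ is a fixed positive constant. Now fix $n\ge n_0$ and suppose, toward a contradiction, that the profile in which all $n$ players use $\pin{1}$ with $\bpn{1}$ is an equilibrium (the only candidate, since $\mathcal E$ restricts $\pin{1}$-users to a common distribution). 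Since $U(\bpn{1})<\bar{\bd}_B=\tbd_{k^\circ}$, \Cref{lem:unique-eq-decreasing}(3) forces $k^\circ$ into the support of $\bpn{1}$, and then \Cref{lem:unique-eq-decreasing} gives $\bar{\bd}_B\,u(n-1,(\bpn{1})_{k^\circ})=\tbd_{k^\circ}\,u(n-1,(\bpn{1})_{k^\circ})=U(\bpn{1})$, whence
\[
  v_{k^\circ}=\bd_{k^\circ}\,u(n-1,(\bpn{1})_{k^\circ})=\frac{\bd_{k^\circ}}{\bar{\bd}_B}\,U(\bpn{1})>U(\bpn{1}),
\]
using $\bd_{k^\circ}>\bar{\bd}_B$ and $U(\bpn{1})=\bar{\bd}_B\,u(n-1,(\bpn{1})_{k^\circ})>0$. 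But then a player using $\pin{1}$ could switch to $\pin{2}$ and place all her mass on $k^\circ$ — legal under $\Delta(\pin{2})$ — for a payoff of $v_{k^\circ}>U(\bpn{1})$, a profitable deviation contradicting equilibrium. Therefore $\maxm_1(n,\{\pin{1},\pin{2}\},\bd)<n$ for all $n\ge n_0$.

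The delicate step is the support claim in the last paragraph: for small $n$ the inverted block $B$ may carry no probability at the single-tool equilibrium, so that $v_{k^\circ}$ collapses to $\bd_{k^\circ}$, which need not exceed $U$ — this is precisely why only $n\ge n_0$ is promised. Making $n_0$ explicit rests on equilibrium per-player utility vanishing under unbounded competition, supplied by \Cref{lem:n-to-infty} for $s_\gamma$ and established directly for general $s\in\sall$ (once the equilibrium must spread mass across a growing population, \Cref{lem:unique-eq-decreasing}(3) places the positive-$\tbd$ types into its support). The remaining ingredients — linearity of the deviator's payoff, the PAV facts that $\tbd$ is constant on $B$ and equal to its mean there and that the first index of a nontrivial pooled block lies below that mean, and the equalization identities of \Cref{lem:unique-eq-decreasing} — are routine.
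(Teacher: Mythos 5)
Your proof follows essentially the same route as the paper's: the perfect case is the identical linear-deviation argument showing that $\peq(n,\bd,s)$ played on $\pin{1}$ by everyone survives the addition of any other tool, and the imperfect case is the same PAV-based argument (a pooled type with $\bd_k>\tbd_k$ eventually enters the equilibrium support, so a tool ranking it first admits a strictly profitable deviation), with your $k^\circ$ playing the role of the paper's $\hat k$. One caveat: your blanket justification that per-player equilibrium utility vanishes for \emph{all} $s\in\sall$ overstates the case --- e.g.\ $s(x)=2x/(x+1)\in\sup$ is bounded, so $u(n-1,p)\ge\nicefrac{1}{2}$ and utility need not vanish; what is actually needed (and what the paper itself only asserts ``by a similar argument to \Cref{lem:n-to-infty}'') is that the pooled block eventually enters the support of $\peq(n,\tbd,s)$, which is clear for $s_\gamma$ but not automatic for bounded score functions.
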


As a result, a strategic tool developer need not attempt to create a perfectly
aligned tool; instead, they might seek ``competitive alignment'' by filling a
niche left open by existing tools. A tool might, for example, prioritize a
particular visual or linguistic style even if it isn't what the majority of
users want simply because existing tools fail to provide it. This concludes our
theoretical exploration. Next, we validate our results empirically.

\section{Case Study: Scattergories}
\label{sec:empirical}

\begin{figure}[ht]
  \centering
  \includegraphics[width=0.8\textwidth]{./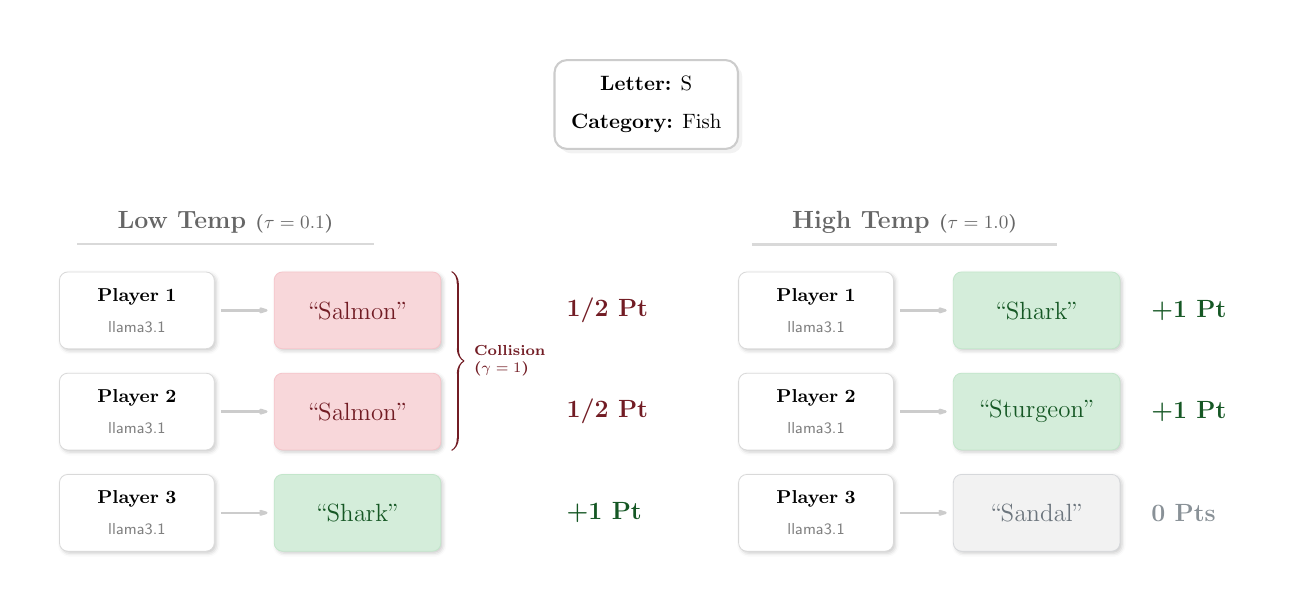}
  \caption{Empirical setup for \Cref{sec:same-model}. At low temperatures,
    collisions are likely. Higher temperatures reduce collisions, but they
  increase the chance of errors (e.g.,~``Sandal'').}
  \label{fig:scat-fig}
\end{figure}

We begin with three sets of experiments. First, we consider a setting where
players use the same LLM to play Scattergories, mirroring the theoretical setup
in \Cref{sec:theory}. We then study the case where players can strategically
choose which LLM to use. In both of these experiments, we allow players to
strategically choose the LLM's temperature but not the prompt, since finding
equilibria in the space of prompts dramatically increases the computational
costs. We investigate the role of prompting in \Cref{sec:as-validation}, finding
that, consistent with \Cref{hyp:variability}, prompt variations do not
substantially change the output ranking of each LLM. Instead, across-model
output differences are much greater than across-prompt differences.

Our empirical setup (shown in \Cref{fig:scat-fig}) is simple: $n$ players each
use an LLM to generate a single response to a Scattergories instance given by a
(letter, category) pair. We take each player's strategy space to be the
temperature $\tau$ at which they operate their LLM. Formally, the LLM outputs a
logit $\ell_t$ for each possible token $t$ in its vocabulary $\mc V$. As is
standard practice \citep[e.g.~][]{hinton2015distilling}, at temperature $\tau$,
the LLM samples token $t$ with probability proportional to $\exp(\ell_t/\tau)$.
Higher temperatures induce more diverse distributions at the individual token
level. In the special case where $\tau = 0$, the LLM deterministically produces
the token with the highest logit, $\argmax_{t \in \mc V} \ell_t$.

Because Scattergories requires complete words or phrases, answers typically
consist of multiple tokens. We generate tokens sequentially until we encounter a
\STOP\ token. (Most LLMs have a built in end-of-message token used to signal
that generation is complete.) We additionally treat all tokens containing
non-alphanumeric characters as \STOP\ tokens.
In all of our experiments, we use score functions of the form $s_\gamma(x) =
x^\gamma$. Although the original rules of Scattergories call for $\gamma =
\infty$, we use other values because equilibrium and optimal strategies coincide
under $s_\infty$ (see \Cref{lem:opt-inf-equivalence}).

\paragraph*{Data and language models.}

\begin{table}[ht]
  \centering
  \begin{tabularx}{\textwidth}{lX}
    \toprule
    Role
    &
    Content \\
    \midrule
    System
    & 
    \texttt{You are a helpful assistant. Answer in as few words as possible,
    with no explanations.} \\
    User
    &
    \texttt{We are playing a game of Scattegories. I will give you a letter and
      a category, and you will give me a word or short phrase that starts with
      that letter and matches the category. For example, if I say "Fruit" and "A,"
    you could say "Apple" or "Apricot."} \\
    Assistant
    &
    \texttt{I understand.} \\
    User
    &
    \texttt{Letter: C\textbackslash nCategory: Countries} \\
    Assistant
    &
    \texttt{Canada} \\
    User
    &
    \texttt{Letter: X\textbackslash nCategory: Instruments} \\
    Assistant
    &
    \texttt{Xylophone} \\
    User
    &
    \texttt{Letter: \{letter\}\textbackslash nCategory: \{category\}} \\
    \bottomrule
  \end{tabularx}
  \caption{Default prompt}
  \label{tab:default-prompt}
\end{table}

The categories we use for our games come from a free online Scattergories
game.\footnote{\url{https://swellgarfo.com/scattergories/}} We randomly sample
25 instances, each of which is a (letter, category) pair, and we average across
them in all of our results. A full list of our sample can be found in
\Cref{tab:instances}. Our experiments compare six LLMs, each of which is a
relatively small open-source language model: gemma-2-2b-it (\gemma),
Llama-3.1-8B-Instruct (\llamaone), Llama-3.2-1B-Instruct (\llamatwo),
Mistral-7B-Instruct-v0.3 (\mistral), Nemotron-Mini-4B-Instruct (\nemotron), and
Phi-3.5-mini-instruct (\phithree). In addition to these, we use a seventh
language model, Qwen2.5-7B-Instruct (\qwen), to validate answers. See
\Cref{tab:models} for more details.

While relying on language models to evaluate one another comes with several
downsides~\citep{chiang2023can,zheng2023judging,wang2023large}, our setting has
a number of mitigating factors that make it a more reasonable choice. Even with
human players, Scattergories works via self-evaluation. Evaluation is fairly
simple: each evaluation is a binary response to a short answer.
We do not seek to make strong claims about the correctness of the evaluations
made by \qwen, although manual spot-checking reveals that its judgements are
reasonable. Instead, we view our results as confirmation that there exists a
reasonable setting in which our theoretical results hold. We validate answers
using a separate language model to minimize the risk of
self-preferencing~\citep{panickssery2024llm,koo2023benchmarking,liu2023llms}.
One downside we do not address in this work is the fact that we treat
restatements of the same answer (e.g.,~``swordfish'' and ``sword fish'') as
distinct. For both answer generation and validation, we include two in-context
examples to improve performance (see
\Cref{tab:default-prompt,tab:verification-prompt} for the full prompts).

\ifdefined\smallfigs
\begin{wrapfigure}{L}{0.6\textwidth}
  \centering
  \includegraphics[width=0.6\textwidth]{./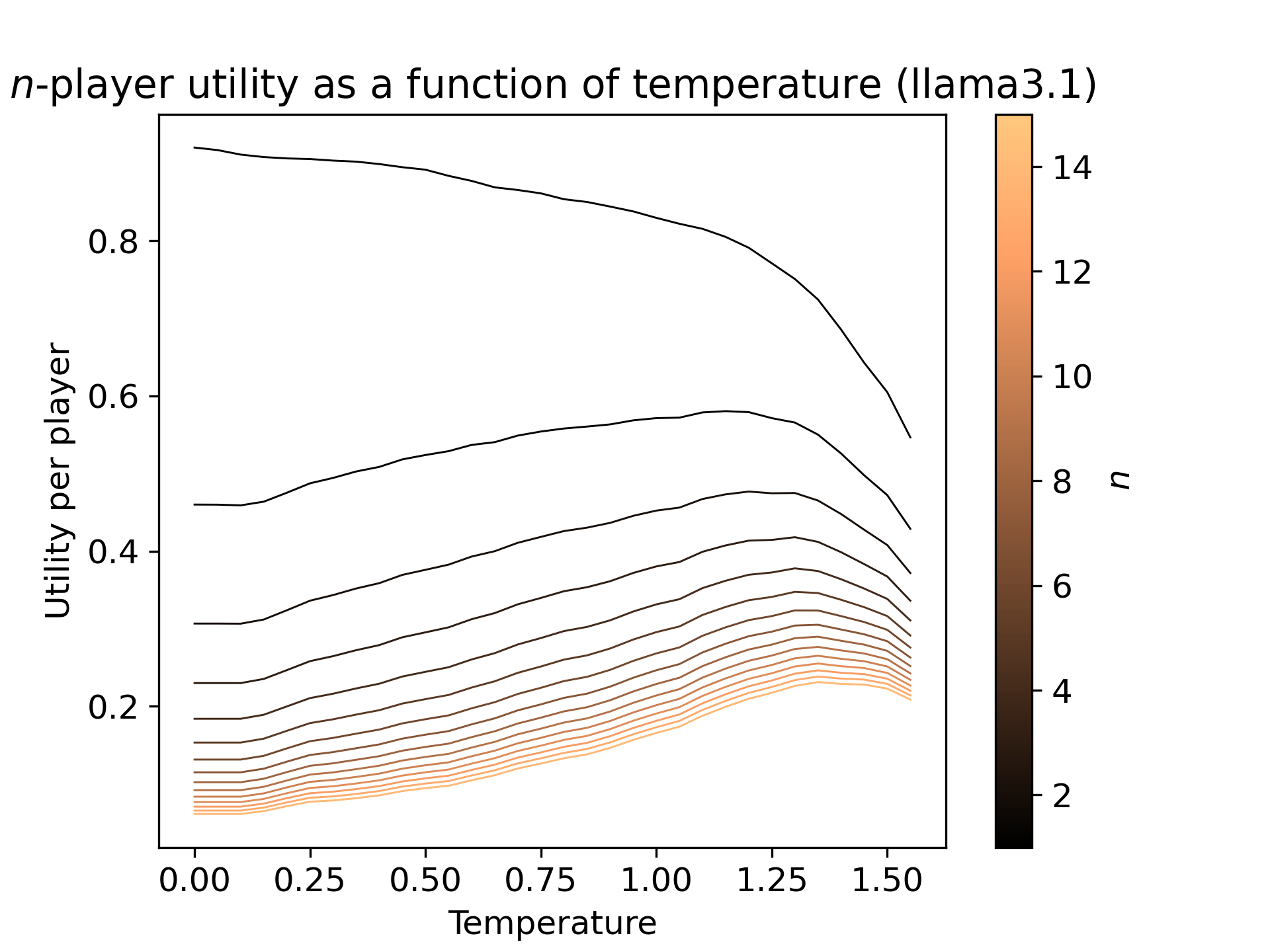}
  \caption{Per-player utility as a function of $\tau$ for different values of
    $n$. The highest line corresponds to $n=1$, and the lowest
  corresponds to $n=15$. $\gamma = 1.0$.}
  \label{fig:llama3.1-opt-over-temp}
\end{wrapfigure}
\else
\begin{figure}[ht]
  \centering
  \includegraphics[width=0.8\textwidth]{./img/llama3.1_opt_over_temp.png}
  \caption{Per-player utility as a function of $\tau$ for different values of
    $n$. The highest line corresponds to $n=1$, and the lowest
  corresponds to $n=15$. $\gamma = 1.0$.}
  \label{fig:llama3.1-opt-over-temp}
\end{figure}
\fi

\begin{figure}[ht]
  \centering
\end{figure}

\begin{figure}[ht]
  \centering
  \includegraphics[width=0.9\textwidth]{./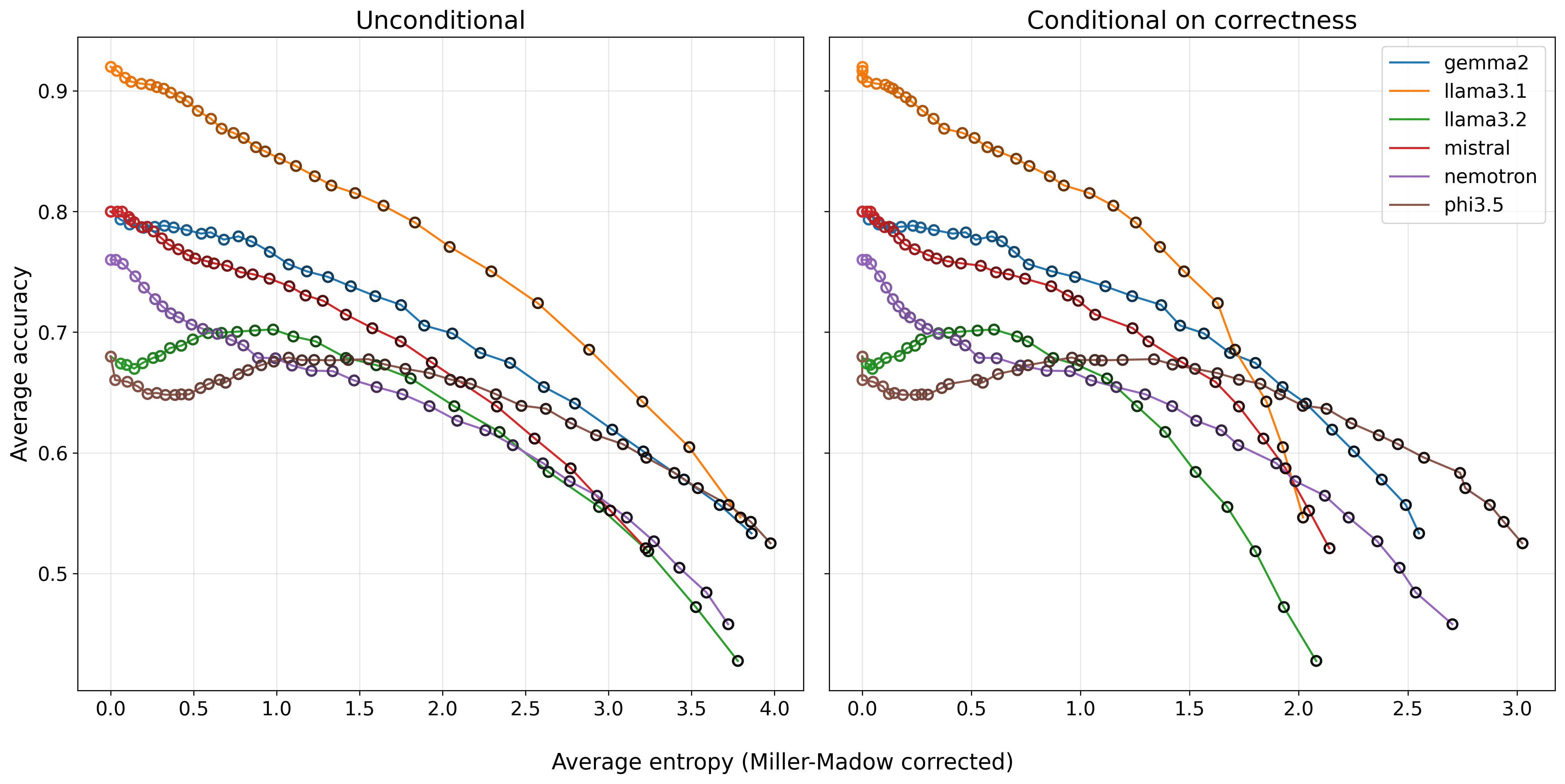}
  \caption{Average accuracy (as verified by \qwen) vs entropy for each LLM.
    Each point represents a different temperature, using darker circles for
    higher temperatures. As one might expect, entropy increases monotonically
    with temperature.  The left panel shows
    average entropy for the unconditional distribution, while the right panel
    shows entropy conditional on correctness. Observe that \llamaone\ falls from
    the accuracy-entropy frontier at high temperatures when we only consider
  correct answers.}
  \label{fig:entropies}
\end{figure}

Even in this limited setting where answers are short sequences of tokens,
computing the distribution of answers for a given (language model, temperature,
instance) combination is still non-trivial. Ideally, we would simply write down
a LLM's output distribution as a function of temperature $\tau$. Indeed, if we
only considered single-token answers, this would be straightforward: each token
$t$ is sampled with probability $\propto \exp(\ell_t/\tau)$. If a LLM's
vocabulary contains $|\mc V|$ tokens, we need $\Theta(|\mc V|)$ space just to
write down. For multi-token answers, this becomes prohibitive, requiring
$\Theta(|\mc V|^d)$ time and space, where $d$ is the maximum answer
length.\footnote{For context, \nemotron\ has a valid vocabulary size $|\mc V| =
$ 169,159, and we allow responses up to 6 tokens.}

Instead, we take a sampling-based approach: we can simulate games of
Scattergories by sampling completions from a LLM (at each temperature
$\tau$) for each player. While this approach is more tractable, it comes with
its own disadvantages. We generate samples independently for each $\tau$ under
consideration, meaning we must take temperatures at discrete points (we choose
intervals of $0.05$). With enough samples (we choose 2,000 for each (language
model, temperature, instance) combination), we can get fairly tight estimates of
players' utility. See \Cref{app:exp-details} for details on how we use samples
optimally and bound error.

\subsection{Competition under a single tool}
\label{sec:same-model}

We begin by analyzing the setting in which all players share the same LLM and
can only vary the temperature $\tau$. \Cref{fig:llama3.1-opt-over-temp} shows
utility as a function of $\tau$ in the $n$-player game with the score function
$s_1(x) = x$ when all players use \llamaone. Observe that utility is
approximately quasiconcave as a function of $\tau$: it increases with $\tau$ up
to a certain point, and then decreases monotonically. Moreover, the peak occurs
at larger values of $\tau$ as competition gets stronger with larger values of
$n$. Intuitively, lower temperatures are most likely to yield valid answers but
produce low-entropy distributions, since entropy increases monotonically with
temperature (see \Cref{fig:entropies}, left panel).\footnote{Because entropies
are estimated from incomplete samples, we apply the Miller-Madow correction
\citep{miller1955note,madow1948limiting}.} While this is good for mild
competition, higher temperatures trade off quality for diversity, which
increases welfare under stronger competition. Eventually, at higher
temperatures, LLM performance degrades dramatically, and utilities drop off. We
provide analogous figures for the other five LLMs in
\Cref{fig:utility_over_temp}.

\paragraph*{Diversity and utility of equilibria.}

Because \Cref{as:ranked} is not strictly true (we analyze extent to which it is
violated later in this section), pure symmetric equilibria do not necessarily
exist here.
Nevertheless, let $\taueq(n, \gamma)$ be the symmetric equilibrium temperature
if it exists for a single LLM given $n$ players and score function
$s_\gamma$.\footnote{In principle, there could be multiple pure strategy
equilibria. Empirically, this is not the case here.} From our results in
\Cref{sec:theory}, we expect $\taueq$ to increase in both $n$ and $\gamma$
(\Cref{thm:diversity-n,thm:diversity-gamma}). We'd also expect the socially
optimal temperature $\tauopt(n, \gamma)$ to be larger than the corresponding
equilibrium temperature $\taueq(n, \gamma)$ (\Cref{thm:diversity-opt}).

\Cref{fig:opt_and_eq_temp_over_n} supports our theoretical results in
\Cref{thm:diversity-n}, that stronger competition induces more diversity (larger
values of $\taueq$). Analogous results hold as we vary $\gamma$
(\Cref{thm:diversity-gamma}) in \Cref{fig:opt_and_eq_temp_over_gamma}.
Equilibria reflect a balance between the benefits of diversity and the costs of
invalid answers that occur at high temperatures. Symmetric pure equilibria do
not always exist (e.g.,~\phithree\ with $n=4$). When they do exist,
\Cref{fig:opt_and_eq_temp_over_n,fig:opt_and_eq_temp_over_gamma} show that
$\taueq(n, \gamma) \le \tauopt(n, \gamma)$, as \Cref{thm:diversity-opt}
suggests.

\ifdefined\smallfigs
\begin{figure}[ht]
  \centering
  \begin{subfigure}[t]{0.48\textwidth}
    \centering
    \includegraphics[width=\textwidth]{./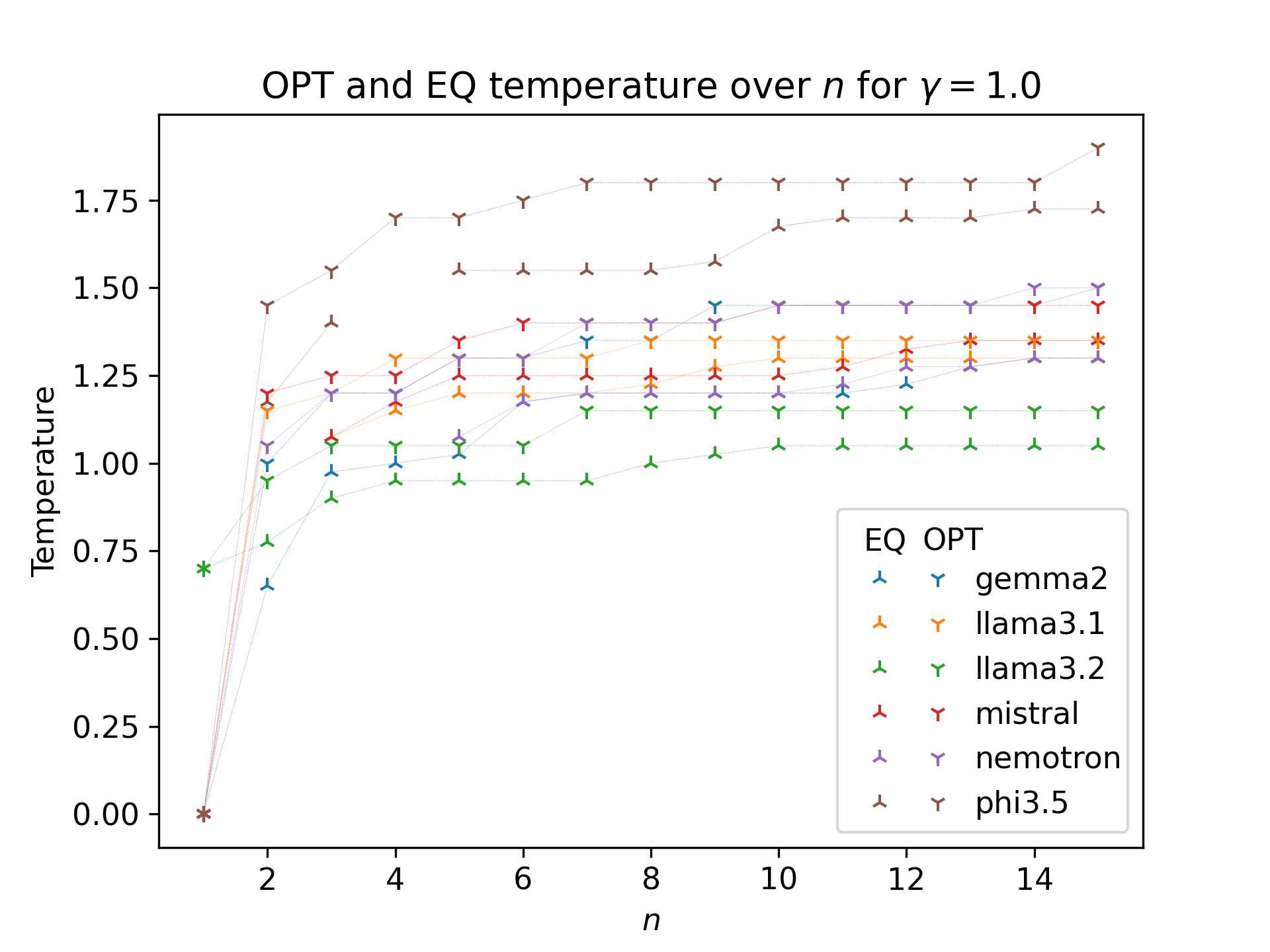}
    \caption{Socially optimal and equilibrium temperatures for each language
      model as a function of $n$. Observe that (1) all curves are increasing in
      $n$, and (2) equilibrium temperatures are lower than their socially
    optimal counterparts.}
    \label{fig:opt_and_eq_temp_over_n}
  \end{subfigure}
  \hfill
  \begin{subfigure}[t]{0.48\textwidth}
    \centering
    \includegraphics[width=\textwidth]{./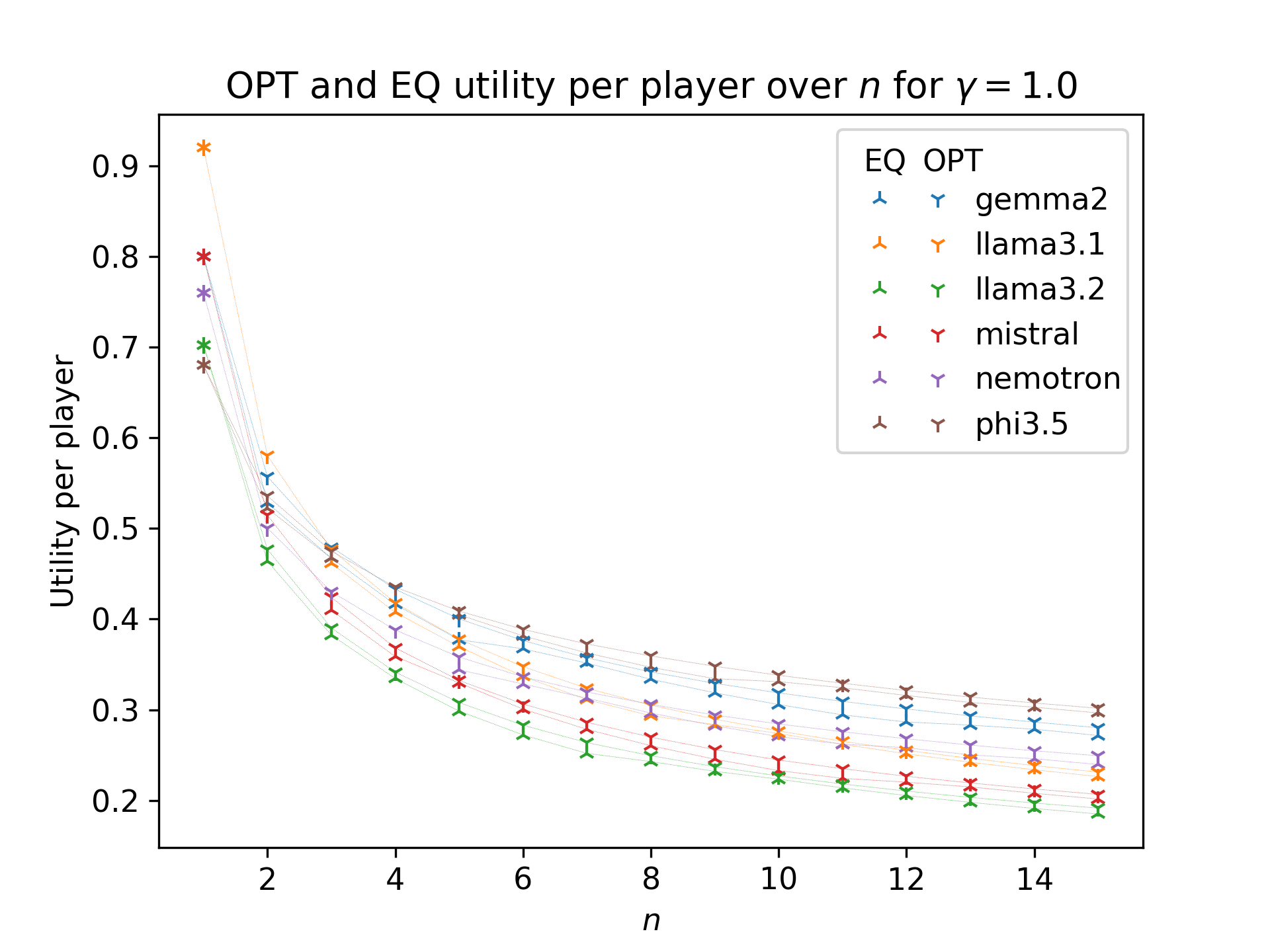}
    \caption{Socially optimal and equilibrium utility (per player) for each
      language model as a function of $n$. Observe that (1) welfare is
      decreasing in $n$, and (2) equilibrium welfare is not too far from
      optimal. \phithree\ and \gemma\ perform better in the presence of stronger
    competition.}
    \label{fig:opt_and_eq_util_over_n}
  \end{subfigure}
  \caption{Equilibrium vs. optimal temperatures and utilities. In both plots,
  $\gamma = 1.0$.}
\end{figure}
\else
\begin{figure}[ht]
  \centering
  \includegraphics[width=0.8\textwidth]{./img/opt_and_eq_temp_over_n.png}
  \caption{Socially optimal and equilibrium temperatures for each language
    model as a function of $n$. Observe that (1) all curves are increasing in
    $n$, and (2) equilibrium temperatures are lower than their socially
  optimal counterparts. $\gamma = 1.0$}
  \label{fig:opt_and_eq_temp_over_n}
\end{figure}
\begin{figure}[ht]
  \centering
  \includegraphics[width=0.8\textwidth]{./img/opt_and_eq_util_over_n.png}
  \caption{Socially optimal and equilibrium welfare (per player) for each
    language model as a function of $n$. Observe that (1) welfare is
    decreasing in $n$, and (2) equilibrium welfare is not too far from
    optimal. \phithree\ and \gemma\ perform better in the presence of stronger
  competition. $\gamma = 1.0$.}
  \label{fig:opt_and_eq_util_over_n}
\end{figure}
\fi

Next, we consider players' utilities in \Cref{fig:opt_and_eq_util_over_n}.
Equilibrium utilities are, of course, lower than would be socially optimal, but
the gap between them is not substantial; it is well within the factor of two
predicted by our price of anarchy bound in \Cref{thm:poa}.

Perhaps more interestingly, we find strong evidence that the relative quality of
LLMs depends on the strength of competition. With $n=1$, \llamaone\ yields the
best performance out of all of our LLMs, providing a valid answer 92\% of the
time at a temperature of $\tauopt(1, \cdot) = \taueq(1, \cdot) =
0.0$.\footnote{When $n=1$, the score function $s$ does not matter.} In contrast
\phithree\ only yields a valid answer 68\% of the time, also at an optimal
temperature of $0.0$. According to this naive benchmark, it might seem that
\llamaone\ is the ``better'' of the two LLMs in the Scattergories context.
However, this comparison masks a surprising fact: under competition, the
relative performance between these LLMs flips. \phithree\ outperforms
\llamaone\ when there are either more players
(\Cref{fig:opt_and_eq_util_over_n}) or stronger penalties for congestion
(\Cref{fig:opt_and_eq_util_over_gamma}). See \Cref{app:additional-figures} for
analogous plots varying $\gamma$.

Intuitively, \llamaone\ is a ``one-trick pony'' in this setting. Its best answer
is usually valid, but diversifying its distribution fails to place much
probability mass on distinct, valid answers. \phithree, \gemma, and \nemotron,
on the other hand, make more mistakes when producing a single answer, but are
able to draw from a large set of valid answers at higher temperatures.
\Cref{fig:entropies} makes this more explicit. The left panel shows how
\phithree\ and \gemma\ approach the ``frontier'' traced out by \llamaone\ at
high temperatures.\footnote{We choose maximal temperatures for each LLM such
that no equilibrium or optimal strategy relies on a temperature near our
cutoffs; see \Cref{app:exp-details} for more details.} While these LLMs appear
to be comparable in their accuracy-entropy trade-offs, the right panel of
\Cref{fig:entropies} reveals a stark difference between them: If we restrict our
attention to their output distributions \textit{conditioned on their responses
being correct}, we find that \llamaone's entropy drops dramatically relative to
other LLMs. Its overall increased entropy at high temperatures appears to come
from its ability to sample from a diverse distribution over \textit{incorrect}
answers, whereas \phithree\ and \gemma\ sample from diverse distributions over
\textit{correct} answers. Our competitive framework thus reveals subtle
distinctions in how LLMs productively diversify their output distributions,
which high-level accuracy-diversity trade-offs fail to capture.

One possible explanation for these findings lies in the differences in
post-training strategies across LLMs. \citet{dubey2024llama} describe a
complex post-training pipeline involving ``millions of human instructions and
preference judgments'' for \llamaone, whereas \citet{abdin2024phi} describe a
targeted attempt to produce a high-quality LLM with minimal data for
\phithree. Several studies have drawn the link between post-training and mode
collapse~\citep{wu2024generative, kirk2023understanding, park2024diminished,
padmakumar2023does}, where loss functions steer LLMs towards a single
``correct'' answer at the expense of high-quality responses across the entire
distribution~\citep{siththaranjan2023distributional}. Thus, it is plausible that
\llamaone's more extensive post-training process shifts its performance profile
towards higher accuracy of its first-choice answer at the expense of quality
across its entire output distribution. We caution that our findings on these
particular LLMs are specific to our setting and should not be used to draw more
general conclusions about them without more comprehensive testing. Future work
could compare these LLMs to their pre-trained precursors
(i.e.,~\textsf{Llama-3.1-8B-Instruct} vs. \textsf{Llama-3.1-8B}) to provide a
more complete picture of the role of post-training in the accuracy-diversity
trade-off.

\subsection{Pairwise competition across tools}
\label{sec:pairwise}

\begin{figure}[ht]
  \centering
  \begin{subfigure}[t]{0.48\textwidth}
    \centering
    \includegraphics[width=\textwidth]{./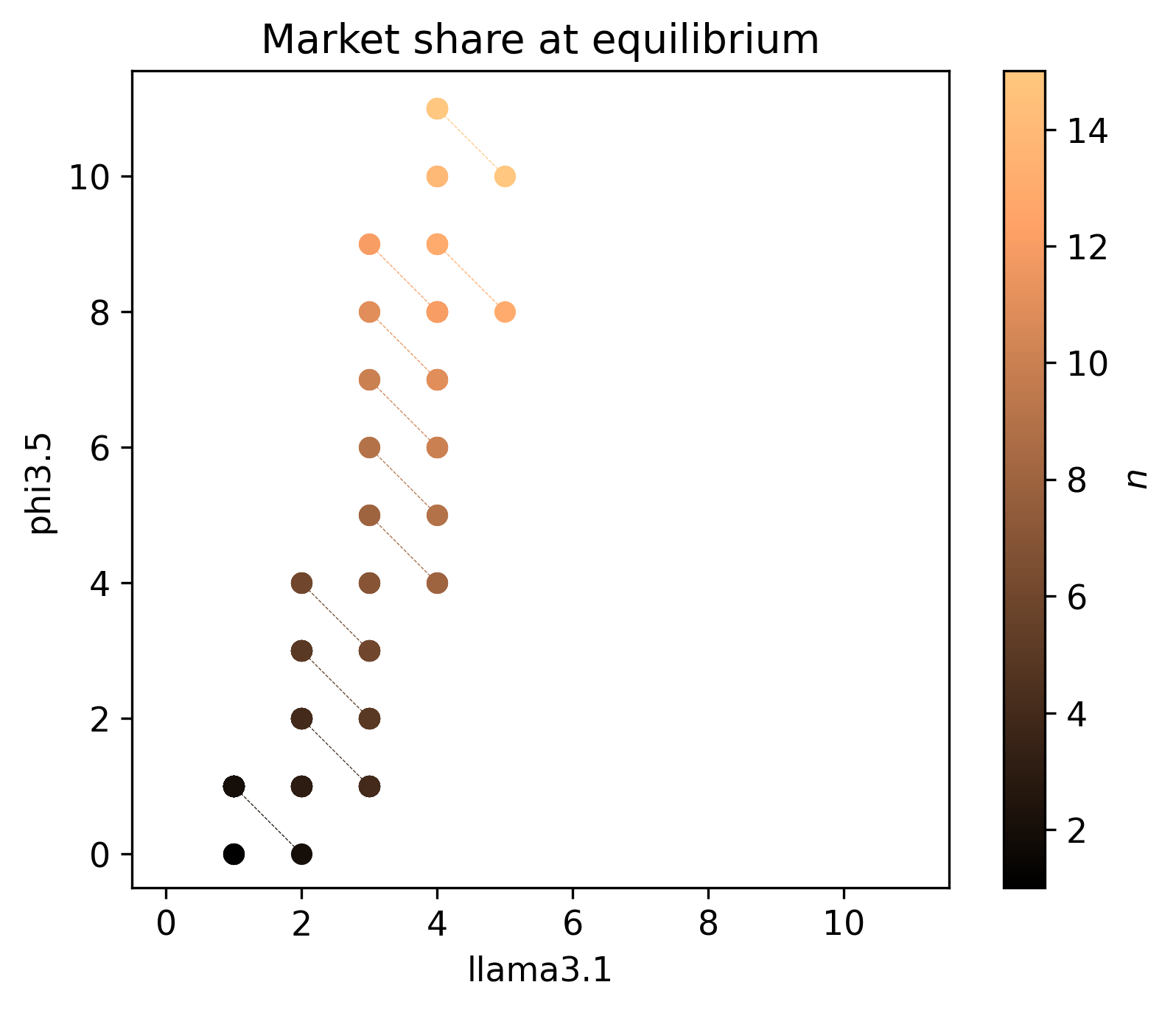}
    \caption{Market shares in pairwise equilibria between \llamaone\ and
    \phithree.}
    \label{fig:llama3.1-phi3.5-market-shares}
  \end{subfigure}
  \hfill
  \begin{subfigure}[t]{0.48\textwidth}
    \centering
    \includegraphics[width=\textwidth]{./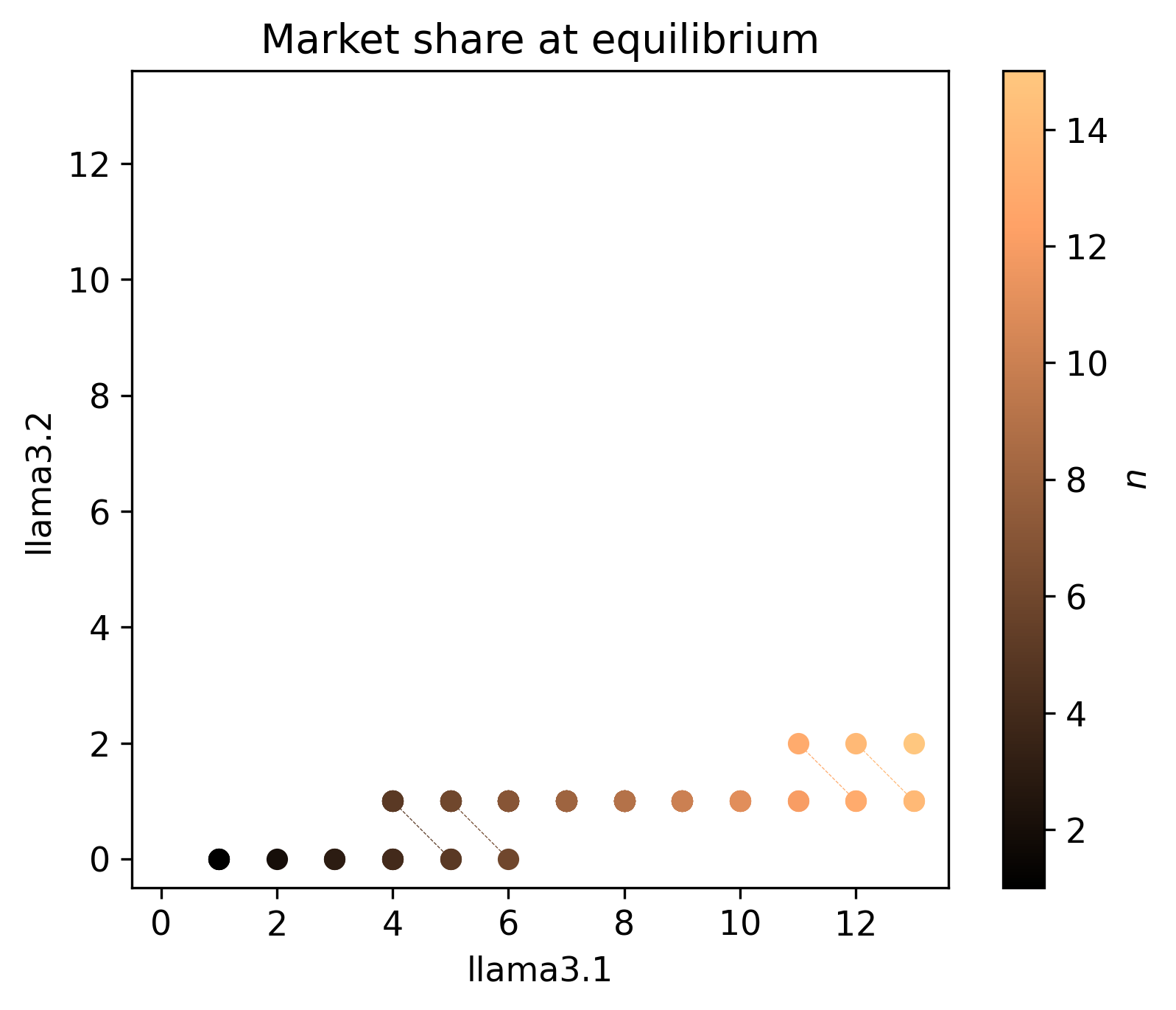}
    \caption{Market shares in pairwise equilibria between \llamaone\ and
    \llamatwo.}
    \label{fig:llama3.1-llama3.2-market-shares}
  \end{subfigure}
  \caption{Each dot represents a possible pair $(m_1, m_2)$ of equilibrium
    market shares for each language model. Dashed lines connect points with the
  same total number of players $n$, with a maximum of $n=15$. $\gamma = 1.0$.}
  \label{fig:market-shares} 
\end{figure}

Next, we consider the setting in \Cref{sec:inter-tool}, where players
strategically choose both a GAIT and a temperature $\tau$. For simplicity, we
limit ourselves to pairwise comparisons, and as before, our focus will be on
market share at partially symmetric pure strategy equilibria.

Based on \Cref{fig:opt_and_eq_util_over_n}, we might expect that the market
share of each GAIT will shift dramatically as competition grows. For example,
given the choice between \llamaone\ and \phithree, a single player ($n=1$) would
clearly prefer \llamaone. But with more players, we could see a mix where some
choose \llamaone\ and some choose \phithree, and we might anticipate that for
sufficiently large $n$, a majority of players would choose \phithree. This is
precisely what \Cref{fig:llama3.1-phi3.5-market-shares} shows. Each point $(m_1,
m_2)$ represents market shares for \llamaone\ and \phithree\ respectively at an
equilibrium with $n = m_1 + m_2$ players. With $n=1$ player, \llamaone\ is
optimal. But for larger $n$, the clear majority of players prefer \phithree. As
in \Cref{lem:dominant-strength}, the ``better'' language model in isolation
cedes market power under competition.

Next, we ask whether one language model ``dominates'' another in the sense of
\Cref{lem:strict-domination}. In our sample, for any pair of language models,
there exists sufficiently large $n$ such that both get a nonzero share of the
market. For example, as we can see from
\Cref{fig:opt_and_eq_util_over_n,fig:opt_and_eq_util_over_gamma}, \llamaone\
yields higher equilibrium utility than \llamatwo\ at any value of competition,
either via $n$ or $\gamma$. Despite this, we find that \llamatwo\ can still get
nonzero market share (\Cref{fig:llama3.1-llama3.2-market-shares}). This is
perhaps even more surprising given that \llamaone\ and \llamatwo\ are from the
same family of language models, meaning their output distributions are
especially likely to be homogeneous (see \Cref{fig:spectral-embedding} for
evidence to this effect). Our results show that despite this
relationship, \llamatwo\ can still prove useful to a (relatively small) fraction
of the market. The same is true across all pairs of language models in our
sample (\Cref{fig:all-pairwise}). Even without any sort of price
differentiation, the market naturally supports, and indeed encourages,
competition. Of course, had we included a sufficiently low-quality language
model in our sample, it would likely fail to capture any market share. We also
do not find evidence that any of our language models is close enough to the
perfect GAIT for \Cref{lem:perfect-dominates} to hold.

\subsection{Validating \Cref{as:ranked}}
\label{sec:as-validation}

\definecolor{successbg}{HTML}{D4EDDA} 
\definecolor{successborder}{HTML}{C3E6CB}
\definecolor{successtext}{HTML}{155724}
\definecolor{failbg}{HTML}{F8D7DA} 
\definecolor{failborder}{HTML}{F5C6CB}
\definecolor{failtext}{HTML}{721C24}
\definecolor{neutralbg}{HTML}{F2F2F2} 
\definecolor{neutralborder}{HTML}{D6D8DB}
\definecolor{neutraltext}{HTML}{6c757d}
\definecolor{primarytext}{HTML}{212529}

\definecolor{cT1}{HTML}{0d6efd}
\definecolor{cT1bg}{HTML}{cfe2ff}
\definecolor{cT2}{HTML}{fd7e14}
\definecolor{cT2bg}{HTML}{ffe5d0}
\definecolor{cTh1}{HTML}{6f42c1}
\definecolor{cTh1bg}{HTML}{e0cffc}
\definecolor{cTh2}{HTML}{20c997}
\definecolor{cTh2bg}{HTML}{cef4e6}

\newcommand{\llama}{\textsf{llama3.1}}
\newcommand{\phiM}{\textsf{phi3.5}}

\begin{figure}[ht]
  \centering

  \includegraphics[width=\textwidth]{./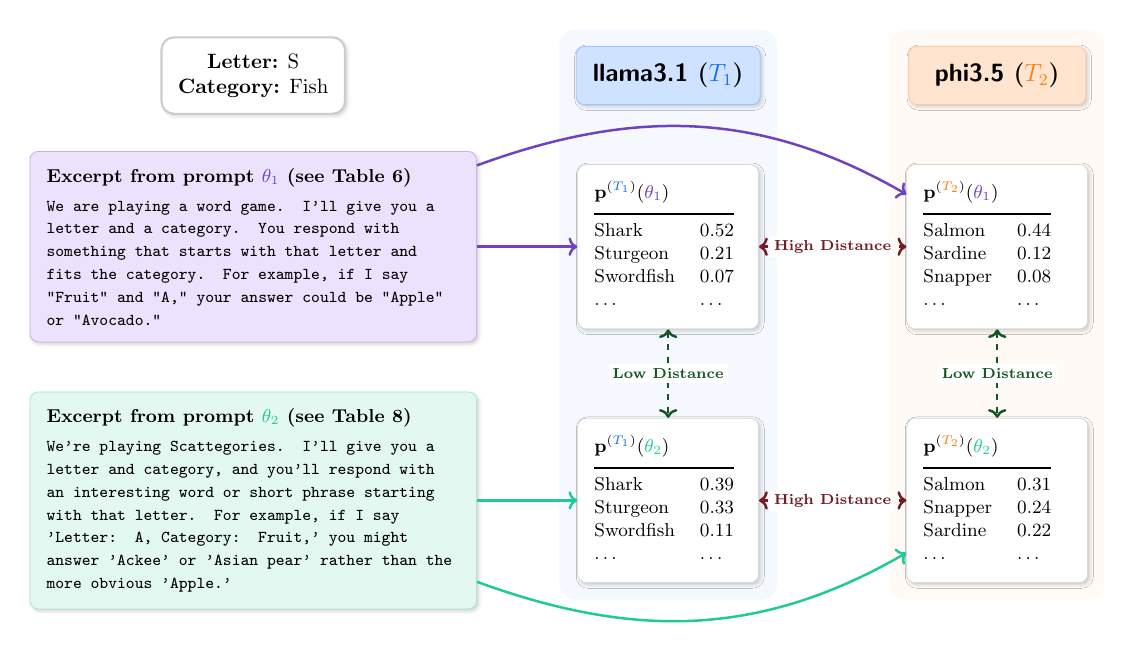}

\caption{Illustration of our setup for \Cref{sec:as-validation}. We compute
  distributions over answers for multiple LLMs and prompts (at a variety of
  temperatures). We compute distances between distributions based on how
  ``mis-ranked'' they are relative to one another. In this example, while the two
  rankings from \phithree\ are not identical, they are ``close'' in the sense that
  they can be made identical by shifting a small amount of probability mass. At
  a high level, we find that differences across prompts are much less salient
  than differences across tools. We use prompts that are quite different to
  capture potential strategic behavior---for example, one explicitly mentions
  Scattergories and provides examples of less ``obvious'' answers, whereas the
  other does not. Our experiments use different in-context examples for each
prompt. See \Cref{app:prompts} for further details.}

  \label{fig:ranking-diagram}
\end{figure}

\begin{figure}[ht]
  \centering
  \includegraphics[width=0.8\textwidth]{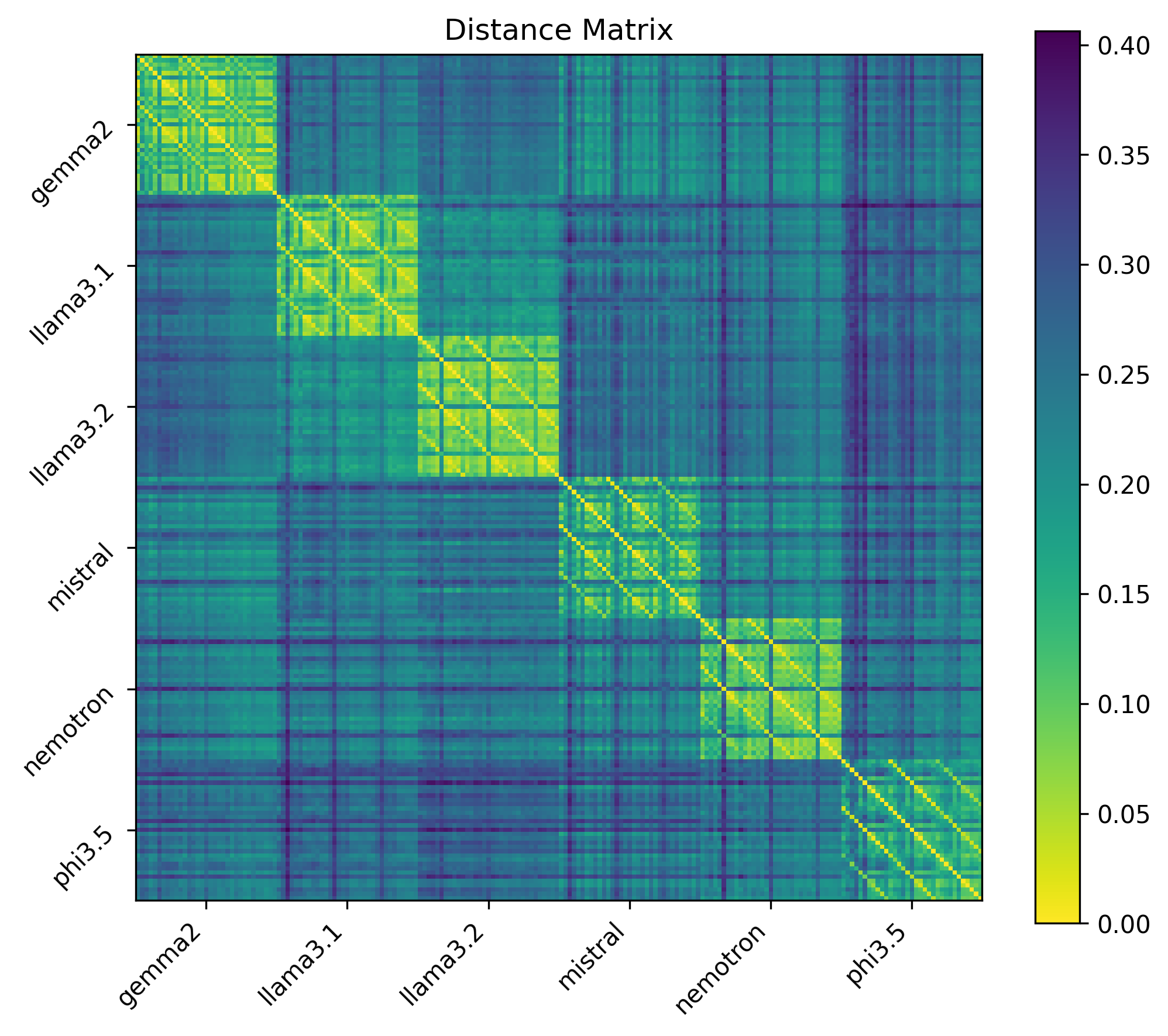}
  \caption{Distance matrix $D$ defined in~\eqref{eq:D-def}.}
  \label{fig:distance-matrix}
\end{figure}

\begin{figure}[ht]
  \centering
  \includegraphics[width=0.8\textwidth]{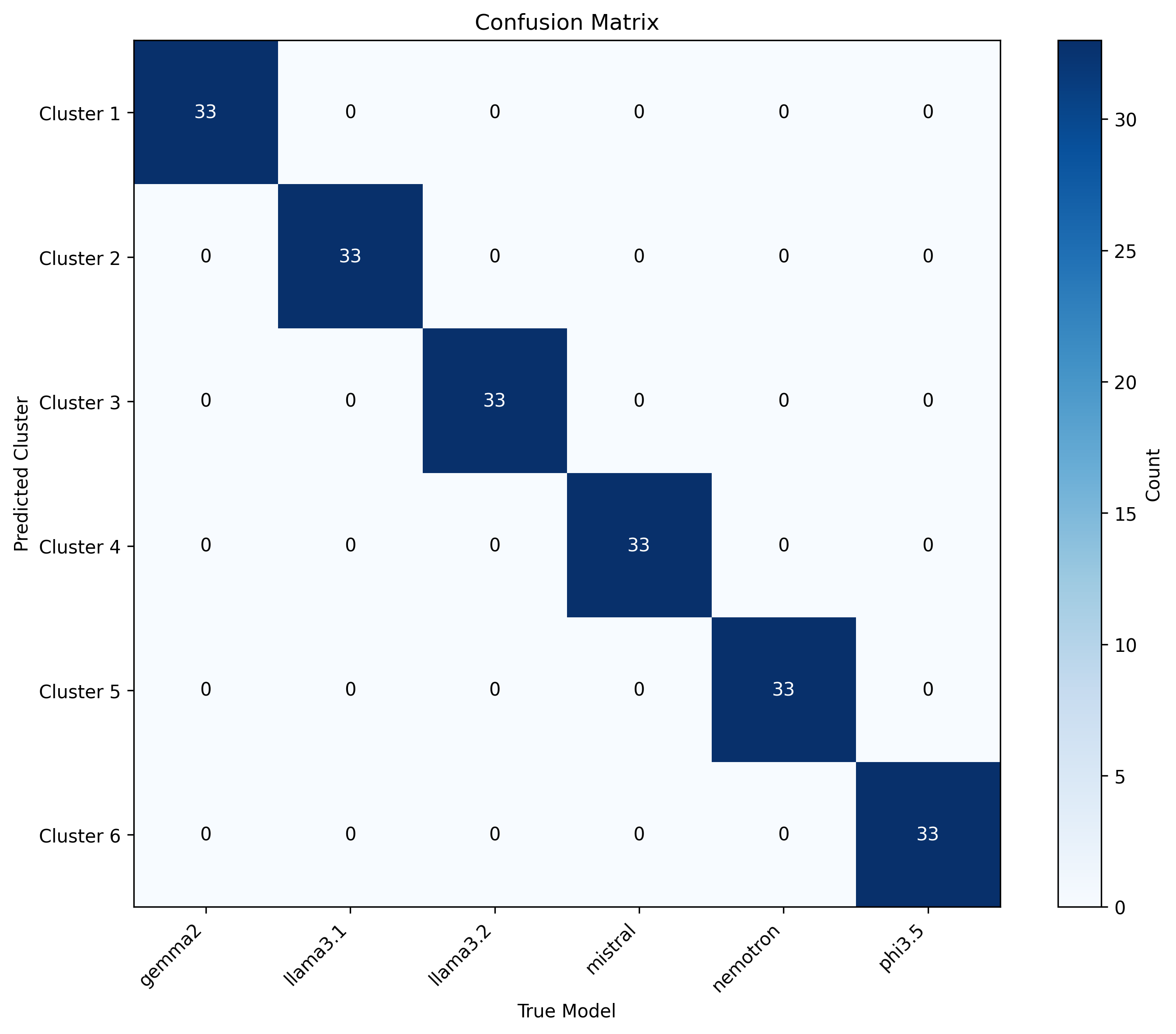}
  \caption{Confusion matrix from spectral clustering.}
  \label{fig:confusion}
\end{figure}

Our theoretical results in \Cref{sec:theory,sec:inter-tool} depended on
\Cref{as:ranked}. Empirically, we found that our conclusions held despite the
fact that our assumptions were not strictly met in practice. Here, we provide
additional evidence in support of \Cref{as:ranked} by measuring the variability
in rankings induced by different inputs. \Cref{fig:ranking-diagram} illustrates
our approach. We introduce 10 additional prompt variations (see
\Cref{app:exp-monoculture} for details), and we take 3 evenly spaced
temperatures per model for a total of $|\Theta_I| = 33$ input variants. If
\Cref{as:ranked} (approximately) holds, we'd expect the rankings over types to
be relatively insensitive to variations of $\theta \in \Theta_I$, but to
potentially be different across different GAITs.

To validate this empirically, we need a notion of distance between distributions
that penalizes inversions. While Kendall tau distance is a natural candidate,
because many outputs appear with probability approximately zero, it would be
dominated by inversions between outputs that almost never occur. Instead, we
define the following weighted notion of inversions, which penalizes inversions
that are in a sense ``larger'':
\begin{equation}
  \label{eq:WI-def}
  \WIargs{\bp}{\bp'}
  \triangleq \min_{\bp^* : \pi(\bp^*) = \pi(\bp)} \dTV(\bp^*, \bp'),
\end{equation}
where $\dTV$ is the total variation distance. Intuitively, $\WI$ reflects the
minimal change that would need to be made to $\bp'$ in order for it to share
$\bp$'s ranking over responses.
To make this symmetric, define
\begin{equation}
  \label{eq:WI-avg-def}
  \WIavgargs{\bp}{\bp'}
  \triangleq \frac{\WIargs{\bp}{\bp'} + \WIargs{\bp'}{\bp}}{2}
\end{equation}
With this definition, we can compare the variability due to changes in $\theta$
to the variability across GAITs.

Consider a graph where node $v$ represents the pair $(T_v, \theta_v)$.
Define the weight of the edge between nodes $v$ and $w$ to be the distance
\begin{equation}
  \label{eq:D-def}
  D_{vw}
  =  D_{wv}
  \triangleq
  \WIavgargs{\bpn{T_v}(\theta_v)}{\bpn{T_w}(\theta_w)}
\end{equation}
averaged across all instances $I$.\footnote{The full distributions $\bp$ are too
  large to compute, so we use heuristics to approximate them. See
\Cref{app:exp-monoculture} for details.}
Intuitively, $D_{vw}$ should be large if $v$ and $w$ have very different
rankings over the outputs, and small otherwise.
Under \Cref{as:ranked}, we'd expect nodes with the same LLM $T_v = T_w$ to
be ``close'' together, and nodes with different LLMs $T_v \ne T_w$ to be farther
apart.

\Cref{fig:distance-matrix} visualizes the adjacency matrix $D$. We group the
nodes by LLM, so each contiguous block of rows and columns consists of $|\Theta|
= 33$ input variants for a given LLM $T$. By construction, the diagonal entries
are all 0. Under \Cref{as:ranked}, we'd expect each of these 6 blocks to be made
up of 0's as well. While this does not strictly hold, it appears to be
directionally true: within-GAIT distances are noticeably smaller than
across-GAIT instances. An off-the-shelf spectral clustering\footnote{Using
default settings, and a standard Gaussian kernel to convert distances into
similarities: $S_{vw} = \exp(-D_{vw}/2)$.} confirms our visual intuition: the 6
clusters we learn exactly correspond to our 6 LLMs (see
\Cref{fig:confusion} for the confusion matrix of our clustering). This is true
despite the fact that the prompts are quite different from one another; for
example, some explicitly ask for creative responses that no one else will think
of, while others do not. See \Cref{app:prompts} for the prompts we use.

Consistent with \Cref{hyp:variability}, variation across $\theta \in \Theta_I$
has relatively little effect on the rankings a GAIT induces over the outputs,
but we find substantial across-GAIT heterogeneity. Interestingly, we note that
the two LLMs made by Meta (\llamaone\ and \llamatwo) appear quite close together
when we visualize the two-dimensional spectral embedding in
\Cref{fig:spectral-embedding}, providing suggestive evidence that they are more
alike than other pairs of models. These results support \Cref{as:ranked}: the
rankings over types that GAITs produce are relatively insensitive to the precise
input $\theta$, but they can vary significantly between GAITs.

Our prompt variations are intended to capture idiosyncratic, low-effort
differences that could emerge from heterogeneous users. More sophisticated
prompting strategies like extensive in-context examples or the inclusion of
personal preferences could lead to different response distributions across
users. Empirical studies of AI homogeneity suggest that at least for now,
differences in behavior do not induce enough prompt variation to produce
meaningful heterogeneity. In practice, some commercial systems use prompt
rewriting to improve quality, which ``smooths out'' idiosyncratic prompt
differences and reduces their impact~\citep{jahani2024prompt}.

\section{Competition in Continuous Spaces}
\label{sec:continuous}

\begin{figure}[htpb]
  \centering
  \includegraphics[width=\textwidth]{./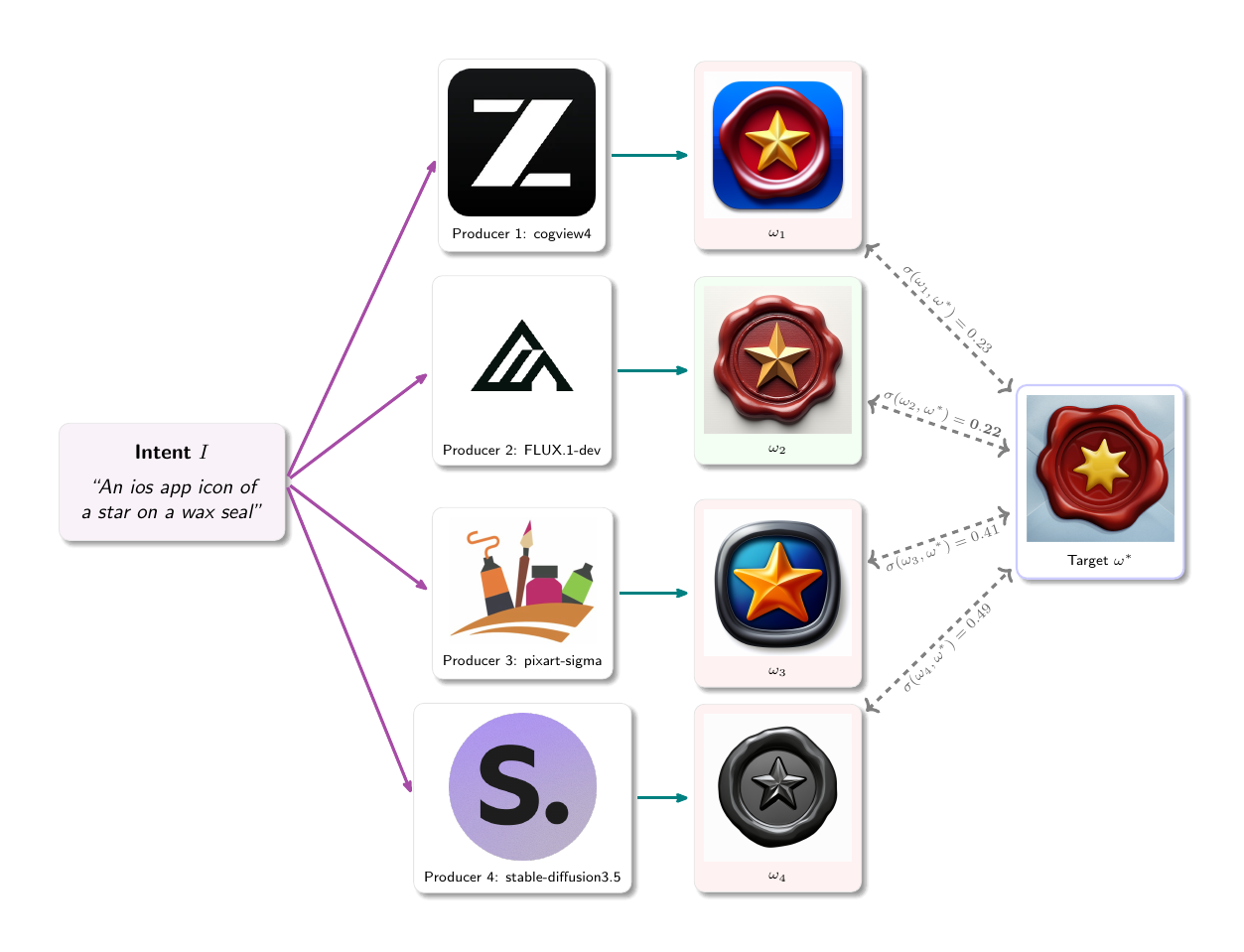}
  
  \caption{Illustration of our setting for \Cref{sec:continuous}. A user arrives
    with an intent $I$, instantiated as a text description of their desired
    image $\omega^*$. Each producer samples an output, using their chosen GAIT.
    In this example, all users choose the different GAITs. The user selects the
    closest image as measured by $\sigma$ (in this case, $\omega_2$). Player 2
gets utility 1, and the others get utility 0.} \label{fig:continuous-diagram}
\end{figure}

Our investigation so far has studied and empirically validated a stylized model
of competition. We conclude with an extension to determine whether our
conclusions apply more broadly.
We generalize our model to capture competition in continuous output spaces, in
contrast to the discrete setting discussed thus far. At a high level, we find
that our main results bear out empirically, and we leave a deeper theoretical
investigation of this setting to future work. We begin by describing our
extension before presenting experiments.

\subsection{Modeling competition in continuous spaces}
\label{sec:continuous-model}

Our theoretical model uses collisions to capture the negative externalities of
similarity. In high-dimensional domains like image generation, exact collisions
are rare, but the underlying market forces remain the same: negative
externalities arise through \textit{proximity} instead of collisions. Producers
suffer when their outputs are ``close'' to their competitors' outputs. Here, we
extend our analysis to a continuous setting resembling spatial Hotelling models.

We draw inspiration from \citet{jagadeesan2024supply,hron2022modeling}, who
model competition in Euclidean space. In their setting, a consumer's demand can
be represented as a vector, and given producer outputs, the consumer selects the
one closest to their demand. With this starting point, we assume both demand and
outputs lie in some space $\Omega$ equipped with a distance measure $\sigma$. A
consumer has demand for some $\omega^* \in \Omega$. Producer $i$ generates
$\omega_i \in \Omega$ with the help of a GAIT, and the consumer selects the
producer whose output is closest to their target: $i^* \triangleq \argmin_i
\sigma(\omega_i, \omega^*)$. Producer $i^*$ gets utility 1, and all other
producers get utility 0.\footnote{\citet{hron2022modeling} consider consumer
choice models that are smoother than this one. For simplicity, we only consider
the $\argmin$ choice, as do \citet{jagadeesan2024supply}.} The sum of producer
utilities is by definition 1, but as a measure of welfare, we take the distance
between the target output and the chosen output, i.e.,~$\sigma(\omega_{i^*},
\omega^*)$. Thus, the socially optimal strategy profile is the one that
minimizes this distance.

Our setting differs from that of \citet{jagadeesan2024supply,hron2022modeling}
in two ways. First, producers can create a different output for each consumer,
as opposed to a fixed output for all consumers. However, they do not observe the
consumer's true demand $\omega^*$. Instead, the consumer provides their
``intent'' $I$ describing their target output. In our experiments, we will take
$\omega^*$ to be
an image and $I$ to be a text prompt describing that target image. Second, we
assume producers cannot generate arbitrary outputs. Instead, each must commit to
a GAIT $T$, which takes as input $I$ and samples $\omega \sim \mc D_T(I)$. As in
\Cref{sec:same-model,sec:pairwise}, for the sake of tractability, we do not
consider the role of prompt variation. This is consistent with both
information-theoretic~\citep{rabii2023oatmeal,kreminski2025endless} and
empirical~\citep{zhou2023generative} evidence that prompting alone is not enough
to generate output diversity in image generation.

This extension of our model can be summarized as follows.
\begin{enumerate}
  \item Each producer $i$ strategically chooses a GAIT $T$ from some set $\mc T$.
  \item A random consumer arrives with intent $I$ describing their (unobserved)
    target output $\omega^*$.
  \item Each producer $i$ samples $\omega_i \sim \mc D_{T_i}(I)$.
  \item Producer $i^* = \argmin_i \sigma(\omega_i, \omega^*)$ gets utility 1,
    and all others get utility 0.
\end{enumerate}
Our experiments in \Cref{sec:empirical} can be interpreted as a variant of this
framework: For a given Scattergories instance (i.e., ``intent'') $I$, a consumer
arrives with ``demand'' for a random (valid) answer $k^*$. Producers who
generate $k^*$ split the demand equally, since they are equally close to the
consumer's desired output. In the continuous setting introduced in this section,
collisions almost never occur, so the consumer always selects a single producer.

Our extension departs from our theoretical model from \Cref{sec:model} in its
notion of competition. Here, competition is (1) in a continuous space, and (2)
zero-sum. Despite these differences, we can still ask many of the same
questions: Does competition lead to more diverse production? How does this
compare to socially optimal behavior? And what is the relationship between GAIT
quality and market share? In what follows, we answer these questions
empirically.

\subsection{Competing via image production}
\label{sec:continuous-exp}

\paragraph*{Experimental setup.}

We instantiate this model via experiments using text-to-image (T2I) models to
generate iOS app icons. We begin with a dataset of app icons and their text
descriptions.\footnote{\url{https://huggingface.co/datasets/ppierzc/ios-app-icons}}
Each player $i$ can strategically choose a T2I model from $\mc T = \{$\cogview,
\flux, \pixart, \sd$\}$. Then, we randomly sample (description, icon) pairs $(I,
\omega^*)$ from the dataset.\footnote{As before, we use $I$ to denote
``intent,'' in this case given by the text description of the target image.}
Each player $i$ generates an image $\omega_i \sim \mc D_{T_i}(I)$, and the
winner is the one whose image most closely matches $\omega^*$. We use
DreamSim~\citep{dreamsim} as our measure $\sigma$ of perceptual distance between
images, and present qualitatively similar results for
LPIPS~\citep{zhang2018unreasonable} in \Cref{app:icon}.
\Cref{fig:continuous-diagram} depicts our setup in the case where $n=4$, and
each player chooses a different T2I model. In this instance, the winner is
player 2 using \flux.

As in \Cref{sec:empirical}, we take a sample of instances and generate several
images per instance to run simulations. Because image generation is more
computationally expensive than text generation, we take 25 samples per model for
each of 30 $(I, \omega^*)$ instances. \Cref{tab:model_performance_dreamsim}
shows the average performance of each model, i.e., $\sigma(\omega, \omega^*)$
averaged across all instances and image samples. \cogview\ performs best, and
\pixart\ is the worst (lower distances mean samples are closer to the target).

\begin{table}
\centering
\begin{tabular}{lc}
\toprule
Model & Average Distance \\
\midrule
\cogview & \textbf{0.413} \\
\flux & 0.428 \\
\pixart & 0.462 \\
\sd & 0.442 \\
\bottomrule
\end{tabular}
\caption{Average distance to target image (\dreamsim)}
\label{tab:model_performance_dreamsim}
\end{table}

We compute pure-strategy equilibria for this game for $n = 2, \dots, 20$. This
setup most closely resembles the one in \Cref{sec:pairwise}, where producers
strategically choose their GAIT. Here, because T2I models do not have a natural
equivalent of the temperature parameter, we do not allow for further strategic
behavior beyond choice of GAIT.

Qualitatively, our existing theoretical results suggest that competition
increases diversity (\Cref{thm:diversity-n}), socially optimal behavior is more
diverse than equilibrium behavior (\Cref{thm:diversity-opt}), and suboptimal
models should still get some market share (\Cref{lem:dominant-strength}).
Despite the fact that the setting we study here extends beyond the scope of our
theoretical model, we indeed find that these predictions hold.

\paragraph*{Results.}

\Cref{fig:icon-ms-dreamsim} shows market shares when instantiating our model
across four T2I GAITs. Our findings reflect the intuition that
\Cref{lem:dominant-strength} provides: even though \cogview\ is the ``best''
GAIT in isolation, it is not sufficiently good to lock other GAITs out of the
market. Competing tools can capture their own niches. If a GAIT produces images
with a distinctive visual style, it can meet demand for consumers who want that
style, even if they represent a small fraction of the overall market. Our
results are analogous to those in \Cref{fig:market-shares}, where even when
there are performance disparities between tools, each still achieves nonzero
market share when competition is sufficiently strong.

As equilibrium strategies spread market share more equally across models when
more players participate (\Cref{fig:icon-ms-dreamsim}, left panel), we might
naturally expect this to yield a more diverse distribution over content. Unlike
in \Cref{sec:empirical}, where we could measure diversity as the entropy of a
discrete distribution, we must use a different notion of content diversity. We
instead measure the average distance between a random pair of producers'
outputs, again using \dreamsim. \Cref{fig:icon-diversity-dreamsim} shows that
heterogeneity goes up as the number of players increases, which aligns
qualitatively with \Cref{thm:diversity-n}. For larger values of $n$, players
spread across a broader set of models.

Moreover, \Cref{fig:icon-diversity-dreamsim} also shows that socially optimal
strategies would lead to even more heterogeneity, which is consistent with
\Cref{thm:diversity-opt}. Interestingly, the heterogeneity of socially optimal
strategies is relatively stable as a function of the number of players $n$. This
is in part because the strategy space we study is small and discrete, meaning
more players ``crowd'' the space fairly quickly and produce more similar content
to one another, since they are constrained to choose one of four GAITs. Finally,
while our price of anarchy results (\Cref{thm:poa}) do not immediately
generalize to this setting, \Cref{fig:avg-distance-dreamsim} shows that
equilibrium strategies nearly match socially optimal strategies in terms of the
user welfare they provide. Taken together, these empirical results validate the
qualitative predictions of our theory.

\begin{figure}[ht]
  \centering
  \includegraphics[width=\textwidth]{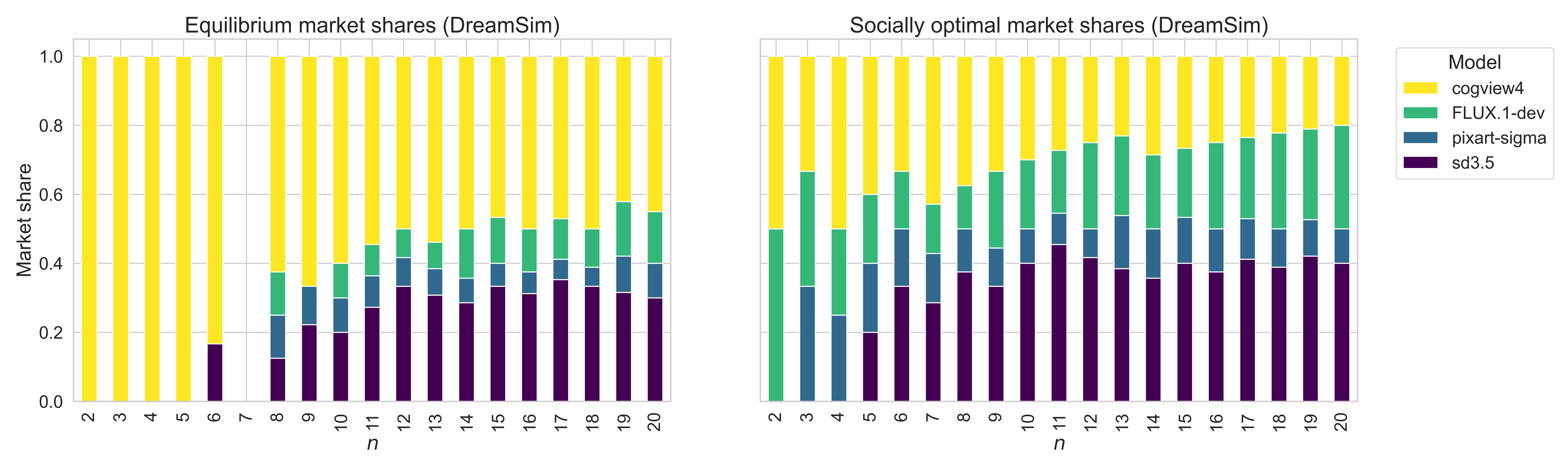}
  \caption{Market shares for equilibrium and optimal strategies. There is no
  pure strategy equilibrium for $n=7$.}
  \label{fig:icon-ms-dreamsim}
\end{figure}

\begin{figure}[ht]
  \centering
  \includegraphics[width=\textwidth]{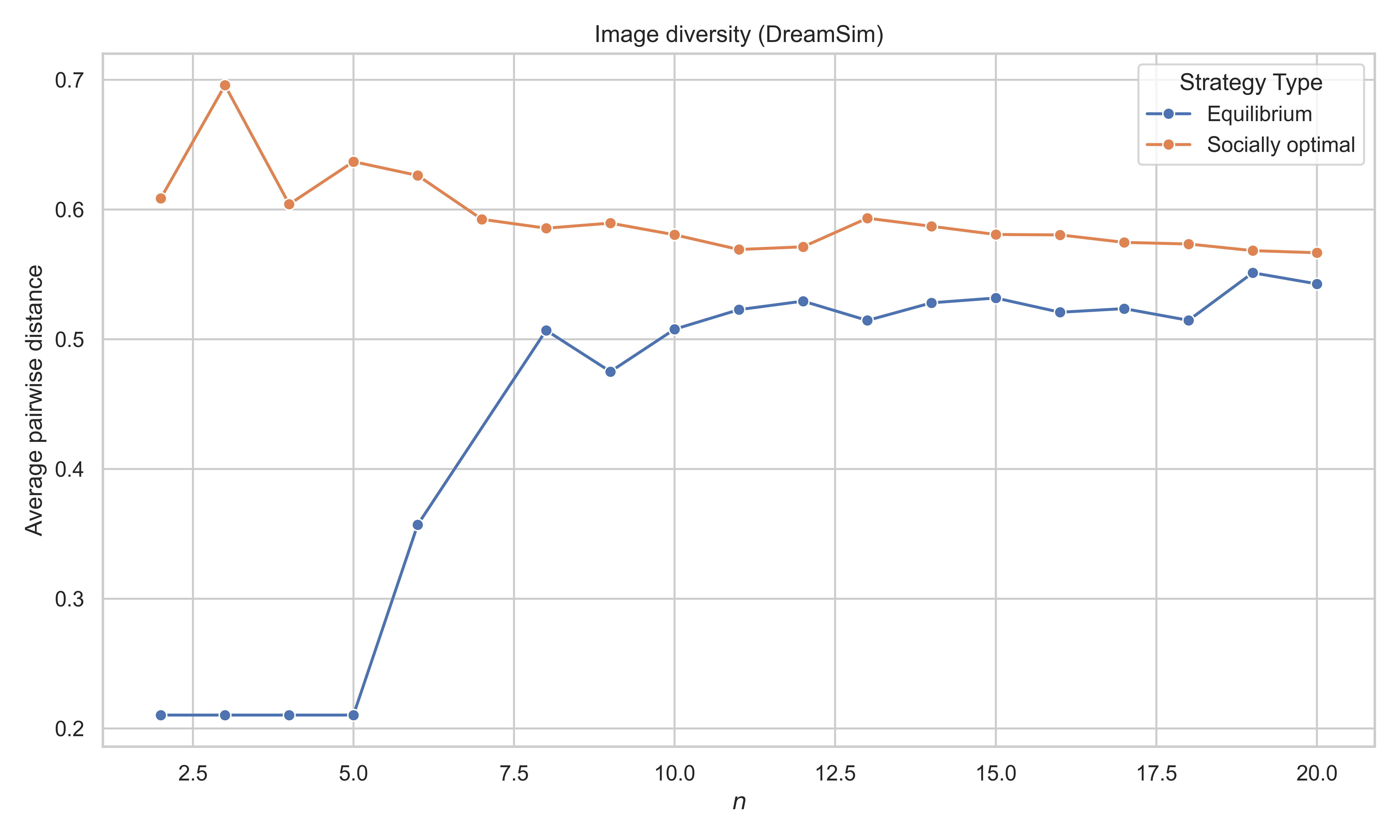}
  \caption{Image diversity with optimal and equilibrium strategies, as measured
    by average pairwise distance. Equilibria generally yield lower pairwise
    distances (more visual similarity) than socially optimal strategies, though
  competition (larger $n$) increases distances.}
  \label{fig:icon-diversity-dreamsim}
\end{figure}

\section{Discussion}
\label{sec:discussion}

Our work studies the interplay between competition and creation through a
game-theoretic lens. Using a simple model and empirical setting, we have
described a rich set of phenomena that shed light on the diversity and quality
of production under generative AI. Despite our model's simplicity, its key
predictions---on diversity, quality, and market shares under competition---are
borne out in our experiments, even as we remove assumptions and vary the
structure of the output space.

Our findings have important implications for the development, evaluation, and
use of generative AI tools. In highly competitive settings, GAITs are
particularly useful if they can create high-quality outputs that are distinct
from those of other tools on the market. Our model suggests that the value of
data to a developer depends on both its quality in an absolute sense (i.e., the
improvement it will enable on a benchmark) as well as its value in a competitive
setting: data may be especially valuable if it captures signal others lack,
increasing the value of exclusive rights to proprietary data. Further, our model
provides additional motivation for objectives like
pluralistic alignment~\citep{sorensen2024roadmap}, which seek to ensure that
tools sample from a diverse or representative distribution of outputs.

Existing GAIT evaluation methods typically consist of benchmarking their
performance in isolation. Our results indicate that we learn something different
when we evaluate tools jointly. Efforts like LMArena~\citep{chiang2024chatbot},
which allows users to choose between a pair of responses generated by anonymous
language models, begin to get at this. It would be interesting to see whether a
similar platform with $k$-way choices instead of pairwise choices would yield a
different ranking over language models. We might expect high-quality but similar
chatbots to split their demand, allowing a lower-quality but distinctive chatbot
to be highly ranked.

Finally, users of GAITs must be increasingly sensitive to their competitors'
behavior, particularly as AI adoption grows. Recruiters have already begun
to warn job applicants that AI-written resumes ``look exactly the
same''~\citep{chatgpt-resume}; applicants who want to use AI successfully need
to explicitly diversify their use of tools~\citep{kaashoek2024impact}. Possible
strategies include altering sampling parameters, varying a prompt, or including
personal information to increase the likelihood of getting a unique response.

We see a wide range of potential future directions to build upon our work. From
a modeling perspective, future work could incorporate more general notions of
similarity and competition than our type-based model, building on our extension
in \Cref{sec:continuous}. Our model of producer-GAIT collaboration is somewhat
limited; suppose, instead, producers sample multiple outputs from a GAIT and
strategically choose one. Could a small number of
choices~\citep{azar1994balanced,mitzenmacher2001power} dramatically increase
welfare and mitigate monoculture? We might also question whether our conception
of social welfare is broad enough to reveal the full costs of homogenization. In
our model, consumer welfare only appears insofar as it directly impacts producer
welfare. But perhaps some of the value of content diversity indirectly accrues
to consumers who do not contribute to producer welfare at all. Somewhat
paradoxically, AI developers who scrape publicly available data benefit
immensely from its diversity, but because the original creators of that data do
not share in those benefits (and indeed, they may not even be aware that their
data has been scraped), their behavior will be unaffected by developers'
preferences for diversity. Rational economic behavior on the part of data
creators~\citep[e.g.,~][]{taitler2024braess} may result in far more
homogenization than would be socially optimal. Accounting for social welfare
beyond its direct impact on producer utilities is a promising direction for
future work.

On the empirical side, Scattergories is a natural competitive benchmark for
language models. It could prove useful to researchers in working towards
alignment---better-aligned models should have a competitive advantage here. More
broadly, future work could build competitive benchmarks that go beyond the
language modality, providing a comprehensive testbed for \textit{competitive
alignment}. Our hope is that both our theoretical model and empirical setting
provide a starting point for future work to study the rich interaction between
competition and diversity in generative AI.

\begin{acks}
  
  We thank David Autor, Kate Donahue, Maryam Farboodi, Nikhil Garg, Amritha
  Jayanti, Jon Kleinberg, Simon Johnson, Sendhil Mullainathan, Ashesh Rambachan,
  Dave Rand, and participants of the Fall 2024 Schwarzman College of Computing
  Special Topics in Computing Workshop and Tuck Operations and Management
  Science seminar for helpful discussions and feedback.

\end{acks}

\bibliography{refs}
\appendix
\clearpage
\paragraph*{Organization of the appendix.}

All deferred proofs for \Cref{sec:theory}, along with auxiliary supporting
results, can be found in \Cref{app:proofs}. \Cref{app:pairwise-proofs} contains
deferred proofs for \Cref{sec:inter-tool}. We restate each result before proving
it. In \Cref{app:empirical,app:icon}, we provide details about our experiments and
additional figures for \Cref{sec:empirical,sec:continuous} respectively.

\section{Score Functions}
\label{app:score-funcs}

Here, we
provide a more general description of the family of score functions for which
our results will hold. Let $f$ be discrete-convex if for any $x \in \N$,
\begin{align*}
  f(2+x) - f(1+x) \ge f(1+x) - f(x).
\end{align*}
$f$ is discrete-concave if $-f$ is discrete-convex. Let $\N_{>0}$ be the set of
nonnegative integers. With this, we can define our class of score functions $\mc
S$.
\begin{definition}[Score function]
  \label{def:S-class}
  A score function $s \in \sall$ satisfies
  \begin{enumerate}[label=\textbf{D.\arabic*}]
    \item $s(x)$ is increasing on $\N_{> 0}$.
      \label{def:increasing}
    \item $1/s(x)$ is discrete-convex on $\N_{> 0}$.
      \label{def:convex}
    \item
      \label{def:bounds}
      $s(1) = 1$ and $s(x) > 1$ for some $x \in \N_{>0}$.
    \item
      \label{def:convex-or-concave}
      At least one of the following two conditions:
      \begin{enumerate}[label=\textbf{\ref{def:convex-or-concave}(\alph*)}]
        \item
          \label{def:S-concave}
          $x/s(x)$ is increasing and discrete-concave on $\N_{> 0}$.
        \item
          \label{def:S-convex}
          $x/s(x)$ is decreasing on $\N_{> 0}$.
      \end{enumerate}
  \end{enumerate}
  Let $\sup$ be the set of functions $s$ that
  satisfy~\labelcref{def:increasing,def:convex,def:bounds,def:S-concave}. Let
  $\sdown$ be the set of functions $s$
  satisfying~\labelcref{def:increasing,def:convex,def:bounds,def:S-convex}. By
  definition, $\sall = \sup \cup \sdown$.
\end{definition}

Observe that together, \labelcref{def:convex,def:bounds} imply $s(2) > 1$.
We include a discrete-concavity
requirement in \labelcref{def:S-concave} so that social welfare exhibits
diminishing returns in $C_k$ for any type $k$. By this definition, $s_\gamma
\in \sup$ for $\gamma \in (0, 1]$ and $s_\gamma \in \sdown$ for $\gamma \ge
1$ (including $\gamma = \infty$). The
only function in their intersection $\sup \cap \sdown$ is the identity function
$s_1(x) = x$.

\section{Omitted Proofs and Auxiliary Results for \Cref{sec:theory}}
\label{app:proofs}
Here, we provide omitted proofs and auxiliary claims needed to support them. We
restate each result before proving it. We begin with a few definitions.
\begin{align*}
  b(k, n, p)
  &\triangleq \binom{n}{k} p^k (1-p)^{n-k} \\
  B(k, n, p)
  &\triangleq \sum_{\ell=0}^k b(k, n, p)
\end{align*}
For a sequence $\{\ba_k\}$, and $k_2 \ge k_1$, define
\begin{align*}
  \avg(\ba; k_1, k_2)_k \triangleq \begin{cases}
    \frac{\sum_{\ell=k_1}^{k_2} \ba_\ell}{k_2 - k_1 + 1} & k_1 \le k \le k_2 \\
    \ba_k & \text{otherwise}
  \end{cases}.
\end{align*}
Intuitively, this operation averages all entries between $k_1$ and $k_2$.
For a function $f(x)$, define $\df(x)$ to be the discrete derivative $\df(x)
\triangleq f(x+1) - f(x)$. For a function that takes two arguments (e.g., $g(n,
p)$), we will take the discrete derivative of the first dimension by convention:
$\dg(n, p) \triangleq g(n+1, p) - g(n, p)$.

\subsection{Omitted proofs for \Cref{sec:equilibria}}
\label{app:equilibria}

Define
\begin{align*}
  u_s(n, p) \triangleq \E{\frac{1}{s(1 + X(n, p))}}.
\end{align*}
When it is clear from context, we will drop the subscript $s$ and just write
$u(n, p)$.

\begin{lemma}
  \label{lem:unique-eq-decreasing}
  Under \Cref{as:d-ranked}, $\game(n, \bd, s)$ has a unique symmetric
  equilibrium $\bp = \peq(n, \bd, s)$, which is a pure strategy equilibrium. Let $k^* = \max \{k : \bp_k > 0\}$. $\bp$ satisfies the following properties:
  \begin{enumerate}
    \item $\bp \in \Delta(\pi)$.
    \item For all $k \le k^*$,
      \begin{align*}
        \bd_k u(n-1, \bp_k) = U(\bp).
      \end{align*}
    \item For all $k > k^*$, $U(\bp) \ge \bd_k$.
  \end{enumerate}
  This holds even if we remove \Cref{as:ranked} (i.e., if players can choose
  $\bp \notin \Delta(\pi)$).
\end{lemma}
\begin{proof}
  By \Cref{lem:sym-mixed-eq,lem:mixed-pure-eq}, $\game(n, \bd, s)$ has a
  symmetric pure strategy equilibrium that yields utility $U(\bp)$. For now, we
  remove the restriction that $\bp \in \Delta(\pi)$ and instead allow for any
  distribution $\bp$.

  Player $i$'s utility for strategy $\bp$ is
  \ifthenelse{\boolean{smallEqs}}{
  \begin{align*}
    U(\bp)
    &= \sum_{k \in [K]} \bd_k \bp_k \E{\frac{1}{s(1 + X(n-1, \bp_k))}} \\
    &= \sum_{k \in [K]} \bd_k \bp_k u(n-1, \bp_k).
  \end{align*}
  }{\begin{align*}
      U(\bp)
    &= \sum_{k \in [K]} \bd_k \bp_k \E{\frac{1}{s(1 + X(n-1, \bp_k))}} 
    = \sum_{k \in [K]} \bd_k \bp_k u(n-1, \bp_k).
\end{align*}
}
  If we remove the restriction that $\bp \in \Delta(\pi)$, then player $i$ can
  deviate from $\bp$ to any arbitrary distribution $\bp'$. This means that for
  all $k$,
  \begin{align*}
    \bd_k u(n-1, \bp_k) \le U(\bp),
  \end{align*}
  because otherwise, a player could strictly increase their utility by placing
  all of their probability mass on type $k$. Similarly, for all $k$ such that
  $\bp_k > 0$,
  \begin{align*}
    \bd_k u(n-1, \bp_k) \ge U(\bp),
  \end{align*}
  because otherwise, a player could strictly increase their utility by removing
  mass from $k$ and distributing it to any $k'$ such that $\bd_{k'} u(n-1,
  \bp_{k'}) \ge U(\bp)$. (Such a $k'$ must exist because these quantities
  average to $U(\bp)$.)

  Finally, note that $u(n-1, \cdot)$ is strictly decreasing by
  \Cref{lem:u-dec-in-p} and $u(n-1, 0) = 1$. Therefore, if $\bd_k > U(\bp)$,
  \begin{align*}
    \bd_k u(n-1, \bp_k) \le U(\bp) < \bd_k,
  \end{align*}
  so $\bp_k > 0$. If $\bd_k \le U(\bp)$, then
  \begin{align*}
    \bd_k u(n-1, \bp_k) \le \bd_k \le U(\bp).
  \end{align*}
  In this case, $\bp_k = 0$ because otherwise the first inequality is strict.
  Let $k^*$ be the largest $k$ such that $\bd_{k^*} > U(\bp)$. Putting these
  together, under \Cref{as:d-ranked}, 
  \begin{align*}
    \bp_k > 0
    &\Longleftrightarrow k \le k^* \\
    \bd_k u(n-1, \bp_k)
    &= U(\bp)
    \tag{$k \le k^*$}
  \end{align*}

  For $k \le k^*$, we have
  \begin{align*}
    u(n-1, \bp_k) = \frac{U(\bp)}{\bd_k}.
  \end{align*}
  Again using the fact that $u(n-1, \cdot)$ is strictly decreasing, this means
  that $\bp_k$ must also be decreasing, for $k \le k^*$. And since $\bp_k = 0$
  for $k > k^*$, we have shown $\bp \in \Delta(\pi)$.

  Finally, we show that $\bp$ is the unique symmetric equilibrium. Assume
  towards contradiction that there is some other symmetric equilibrium $\bp' \ne
  \bp$. There must be some $k$ such that $\bp_k' > 0$, $\bp_k > 0$, and $\bp_k'
  \ne \bp_k$. Because $u(n-1, \cdot)$ is strictly decreasing, $U(\bp') = \bd_k
  u(n-1, \bp_k') \ne \bd_k u(n-1, \bp_k) = U(\bp)$.

  Let $u^{-1}(n, p)$ be the inverse of $u$, such that $u^{-1}(n, u(n, p)) = p$.
  We only require that $u^{-1}$ is defined on the range $[u(n, 1), u(n, 0)]$; by
  convention, define it to be 0 otherwise. It is well-defined because $u(n,
  \cdot)$ is strictly decreasing in that range. Observe that $u^{-1}(n-1,
  \cdot)$  is strictly decreasing on the range where it is nonzero. Then, let
  \begin{align*}
    S(c)
    &\triangleq \sum_{k \in [K]} u^{-1}\p{n-1, \frac{c}{\bd_k}}.
  \end{align*}
  Because at least one $u^{-1}$ term is strictly decreasing in $c$ (i.e., all
  terms corresponding to nonzero $\bp_k$'s), $S$ is strictly decreasing in $c$.
  Observe that
  \begin{align*}
    S(U(\bp))
    &= \sum_{k \in [K]} u^{-1}\p{n-1, \frac{U(\bp)}{\bd_k}} \\
    &= \sum_{k \in [K]} \bp_k \\
    &= 1,
  \end{align*}
  and similarly, $S(U(\bp')) = 1$. But we know that $S$ is strictly decreasing,
  so $S(U(\bp)) = S(U(\bp'))$ implies $U(\bp) = U(\bp')$, which is a
  contradiction. Therefore, $\bp$ is the unique symmetric equilibrium.
\end{proof}

\begin{lemma}
  \label{lem:unique-opt-decreasing}
  Under \Cref{as:d-ranked}, $\game(n, \bd, s)$ has a unique symmetric
  optimal strategy $\bp^* = \popt(n, \bd, s)$. Let $c = \frac{1}{n} \|\nabla
  W(\bp^*)\|_{\infty}$, and let $k^* = \max \{k : \bp_k^* > 0\}$. $\bp^*$
  satisfies the following properties:
  \begin{enumerate}
    \item $\bp^* \in \Delta(\pi)$.
    \item If $s \in \sup$, for all $k \le k^*$,
      \begin{align*}
        \frac{1}{n} \p{\nabla W(\bp^*)}_k
        = \frac{\partial}{\partial p} \bd_k p u_s(n-1, p) \dpat{\bp_k^*} = c.
      \end{align*}
    \item If $s \in \sdown$, for all $k \le k^*$, $\bd_k (1-\bp_k^*)^{n-1} = c$.
    \item For all $k > k^*$, $c \ge \bd_k$.
  \end{enumerate}
  This holds even if we remove \Cref{as:ranked} (i.e., if players can choose
  $\bp \notin \Delta(\pi)$).
\end{lemma}
\begin{proof}
  We begin with the case where $s \in \sup$.
  By \Cref{lem:mixed-pure-eq}, it suffices to consider pure strategies. The
  socially optimal symmetric pure strategy is the solution to the optimization
  problem
  \begin{align*}
    \max_{\bp} ~
    &W(\bp) \\
    \text{s.t.}~
    &\sum_{k \in [K]} \bp_k \le 1 \\
    &\bp_{k+1} \le \bp_k \tag{$\forall k \in [K-1]$} \\
    &\bp_K \ge 0.
  \end{align*}
  Because $\game$ is symmetric and social welfare is the sum of utilities for $s
  \in \sup$, this is equivalent to maximizing a single player's utility:
  \begin{align*}
    \max_{\bp} ~
    &U(\bp) \\
    \text{s.t.}~
    &\sum_{k \in [K]} \bp_k \le 1 \\
    &\bp_{k+1} \le \bp_k \tag{$\forall k \in [K-1]$} \\
    &\bp_K \ge 0.
  \end{align*}
  As in the proof of \Cref{lem:unique-eq-decreasing}, we will remove the
  assumption that $\bp \in \Delta(\pi)$ and show that even without this
  assumption, $\popt \in \Delta(\pi)$. Without the assumption that $\bp \in
  \Delta(\pi)$, we can rewrite utility-maximization problem as
  \begin{align*}
    \max_{\bp} ~
    &U(\bp) \\
    \text{s.t.}~
    &\sum_{k \in [K]} \bp_k \le 1 \\
    &\bp_k \ge 0.
    \tag{$\forall k \in [K]$}
  \end{align*}
  Intuitively, we would expect $\frac{\partial}{\partial \bp_k} U(\bp)$ to be
  equal for all $k$ such that $\bp_k > 0$, since the optimal solution should
  equalize the marginal gain from increasing any nonzero $\bp_k$. We formalize
  this in what follows. The Lagrangian is
  \ifthenelse{\boolean{smallEqs}}{
  \begin{align*}
    \mc L(\bp; \mu, \lambda)
    &= -U(\bp) + \mu\p{\sum_{k \in [K]} \bp_k - 1}
    - \sum_{k \in [K]} \lambda_k \bp_k \\
    &= - \sum_{k \in [K]} \bd_k \bp_k u(n-1, \bp_k) \\
    &+ \mu\p{\sum_{k \in [K]}
    \bp_k - 1}
    - \sum_{k \in [K]} \lambda_k \bp_k.
  \end{align*}
}{\begin{align*}
    \mc L(\bp; \mu, \lambda)
    &= -U(\bp) + \mu\p{\sum_{k \in [K]} \bp_k - 1} - \sum_{k \in [K]} \lambda_k
    \bp_k \\
    &= - \sum_{k \in [K]} \bd_k \bp_k u(n-1, \bp_k) + \mu\p{\sum_{k \in [K]}
    \bp_k - 1} - \sum_{k \in [K]} \lambda_k \bp_k.
  \end{align*}
}
  By stationarity, for the socially optimal strategy $\bp^*$, for all $k \in
  [K]$,
  \begin{align*}
    \frac{\partial}{\partial \bp_k} \mc L(\bp; \mu, \lambda)
    \Big|_{\bp_k=\bp_k^*}
    &= 0 \\
    - \bd_k \frac{\partial}{\partial p} p u(n-1, p) \dpat{\bp_k^*} + \mu -
    \lambda_k
    &= 0.
  \end{align*}
  Consider an optimal solution $\bp^*$. By complementary slackness, 
  \begin{align*}
    \bp_k^* > 0 \Longrightarrow \lambda_k = 0.
  \end{align*}
  Therefore, for all $k$ such that $\bp_k^* > 0$,
  \begin{equation}
    \label{eq:partial-mu}
    \bd_k \frac{\partial}{\partial p} p u(n-1, p) \dpat{\bp_k^*} = \mu.
  \end{equation}
  For all $k$ such that $\bp_k^* = 0$,
  \begin{align*}
     \bd_k \frac{\partial}{\partial p} p u(n-1, p) \dpat{\bp_k^*}
     &= \mu - \lambda_k \\
     \bd_k \frac{\partial}{\partial p} p u(n-1, p) \dpat{0}
     &\le \mu.
  \end{align*}
  Note that
  \ifthenelse{\boolean{smallEqs}}{
  \begin{align*}
    \frac{\partial}{\partial p} p u(n-1, p) \dpat{0}
    &= \p{p \cdot \frac{\partial}{\partial p} u(n-1, p)} \dpat{0} \\
    &+ u(n-1, 0) \\
    &= 1.
  \end{align*}
}{\begin{align*}
    \frac{\partial}{\partial p} p u(n-1, p) \dpat{0}
    &= \p{p \cdot \frac{\partial}{\partial p} u(n-1, p)} \dpat{0} 
    + u(n-1, 0)
    = 1.
  \end{align*}
}
  Therefore, for $k$ such that $\bp_k^* = 0$,
  \begin{align*}
    \bd_k \le \mu.
  \end{align*}
  Next, we will show that $\frac{\partial}{\partial p} p u(n-1, p)$ is strictly
  decreasing in $p$. Define
  \begin{align*}
    w_s(n, p)
    &\triangleq \E{\frac{X(n, p)}{s(X(n, p))}}
    = n p u(n-1, p).
  \end{align*}
  As before, we will drop the subscript $s$ when it is clear from context. Then,
  \begin{align*}
    \frac{\partial^2}{\partial p^2} p u(n-1, p) < 0
    \Longleftrightarrow
    \frac{\partial^2}{\partial p^2} w(n, p) < 0.
  \end{align*}
  Because $x/s(x)$ is strictly concave by assumption
  (\labelcref{def:bounds,def:S-concave}), $w$ is
  the expectation of a strictly concave function of a binomial random variable,
  which is strictly concave by \Cref{lem:binom-ex-convex}. This implies
  \begin{align*}
    \frac{\partial^2}{\partial p^2} p u(n-1, p) < 0.
  \end{align*}

  Given $\mu$, we therefore have a tight characterization of $\bp^*$:
  \begin{enumerate}
    \item If $\bd_k > \mu$, then we must have $\bp_k^* > 0$, and in particular,
      \begin{align*}
        \bd_k \frac{\partial}{\partial p} p u(n-1, p) \dpat{\bp_k^*} = \mu.
      \end{align*}
    \item If $\bd_k \le \mu$, then we must have
      \begin{align*}
        \bd_k \frac{\partial}{\partial p} p u(n-1, p) \dpat{\bp_k^*} \le \bd_k
        \le \mu.
      \end{align*}
      If $\bp_k^* > 0$, the first of these inequalities is strict, which
      contradicts~\eqref{eq:partial-mu}. Therefore, $\bd_k \le \mu$ implies
      $\bp_k^* = 0$.
  \end{enumerate}

  To show uniqueness, it suffices to note that the objective of our optimization
  problem is strictly concave, meaning it has a unique optimum. Because $\bp_1^*
  > 0$, $\mu = \frac{\partial}{\partial \bp_1} U(\bp) \big|_{\bp=\bp^*}$. Thus,
  $\mu = \|\nabla U(\bp^*)\|_{\infty} = \frac{1}{n} \|\nabla
  W(\bp^*)\|_{\infty}$.

  Observe that $\bp^* \in \Delta(\pi)$: for any $k$ such that $\bp_k^* > 0$ and
  $\bp_{k+1}^* > 0$,
  \begin{align*}
    \bd_k \frac{\partial}{\partial p} p u(n-1, p) \dpat{\bp_k^*}
    &= \mu
    = \bd_{k+1} \frac{\partial}{\partial p} p u(n-1, p) \dpat{\bp_{k+1}^*} \\
    \frac{\partial}{\partial p} p u(n-1, p) \dpat{\bp_k^*}
    &\le \frac{\partial}{\partial p} p u(n-1, p) \dpat{\bp_{k+1}^*}
    \tag{$\bd_k \ge \bd_{k+1}$} \\
    \bp_k^*
    &\ge \bp_{k+1}^*.
    \tag{$\frac{\partial}{\partial p} p u(n-1, p)$ strictly decreasing}
  \end{align*}

  Next, we handle the case where $s \in \sdown$. By definition,
  \begin{align*}
    W(\bp)
    &= \sum_{k \in [K]} \bd_k(1 - (1 - \bp_k^*)^n) \\
    &= \sum_{k \in [K]} \E{\frac{X(n, p)}{X(n, p)} \cdot \ind{X(n, p > 0)}} \\
    &= \sum_{k \in [K]} \E{\frac{X(n, p)}{s_1(X(n, p))}}.
  \end{align*}
  Thus, welfare under any $s \in \sdown$ is equivalent to welfare under the
  identity score function $s_1(x) = x$.
  Because $s_1 \in \sup$, our results above hold. Let $c = \frac{1}{n} \|\nabla
  W(\bp^*)\|_\infty$.
  For all $k$ such that $\bp_k^* > 0$,
  \begin{align*}
    c
    = \frac{\bd_k}{n}
    \frac{\partial}{\partial p} w_{s_1}(n, p) \dpat{\bp_k^*}.
  \end{align*}
  For $k > k^*$, $c \ge \bd_k$. To conclude, we must show that for $k$ such that
  $\bp_k^* > 0$,
  \begin{align*}
    c
    &= \frac{\bd_k}{n} \frac{\partial}{\partial p} w_{s_1}(n, p) \dpat{\bp_k^*}
    \\
    &= \frac{\bd_k}{n} \frac{\partial}{\partial p} \E{\ind{X(n, p) > 0}}
    \dpat{\bp_k^*} \\
    &= \frac{\bd_k}{n} \frac{\partial}{\partial p} (1 - (1-p)^n) \dpat{\bp_k^*}
    \\
    &= \frac{\bd_k}{n} n (1-p)^{n-1} \dpat{\bp_k^*} \\
    &= \bd_k (1-\bp_k^*)^{n-1}.
  \end{align*}
\end{proof}

\uniqueeq*
\begin{proof}

  First, \Cref{lem:sym-mixed-eq,lem:mixed-pure-eq} imply that $\game(n, \bd, s)$ has
  a symmetric pure Nash equilibrium. By \Cref{lem:dec-d-asym}, there exists
  decreasing $\tbd$ such that every equilibrium or symmetric socially optimal
  symmetric strategy in $\game(n, \bd, s)$ is an equilibrium or symmetric
  socially optimal strategy respectively in $\game(n, \tbd, s)$ with the same
  utilities and social welfare. By
  \Cref{lem:unique-eq-decreasing,lem:unique-opt-decreasing}, $\peq(n, \tbd, s)$
  and $\popt(n, \tbd, s)$ are both unique for $\game(n, \tbd, s)$. This implies
  that $\peq(n, \bd, s)$ and $\popt(n, \bd, s)$ are both unique for $\game(n,
  \bd, s)$ as well.

\end{proof}

\subsubsection{Additional results on equilibria}

\begin{lemma}
  \label{lem:sym-mixed-eq}
  $\game(n, \bd, s)$ has a symmetric mixed Nash equilibrium.
\end{lemma}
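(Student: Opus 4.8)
The plan is to reduce the statement to a standard existence theorem for symmetric equilibria in compact, continuous games, so that the proof consists entirely of verifying three hypotheses: the strategy space is compact, convex, and metrizable; the payoffs are continuous; and the game is symmetric. With these in hand, the symmetric-equilibrium version of Glicksberg's theorem (as in \citet{becker2006existence}, building on \citet{glicksberg1952further}) immediately yields a symmetric mixed Nash equilibrium.

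First I would address the strategy space. Each player's pure-strategy set is $\Delta(\pi)$, which is the intersection of the probability simplex in $\R^K$ with the finitely many closed half-spaces $\{\bp : \bp_{\pi_j} \ge \bp_{\pi_{j+1}}\}$ for $j \in [K-1]$. Hence $\Delta(\pi)$ is nonempty, closed, bounded, and convex, so it is a compact convex subset of $\R^K$ (in particular a compact metric space). Next I would verify continuity of payoffs. If $i$'s opponents all play $\bp$ and $i$ plays $\bp'$, then
\begin{align*}
  U(\bp', \bp) = \sum_{k \in [K]} \bp'_k\, \bd_k\, u(n-1, \bp_k),
  \qquad
  u(n-1, q) = \sum_{j=0}^{n-1} \binom{n-1}{j} q^j (1-q)^{n-1-j} \frac{1}{s(1+j)},
\end{align*}
so $u(n-1, \cdot)$ is a polynomial on $[0,1]$ and $U$ is a finite sum of products of polynomials in $(\bp', \bp)$, hence jointly continuous on $\Delta(\pi) \times \Delta(\pi)$. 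More generally, each player's payoff, as a function of her own pure strategy together with the tuple of the other players' pure strategies, is a polynomial in all of these variables and is invariant under permutations of the opponents' strategies, which gives both continuity and symmetry of the game.

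The one place that calls for a moment of care --- and the closest thing to an obstacle here --- is the continuity check when $s(m) = \infty$ for some $m \in \N_{> 0}$, as happens for the Scattergories score function $s_\infty(x) = x^\infty$. There, the term $\binom{n-1}{j} q^j (1-q)^{n-1-j} / s(1+j)$ is to be read as $0$ for those $j$ with $s(1+j) = \infty$ (consistent with the convention $1/\infty = 0$ already used in the model), so $u_{s_\infty}(n-1, q) = (1-q)^{n-1}$ is still a polynomial, and continuity is unaffected. Everything else is routine, and the cited theorem then delivers the symmetric mixed Nash equilibrium. (If one prefers not to invoke a symmetric-equilibrium theorem directly, an alternative is to exploit convexity of $\Delta(\pi)$ to pass to pure strategies and apply a Kakutani fixed-point argument to the symmetric best-response correspondence, but this essentially reproves the cited result.)
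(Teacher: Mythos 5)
Your proposal is correct and follows essentially the same route as the paper: verify that $\Delta(\pi)$ is compact and convex, that payoffs are continuous (they are polynomials in the strategy profile), and that the game is symmetric, then invoke \citet[Theorem 1]{becker2006existence} building on \citet{glicksberg1952further}. Your extra remark handling $s(1+j) = \infty$ via the convention $1/\infty = 0$ is a sensible clarification but does not change the argument.
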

\begin{proof}
  $\game$ is a symmetric game with a compact, convex strategy space $\Delta(\pi)$.
  Utilities are continuous in strategies. Classical results by
  \citet{glicksberg1952further}, \citet[][Theorem 1]{becker2006existence} show
  that this implies $\game$ has a symmetric mixed Nash equilibrium.
\end{proof}

\begin{lemma}
  \label{lem:mixed-pure-eq}
  Every mixed strategy in the game $\game(n, \bd, s)$ is equivalent to a pure
  strategy.
\end{lemma}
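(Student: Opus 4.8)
The plan is to exploit two structural facts: the feasible set $\Delta(\pi)$ is convex, and each player's expected payoff is affine in any single player's strategy once the others are held fixed. Given a mixed strategy --- a Borel probability measure $\sigma$ over pure strategies in $\Delta(\pi)$ --- I will show it is payoff-equivalent to the single pure strategy $\bar{\bp} \triangleq \EE{\bp \sim \sigma}{\bp}$, where ``equivalent'' means it induces identical expected utilities for all players against every fixed profile of the opponents.

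First I would record that $\Delta(\pi)$ is convex, being the intersection of the probability simplex with the linear inequalities $\bp_{\pi_1} \ge \bp_{\pi_2} \ge \dots \ge \bp_{\pi_K}$; consequently $\bar{\bp}$, an average of points of $\Delta(\pi)$, again lies in $\Delta(\pi)$ and is a legal pure strategy. Next I would argue the equivalence at the level of outcome distributions: fix any (possibly mixed) profile of the other players. If player $i$ uses $\sigma$, her realized type is obtained by sampling $\bp \sim \sigma$ and then $k_i \sim \bp$ using only her own randomness; by the tower rule $\Pr[k_i = k] = \bar{\bp}_k$, and $k_i$ is independent of $(k_j)_{j \ne i}$. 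Exactly the same joint law of $(k_1, \dots, k_n)$ arises if player $i$ instead plays the pure strategy $\bar{\bp}$. Since every player's payoff $\bd_{k_i}/s(C_{k_i})$ is a function of $(k_1,\dots,k_n)$ alone, replacing $\sigma$ by $\bar{\bp}$ leaves all expected utilities unchanged. (Alternatively, one can observe directly that the displayed $U(\bp', \bp)$ is linear in $\bp'$, and more generally, writing a competitor's binomial count as one Bernoulli indicator plus an independent remainder shows each player's utility is affine in any one player's strategy, so the $\sigma$-expectation equals the value at $\bar{\bp}$.)

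I do not expect a genuine obstacle here; the argument is essentially convexity of $\Delta(\pi)$ plus linearity of expectation. The only points requiring care are pinning down the notion of equivalence used (payoff-equivalence against every opponent profile, which is what the downstream existence-of-pure-equilibrium claim needs) and the routine measure-theoretic bookkeeping needed to make sense of $\EE{\bp\sim\sigma}{\bp}$ for an arbitrary Borel measure on the compact set $\Delta(\pi)$.
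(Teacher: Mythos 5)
Your proposal is correct and follows essentially the same route as the paper: replace the mixed strategy by its mean $\bar{\bp}$, note $\bar{\bp} \in \Delta(\pi)$ by convexity of the ordering constraints, and observe that the induced distribution over the player's realized type (and hence the joint law of outcomes) is unchanged. The paper states this more tersely, implicitly treating the mixture as countably supported, but the substance is identical.
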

\begin{proof}
  Consider a mixed strategy $q$, which is a distribution over distributions
  $\bpn{1}, \bpn{2}, \dots$. Let $\tbp_k(q) = \EE{\bp \sim q}{\bp_k}$. A
  player samples type $k$ with probability $\tbp_k(q)$ under both $q$ and
  $\tbp(q)$, meaning no downstream game can distinguish between them. $\tbp \in
  \Delta(\pi)$ because $\Delta(\pi)$ is convex: for any $k$ and $i$, $\bpn{i}_k
  \ge \bpn{i}_{k+1}$. Therefore,
  \begin{align*}
    \tbp_k(q)
    = \sum_i q_i \bpn{i}_k
    \ge \sum_i q_i \bpn{i}_{k+1}
    = \tbp_{k+1}(q).
  \end{align*}
\end{proof}

\begin{lemma}
  \label{lem:dec-d-asym}
  For all $n \ge 2$, $s \in \sall$ and $\bd$, let $\eqset(n, \bd, s) \subset
  \Delta(\pi)^n$ be the set of equilibria for $\game(n, \bd, s)$. Let $\popt(n,
  \bd, s)$ be the unique symmetric socially optimal strategy for $\game(n, \bd,
  s)$ There exists $\tbd$ such that:
  \begin{itemize}
    \item $\tbd$ is decreasing in $k$
    \item $\eqset(n, \bd, s) \subseteq \eqset(n, \tbd, s)$
    \item For all $\bP \in \eqset(n, \bd, s)$ and for all $i$, $U_i(\bP; n, \bd,
      s) = U_i(\bP; n, \tbd, s)$.
    \item $\popt(n, \bd, s) = \popt(n, \tbd, s) = \bp^*$
    \item For all $i$, $U_i(\bp^*; n, \bd, s) = U_i(\bp^*; n, \tbd, s)$.
    \item $W(\bp^*; n, \bd, s) = W(\bp^*; n, \tbd, s)$.
  \end{itemize}
\end{lemma}
\begin{proof}
  We present an iterative argument that every (weakly) increasing sequence of
  $\bd_k$'s can be replaced by their means. Consider $k_1, k_2$ such that
  $\bd_{k_1} \le \bd_{k_1+1} \le \dots \le \bd_{k_2}$ with at least one of these
  inequalities being strict. By \Cref{lem:inc-d-best-response}, every $\bP \in
  \eqset(n, \bd, s)$ satisfies $\bP(i) = \avg(\bP(i);
  k_1, k_2)$ for all $i$. Similarly, by \Cref{lem:inc-d-social-welfare},
  $\popt(n, \bd, s) = \avg(\popt(n, \bd, s); k_1, k_2)$.

  Let $\Delta(\pi; k_1, k_2)$ be the subset of $\Delta(\pi)$ that additionally
  satisfies $\bp = \avg(\bp; k_1, k_2)$. We can rewrite the above conclusions as
  \begin{align*}
    \forall \bP \in \eqset(n, \bd, s) ~ ~ ~
    &\bP \in \Delta(\pi; k_1, k_2)^n
    \numberthis \label{eq:bP-in-Delta-k1k2}
    \\
    &\popt(n, \bd, s) \in \Delta(\pi; k_1, k_2)
    \numberthis \label{eq:popt-in-Delta-k1k2}
  \end{align*}

  Consider the game $\tilde{\game}$ in which we restrict the strategy space to
  $\Delta(\pi; k_1, k_2)$, with corresponding equilibria $\tilde \eqset$ and
  symmetric optimal strategy $\tpopt$.
  Because all best responses for any $\bP \in \Delta(\pi)$ lie in $\Delta(\pi;
  k_1, k_2)$ by~\eqref{eq:bP-in-Delta-k1k2},
  $\eqset(n, \bd, s) = \tilde \eqset(n, \bd, s)$.
  By~\eqref{eq:popt-in-Delta-k1k2},
  $\popt(n, \bd, s) = \tpopt(n, \bd, s)$.

  Let $\tbd = \avg(\bd; k_1, k_2)$.
  For all $\bP \in \Delta(\pi; k_1, k_2)^n$,
  \ifthenelse{\boolean{smallEqs}}{
  \begin{align*}
    U_i&(\bP; n, \bd, s) - U_i(\bP; n, \tbd, s) \\
    &= \sum_{k=1}^K \bP(i)_k \bd_k \E{\frac{1}{s(1 + X_k(\bP(-i)))}} \\
    &-\sum_{k=1}^K \bP(i)_k \tbd_k \E{\frac{1}{s(1 + X_k(\bP(-i)))}} \\
    &= \sum_{k=k_1}^{k_2} \bP(i)_k (\bd_k-\tbd_k) \E{\frac{1}{s(1 + X_k(\bP(-i)))}} \\
    &= \bP(i)_{k_1} \E{\frac{1}{s(1 + X_{k_1}(\bP(-i)))}} \sum_{k=k_1}^{k_2}
    (\bd_k-\tbd_k) \tag{$\bP \in \Delta(\pi; k_1, k_2)^n$} \\
    &= 0.
  \end{align*}
}{\begin{align*}
    U_i(\bP; n, \bd, s) - U_i(\bP; n, \tbd, s)
    &= \sum_{k=1}^K \bP(i)_k \bd_k \E{\frac{1}{s(1 + X_k(\bP(-i)))}} \\
    &-\sum_{k=1}^K \bP(i)_k \tbd_k \E{\frac{1}{s(1 + X_k(\bP(-i)))}} \\
    &= \sum_{k=k_1}^{k_2} \bP(i)_k (\bd_k-\tbd_k) \E{\frac{1}{s(1 + X_k(\bP(-i)))}} \\
    &= \bP(i)_{k_1} \E{\frac{1}{s(1 + X_{k_1}(\bP(-i)))}} \sum_{k=k_1}^{k_2}
    (\bd_k-\tbd_k) \tag{$\bP \in \Delta(\pi; k_1, k_2)^n$} \\
    &= 0.
  \end{align*}
}
  Therefore, for all $\bP \in \Delta(\pi; k_1, k_2)$,
  \begin{equation}
    \label{eq:bd-tbd-equal-U}
    U_i(\bP; n, \bd, s) = U_i(\bP; n, \tbd, s).
  \end{equation}
  In other words, if we restrict players to strategies in $\Delta(\pi; k_1,
  k_2)$, games with $\bd$ and $\tbd$ have identical utilities, and thus,
  identical equilibria. As a result, $\tilde \eqset(n, \bd, s) = \tilde
  \eqset(n, \tbd, s)$.

  Next, we show the same is true for social welfare.
  For all $s \in \sup$, summing~\eqref{eq:bd-tbd-equal-U} across all $i$ yields
  that for any $\bP \in \Delta(\pi; k_1, k_2)$,
  \begin{align*}
    W(\bP; n, \bd, s) = W(\bP; n, \tbd, s).
  \end{align*}
  For $s \in \sdown$,
  \ifthenelse{\boolean{smallEqs}}{
  \begin{align*}
    W&(\bP; n, \bd, s) - W(\bP; n, \tbd, s) \\
    &= \sum_{k=1}^K \bd_k \Pr[X_k(\bP) > 0]
    - \sum_{k=1}^K \tbd_k \Pr[X_k(\bP) > 0] \\
    &= \sum_{k=k_1}^{k_2} (\bd_k - \tbd_k) \Pr[X_k(\bP) > 0] \\
    &= \Pr[X_{k_1}(\bP) > 0] \sum_{k=k_1}^{k_2} (\bd_k - \tbd_k) \\
    &=0.
  \end{align*}
}{\begin{align*}
    W(\bP; n, \bd, s) - W(\bP; n, \tbd, s)
    &= \sum_{k=1}^K \bd_k \Pr[X_k(\bP) > 0]
    - \sum_{k=1}^K \tbd_k \Pr[X_k(\bP) > 0] \\
    &= \sum_{k=k_1}^{k_2} (\bd_k - \tbd_k) \Pr[X_k(\bP) > 0] \\
    &= \Pr[X_{k_1}(\bP) > 0] \sum_{k=k_1}^{k_2} (\bd_k - \tbd_k) \\
    &=0.
  \end{align*}
}
  This implies $\tpopt(n, \bd, s) = \tpopt(n, \tbd, s)$.

  Next, we will show that $\tilde \eqset(n, \tbd, s) \subseteq \eqset(n, \tbd,
  s)$. We must show that if $\bP \in \tilde \eqset(n, \tbd, s)$, then $\bP \in
  \eqset(n, \tbd, s)$. Consider some $\bP \in \tilde \eqset(n, \tbd, s)$, and
  assume towards contradiction that $\bP \notin \eqset(n, \tbd, s)$. Then, some
  player $i$ must have some strict best response in $\game(n, \tbd, s)$, and by
  \Cref{lem:inc-d-best-response}, at least one of those best responses must be
  in $\Delta(\pi; k_1, k_2)$. This contradicts the fact that $\bP \in \tilde
  \eqset(n, \tbd, s)$.

  A similar argument holds for $\popt$, where we can use the fact that $\popt(n,
  \bd, s)$ is unique (\Cref{lem:unique-opt-decreasing}). Assume towards
  contradiction that $\tpopt(n, \tbd, s) \ne \popt(n, \tbd, s)$. Then, by
  \Cref{lem:inc-d-social-welfare}, there is some $\bp' \in \Delta(\pi; k_1,
  k_2)$ such that
  \ifthenelse{\boolean{smallEqs}}{
  \begin{align*}
    W(\bp'; n, \tbd, s)
    &> W(\popt(n, \tbd, s); n, \tbd, s) \\
    &= W(\tpopt(n, \tbd, s); n, \tbd, s).
  \end{align*}
}{\begin{align*}
    W(\bp'; n, \tbd, s)
    > W(\popt(n, \tbd, s); n, \tbd, s)
    = W(\tpopt(n, \tbd, s); n, \tbd, s).
  \end{align*}
}
  This contradicts the fact that $\tpopt(n, \tbd, s)$ is optimal for $\tilde
  \game(n, \tbd, s)$.

  Thus, we have shown that:
  \begin{itemize}
    \item $\eqset(n, \bd, s) \subseteq \eqset(n, \tbd, s)$ and
    $\popt(n, \bd, s) = \popt(n, \tbd, s) = \bp^*$
    \item For any $\bP \in \eqset(n, \bd, s)$, for all $i$,
      $U_i(\bP; n, \bd, s) = U_i(\bP; n, \tbd, s)$.
    \item For all $i$, $U_i(\bp^*; n, \bd, s) = U_i(\bp^*; n, \tbd, s)$.
    \item $W(\bp^*; n, \bd, s) = W(\bp^*; n, \tbd, s)$.
  \end{itemize}
  To complete argument, we must show that this holds for decreasing $\tbd$. We
  simply iteratively apply this argument until there are no strict increases in
  $\bd$. This amounts to running the Pool Adjacent Violators
  Algorithm~\citep{barlow1972statistical}, which is known to terminate.
\end{proof}

\begin{lemma}
  \label{lem:inc-d-best-response}
  For all $n \ge 2$, $s \in \sall$, suppose $\bd$ contains a sequence $\bd_{k_1}
  \le \bd_{k_1 + 1} \le \dots \le \bd_{k_2}$. Then, for any $\bP \in
  \Delta(\pi)^n$, every player has a best response to $\bP(-i)$ such that
  $\bP(i)_{k_1} = \bP(i)_{k_1 + 1} = \dots = \bP(i)_{k_2}$. If at least one
  inequality is strict (i.e., $\bd_{k} < \bd_{k'}$ for $k_1 \le k < k' \le
  k_2$), then every best response to $\bP(-i)$ has $\bP(i)_{k_1} = \bP(i)_{k_1 +
  1} = \dots = \bP(i)_{k_2}$.
\end{lemma}
\begin{proof}
  Player $i$'s best response is the solution to
  \begin{align*}
    \argmax_{\bp \in \Delta(\pi)} ~ \sum_{k=1}^K \bp_k V_k(\bP(-i))
  \end{align*}
  where
  \begin{align*}
    V_k(\bP(-i)) = \bd_k \E{\frac{1}{s(1 + X_k(\bP(-i)))}}.
  \end{align*}
  Let $\tbp$ be such a best response. Let $c = \sum_{k_1 \le k \le k_2}
  \tbp_k$, and let $p = \tbp_{k_2+1}$. Then, $\tbp$ satisfies
  \begin{align*}
    \tbp_{k_1}, \dots \tbp_{k_2} \in \argmax_{\bp_{k_1}, \dots, \bp_{k_2}}
    ~&\sum_{k=k_1}^{k_2} \bp_k V_k(\bP(-i)) \\
    \text{s.t.} ~
     &\bp_{k_1} \ge \bp_{k_1 + 1} \ge \dots \ge \bp_{k_2} \\
     &\sum_{k=k_1}^{k_2} \bp_k = c \\
     &\bp_{k_2} \ge p
  \end{align*}

  Because $\bP(-i) \in \Delta(\pi)^{n-1}$, each $X_k(\bP(-i))$ stochastically
  dominates $X_{k+1}(\bP(-i))$. Therefore, for any $k' > k$ between $k_1$ and
  $k_2$, $V_{k'}(\bP(-i)) \ge V_k(\bP(-i))$. If one of the inequalities is
  strict, meaning $\bd_{k} < \bd_{k'}$ for some $k
  < k'$, then
  $V_{k'}(\bP(-i)) > V_k(\bP(-i))$.

  If we ignore the constraint that $\bp_{k_2} \ge p$, we can apply
  \Cref{lem:inc-product} to show that all of the $\bp_k$'s are equal, with
  $b_\ell = \tbp_\ell$, $a_\ell = V_\ell(\bP(-i))$, $f(x) = x$, and $\ell$
  ranging from $k_1$ to $k_2$. This solution maximizes $\bp_{k_2}$, so it cannot
  violate the $\bp_{k_2} \ge p$ constraint. As a result, player $i$ has a best
  response $\tbp$ where $\tbp_{k_1} = \tbp_{k_1+1} = \dots = \tbp_{k_2}$, and if
  any of inequalities on $\bd$ is strict, then every best response has this
  property.
\end{proof}

\begin{lemma}
  \label{lem:inc-d-social-welfare}
  For all $n \ge 2$, $s \in \sall$, suppose $\bd$ contains a sequence $\bd_{k_1}
  \le \bd_{k_1 + 1} \le \dots \le \bd_{k_2}$. Then, for any $\bp \in
  \Delta(\pi)$, there exists a strategy matrix $\bp^* \in \Delta(\pi)$ such
  that
  \begin{equation}
    \label{eq:welfare-inc-d}
    W(\bp^*; n, \bd, s) \ge W(\bp; n, \bd, s)
  \end{equation}
  and
  $\bp^*(i) = \avg(\bp^*(i); k_1, k_2)$ for all $i$.
  If at least one
  inequality is strict (i.e., $\bd_{k} < \bd_{k'}$ for $k_1 \le k < k' \le
  k_2$), then~\eqref{eq:welfare-inc-d} is strict.
\end{lemma}
\begin{proof}
  We use a similar approach as the proof to \Cref{lem:inc-d-best-response}. Let
  $c = \sum_{k_1 \le k \le k_2} \bp_k^*$ and $p = \bp_{k_2+1}^*$. For
  $s \in \sup$, a symmetric socially optimal strategy $\bp^*$ satisfies
  \begin{align*}
    \bp^*_{k_1}, \dots, \bp^*_{k_2} \in \argmax_{\bp_{k_1}, \dots, \bp_{k_2}} ~
    &\sum_{k=k_1}^{k_2} \bp_k^* \bd_k u(n-1, \bp_k^*) \\
    \text{s.t.} ~
    &\bp_{k_1}^* \ge \bp_{k_1 + 1}^* \ge \dots \ge \bp_{k_2}^* \\
    &\sum_{k=k_1}^{k_2} \bp_k^* = c \\
    &\bp_{k_2}^* \ge p
  \end{align*}
  Because $\bp_k^*$ is decreasing in $k$, $u(n-1, \bp_k^*)$ is increasing in
  $k$. Therefore, $\bd_k u(n-1, \bp_k^*)$ is increasing in $k$, and if $\bd_k <
  \bd_{k'}$ for some $k_1 \le k < k' \le k_2$, then it must strictly increase
  somewhere. We apply \Cref{lem:inc-product} to show that there exists a
  solution in which all $\bp_k^*$'s are equal, with $b_\ell = \bp_\ell^*$,
  $a_\ell = \bd_k u(n-1, \bp_\ell^*)$, $f(x) = x$, and $\ell$ ranging from $k_1$
  to $k_2$. As in the proof of \Cref{lem:inc-d-best-response}, we can safely
  ignore the constraint $\bp_{k_2}^* \ge p$ because the all-equal solution
  maximizes $\bp_{k_2}^*$. And if there is a strict increase somewhere, then
  every optimal solution has equal $\bp_k^*$'s between $k_1$ and $k_2$.

  To extend this argument to $s \in \sdown$, note that because the identity
  score $s_1 \in \sup$, the lemma holds for $s_1$. Welfare under any other $s
  \in \sdown$ is equivalent to welfare under $s_1$, so the lemma extends to any
  $s \in \sdown$.
\end{proof}

\subsection{Omitted proofs for \Cref{sec:diversity}}
\label{app:diversity}

\divn*
\begin{proof}

  Assume without loss of generality that $\bd$ is decreasing.
  Let $\bpn{n} = \peq(n, \bd, s)$ and $\bpn{n+1} = \peq(n+1, \bd, s)$. Our plan
  will be to show that there is some $k^*$ such that $\bpn{n}_k > \bpn{n+1}_k$
  if and only if $k \le k^*$, and then apply \Cref{lem:inc-majorization} to
  conclude that $\bpn{n} \succ \bpn{n+1}$.

  Our goal will be to show that for $k < k'$, if $\bpn{n}_{k'} > 0$, then
  \begin{equation}
    \label{eq:small-k-big-change}
    \bd_k u(n, \bpn{n}_k) \le \bd_{k'} u(n, \bpn{n}_{k'}).
  \end{equation}
  (Note that $\bpn{n}_{k'} > 0$ implies $\bpn{n}_k > 0$ because $\bpn{n} \in
  \Delta(\pi)$.) Let $c^{(n)} = U(\bpn{n}; n, \bd, s)$. By
  \Cref{lem:unique-eq-decreasing}, for all $k$ such that $\bp_k^{(n)} > 0$,
  $\bd_k u(n, \bp_k^{(n)}) = c^{(n)}$. Therefore,
  \begin{equation}
    \label{eq:small-k-next-step}
    \bd_k u(n-1, \bp_k^{(n)}) = c^{(n)} = \bd_{k'} u(n-1, \bpn{n}_{k'}).
  \end{equation}
  Dividing each side of~\eqref{eq:small-k-big-change} by each respective side
  of~\eqref{eq:small-k-next-step}, our goal is to show that
  \begin{align*}
    \frac{\bd_k u(n, \bp_k^{(n)})}{\bd_k u(n-1, \bp_k^{(n)})}
    &\le \frac{\bd_{k'}
    u(n, \bp_{k'}^{(n)})}{\bd_{k'} u(n-1, \bp_{k'}^{(n)})} \\
    \frac{u(n, \bp_k^{(n)})}{u(n-1, \bp_k^{(n)})}
    &\le \frac{u(n, \bp_{k'}^{(n)})}{u(n-1, \bp_{k'}^{(n)})}.
  \end{align*}
  Because $\bp_{k'}^{(n)} \le \bp_k^{(n)}$, it suffices to show that
  \begin{align*}
    \frac{\partial}{\partial p} \frac{u(n, p)}{u(n-1, p)} \le 0.
  \end{align*}
  This holds by~\Cref{lem:dp-dn-ratio}.
  We can now use
  \Cref{lem:p-majorize} with $\bp = \bpn{n+1}$, $\bp' = \bpn{n}$, and $Q_k =
  \bd_k u(n, \bpn{n})$. By~\eqref{eq:small-k-big-change}, $\{Q_k\}$ is
  increasing as long as $\bpn{n}_k > 0$, so $\bpn{n} \succ \bpn{n+1}$.
\end{proof}

\divgamma*
\begin{proof}

  Assume without loss of generality that $\bd$ is decreasing.
  Let $\bpg = \peq(n, \bd, s_{\gz})$ and $\bpgp = \peq(n, \bd, s_{\go})$.
  We will use a similar approach to the proof of \Cref{thm:diversity-n} to show
  that $Q_k = \{\bd_k u_{s_{\go}}(n-1, \bpg_k)\}$ is increasing as long as
  $\bpg_k > 0$. We will then apply \Cref{lem:p-majorize} to conclude that $\bpg
  \succ \bpgp$.
  Abusing notation slightly, define
  \begin{align*}
    u_\gamma(n, p)
    \triangleq \E{\frac{1}{s_\gamma(1 + X(n, p))}}
    = \E{\frac{1}{(1+X(n, p))^\gamma}}.
  \end{align*}
  As we did with \Cref{thm:diversity-n}, we begin by
  showing that for $k < k'$, if $\bpg_{k'} > 0$, then
  \begin{equation}
    \label{eq:gamma-main-eq}
    \bd_k u_{\go}(n-1, \bpg_k) \le \bd_{k'} u_{\go}(n-1, \bpg_{k'}).
  \end{equation}
  Observe that
  \begin{equation}
    \label{eq:gamma-int}
    u_{\go}(n-1, p)
    = u_{\gz}(n-1, p) + \int_{\gz}^{\go} \frac{\partial}{\partial
    \gamma} u_\gamma(n-1, p) \dgam.
  \end{equation}
  Further, by \Cref{lem:unique-eq-decreasing}, if $k < k'$ and $\bpg_k \ge
  \bpg_{k'} > 0$,
  then
  \ifthenelse{\boolean{smallEqs}}{
  \begin{align*}
    \bd_k u_{\gz}(n-1, \bpg_k)
    &= U(\bpg; n, \bd, s_{\gz}) \\
    &= \bd_{k'} u_{\gz}(n-1, \bpg_k).
    \numberthis \label{eq:gamma-denom}
  \end{align*}
}{\begin{equation}
    \label{eq:gamma-denom}
    \bd_k u_{\gz}(n-1, \bpg_k) = U(\bpg; n, \bd, s_{\gz}) = \bd_{k'}
    u_{\gz}(n-1, \bpg_k).
  \end{equation}
}
  Substituting~\eqref{eq:gamma-int} and dividing both sides by the left and
  right of~\eqref{eq:gamma-denom} respectively, \eqref{eq:gamma-main-eq} is true
  if and only if
  \begin{align*}
    &\frac{\bd_k \p{u_{\gz}(n-1, \bpg_k) + \int_{\gz}^{\go} \frac{\partial}{\partial
    \gamma} u_\gamma(n-1, \bpg_k) \dgam}}{\bd_k u_{\gz}(n-1, \bpg_k)} \\
    \le
    &\frac{\bd_{k'} \p{u_{\gz}(n-1, \bpg_{k'}) + \int_{\gz}^{\go}
      \frac{\partial}{\partial \gamma} u_\gamma(n-1, \bpg_{k'}) \dgam}}{\bd_{k'}
    u_{\gz}(n-1, \bpg_{k'})}.
  \end{align*}
  Simplifying, this is
  \ifthenelse{\boolean{smallEqs}}{
  \begin{align*}
    \frac{\int_{\gz}^{\go} \frac{\partial}{\partial
    \gamma} u_\gamma(n-1, \bpg_k) \dgam}{u_{\gz}(n-1, \bpg_k)} \\
    -
    \frac{\int_{\gz}^{\go} \frac{\partial}{\partial \gamma} u_\gamma(n-1,
    \bpg_{k'}) \dgam}{u_{\gz}(n-1, \bpg_{k'})}
    &\le 0 \\
    \int_{\gz}^{\go} \frac{\partial}{\partial \gamma}
    \p{\frac{u_\gamma(n-1, \bpg_k)}{u_{\gz}(n-1, \bpg_k)} - \frac{u_\gamma(n-1,
    \bpg_{k'})}{u_{\gz}(n-1, \bpg_{k'})}}
    \dgam
    &\le 0.
  \end{align*}
}{\begin{align*}
    \frac{\int_{\gz}^{\go} \frac{\partial}{\partial
    \gamma} u_\gamma(n-1, \bpg_k) \dgam}{u_{\gz}(n-1, \bpg_k)}
    -
    \frac{\int_{\gz}^{\go} \frac{\partial}{\partial \gamma} u_\gamma(n-1,
    \bpg_{k'}) \dgam}{u_{\gz}(n-1, \bpg_{k'})}
    &\le 0 \\
    \int_{\gz}^{\go} \frac{\partial}{\partial \gamma}
    \p{\frac{u_\gamma(n-1, \bpg_k)}{u_{\gz}(n-1, \bpg_k)} - \frac{u_\gamma(n-1,
    \bpg_{k'})}{u_{\gz}(n-1, \bpg_{k'})}}
    \dgam
    &\le 0.
  \end{align*}
}
  It suffices to show that the term inside the integral is nonpositive for all
  $\gamma \in [\gz, \go]$. For now, assume that $\go \le \gz + \frac{1}{\log
  n}$. We will show that under this assumption,~\eqref{eq:gamma-main-eq} holds,
  and we will use \Cref{lem:p-majorize} to conclude that $\bpg \succ \bpgp$ for
  all such $\go$. Iteratively applying this result will allow us to remove the
  assumption. Our goal is to prove that
  \ifthenelse{\boolean{smallEqs}}{
  \begin{align*}
    \frac{\partial}{\partial \gamma}
    \p{\frac{u_\gamma(n-1, \bpg_k)}{u_{\gz}(n-1, \bpg_k)} - \frac{u_\gamma(n-1,
    \bpg_{k'})}{u_{\gz}(n-1, \bpg_{k'})}}
    &\le 0 \\
    \frac{\frac{\partial}{\partial
    \gamma} u_\gamma(n-1, \bpg_k)}{u_{\gz}(n-1, \bpg_k)}
    -
    \frac{\frac{\partial}{\partial \gamma} u_\gamma(n-1,
    \bpg_{k'})}{u_{\gz}(n-1, \bpg_{k'})}
    &\le 0.
  \end{align*}
}{\begin{align*}
    \frac{\partial}{\partial \gamma}
    \p{\frac{u_\gamma(n-1, \bpg_k)}{u_{\gz}(n-1, \bpg_k)} - \frac{u_\gamma(n-1,
    \bpg_{k'})}{u_{\gz}(n-1, \bpg_{k'})}}
    &\le 0 \\
    \frac{\frac{\partial}{\partial
    \gamma} u_\gamma(n-1, \bpg_k)}{u_{\gz}(n-1, \bpg_k)}
    &\le
    \frac{\frac{\partial}{\partial \gamma} u_\gamma(n-1,
    \bpg_{k'})}{u_{\gz}(n-1, \bpg_{k'})}.
  \end{align*}
}
  Because $\bpg_k \ge \bpg_{k'}$, it suffices to show that this ratio is
  decreasing in $p$:
  \begin{equation}
    \label{eq:p-gamma-partials}
    \frac{\partial}{\partial p} 
    \frac{\frac{\partial}{\partial
    \gamma} u_\gamma(n-1, p)}{u_{\gz}(n-1, p)} \le 0.
  \end{equation}
  First, we take the partial with respect to $\gamma$:
  \begin{align*}
    \frac{\partial}{\partial \gamma} u_\gamma(n-1, p)
    &= \frac{\partial}{\partial \gamma} \E{(1 + X(n-1, p))^{-\gamma}} \\
    &= \E{\frac{\partial}{\partial \gamma} (1 + X(n-1, p))^{-\gamma}} \\
    &= \E{-\frac{\log(1 + X(n-1, p))}{(1 + X(n-1, p))^{-\gamma}}}.
  \end{align*}
  Define
  \begin{align*}
    z_\gamma(n, p)
    \triangleq \E{\frac{\log(1 + X(n, p))}{(1 + X(n, p))^{\gamma}}}
    = - \frac{\partial}{\partial \gamma} u_\gamma(n, p).
  \end{align*}
  Plugging this into~\eqref{eq:p-gamma-partials}, we want to show that
  \begin{equation}
    \label{eq:p-partial-only}
    \frac{\partial}{\partial p} \frac{z_\gamma(n-1, p)}{u_{\gz}(n-1, p)} \ge 0.
  \end{equation}
  This holds if and only if 
  \begin{align*}
    \frac{\partial}{\partial p} \log\p{\frac{z_\gamma(n-1, p)}{u_{\gz}(n-1, p)}}
    &\ge 0 \\
    \frac{\partial}{\partial p} \log\p{z_\gamma(n-1, p)}
    &\ge \frac{\partial}{\partial p} \log\p{u_{\gz}(n-1, p)} \\
    \frac{\frac{\partial}{\partial p} z_\gamma(n-1, p)}{z_\gamma(n-1, p)}
    &\ge \frac{\frac{\partial}{\partial p} u_{\gz}(n-1, p)}{u_{\gz}(n-1, p)} \\
    \frac{\frac{n-1}{p} \dz_\gamma(n-2, p)}{z_\gamma(n-1, p)}
    &\ge \frac{\frac{n-1}{p} \du_{\gz}(n-2, p)}{u_{\gz}(n-1, p)}
    \tag{by \Cref{lem:sah-diff}} \\
    \frac{z_\gamma(n-1, p) - z_\gamma(n-2, p)}{z_\gamma(n-1, p)}
    &\ge \frac{u_{\gz}(n-1, p) - u_{\gz}(n-2, p)}{u_{\gz}(n-1, p)} \\
    \frac{z_\gamma(n-2, p)}{z_\gamma(n-1, p)}
    &\le \frac{u_{\gz}(n-2, p)}{u_{\gz}(n-1, p)}
  \end{align*}
  Here, we need one more pair of definitions:
  \def\uplus{u^{+}}
  \def\zplus{z^{+}}
  \begin{align*}
    \uplus_{\gamma}(n, p)
    &\triangleq \E{(2 + X(n, p))^{-\gamma}} \\
    \zplus_{\gamma}(n, p)
    &\triangleq \E{\frac{\log(2 + X(n, p))}{(2 + X(n, p))^{\gamma}}}.
  \end{align*}
  With these, we must show that
  \ifthenelse{\boolean{smallEqs}}{
  \begin{align*}
    \frac{z_\gamma(n-2, p)}{z_\gamma(n-1, p)}
    &\le \frac{u_{\gz}(n-2, p)}{u_{\gz}(n-1, p)} \\
    \frac{z_\gamma(n-1, p)}{z_\gamma(n-2, p)}
    &\ge \frac{u_{\gz}(n-1, p)}{u_{\gz}(n-2, p)}
  \end{align*}
  \begin{align*}
    &\frac{(1-p)z_\gamma(n-2, p) + p \zplus_\gamma(n-2, p)}{z_\gamma(n-2, p)} \\
    &\ge \frac{(1-p) u_{\gz}(n-2, p) + p \uplus_{\gz}(n-2, p)}{u_{\gz}(n-2, p)}
  \end{align*}
  \begin{align*}
    1 - p + \frac{p \zplus_\gamma(n-2, p)}{z_\gamma(n-2, p)}
    &\ge 1 - p + \frac{p \uplus_{\gz}(n-2, p)}{u_{\gz}(n-2, p)} \\
    \frac{\zplus_\gamma(n-2, p)}{z_\gamma(n-2, p)}
    &\ge \frac{\uplus_{\gz}(n-2, p)}{u_{\gz}(n-2, p)} \\
    \zplus_\gamma(n-2, p) u_{\gz}(n-2, p)
    &\ge \uplus_{\gz}(n-2, p) z_\gamma(n-2, p).
  \end{align*}
}{\begin{align*}
    \frac{z_\gamma(n-2, p)}{z_\gamma(n-1, p)}
    &\le \frac{u_{\gz}(n-2, p)}{u_{\gz}(n-1, p)} \\
    \frac{z_\gamma(n-1, p)}{z_\gamma(n-2, p)}
    &\ge \frac{u_{\gz}(n-1, p)}{u_{\gz}(n-2, p)} \\
    \frac{(1-p)z_\gamma(n-2, p) + p \zplus_\gamma(n-2, p)}{z_\gamma(n-2, p)}
    &\ge \frac{(1-p) u_{\gz}(n-2, p) + p \uplus_{\gz}(n-2, p)}{u_{\gz}(n-2, p)}
    \\
    1 - p + \frac{p \zplus_\gamma(n-2, p)}{z_\gamma(n-2, p)}
    &\ge 1 - p + \frac{p \uplus_{\gz}(n-2, p)}{u_{\gz}(n-2, p)} \\
    \frac{\zplus_\gamma(n-2, p)}{z_\gamma(n-2, p)}
    &\ge \frac{\uplus_{\gz}(n-2, p)}{u_{\gz}(n-2, p)} \\
    \zplus_\gamma(n-2, p) u_{\gz}(n-2, p)
    &\ge \uplus_{\gz}(n-2, p) z_\gamma(n-2, p).
  \end{align*}
}
  This holds if and only if
  \ifthenelse{\boolean{smallEqs}}{
  \begin{align*}
    &\p{\sum_{\ell=0}^{n-2} b(\ell, n-2, p) \frac{\log(2+\ell)}{(2+\ell)^\gamma}}
    \p{\sum_{\ell=0}^{n-2}  \frac{b(\ell, n-2, p)}{(1+\ell)^{\gz}}} \\
    &-
    \p{\sum_{\ell=0}^{n-2} b(\ell, n-2, p) \frac{\log(1+\ell)}{(1+\ell)^\gamma}}
    \p{\sum_{\ell=0}^{n-2}  \frac{b(\ell, n-2, p)}{(2+\ell)^{\gz}}} \\
    &\ge 0.
  \end{align*}
}{\begin{align*}
    \p{\sum_{\ell=0}^{n-2} b(\ell, n-2, p) \frac{\log(2+\ell)}{(2+\ell)^\gamma}}
    \p{\sum_{\ell=0}^{n-2} b(\ell, n-2, p) \frac{1}{(1+\ell)^{\gz}}} \\
    -
    \p{\sum_{\ell=0}^{n-2} b(\ell, n-2, p) \frac{\log(1+\ell)}{(1+\ell)^\gamma}}
    \p{\sum_{\ell=0}^{n-2} b(\ell, n-2, p) \frac{1}{(2+\ell)^{\gz}}}
    \ge 0.
  \end{align*}
}
  Expanding terms, this is
  \ifthenelse{\boolean{smallEqs}}{
  \begin{align*}
    &\sum_{\ell_1 = 0}^{n-3} \sum_{\ell_2 = \ell_1 + 1}^{n-2} b(\ell_1, n-2, p)
    b(\ell_2, n-2, p) \cdot \\
    &\left[\frac{\log(2+\ell_1)}{(2+\ell_1)^\gamma} \cdot \frac{1}{(1 +
      \ell_2)^{\gz}} + \frac{\log(2+\ell_2)}{(2+\ell_2)^\gamma} \cdot
    \frac{1}{(1 + \ell_1)^{\gz}}\right. \\
    &\left. - \frac{\log(1+\ell_1)}{(1+\ell_1)^\gamma} \cdot \frac{1}{(2 +
      \ell_2)^{\gz}} - \frac{\log(1+\ell_2)}{(1+\ell_2)^\gamma} \cdot
    \frac{1}{(2 + \ell_1)^{\gz}}\right] \numberthis \label{eq:bracket-1} \\
    &+ \sum_{\ell=0}^{n-2} b(\ell, n-2, p)^2 \cdot \\
    &\left[\frac{\log(2+\ell)}{(2+\ell)^\gamma} \cdot \frac{1}{(1 + \ell)^{\gz}}
      + \frac{\log(2+\ell)}{(2+\ell)^\gamma} \cdot \frac{1}{(1 +
    \ell)^{\gz}}\right. \\
    &\left. - \frac{\log(1+\ell)}{(1+\ell)^\gamma} \cdot \frac{1}{(2 +
      \ell)^{\gz}} - \frac{\log(1+\ell)}{(1+\ell)^\gamma} \cdot \frac{1}{(2 +
    \ell)^{\gz}}\right]
    \ge 0. \numberthis \label{eq:bracket-2}
  \end{align*}
}{\begin{align*}
    \sum_{\ell_1 = 0}^{n-3} \sum_{\ell_2 = \ell_1 + 1}^{n-2} b(\ell_1, n-2, p)
    b(\ell_2, n-2, p)
    &\left[\frac{\log(2+\ell_1)}{(2+\ell_1)^\gamma} \cdot \frac{1}{(1 +
      \ell_2)^{\gz}} + \frac{\log(2+\ell_2)}{(2+\ell_2)^\gamma} \cdot
    \frac{1}{(1 + \ell_1)^{\gz}}\right. \\
    &\left. - \frac{\log(1+\ell_1)}{(1+\ell_1)^\gamma} \cdot \frac{1}{(2 +
      \ell_2)^{\gz}} - \frac{\log(1+\ell_2)}{(1+\ell_2)^\gamma} \cdot
    \frac{1}{(2 + \ell_1)^{\gz}}\right] \numberthis \label{eq:bracket-1} \\
    + \sum_{\ell=0}^{n-2} b(\ell, n-2, p)^2
    &\left[\frac{\log(2+\ell)}{(2+\ell)^\gamma} \cdot \frac{1}{(1 + \ell)^{\gz}}
      + \frac{\log(2+\ell)}{(2+\ell)^\gamma} \cdot \frac{1}{(1 +
    \ell)^{\gz}}\right. \\
    &\left. - \frac{\log(1+\ell)}{(1+\ell)^\gamma} \cdot \frac{1}{(2 +
      \ell)^{\gz}} - \frac{\log(1+\ell)}{(1+\ell)^\gamma} \cdot \frac{1}{(2 +
    \ell)^{\gz}}\right]
    \ge 0. \numberthis \label{eq:bracket-2}
  \end{align*}
}
  We will show that the terms in brackets in~\eqref{eq:bracket-1}
  and~\eqref{eq:bracket-2} are both nonnegative. We begin
  with~\eqref{eq:bracket-1}.
  \begin{align*}
    &\frac{\log(2+\ell_1)}{(2+\ell_1)^\gamma} \cdot \frac{1}{(1 + \ell_2)^{\gz}}
    + \frac{\log(2+\ell_2)}{(2+\ell_2)^\gamma} \cdot \frac{1}{(1 +
    \ell_1)^{\gz}} \\
    &- \frac{\log(1+\ell_1)}{(1+\ell_1)^\gamma} \cdot \frac{1}{(2
    + \ell_2)^{\gz}} - \frac{\log(1+\ell_2)}{(1+\ell_2)^\gamma} \cdot
    \frac{1}{(2 + \ell_1)^{\gz}} \\
    =
    & \frac{1}{(2+\ell_1)^{\gz} (1 + \ell_2)^{\gz}} \p{\frac{\log(2 +
      \ell_1)}{(2+\ell_1)^{\gamma - \gz}} -
    \frac{\log(1+\ell_2)}{(1+\ell_2)^{\gamma-\gz}}} \\
    &+ \frac{1}{(1+\ell_1)^{\gz} (2 + \ell_2)^{\gz}}
    \p{\frac{\log(2+\ell_2)}{(2 + \ell_2)^{\gamma - \gz}} -
    \frac{\log(1+\ell_1)}{(1+\ell_1)^{\gamma - \gz}}}
  \end{align*}
  Let $t(\ell) \triangleq \frac{\log(\ell)}{\ell^{\gamma-\gz}}$. Our goal is to
  show that the above is nonnegative, or
  \ifthenelse{\boolean{smallEqs}}{
  \begin{align*}
    &\frac{1}{(1+\ell_1)^{\gz} (2 + \ell_2)^{\gz}} \p{t(2+\ell_2) - t(1 +
    \ell_1)} \\
    &\ge \frac{1}{(2+\ell_1)^{\gz} (1 + \ell_2)^{\gz}} \p{t(1 + \ell_2) - t(2 +
    \ell_1)} \\
    &t(2 + \ell_2) - t(1 + \ell_1) \\
    &\ge \frac{((1+\ell_1)(2 + \ell_2))^{\gz}}{((2+\ell_1) (1 + \ell_2))^{\gz}}
    \p{t(1 + \ell_2) - t(2 + \ell_1)}
  \end{align*}
}{\begin{align*}
    \frac{1}{(1+\ell_1)^{\gz} (2 + \ell_2)^{\gz}} \p{t(2+\ell_2) - t(1 +
    \ell_1)}
    &\ge \frac{1}{(2+\ell_1)^{\gz} (1 + \ell_2)^{\gz}} \p{t(1 + \ell_2) - t(2 +
    \ell_1)} \\
    t(2 + \ell_2) - t(1 + \ell_1)
    &\ge \frac{((1+\ell_1)(2 + \ell_2))^{\gz}}{((2+\ell_1) (1 + \ell_2))^{\gz}}
    \p{t(1 + \ell_2) - t(2 + \ell_1)}
  \end{align*}
}
  Because $\ell_1 < \ell_2$,
  \begin{align*}
    \frac{((1+\ell_1)(2 + \ell_2))^{\gz}}{((2+\ell_1) (1 + \ell_2))^{\gz}}
    &= \frac{(2 + 2 \ell_1 + \ell_2 + \ell_1 \ell_2)^{\gz}}{(2 + \ell_1 + 2
    \ell_2 + \ell_1 \ell_2)^{\gz}} 
    \le 1.
  \end{align*}
  It therefore suffices to show that for $\ell_2 > \ell_1$,
  \begin{align*}
    t(2 + \ell_2) - t(1 + \ell_1)
    &\ge t(1 + \ell_2) - t(2 + \ell_1) \\
    t(2 + \ell_1) + t(2 + \ell_2)
    &\ge t(1 + \ell_1) + t(1 + \ell_2)
    \numberthis \label{eq:need-t-increasing}
  \end{align*}
  If $t$ is weakly increasing, this is trivially true: each term on the left
  hand side is weakly larger than the corresponding term on the right hand side.
  While $t$ is not increasing in general, it is as long as its arguments are
  sufficiently small relative to $\gamma - \gz$:
  \begin{align*}
    \frac{d}{d\ell} t(\ell)
    &\ge 0 \\
    \frac{\ell^{\gamma - \gz} \ell^{-1} - \log(\ell) (\gamma - \gz)
    \ell^{\gamma - \gz - 1}}{\ell^{2(\gamma - \gz)}}
    &\ge 0 \\
    1 - (\gamma - \gz) \log(\ell)
    &\ge 0 \\
    (\gamma - \gz)
    &\le \frac{1}{\log(\ell)}
  \end{align*}
  Here, we make use of our assumption $\go - \gz \le \frac{1}{\log n}$.
  Because $\gamma \in [\gz, \go]$, $t$ is increasing as long as we plug in
  values that are at most $n$. The largest value $\ell_2$ can take is $n-2$, so
  the largest value we plug in is $2 + n - 2 = n$.
  Thus,~\eqref{eq:need-t-increasing} holds, and the first term in
  brackets~\eqref{eq:bracket-1} is nonnegative. We use a similar, though
  simpler, argument to handle the second term~\eqref{eq:bracket-2}.
  \begin{align*}
    &\frac{\log(2+\ell)}{(2+\ell)^\gamma} \cdot \frac{1}{(1 + \ell)^{\gz}}
      + \frac{\log(2+\ell)}{(2+\ell)^\gamma} \cdot \frac{1}{(1 +
    \ell)^{\gz}} \\
    &- \frac{\log(1+\ell)}{(1+\ell)^\gamma} \cdot \frac{1}{(2 +
      \ell)^{\gz}} - \frac{\log(1+\ell)}{(1+\ell)^\gamma} \cdot \frac{1}{(2 +
    \ell)^{\gz}} \\
    =
    & \frac{2}{(2+\ell)^{\gz} (1+\ell)^{\gz}}
    \p{\frac{\log(2+\ell)}{(2+\ell)^{\gamma-\gz}} -
    \frac{\log(1+\ell)}{(1+\ell)^{\gamma - \gz}}} \\
    & \frac{2}{(2+\ell)^{\gz} (1+\ell)^{\gz}}
    \p{t(2+\ell) - t(1+\ell)}
  \end{align*}
  Again, this is nonnegative as long as $t$ is weakly increasing, which it is
  for $\ell \le n-2$. With this, we have shown~\eqref{eq:gamma-main-eq} under
  the assumption that $\go \le \gz + \frac{1}{\log n}$. Still under this
  assumption, we use \Cref{lem:p-majorize} with $\bp' = \bpg$, $\bp = \bpgp$,
  and $Q_k = \bd_k u_{\go}(n-1, \bpg_k)$. By~\eqref{eq:gamma-main-eq}, $\{Q_k\}$
  is increasing as long as $\bpg_k > 0$, so $\bpg \succ \bpgp$.

  To finish the proof, we must remove the assumption that $\go \le \gz +
  \frac{1}{\log n}$. For any $\go \ge \gz$, we simply need to iteratively apply
  this argument $\lceil (\go-\gz)\log n\rceil$ times to get
  \begin{align*}
    \bpg \succ \bp^{(\gz + 1/\log(n))} \succ \bp^{(\gz + 2/\log(n))} \succ
    \dots \succ \bpgp.
  \end{align*}
  This suffices for any finite $\gamma_1$. To handle the case where $\gamma_1 =
  \infty$, we invoke \Cref{lem:inf-most-diverse} which states that for any $s
  \in \sall$ (including $s_{\gz}$), $\peq(n, \bd, s) \succ \peq(n, \bd,
  s_\infty)$.
\end{proof}

\divopt*
\begin{proof}

  Assume without loss of generality that $\bd$ is decreasing.
  Let $\bps = \popt(n, \bd, s)$. We begin with the case where $s \in \sup$. By
  \Cref{lem:unique-opt-decreasing}, for all $k$ such that $\bp_k^* > 0$,
  \begin{align*}
    \frac{\partial}{\partial p} \bd_k p u(n-1, p) \dpat{\bp_k^*} = c
  \end{align*}
  for $c = \frac{1}{n} \|\nabla W(\bp)\|_\infty$. Expanding,
  \ifthenelse{\boolean{smallEqs}}{
  \begin{align*}
    \frac{\partial}{\partial p} \bd_k p u(n-1, p) \dpat{\bp_k^*}
    &= c \\
    \bd_k u(n-1, \bp_k^*) + \bd_k \bp_k^* \frac{\partial}{\partial p}  u(n-1, p)
    \dpat{\bp_k^*}
    &= c \\
    \bd_k u(n-1, \bp_k^*) + \bd_k \bp_k^* \frac{n-1}{\bp_k^*} \du(n-2, \bp_k^*)
    &= c
    \tag{by \Cref{lem:sah-diff}} \\
    \bd_k u(n-1, \bp_k^*)
    = c - \bd_k (n-1) \du(n-2, \bp_k^*)
    \numberthis \label{eq:opt-as-diff}
  \end{align*}
}{\begin{align*}
    \frac{\partial}{\partial p} \bd_k p u(n-1, p) \dpat{\bp_k^*}
    &= c \\
    \bd_k u(n-1, \bp_k^*) + \bd_k \bp_k^* \frac{\partial}{\partial p}  u(n-1, p)
    \dpat{\bp_k^*}
    &= c \\
    \bd_k u(n-1, \bp_k^*) + \bd_k \bp_k^* \frac{n-1}{\bp_k^*} \du(n-2, \bp_k^*)
    &= c
    \tag{by \Cref{lem:sah-diff}} \\
    \bd_k u(n-1, \bp_k^*)
    &= c - \bd_k (n-1) \du(n-2, \bp_k^*)
    \numberthis \label{eq:opt-as-diff}
  \end{align*}
}
  We will show that the left hand side is decreasing in $k$: For $k < k'$, our
  goal is to show that
  \begin{equation}
    \label{eq:opt-div-setup}
    \bd_k u(n-1, \bp_k^*)
    \ge \bd_{k'} u(n-1, \bp_{k'}^*).
  \end{equation}
  If $\bp_k^* = 0$, then $\bp_{k'}^* = 0$ as well. In this case, because $u(n-1,
  0) = 1$, this is simply $\bd_k \ge \bd_{k'}$, which is true by assumption. We
  can therefore focus on the case where $\bp_k^* > 0$.
  \ifthenelse{\boolean{smallEqs}}{
  \begin{align*}
    &\bd_k u(n-1, \bp_k^*)
    \ge \bd_{k'} u(n-1, \bp_{k'}^*) \\
    \Longleftrightarrow~ &
    c - \bd_k (n-1) \du(n-2, \bp_k^*) \\
    &\ge c - \bd_{k'} (n-1) \du(n-2, \bp_{k'}^*)
    \tag{by~\eqref{eq:opt-as-diff}} \\
    \Longleftrightarrow~ &
    \bd_k \du(n-2, \bp_k^*)
    \le \bd_{k'} \du(n-2, \bp_{k'}^*)
    \numberthis \label{eq:opt-ineq-before-dividing}
  \end{align*}
}{\begin{align*}
    &\bd_k u(n-1, \bp_k^*)
    \ge \bd_{k'} u(n-1, \bp_{k'}^*) \\
    \Longleftrightarrow~ &
    c - \bd_k (n-1) \du(n-2, \bp_k^*)
    \ge c - \bd_{k'} (n-1) \du(n-2, \bp_{k'}^*)
    \tag{by~\eqref{eq:opt-as-diff}} \\
    \Longleftrightarrow~ &
    \bd_k \du(n-2, \bp_k^*)
    \le \bd_{k'} \du(n-2, \bp_{k'}^*)
    \numberthis \label{eq:opt-ineq-before-dividing}
  \end{align*}
}
  Next, observe that by \Cref{lem:unique-opt-decreasing},
  \ifthenelse{\boolean{smallEqs}}{
  \begin{align*}
    \bd_k &u(n-1, \bp_k^*) + \bd_k (n-1) \du(n-2, \bp_k^*) \\
    &\ge \bd_{k'} u(n-1, \bp_{k'}^*) + \bd_{k'} (n-1) \du(n-2, \bp_{k'}^*).
    \numberthis \label{eq:opt-denom}
  \end{align*}
}{\begin{equation}
    \label{eq:opt-denom}
    \bd_k u(n-1, \bp_k^*) + \bd_k (n-1) \du(n-2, \bp_k^*)
    \ge \bd_{k'} u(n-1, \bp_{k'}^*) + \bd_{k'} (n-1) \du(n-2, \bp_{k'}^*).
  \end{equation}
}
  This holds because by assumption, $\bp_k^* > 0$, so the left hand side is $c$,
  and the right hand side is at most $c$. Note that both sides
  of~\eqref{eq:opt-ineq-before-dividing} are weakly negative, and both sides
  of~\eqref{eq:opt-denom} are positive. Therefore, it suffices to show that the
  inequality holds if we divide both sides
  of~\eqref{eq:opt-ineq-before-dividing} by the respective sides
  of~\eqref{eq:opt-denom} to get
  \ifthenelse{\boolean{smallEqs}}{
  \begin{align*}
    &
    \frac{\bd_k \du(n-2, \bp_k^*)}{\bd_k u(n-1, \bp_k^*) + \bd_k
    (n-1) \du(n-2, \bp_k^*)} \\
    &\le \frac{\bd_{k'} \du(n-2, \bp_{k'}^*)}{\bd_{k'} u(n-1, \bp_{k'}^*) +
    \bd_{k'} (n-1) \du(n-2, \bp_{k'}^*)} \\
    \Longleftrightarrow~
    &
    \frac{u(n-1, \bp_k^*) + (n-1) \du(n-2, \bp_k^*)}{\du(n-2, \bp_k^*)} \\
    &\ge \frac{u(n-1, \bp_{k'}^*) + (n-1) \du(n-2, \bp_{k'}^*)}{\du(n-2,
    \bp_{k'}^*)} \\
    \Longleftrightarrow~
    &
    \frac{u(n-1, \bp_k^*)}{\du(n-2, \bp_k^*)} \ge \frac{u(n-1,
    \bp_{k'}^*)}{\du(n-2, \bp_{k'}^*)} \\
    \Longleftrightarrow~
    &
    \frac{\du(n-2, \bp_k^*)}{u(n-1, \bp_k^*)} \le \frac{\du(n-2,
    \bp_{k'}^*)}{u(n-1, \bp_{k'}^*)} \\
    \Longleftrightarrow~
    &
    \frac{u(n-1, \bp_k^*) - u(n-2, \bp_k^*)}{u(n-1, \bp_k^*)} \\
    &\le \frac{u(n-1,
    \bp_{k'}^*) - u(n-2, \bp_{k'}^*)}{u(n-1, \bp_{k'}^*)} \\
    \Longleftrightarrow~ &
    -\frac{u(n-2, \bp_k^*)}{u(n-1, \bp_k^*)} \le - \frac{u(n-2,
    \bp_{k'}^*)}{u(n-1, \bp_{k'}^*)} \\
    \Longleftrightarrow~ &
    \frac{u(n-2, \bp_k^*)}{u(n-1, \bp_k^*)} \ge \frac{u(n-2,
    \bp_{k'}^*)}{u(n-1, \bp_{k'}^*)} \\
    \Longleftrightarrow~ &
    \frac{u(n-1, \bp_k^*)}{u(n-2, \bp_k^*)} \le \frac{u(n-1, \bp_{k'}^*)}{u(n-2,
    \bp_{k'}^*)} \\
    \Longleftarrow~ &
    \frac{\partial}{\partial p} \frac{u(n-1, p)}{u(n-2, p)} \le 0.
  \end{align*}
}{\begin{align*}
    &
    \frac{\bd_k \du(n-2, \bp_k^*)}{\bd_k u(n-1, \bp_k^*) + \bd_k
    (n-1) \du(n-2, \bp_k^*)}
    \le \frac{\bd_{k'} \du(n-2, \bp_{k'}^*)}{\bd_{k'} u(n-1, \bp_{k'}^*) +
    \bd_{k'} (n-1) \du(n-2, \bp_{k'}^*)} \\
    \Longleftrightarrow~ &
    \frac{u(n-1, \bp_k^*) + (n-1) \du(n-2, \bp_k^*)}{\du(n-2, \bp_k^*)}
    \ge \frac{u(n-1, \bp_{k'}^*) + (n-1) \du(n-2, \bp_{k'}^*)}{\du(n-2,
    \bp_{k'}^*)} \\
    \Longleftrightarrow~ &
    \frac{u(n-1, \bp_k^*)}{\du(n-2, \bp_k^*)} \ge \frac{u(n-1,
    \bp_{k'}^*)}{\du(n-2, \bp_{k'}^*)} \\
    \Longleftrightarrow~ &
    \frac{\du(n-2, \bp_k^*)}{u(n-1, \bp_k^*)} \le \frac{\du(n-2,
    \bp_{k'}^*)}{u(n-1, \bp_{k'}^*)} \\
    \Longleftrightarrow~ &
    \frac{u(n-1, \bp_k^*) - u(n-2, \bp_k^*)}{u(n-1, \bp_k^*)} \le \frac{u(n-1,
    \bp_{k'}^*) - u(n-2, \bp_{k'}^*)}{u(n-1, \bp_{k'}^*)} \\
    \Longleftrightarrow~ &
    -\frac{u(n-2, \bp_k^*)}{u(n-1, \bp_k^*)} \le - \frac{u(n-2,
    \bp_{k'}^*)}{u(n-1, \bp_{k'}^*)} \\
    \Longleftrightarrow~ &
    \frac{u(n-2, \bp_k^*)}{u(n-1, \bp_k^*)} \ge \frac{u(n-2,
    \bp_{k'}^*)}{u(n-1, \bp_{k'}^*)} \\
    \Longleftrightarrow~ &
    \frac{u(n-1, \bp_k^*)}{u(n-2, \bp_k^*)} \le \frac{u(n-1, \bp_{k'}^*)}{u(n-2,
    \bp_{k'}^*)} \\
    \Longleftarrow~ &
    \frac{\partial}{\partial p} \frac{u(n-1, p)}{u(n-2, p)} \le 0.
  \end{align*}
}
  This last inequality holds by \Cref{lem:dp-dn-ratio}.
  With this, we can apply \Cref{lem:p-majorize} with $\bp' = \bp^*$, $\bp =
  \peq(n, \bd, s)$, and $Q_k = \bd_k u(n-1, \bp_k^*)$. As we have shown,
  $\{Q_k\}$ is decreasing, so $\peq(n, \bd, s) \succ \bp^*$.

  Next, we handle the case where $s \in \sdown$.
  \begin{align*}
    \peq(n, \bd, s)
    &\succ \peq(n, \bd, s_\infty) \tag{\Cref{lem:inf-most-diverse}} \\
    &= \popt(n, \bd, s) \tag{\Cref{lem:opt-inf-equivalence}},
  \end{align*}
  which proves the claim.
\end{proof}

\subsection{Proofs for \Cref{sec:compeitition-limit}}

\gammaconv*
\begin{proof}
  For finite $\gamma$,
  \begin{align*}
    |\bpn{n} - \bplim| \le O\p{\sqrt{\frac{\log n}{n}}}
  \end{align*}
  by \Cref{lem:p-convergence}. We will return to the $\gamma = \infty$ case
  later.

  Next, we consider $\qkn{n} = \popt(n, \bd, s_\gamma)$. We
  begin with the case where $\gamma < 1$, so $s_\gamma \in \sup$.
  By \Cref{lem:unique-opt-decreasing}, for all $k > k'$ such that $\qkn{n} > 0$,
  \begin{align*}
    \frac{\bd_k}{n} \frac{\partial}{\partial p} w(n, p) \dpat{\qkn{n}}
    &= \frac{\bd_1}{n} \frac{\partial}{\partial p} w(n, p) \dpat{\qkpn{n}} \\
    \frac{\bd_k}{\qkn{n}} \dw(n-1, \qkn{n})
    &= \frac{\bd_1}{\qkpn{n}} \dw(n-1, \qkpn{n}). \numberthis \label{eq:dw-equal}
  \end{align*}
  where the last step follows from \Cref{lem:sah-diff}.
  Observe that
  \ifthenelse{\boolean{smallEqs}}{
  \begin{align*}
    &\frac{1}{p} \dw(n-1, p) \\
    &= \frac{1}{p} (w(n, p) - w(n-1, p)) \\
    &= \frac{1}{p} \p{\E{\frac{X(n, p)}{s_\gamma(X(n, p))}} - \E{\frac{X(n-1,
    p)}{s_\gamma(X(n-1, p))}}} \\
    &= \frac{1}{p} \p{\E{X(n, p)^{1-\gamma}} - \E{X(n-1, p)^{1-\gamma}}} \\
    &= \frac{1}{p} \left(p \E{(1 + X(n-1, p))^{1-\gamma}}\right. \\
    &+ \left.(1-p) \E{X(n-1, p)^{1-\gamma}} - \E{X(n-1, p)^{1-\gamma}}\right) \\
    &= \E{(1 + X(n-1, p))^{1-\gamma} - X(n-1, p)^{1-\gamma}}.
  \end{align*}
}{\begin{align*}
    \frac{1}{p} \dw(n-1, p)
    &= \frac{1}{p} (w(n, p) - w(n-1, p)) \\
    &= \frac{1}{p} \p{\E{\frac{X(n, p)}{s_\gamma(X(n, p))}} - \E{\frac{X(n-1,
    p)}{s_\gamma(X(n-1, p))}}} \\
    &= \frac{1}{p} \p{\E{X(n, p)^{1-\gamma}} - \E{X(n-1, p)^{1-\gamma}}} \\
    &= \frac{1}{p} \p{p \E{(1 + X(n-1, p))^{1-\gamma}} + (1-p) \E{X(n-1,
    p)^{1-\gamma}} - \E{X(n-1, p)^{1-\gamma}}} \\
    &= \E{(1 + X(n-1, p))^{1-\gamma} - X(n-1, p)^{1-\gamma}}.
  \end{align*}
}

  We use the fact that $X(n, p)$ concentrates. Defining $\mu = np$, let $\mc E$
  be the event that $|X(n, p) - \mu| \le (1+\delta) \mu$. Let $\overline{\mc E}$
  be its complement. Then,
  \begin{align*}
    &\E{(1 + X(n, p))^{1-\gamma} - X(n, p)^{1-\gamma}} \\
    &= \E{(1 + X(n, p))^{1-\gamma} - X(n, p)^{1-\gamma} \given \mc E}
    \cdot \Pr[\mc E] \\
    &+ \E{(1 + X(n, p))^{1-\gamma} - X(n, p)^{1-\gamma} \given \overline{\mc E}}
    \cdot \Pr[\overline{\mc E}]
  \end{align*}
  The first term is
  \begin{align*}
    &\E{(1 + X(n, p))^{1-\gamma} - X(n, p)^{1-\gamma} \given \mc E} \\
    &=\E{(1-\gamma) (\alpha_1 + X(n, p))^{-\gamma} \given \mc E}
    \tag{Taylor's theorem, $\alpha_1 \in [0, 1]$} \\
    &= (1-\gamma) (\alpha_1 + \alpha_2 + \mu)^{-\gamma}
    \tag{Mean value theorem, $\alpha_2 \in [-\delta \mu, \delta \mu]$}
  \end{align*}
  Thus, we get the two-sided bound
  \begin{align*}
    &(1-\gamma)(1 + \delta \mu + \mu)^{-\gamma} \\
    &\le \E{(1 + X(n, p))^{1-\gamma} - X(n, p)^{1-\gamma} \given \mc E} \\
    &\le (1-\gamma)(-\delta \mu + \mu)^{-\gamma}.
  \end{align*}
  We apply the Chernoff bound
  \begin{align*}
    \Pr[\overline{\mc E}]
    = \Pr[|X(n, p) - \mu| \ge (1+\delta) \mu] \le 2 e^{-\frac{\delta^2 \mu}{3}}.
  \end{align*}
  Using the fact that $(1 + x)^{1-\gamma} -
  x^{1-\gamma} \le 1$ for $x \ge 0$, this becomes
  \ifthenelse{\boolean{smallEqs}}{
  \begin{align*}
    &(1-\gamma)(1 + \delta \mu + \mu)^{-\gamma} \p{1 -
    2e^{-\frac{np\delta^2}{3}}} \\
    &\le \frac{1}{p} \dw(n, p) \\
    &\le (1-\gamma)(-\delta \mu + \mu)^{-\gamma} + 2e^{-\frac{np\delta^2}{3}}.
  \end{align*}
}{\begin{align*}
    (1-\gamma)(1 + \delta \mu + \mu)^{-\gamma} \p{1 - 2e^{-\frac{np\delta^2}{3}}}
    &\le \frac{1}{p} \dw(n, p) \\
    &\le (1-\gamma)(-\delta \mu + \mu)^{-\gamma} + 2e^{-\frac{np\delta^2}{3}}.
  \end{align*}
}
  Choose $\delta = \sqrt{\frac{3(1+\gamma) \log n}{n}}$. As long as $p$ is
  bounded from below by a constant, $\mu = \Theta(n)$, and
  \begin{equation}
    \label{eq:dw-asymptotic}
    \frac{1}{p} \dw(n, p)
    = (1-\gamma) (np)^{-\gamma} \p{1 \pm O\p{\sqrt{\frac{\log n}{n}}}}.
  \end{equation}
  Here, we have used the fact that for $x$ going to 0,
  \begin{align*}
    (1 \pm x)^\gamma = 1 \pm \gamma x \pm O(x^2) = 1 \pm O(x).
  \end{align*}
  By \Cref{thm:diversity-n}, $\pn{n} \succ \qn{n}$ for all $n$. By
  \Cref{lem:lower-bound-on-p}, $\pn{n} = \Omega(1)$. Therefore, there exists
  some constant $a$ such that for sufficiently large $n$, $\qn{n} \ge a$.
  Applying~\eqref{eq:dw-asymptotic}, for all $k$,
  \ifthenelse{\boolean{smallEqs}}{
    \begin{align*}
    \frac{1}{\qkn{n}} &\dw(n-1, \qkn{n}) \\
    &= (1-\gamma) ((n-1)\qkn{n})^{-\gamma} \p{1 \pm O\p{\sqrt{\frac{\log n}{n}}}}.
  \end{align*}
}{\begin{align*}
    \frac{1}{\qkn{n}} &\dw(n-1, \qkn{n}) \\
    &= (1-\gamma) ((n-1)\qkn{n})^{-\gamma} \p{1 \pm O\p{\sqrt{\frac{\log n}{n}}}}.
  \end{align*}
}

  With this, we can return to~\eqref{eq:dw-equal}.
  \ifthenelse{\boolean{smallEqs}}{
  \begin{align*}
    \frac{\bd_k}{\qkn{n}} \dw(n-1, \qkn{n})
    - \frac{\bd_{k'}}{\qkpn{n}} \dw(n-1, \qkpn{n})
    &= 0 \\
    \bd_k ((n-1)\qkn{n})^{-\gamma}
    \p{1 \pm O\p{\sqrt{\frac{\log n}{n}}}} \\
    - \bd_{k'} ((n-1)\qkpn{n})^{-\gamma}
    \p{1 \pm O\p{\sqrt{\frac{\log n}{n}}}}
    &= 0 \\
    \bd_k (\qkn{n})^{-\gamma}
    \p{1 \pm O\p{\sqrt{\frac{\log n}{n}}}} \\
    - \bd_{k'} (\qkpn{n})^{-\gamma}
    \p{1 \pm O\p{\sqrt{\frac{\log n}{n}}}}
    &= 0.
  \end{align*}
}{\begin{align*}
    \frac{\bd_k}{\qkn{n}} \dw(n-1, \qkn{n})
    - \frac{\bd_{k'}}{\qkpn{n}} \dw(n-1, \qkpn{n})
    &= 0 \\
    \bd_k ((n-1)\qkn{n})^{-\gamma}
    \p{1 \pm O\p{\sqrt{\frac{\log n}{n}}}}
    - \bd_{k'} ((n-1)\qkpn{n})^{-\gamma}
    \p{1 \pm O\p{\sqrt{\frac{\log n}{n}}}}
    &= 0 \\
    \bd_k (\qkn{n})^{-\gamma}
    \p{1 \pm O\p{\sqrt{\frac{\log n}{n}}}}
    - \bd_{k'} (\qkpn{n})^{-\gamma}
    \p{1 \pm O\p{\sqrt{\frac{\log n}{n}}}}
    &= 0.
  \end{align*}
}
  Because $\qkn{n} = \Omega(1)$, this implies
  \begin{align*}
    | \bd_k (\qkn{n})^{-\gamma}
    - \bd_{k'} (\qkpn{n})^{-\gamma}|
    \le O\p{\sqrt{\frac{\log n}{n}}}.
  \end{align*}
  This is identical to~\eqref{eq:dk-pk-pairwise-bound}, and by an identical
  argument,
  \begin{align*}
    |\qkn{n} - \bplim_k| \le O\p{\sqrt{\frac{\log n}{n}}}.
  \end{align*}

  Next, we handle the case where $\gamma \ge 1$ (and more generally, any $s
  \in \sdown$). By \Cref{lem:unique-opt-decreasing}, for any $s \in \sdown$,
  \begin{align*}
    \frac{1}{n} \frac{\partial}{\partial p} w(n, p)
    &= \frac{1}{n} \frac{\partial}{\partial p} \Pr[X(n, p) > 0] \\
    &= \frac{1}{n} \frac{\partial}{\partial p} 1 - (1-p)^n \\
    &= (1-p)^{n-1}.
  \end{align*}
  Then, for any $k, k'$,
  \begin{align*}
    \bd_k (1-\qkn{n})^{n-1}
    &= \bd_{k'} (1 - \qkpn{n})^{n-1} \\
    \frac{1 - \qkn{n}}{1 - \qkpn{n}}
    &= \p{\frac{\bd_{k'}}{\bd_k}}^{1/(n-1)}
  \end{align*}
  Therefore,
  \begin{align*}
    |\qkn{n} - \qkpn{n}|
    &= |1-\qkn{n} - (1-\qkpn{n})| \\
    &= (1-\qkpn{n})\left|\p{\frac{\bd_{k'}}{\bd_k}}^{1/(n-1)} - 1\right| \\
    &\le O\p{e^{\frac{\log(\bd_{k'} / \bd_k)}{n-1}} - 1} \\
    &= O\p{\frac{\log(\bd_{k'} / \bd_k)}{n-1} + \sum_{\ell=2}^\infty
  \frac{1}{\ell!} \p{\frac{\log(\bd_{k'} / \bd_k)}{n-1}}^\ell} \\
    &= O\p{\frac{1}{n}}.
  \end{align*}
  Because the average value of $\qkn{n}$ is $\frac{1}{K}$, this implies
  \begin{align*}
    \left|\qkn{n} - \frac{1}{K} \right| \le O\p{\frac{1}{n}},
  \end{align*}
  and $\qkn{\infty} = \frac{1}{K}$.
  
  Finally, we can use this to return to the case where $\gamma = \infty$ for
  $\pkn{\infty}$. By \Cref{lem:opt-inf-equivalence}, when $\gamma = \infty$,
  $\peq(n, \bd, s_\infty) = \popt(n, \bd, s_\infty)$. Thus, in this case,
  $\pkn{\infty} = 1/K \propto \bd^{1/\infty}$ as well.
\end{proof}

\infequiv*
\begin{proof}

  Assume without loss of generality that $\bd$ is decreasing.
  Let $\bp^* = \popt(n, \bd, s)$ and $\bp = \peq(n, \bd, s_\infty)$. By
  \Cref{lem:unique-opt-decreasing}, for $s \in \sdown$,
  \begin{align*}
    \bp_k^* > 0 \Longleftrightarrow \bd_k (1 - \bp_k^*)^{n-1} = c
  \end{align*}
  for $c = \frac{1}{n} \|\nabla W(\bp^*)\|_\infty$. In addition,
  \begin{align*}
    \bp_k^* = 0 \Longleftrightarrow \bd_k \le c.
  \end{align*}
  By \Cref{lem:unique-eq-decreasing},
  \begin{equation}
    \label{eq:inf-kkt}
    \bp_k > 0 \Longleftrightarrow \bd_k u_{s_\infty}(n-1, \bp_k) = c'
  \end{equation}
  for $c' = U(\bp; n, \bd, s_\infty)$. In addition,
  \begin{align*}
    \bp_k = 0 \Longleftrightarrow \bd_k \le c'.
  \end{align*}
  Observe that
  \begin{align*}
    u_{s_\infty}(n-1, \bp_k)
    &= \E{\frac{1}{(1 + X(n-1, \bp_k))^\infty}} \\
    &= \Pr[X(n-1, \bp_k) = 0] \\
    &= (1 - \bp_k)^{n-1}.
  \end{align*}
  Substituting into~\eqref{eq:inf-kkt},
  \begin{align*}
    \bp_k > 0 \Longleftrightarrow \bd_k (1 - \bp_k)^{n-1} = c'.
  \end{align*}
  Because $(1 - p)^{n-1}$ is strictly decreasing in $p$ and $\sum_{k \in [K]}
  \bp_k^* = \sum_{k \in [K]} \bp_k = 1$, by a water-filling argument, it must be
  the case that $c' = c$. Therefore, $\bp^* = \bp$.
\end{proof}

\infmostdiverse*
\begin{proof}

  Assume without loss of generality that $\bd$ is decreasing.
  Let $\bp = \peq(n, \bd, s)$ and let $\bq = \peq(n, \bd, s_\infty)$. Our goal
  will be to show that for $k' > k$, if $\bp_{k'} > 0$,
  \begin{equation}
    \label{eq:inf-most-diverse-setup}
    \bd_k u_{s_\infty}(n-1, \bp_k) \le \bd_{k'} u_{s_\infty}(n-1, \bp_{k'}).
  \end{equation}
  We will then apply \Cref{lem:p-majorize}. To do so, observe that under these
  conditions,
  \begin{equation}
    \label{eq:inf-denom}
    \bd_k u_{s}(n-1, \bp_k) = \bd_{k'} u_{s}(n-1, \bp_{k'})
  \end{equation}
  by \Cref{lem:unique-eq-decreasing}. Dividing both sides
  of~\eqref{eq:inf-most-diverse-setup} by the respective sides
  of~\eqref{eq:inf-denom}, it suffices to show that
  \ifthenelse{\boolean{smallEqs}}{
  \begin{align*}
    &\frac{u_{s_\infty}(n-1, \bp_k)}{u_{s}(n-1, \bp_k)}
    \le \frac{u_{s_\infty}(n-1, \bp_{k'})}{u_{s}(n-1, \bp_{k'})} \\
    \Longleftrightarrow
    &
    \frac{\Pr[X(n-1, \bp_k) = 0]}{u_{s}(n-1, \bp_k)}
    \le \frac{\Pr[X(n-1, \bp_{k'}) = 0]}{u_{s}(n-1, \bp_{k'})} \\
    \Longleftarrow
    &
    \frac{\partial}{\partial p} 
    \frac{\Pr[X(n-1, p) = 0]}{u_{s}(n-1, p)} \le 0 \\
    \Longleftrightarrow
    &
    \frac{\partial}{\partial p} 
    \log\p{\frac{\Pr[X(n-1, p) = 0]}{u_{s}(n-1, p)}} \le 0 \\
    \Longleftrightarrow
    &
    \frac{\frac{\partial}{\partial p} \Pr[X(n-1, p) = 0]}{\Pr[X(n-1, p) = 0]}
    \le \frac{\frac{\partial}{\partial p} u_{s}(n-1, p)}{u_{s}(n-1, p)} \\
    \Longleftrightarrow
    &
    \frac{-(n-1)(1-p)^{n-2}}{(1-p)^{n-1}} \le \frac{\frac{n-1}{p} \du_s(n-2,
    p)}{u_s(n-1, p)} \\
    \Longleftrightarrow
    &
    -\frac{1}{1-p} \le \frac{1}{p} \frac{u_s(n-1, p) - u_s(n-2, p)}{u_s(n-1, p)}
    \\
    \Longleftrightarrow
    &
    -\frac{p}{1-p} \le 1 - \frac{u_s(n-2, p)}{u_s(n-1, p)} \\
    \Longleftrightarrow
    &
    \frac{u_s(n-2, p)}{u_s(n-1, p)} \le \frac{1}{1-p} \\
    \Longleftrightarrow
    &
    (1-p) u_s(n-2, p) \le u_s(n-1, p).
  \end{align*}
}{\begin{align*}
    \frac{u_{s_\infty}(n-1, \bp_k)}{u_{s}(n-1, \bp_k)}
    \le \frac{u_{s_\infty}(n-1, \bp_{k'})}{u_{s}(n-1, \bp_{k'})}
    &\Longleftrightarrow
    \frac{\Pr[X(n-1, \bp_k) = 0]}{u_{s}(n-1, \bp_k)}
    \le \frac{\Pr[X(n-1, \bp_{k'}) = 0]}{u_{s}(n-1, \bp_{k'})} \\
    &\Longleftarrow
    \frac{\partial}{\partial p} 
    \frac{\Pr[X(n-1, p) = 0]}{u_{s}(n-1, p)} \le 0 \\
    &\Longleftrightarrow
    \frac{\partial}{\partial p} 
    \log\p{\frac{\Pr[X(n-1, p) = 0]}{u_{s}(n-1, p)}} \le 0 \\
    &\Longleftrightarrow
    \frac{\frac{\partial}{\partial p} \Pr[X(n-1, p) = 0]}{\Pr[X(n-1, p) = 0]}
    \le \frac{\frac{\partial}{\partial p} u_{s}(n-1, p)}{u_{s}(n-1, p)} \\
    &\Longleftrightarrow
    \frac{-(n-1)(1-p)^{n-2}}{(1-p)^{n-1}} \le \frac{\frac{n-1}{p} \du_s(n-2,
    p)}{u_s(n-1, p)} \\
    &\Longleftrightarrow
    -\frac{1}{1-p} \le \frac{1}{p} \frac{u_s(n-1, p) - u_s(n-2, p)}{u_s(n-1, p)}
    \\
    &\Longleftrightarrow
    -\frac{p}{1-p} \le 1 - \frac{u_s(n-2, p)}{u_s(n-1, p)} \\
    &\Longleftrightarrow
    \frac{u_s(n-2, p)}{u_s(n-1, p)} \le \frac{1}{1-p} \\
    &\Longleftrightarrow
    (1-p) u_s(n-2, p) \le u_s(n-1, p).
  \end{align*}
}
  Observe that
  \ifthenelse{\boolean{smallEqs}}{
  \begin{align*}
    u_s(n-1, p)
    &= \E{\frac{1}{s(1 + X(n-1, p))}} \\
    &= (1-p) \E{\frac{1}{s(1 + x(n-2, p))}} \\
    &+ p \E{\frac{1}{s(2 + X(n-2, p))}} \\
    &\ge (1-p) \E{\frac{1}{s(1 + x(n-2, p))}} \\
    &= (1-p) u_s(n-2, p)
  \end{align*}
}{\begin{align*}
    u_s(n-1, p)
    &= \E{\frac{1}{s(1 + X(n-1, p))}} \\
    &= (1-p) \E{\frac{1}{s(1 + x(n-2, p))}} + p \E{\frac{1}{s(2 + X(n-2, p))}}
    \\
    &\ge (1-p) \E{\frac{1}{s(1 + x(n-2, p))}} \\
    &= (1-p) u_s(n-2, p)
  \end{align*}
}
  as desired. Therefore, if $k' > k$ and $p_{k'} >
  0$,~\eqref{eq:inf-most-diverse-setup} holds. Applying \Cref{lem:p-majorize}
  with $Q_k = \bd_k u_{s_\infty}(n-1, \bp_k)$, because $\{Q_k\}$ is increasing
  when $\bp_k > 0$, $\bp \succ \bq$.
\end{proof}

\asymconverge*
\begin{proof}
  This follows directly from \Cref{lem:p-convergence}. Let $\bpn{n} = \peq(n,
  \bd, s)$.
  Because $[\bpn{n}, \bpn{n}, \dots, \bpn{n}] \in \eqsetn{n}$,
  \begin{align*}
    |\bpn{n}_k - \barpn{n}_k|
    \le 
    |\bpn{n}_k - \bplim_k| + |\bplim_k - \barpn{n}_k|
    \le O\p{\sqrt{\frac{\log n}{n}}}.
  \end{align*}
\end{proof}

\subsection{Supporting results}

\subsubsection{Supporting results on diversity}

\begin{lemma}
  \label{lem:p-majorize}
  Let $\bp = \peq(n, \bd, s)$. Let $\bp'$ be some other distribution. Consider
  the sequence $\{Q_k\}_{k \in [K]}$ where
  \begin{align*}
    Q_k \triangleq \bd_k u_s(n-1, \bp_k').
  \end{align*}
  If $\{Q_k\}$ is decreasing, then $\bp \succ \bp'$. If $\{Q_k\}$ is increasing
  for $k$ such that $\bp_k' > 0$, then $\bp' \succ \bp$.
\end{lemma}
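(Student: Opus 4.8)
The plan is to reduce each half of the statement to a \emph{single-crossing} comparison of the coordinate sequences of $\bp$ and $\bp'$, and then invoke \Cref{lem:inc-majorization} (if the vector $\bp-\bp'$ is nonnegative on a prefix and nonpositive afterwards then $\bp\succ\bp'$, which is immediate from comparing partial sums and using that both vectors sum to $1$). Throughout write $c = U(\bp)$ and $k^\star = \max\{k : \bp_k > 0\}$; if $c=0$ then $\bd\equiv 0$ and both claims are trivial, so assume $c>0$. By \Cref{lem:unique-eq-decreasing} we have $\bd_k\,u_s(n-1,\bp_k)=c$ for $k\le k^\star$ and $\bd_k\le c$ for $k>k^\star$, and by \Cref{lem:u-dec-in-p} the map $p\mapsto u_s(n-1,p)$ is strictly decreasing on $[0,1]$ with $u_s(n-1,0)=1$. (Note that $\bp'\in\Delta(\pi)$, since otherwise the conclusions $\bp\succ\bp'$ / $\bp'\succ\bp$ are not even defined.)

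The key observation is that for $k\le k^\star$, substituting $\bd_k = c/u_s(n-1,\bp_k)$ gives $Q_k = c\,u_s(n-1,\bp_k')/u_s(n-1,\bp_k)$, so by strict monotonicity of $u_s(n-1,\cdot)$, for such $k$ we have $Q_k\ge c$ iff $\bp_k\ge\bp_k'$ (and correspondingly with strict inequalities); for $k>k^\star$ we have $Q_k=\bd_k\,u_s(n-1,\bp_k')\le\bd_k\le c$, consistently with $\bp_k=0\le\bp_k'$. For the first claim, assume $\{Q_k\}$ is decreasing and let $k_0=\max\{k : Q_k\ge c\}$ (or $0$ if there is none); monotonicity of $\{Q_k\}$ makes $\{k : Q_k\ge c\}$ a prefix $\{1,\dots,k_0\}$. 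Using the equivalences above (and noting that past $k^\star$, $Q_k=c$ can only occur when $\bp_k=\bp_k'=0$), one obtains $\bp_k\ge\bp_k'$ for $k\le k_0$ and $\bp_k\le\bp_k'$ for $k>k_0$ — the single crossing — so $\bp\succ\bp'$ by \Cref{lem:inc-majorization}.

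For the second claim, assume $\{Q_k\}$ is increasing on $\{k : \bp_k'>0\}$; since $\bp'$ is non-increasing, this set is a prefix. It suffices to rule out a \emph{bad crossing}, i.e.\ indices $k_1<k_2$ with $\bp_{k_1}'<\bp_{k_1}$ and $\bp_{k_2}'>\bp_{k_2}$, since then $\bp_k'-\bp_k$ is nonnegative on a prefix and nonpositive afterwards and \Cref{lem:inc-majorization} gives $\bp'\succ\bp$. Such a pair would satisfy $k_1\le k^\star$ (as $\bp_{k_1}>0$) and $\bp_{k_2}'>0$, hence (prefix support and $k_1<k_2$) also $\bp_{k_1}'>0$. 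If $k_2\le k^\star$, divide the equilibrium identity $\bd_{k_1}u_s(n-1,\bp_{k_1}) = \bd_{k_2}u_s(n-1,\bp_{k_2})$ by the hypothesis $\bd_{k_1}u_s(n-1,\bp_{k_1}')\le\bd_{k_2}u_s(n-1,\bp_{k_2}')$ to get $u_s(n-1,\bp_{k_1})/u_s(n-1,\bp_{k_1}') \ge u_s(n-1,\bp_{k_2})/u_s(n-1,\bp_{k_2}')$, whose left side is $<1$ (since $\bp_{k_1}'<\bp_{k_1}$) and right side $>1$ (since $\bp_{k_2}'>\bp_{k_2}$) — a contradiction. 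If $k_2>k^\star$, then $Q_{k_1}\le Q_{k_2}=\bd_{k_2}u_s(n-1,\bp_{k_2}')\le\bd_{k_2}\le c=\bd_{k_1}u_s(n-1,\bp_{k_1})$, forcing $u_s(n-1,\bp_{k_1}')\le u_s(n-1,\bp_{k_1})$, i.e.\ $\bp_{k_1}'\ge\bp_{k_1}$, again a contradiction. Hence no bad crossing exists.

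I expect the only delicate part to be the case analysis in the second claim: keeping straight the interplay between $k^\star$, the (prefix) support of $\bp'$, and the directions of the $u_s$‑monotonicity inequalities after dividing. The first claim is essentially mechanical once $Q_k$ is expressed relative to $c$, and the passage from single crossing to majorization is routine.
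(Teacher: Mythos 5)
Your proof is correct and follows essentially the same route as the paper's: use the equilibrium conditions from \Cref{lem:unique-eq-decreasing} and strict monotonicity of $u_s(n-1,\cdot)$ to translate the ordering of $\{Q_k\}$ relative to $c$ into a single-crossing comparison of $\bp$ and $\bp'$, then apply \Cref{lem:inc-majorization}. The paper phrases the crossing argument as a direct implication ($\bp_{\hat k}'>\bp_{\hat k}\Rightarrow\bp_k'>\bp_k$ for $k<\hat k$) rather than ruling out a bad crossing by contradiction, but the underlying identity $Q_k\gtrless c\Leftrightarrow\bp_k\gtrless\bp_k'$ on the support of $\bp$ is the same, and your handling of the $k>k^\star$ boundary cases is consistent with the paper's.
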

\begin{proof}
  Assume without loss of generality that $\bd$ is decreasing.
  Because $u(n-1, \cdot)$ is strictly
  decreasing,
  \begin{equation}
    \label{eq:comparison-eq}
    \bp_k' > \bp_k \Longleftrightarrow \bd_k u(n-1, \bp_k') < \bd_k u(n-1,
    \bp_k).
  \end{equation}
  Let $c = U(\bp; n, \bd, s)$.
  \paragraph*{Case 1: $\{Q_k\}$ is increasing.}

  Consider any $k, \hat k$ such that $k < \hat k$. If $\bp_k > 0$, then by
  \Cref{lem:unique-eq-decreasing},
  \begin{equation}
    \label{eq:k-to-khat-decrease}
    \bd_k u(n-1, \bp_k) = c \ge \bd_{\hat k} u(n-1, \bp_{\hat k}).
  \end{equation}
  If $\bp_k = 0$, then $\bp_{\hat k} = 0$ as well, so
  \begin{align*}
    \bd_k u(n-1, \bp_k) = \bd_k \ge \bd_{\hat k} \ge \bd_{\hat k} u(n-1,
    \bp_{\hat k}).
  \end{align*}
  In either case,
  \begin{align*}
    \bd_k u(n-1, \bp_k) \ge \bd_{\hat k} u(n-1, \bp_{\hat k}).
  \end{align*}
  Therefore, if $\bp_{\hat k}' > \bp_{\hat k}$, then
  \begin{align*}
    Q_k
    &\le Q_{\hat k} \tag{$\{Q_k\}$ increasing for $\bp_{\hat k}' > 0$} \\
    &= \bd_{\hat k} u(n-1, \bp_{\hat k}') \\
    &< \bd_{\hat k} u(n-1, \bp_{\hat k}) \tag{by~\eqref{eq:comparison-eq}} \\
    &\le \bd_k u(n-1, \bp_k). \tag{by~\eqref{eq:k-to-khat-decrease}}
  \end{align*}
  Again using~\eqref{eq:comparison-eq}, $Q_k < \bd_k u(n-1, \bp_k)$ implies
  $\bp_k' > \bp_k$. As a result, we have
  \begin{align*}
    \bp_{\hat k}' > \bp_{\hat k} \Longrightarrow \bp_k' > \bp_k
  \end{align*}
  for any $k < \hat k$. Let $k^*$ be the largest $k$ such that $\bp_k' > \bp_k$.
  We have shown that for $k \le k^*$, $\bp_k' > \bp_k$. On the other hand, for
  $k > k^*$, $\bp_k' = 0$, so $\bp_k' \le \bp_k$. Therefore, $\bp_k' > \bp_k
  \Longleftrightarrow k \le k^*$. This implies $\bp' \succ \bp$ by
  \Cref{lem:inc-majorization}.

  \paragraph*{Case 2: $\{Q_k\}$ is decreasing.}

  Similarly, we will show that for any $k < \hat k$,
  \begin{align*}
    \bp_{\hat k} > \bp_{\hat k}' \Longrightarrow \bp_k > \bp_k'.
  \end{align*}
  The premise implies that $\bp_{\hat k} > 0$ and therefore $\bp_k > 0$.
  By \Cref{lem:unique-eq-decreasing}, this means
  \begin{align*}
    \bd_k u(n-1, \bp_k) = \bd_{\hat k} u(n-1, \bp_{\hat k}) = c
  \end{align*}
  for some $c$. This means that
  \begin{align*}
    \bp_{\hat k} > \bp_{\hat k}'
    &\Longleftrightarrow \bd_{\hat k} u(n-1, \bp_{\hat k}) < \bd_{\hat k}
    u(n-1, \bp_{\hat k}')
    \tag{by~\eqref{eq:comparison-eq}} \\
    &\Longleftrightarrow c < Q_{\hat k} \\
    &\Longrightarrow c < Q_k \tag{$\{Q_k\}$ is decreasing} \\
    &\Longleftrightarrow \bd_k u(n-1, \bp_k) < \bd_k
    u(n-1, \bp_k') \\
    &\Longleftrightarrow \bp_k > \bp_k'.
    \tag{by~\eqref{eq:comparison-eq}}
  \end{align*}
  Again, this means that there is some $k^*$ such that for all $k \le k^*$, $\bp_k
  > \bp_k'$. \Cref{lem:inc-majorization} yields $\bp \succ \bp'$.
\end{proof}

\begin{lemma}
  \label{lem:inc-majorization}
  Suppose distributions $\bp$ and $\bq$ satisfy
  \begin{align*}
    \bp_k < \bq_k \Longleftrightarrow k \le k^*
  \end{align*}
  for some $k^*$. Then, $\bq \succ \bp$.
\end{lemma}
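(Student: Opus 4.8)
The plan is to track the running difference of the cumulative masses and show it is nonnegative everywhere. Concretely, for $k \in \{0, 1, \dots, K\}$ define
\begin{align*}
  S_k \triangleq \sum_{\ell=1}^k (\bq_\ell - \bp_\ell),
\end{align*}
with the convention $S_0 = 0$. Since $\bp$ and $\bq$ are both probability distributions on $[K]$, we have $S_K = \sum_{\ell=1}^K \bq_\ell - \sum_{\ell=1}^K \bp_\ell = 1 - 1 = 0$. The majorization claim $\bq \succ \bp$ is precisely the statement $S_k \ge 0$ for every $k \in [K]$.

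The key structural observation is that $\{S_k\}$ is unimodal. Its increments are $S_k - S_{k-1} = \bq_k - \bp_k$, and the hypothesis $\bp_k < \bq_k \iff k \le k^*$ says exactly that this increment is (strictly) positive for $k \le k^*$ and non-positive for $k > k^*$. Hence $S_k$ is weakly increasing on $\{0, 1, \dots, k^*\}$ and weakly decreasing on $\{k^*, k^*+1, \dots, K\}$.

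From unimodality the conclusion is immediate by splitting on $k$. For $k \le k^*$, monotonicity on the left branch gives $S_k \ge S_0 = 0$. For $k \ge k^*$, monotonicity on the right branch gives $S_k \ge S_K = 0$. Either way $S_k \ge 0$, which is the desired inequality $\sum_{\ell=1}^k \bq_\ell \ge \sum_{\ell=1}^k \bp_\ell$ for all $k$, i.e., $\bq \succ \bp$.

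There is essentially no obstacle here: the argument is a short unimodality-of-partial-sums computation. The only points requiring a moment of care are (i) using the hypothesis in the correct direction, namely that $k > k^*$ forces $\bp_k \ge \bq_k$ (so the right-branch increments are $\le 0$), and (ii) invoking that both sequences are genuine distributions so that the total sums agree, which is what pins $S_K = 0$ and makes the right-branch comparison work.
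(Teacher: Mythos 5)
Your proof is correct and is essentially the same as the paper's: both split on $k \le k^*$ versus $k > k^*$ and use the fact that $\bp$ and $\bq$ both sum to $1$ to handle the second case. Your phrasing in terms of unimodality of the partial-sum sequence $S_k$ is a clean restatement of the same two-case argument.
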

\begin{proof}
  By definition~\citep[e.g.,~][]{olkin2014inequalities},
  $\bq \succ \bp$ if and only if for all $k$,
  \begin{equation}
    \label{eq:majorize-def}
    \sum_{\ell=0}^k \bq_\ell \ge 
    \sum_{\ell=0}^k \bp_\ell.
  \end{equation}
  By assumption, there is some $k^*$ such that $\bq_k > \bp_k$ if and only if
  $k \le k^*$. For $k \le k^*$,
  \begin{align*}
    \sum_{\ell=0}^k \bq_\ell \ge
    \sum_{\ell=0}^k \bp_\ell
  \end{align*}
  because $\bq_\ell > \bp_\ell$ for each $\ell$. For $k > k^*$,
  \begin{align*}
    \sum_{\ell=0}^k \bq_\ell
    &= 1 - \sum_{\ell=k+1}^K \bq_\ell \\
    &\ge 1 - \sum_{\ell=k+1}^K \bp_\ell \tag{$\bq_\ell \le \bp_\ell$ for $k >
    k^*$} \\
    &= \sum_{\ell=0}^k \bp.
  \end{align*}
  Therefore, $\bq \succ \bp$.
\end{proof}

\subsubsection{Supporting results on asymptotics}
\label{sec:support-asymp}

Here, we provide additional results to support our asymptotic convergence
results for all equilibria, including asymmetric equilibria. As before, let
$\eqsetn{n}$ be the set of equilibria for $\game(n, \bd, s)$. For any $\bP \in
\Delta(\pi)^n$, define
\begin{align*}
  X_k(\bP)
  &\triangleq \sum_{i \in [n]} X(1, \bP(i)_k) \\
  U_i(\bP)
  &\triangleq \sum_{k \in [K]} \bd_k \bP(i)_k \E{\frac{1}{s(1 + X_k(\bP(-i)))}}
  \\
  W(\bP)
  &\triangleq
  \begin{cases}
    \sum_{k \in [K]} \bd_k \E{\frac{X_k(\bP)}{s(X_k(\bP))}} & {s \in \sup} \\
    \sum_{k \in [K]} \bd_k \Pr[X_k(\bP) > 0] & {s \in \sdown}
  \end{cases} \\
  V_k(\bP)
  &\triangleq \bd_k \E{(1 + X_k(\bP))^{-\gamma}}
\end{align*}
We index the $i$th player's strategy as $\bP(i)
\in \Delta(\pi)$, and we use $\bP(i)_k$ to denote the probability that player
$i$ chooses type $k$. We use $\bP(-i)$ to denote the strategy matrix $\bP$ with
player $i$'s strategy removed.
Our results will hold for sequences
$\{\bPn{n}\}_{n=2}^\infty$ where each $\bPn{n} \in \eqsetn{n}$. We will denote
the average strategy by $\barpn{n} \triangleq \frac{1}{n} \sum_{i=1}^n
\bPn{n}(i)$.

\begin{lemma}
  \label{lem:p-convergence}
  Assume without loss of generality that $\bd \in \R_{\ge 0}^K$ is weakly
  decreasing.
  Choose $\{\bPn{n}\}_{n=2}^\infty$ such that $\bPn{n} \in
  \eqsetn{n}$.
  Let $\bplim_k \triangleq \bd_k^{1/\gamma} / \sum_{k' \in
  [K]} \bd_{k'}^{1/\gamma}$. Then,
  \begin{align*}
    |\barpn{n}_k - \bplim_k| \le O\p{\sqrt{\frac{\log n}{n}}}.
  \end{align*}
\end{lemma}
\begin{proof}
  By \Cref{lem:V-overall-bound}, for all pairs $k, k'$,
  \begin{align*}
    |V_k(\bPn{n}) - V_{k'}(\bPn{n})| \le O(n^{-1-\gamma}).
  \end{align*}
  By \Cref{lem:pb-expectation,lem:lower-bound-on-p}, for all $k$,
  \begin{align*}
    V_k(\bPn{n})
    = \bd_k (1 + n\barpn{n}_k)^{-\gamma}\p{1 + O\p{\sqrt{\frac{\log n}{n}}}}.
  \end{align*}
  Combining,
  \ifthenelse{\boolean{smallEqs}}{
  \begin{align*}
    \left|
    \bd_k (1 + n\barpn{n}_k)^{-\gamma}\p{1 + O\p{\sqrt{\frac{\log n}{n}}}}
    \right. \\
    \left.- \bd_{k'} (1 + n\barpn{n}_{k'})^{-\gamma}\p{1 + O\p{\sqrt{\frac{\log
    n}{n}}}}
    \right|
    &\le O(n^{-1-\gamma}) \\
    \left|
    \bd_k (1 + n\barpn{n}_k)^{-\gamma}
    - \bd_{k'} (1 + n\barpn{n}_{k'})^{-\gamma}
    \right| \\
    +
    (\bd_k (1 + n \barpn{n}_k)^{-\gamma} \\
    + \bd_{k'} (1 + n
    \barpn{n}_{k'})^{-\gamma}) O\p{\sqrt{\frac{\log n}{n}}}
    &\le O(n^{-1-\gamma})
  \end{align*}
}{\begin{align*}
    \left|
    \bd_k (1 + n\barpn{n}_k)^{-\gamma}\p{1 + O\p{\sqrt{\frac{\log n}{n}}}}
    - \bd_{k'} (1 + n\barpn{n}_{k'})^{-\gamma}\p{1 + O\p{\sqrt{\frac{\log
    n}{n}}}}
    \right|
    &\le O(n^{-1-\gamma}) \\
    \left|
    \bd_k (1 + n\barpn{n}_k)^{-\gamma}
    - \bd_{k'} (1 + n\barpn{n}_{k'})^{-\gamma}
    \right| \\
    -
    (\bd_k (1 + n \barpn{n}_k)^{-\gamma} + \bd_{k'} (1 + n
    \barpn{n}_{k'})^{-\gamma}) O\p{\sqrt{\frac{\log n}{n}}}
    &\le O(n^{-1-\gamma})
  \end{align*}
}
  Because $\barpn{n}_k$ and $\barpn{n}_{k'}$ are $\Omega(1)$, $(1 +
  n\barpn{n}_k)^{-\gamma} = O(n^{-\gamma})$. Therefore,
  \begin{align*}
    |
    \bd_k (1 + n\barpn{n}_k)^{-\gamma}
    - \bd_{k'} (1 + n\barpn{n}_{k'})^{-\gamma}
    |
    &\le O\p{\frac{\sqrt{\log n}}{n^{1/2+\gamma}}}.
  \end{align*}
  Next, we use the fact that $(1 + n \barpn{n}_k)^{-\gamma} \approx (n
  \barpn{n}_k)^{-\gamma}$ via Taylor's theorem:
  \ifthenelse{\boolean{smallEqs}}{
  \begin{align*}
    &\bd_k (1 + n\barpn{n}_k)^{-\gamma}
    - \bd_{k'} (1 + n\barpn{n}_{k'})^{-\gamma} \\
    =& \bd_k (n\barpn{n}_k)^{-\gamma} - \gamma (\xi_k +
    n\barpn{n}_k)^{-1-\gamma} \\
     &- \bd_{k'} (n\barpn{n}_{k'})^{-\gamma} + \gamma (\xi_{k'} +
    n\barpn{n}_{k'})^{-1-\gamma}
  \end{align*}
}{\begin{align*}
    &\bd_k (1 + n\barpn{n}_k)^{-\gamma}
    - \bd_{k'} (1 + n\barpn{n}_{k'})^{-\gamma} \\
    =&
    \bd_k (n\barpn{n}_k)^{-\gamma} - \gamma (\xi_k + n\barpn{n}_k)^{-1-\gamma}
    - \bd_{k'} (n\barpn{n}_{k'})^{-\gamma} + \gamma (\xi_{k'} +
    n\barpn{n}_{k'})^{-1-\gamma}
  \end{align*}
}
  where $\xi_k, \xi_{k'} \in [0, 1]$. Using the fact that $x^{-1-\gamma}$ is
  decreasing,
  \begin{align*}
    |
    \bd_k (1 + n\barpn{n}_k)^{-\gamma}
    - \bd_{k'} (1 + n\barpn{n}_{k'})^{-\gamma}
    |
    &\le O\p{\frac{\sqrt{\log n}}{n^{1/2+\gamma}}} \\
    |
    \bd_k (n\barpn{n}_k)^{-\gamma}
    - \bd_{k'} (n\barpn{n}_{k'})^{-\gamma}
    |
    - O(n^{-1-\gamma})
    &\le O\p{\frac{\sqrt{\log n}}{n^{1/2+\gamma}}} \\
    |
    \bd_k (n\barpn{n}_k)^{-\gamma}
    - \bd_{k'} (n\barpn{n}_{k'})^{-\gamma}
    |
    &\le O\p{\frac{\sqrt{\log n}}{n^{1/2+\gamma}}} \\
    |
    \bd_k (\barpn{n}_k)^{-\gamma}
    - \bd_{k'} (\barpn{n}_{k'})^{-\gamma}
    |
    &\le O\p{\sqrt{\frac{\log n}{n}}}.
    \numberthis \label{eq:dk-pk-pairwise-bound}
  \end{align*}
  Define
  \begin{align*}
    \ckn
    &\triangleq \bd_k (\barpn{n}_k)^{-\gamma} \\
    \Cn
    &\triangleq \frac{1}{K} \sum_{i=1}^K \ckn \\
    D_\gamma
    &= \sum_{k=1}^K \bd_k^{1/\gamma}
  \end{align*}
  By~\eqref{eq:dk-pk-pairwise-bound},
  \begin{align*}
    |\ckn - \ckpn|
    &\le O\p{\sqrt{\frac{\log n}{n}}}.
  \end{align*}
  Therefore,
  \begin{align*}
    \ckn
    &= \Cn + \dkn,
  \end{align*}
  where $\dkn \le O\p{\sqrt{\frac{\log n}{n}}}$.
  Next,
  \begin{align*}
    1
    &= \sum_{k=1}^K \barpn{n}_k \\
    &= \sum_{k=1}^K \p{\frac{\bd_k}{\ckn}}^{1/\gamma} \\
    &= \sum_{k=1}^K \p{\frac{\bd_k}{\Cn + \dkn}}^{1/\gamma} \\
    &= \sum_{k=1}^K \p{\frac{\bd_k}{\Cn}}^{1/\gamma} \p{1 +
    \frac{\dkn}{\Cn}}^{-1/\gamma} \\
    &= {\Cn}^{-1/\gamma} \sum_{k=1}^K \bd_k^{1/\gamma}
    \p{1 + \frac{\dkn}{\Cn}}^{-1/\gamma}
  \end{align*}
  A Taylor expansion yields
  \begin{align*}
    \p{1 + \frac{\dkn}{\Cn}}^{-1/\gamma}
    &= 1 - \frac{1}{\gamma} \frac{\dkn}{\Cn} \pm O\p{\p{\frac{\dkn}{\Cn} }^2} \\
    &= 1 - \frac{1}{\gamma} \frac{\dkn}{\Cn} \pm O\p{\frac{\log n}{n}} \\
    &= 1 \pm O\p{\sqrt{\frac{\log n}{n}}}
  \end{align*}
  Therefore,
  \begin{align*}
    1
    &= {\Cn}^{-1/\gamma} \sum_{k=1}^K \bd_k^{1/\gamma}
    \p{1 \pm O\p{\sqrt{\frac{\log n}{n}}}} \\
    {\Cn}^{1/\gamma}
    &= D_\gamma \pm \sum_{k=1}^K \bd_k^{1/\gamma}O\p{\sqrt{\frac{\log n}{n}}} \\
    |{\Cn}^{1/\gamma} - D_\gamma|
    &= O\p{\sqrt{\frac{\log n}{n}}} \\
    |{\Cn} - D_\gamma^{\gamma}|
    &= O\p{\sqrt{\frac{\log n}{n}}} \tag{By \Cref{lem:gamma-power-convergence}}
  \end{align*}
  For any $k$,
  \begin{align*}
    |\ckn - D_\gamma^\gamma|
    &\le |\ckn - \Cn| + |\Cn - D_\gamma^\gamma| \\
    &\le O\p{\sqrt{\frac{\log n}{n}}}.
  \end{align*}
  With this, we can bound
  \begin{align*}
    |\barpn{n}_k - \bplim_k|
    &= \left| \p{\frac{\bd_k}{\ckn}}^{1/\gamma} -
    \p{\frac{\bd_k}{D_\gamma^\gamma}}^{1/\gamma} \right| \\
    &= \bd_k^{1/\gamma} \left| {\ckn}^{-1/\gamma} - D_\gamma^{-1} \right|.
  \end{align*}
  Again using \Cref{lem:gamma-power-convergence}, because $|\ckn -
  D_\gamma^\gamma| \le O\p{\sqrt{\frac{\log n}{n}}}$,
  \begin{align*}
    \left| {\ckn}^{-1/\gamma} - D_\gamma^{-1} \right|
    &\le O\p{\sqrt{\frac{\log n}{n}}}.
  \end{align*}
  Therefore,
  \begin{align*}
    |\barpn{n}_k - \bplim_k|
    &\le O\p{\sqrt{\frac{\log n}{n}}}.
  \end{align*}
\end{proof}

\begin{lemma}
  \label{lem:lower-bound-on-p}
  Choose $\{\bPn{n}\}_{n=2}^\infty$ such that $\bPn{n} \in
  \eqsetn{n}$. Let $\barpn{n} = \frac{1}{n} \sum_{i=1}^n \bPn{n}(i)$.
  Then, $\barpn{n}_k = \Omega(1)$ for all $k \in [K]$.
\end{lemma}
\begin{proof}
  We proceed by induction on $k$. The constraint $\bP(i) \in \Delta(\pi)$
  implies $\bP(i)_1 \ge \frac{1}{K}$. Therefore, for any $n$, $\barpn{n}_1 \ge
  \frac{1}{K}$.

  Assume that $\barpn{n}_k = \Omega(1)$.
  This means that there exist $a_k, n_k$
  such that $\barpn{n}_k \ge a_k$ for all $n \ge n_k$. Assume towards
  contradiction that $\barpn{n}_{k+1} = o(1)$. Then, there must be an infinite
  sequence $\{n_j^*\}_{j=1}^\infty$ such that $\barpn{n_j^*} < a_k$. In the
  remainder of this analysis, we restrict ourselves to $n^*$'s coming from this
  sequence.

  Because $\barpn{n^*}_k > \barpn{n^*}_{k+1}$, there exists some player $i$ such
  that $\bPn{n^*}_k(i) > \bPn{n^*}_{k+1}(i)$. Because $\bPn{n^*} \in \eqsetn{n^*}$,
  \begin{align*}
    \bPn{n^*}_k(i) > \bPn{n^*}_{k+1}(i) \Longrightarrow
    V_k(\bPn{n^*}(-i)) \ge V_{k+1}(\bPn{n^*}(-i)),
  \end{align*}
  because otherwise player $i$ could deviate by shifting mass from $k$ to $k+1$.
  Thus, for player $i$,
  \begin{equation}
    \label{eq:player-i-deviate}
    V_k(\bPn{n^*}(-i)) \ge V_{k+1}(\bPn{n^*}(-i)).
  \end{equation}
  We will show that this entails a contradiction.

  Let $V_k^{(i)} = V_k(\bPn{n^*}(-i))$, and define $V_{k+1}^{(i)}$ analogously.
  By definition,
  \begin{align*}
    V_k^{(i)}
    &= \bd_k \E{\frac{1}{(1+X_k(\bP^{(n^*)}(-i)))^{\gamma}}}.
  \end{align*}
  Because $\bPn{n^*} \ge a_k$ by assumption $X_k(\bPn{n^*}(-i))$ stochastically
  dominates $X(n^*-1, a_k)$, so
  \begin{align*}
    V_k^{(i)}
    &\ge \bd_k \E{\frac{1}{(1 + X(n^*-1, a_k))^\gamma}}.
  \end{align*}
  Let $\mu_k^{(n^*)} = (n^*-1)a_k$. Using \Cref{lem:pb-expectation},
  \ifthenelse{\boolean{smallEqs}}{
  \begin{align*}
    \bd_k &\E{\frac{1}{(1 + X(n^*-1, a_k))^{-\gamma}}} \\
    &\le \frac{\bd_k}{(1 + \mu_k^{(n^*)})^\gamma}\p{1 + O\p{\sqrt{\frac{\log
    \mu_k^{(n^*)}}{\mu_k^{(n^*)}}}}} \\
    &= \frac{\bd_k}{(1 + (n^*-1)a_k)^\gamma}\p{1 + O\p{\sqrt{\frac{\log
    n^*}{n^*}}}} \\
  \end{align*}
}{\begin{align*}
    \bd_k \E{\frac{1}{(1 + X(n^*-1, a_k))^{-\gamma}}}
    &\le \frac{\bd_k}{(1 + \mu_k^{(n^*)})^\gamma}\p{1 + O\p{\sqrt{\frac{\log
    \mu_k^{(n^*)}}{\mu_k^{(n^*)}}}}} \\
    &= \frac{\bd_k}{(1 + (n^*-1)a_k)^\gamma}\p{1 + O\p{\sqrt{\frac{\log
    n^*}{n^*}}}} \\
  \end{align*}
}
  Therefore,
  \begin{equation}
    \label{eq:V_k-upper-bound}
    V_k^{(i)}
    \le \bd_k (1 + (n^*-1))^{-\gamma} + O\p{\sqrt{\frac{\log
    n^*}{n^*}}}.
  \end{equation}
  Let $\mu_{k+1}^{(n^*)} = (n^*-1)\barpn{n^*}_{k+1}$. Also
  using~\Cref{lem:pb-expectation},
  \begin{align*}
    V_{k+1}^{(i)}
    &= \bd_{k+1} \E{\frac{1}{(1+X_{k+1}(\bP^{(n^*)}(-i)))^{\gamma}}} \\
    &\ge \bd_{k+1} (1 + \mu_{k+1}^{(n^*)})^{-\gamma}
  \end{align*}
Combining with~\eqref{eq:player-i-deviate} and~\eqref{eq:V_k-upper-bound},
\ifthenelse{\boolean{smallEqs}}{
  \begin{align*}
    &\bd_k (1 + (n^*-1)a_k)^{-\gamma}\p{1 + O\p{\sqrt{\frac{\log n^*}{n^*}}}} \\
    &\ge \bd_{k+1} (1 + \mu_{k+1}^{(n^*)})^{-\gamma} \\
    &\frac{\bd_k}{\bd_{k+1}} \p{1 + O\p{\sqrt{\frac{\log
    n^*}{n^*}}}}
    \ge \p{\frac{1 + (n^*-1)a_k}{1 + \mu_{k+1}^{(n^*)}}}^\gamma
  \end{align*}
}{\begin{align*}
    \bd_k (1 + (n^*-1)a_k)^{-\gamma}\p{1 + O\p{\sqrt{\frac{\log
    n^*}{n^*}}}}
    &\ge \bd_{k+1} (1 + \mu_{k+1}^{(n^*)})^{-\gamma} \\
    \frac{\bd_k}{\bd_{k+1}} \p{1 + O\p{\sqrt{\frac{\log
    n^*}{n^*}}}}
    &\ge \p{\frac{1 + (n^*-1)a_k}{1 + \mu_{k+1}^{(n^*)}}}^\gamma
  \end{align*}
}
  Observe that as $n^* \to \infty$, the left hand side is upper-bounded by a
  constant. Under the assumption that $\mu_{k+1}^{(n^*)} = o(n^*)$, the right
  hand side goes to infinity, which is a contradiction. Therefore, the inductive
  step holds, and $\barpn{n}_{k+1} = \Omega(1)$.
\end{proof}

\begin{lemma}
  \label{lem:V-k-asymptotic}
  Choose $\{\bPn{n}\}_{n=2}^\infty$ such that $\bPn{n} \in
  \eqsetn{n}$. Then,
  \ifthenelse{\boolean{smallEqs}}{
  \begin{align*}
    \bd_k
    &(1 + \mu_k(\bPn{n}))^{-\gamma} \\
    &\le V_k(\bPn{n}) \\
    &\le \bd_k (1 + \mu_k(\bPn{n}))^{-\gamma} \p{1 + O\p{\sqrt{\frac{\log
    n}{n}}}}.
  \end{align*}
}{\begin{align*}
    \bd_k (1 + \mu_k(\bPn{n}))^{-\gamma}
    \le V_k(\bPn{n})
    \le \bd_k (1 + \mu_k(\bPn{n}))^{-\gamma} \p{1 + O\p{\sqrt{\frac{\log
    n}{n}}}}.
  \end{align*}
}
  Moreover, for any player $i$,
  \ifthenelse{\boolean{smallEqs}}{
  \begin{align*}
    \bd_k
    &(1 + \mu_k(\bPn{n}(-i)))^{-\gamma} \\
    &\le V_k(\bPn{n}(-i)) \\
    &\le \bd_k (1 + \mu_k(\bPn{n}(-i)))^{-\gamma} \p{1 + O\p{\sqrt{\frac{\log
    n}{n}}}}.
  \end{align*}
}{\begin{align*}
    \bd_k (1 + \mu_k(\bPn{n}(-i)))^{-\gamma}
    \le V_k(\bPn{n}(-i))
    \le \bd_k (1 + \mu_k(\bPn{n}(-i)))^{-\gamma} \p{1 + O\p{\sqrt{\frac{\log
    n}{n}}}}.
  \end{align*}
}
\end{lemma}
\begin{proof}
  This follows directly from \Cref{lem:pb-expectation}, using the result from
  \Cref{lem:lower-bound-on-p} that $\mu_k(\bPn{n}) = \Omega(n)$.
\end{proof}

\begin{lemma}
  \label{lem:V-increase-bound}
  Assume without loss of generality that $\bd_k$ are decreasing.
  Choose $\{\bPn{n}\}_{n=2}^\infty$ such that $\bPn{n} \in
  \eqsetn{n}$. Then, for all $k < k'$,
  \begin{align*}
    V_{k'}(\bPn{n})\le V_k(\bPn{n}) + O(n^{-1-\gamma}).
  \end{align*}
\end{lemma}
\begin{proof}
  First, observe that
  \ifthenelse{\boolean{smallEqs}}{
  \begin{align*}
    &V_k(\bPn{n}(-i)) - V_k(\bPn{n}) \\
    &= \bd_k \E{(1 + X_k(\bPn{n}(-i)))^{-\gamma} - (1 + X_k(\bPn{n}))^{-\gamma}}
    \\
    &= \bd_k \bPn{n}(i)_k \cdot \\
    &\E{(1 + X_k(\bPn{n}(-i)))^{-\gamma} - (2 +
    X_k(\bPn{n}(-i)))^{-\gamma}}
    \numberthis \label{eq:V-i-diff}
  \end{align*}
}{\begin{align*}
    V_k(\bPn{n}(-i)) - V_k(\bPn{n})
    &= \bd_k \E{(1 + X_k(\bPn{n}(-i)))^{-\gamma} - (1 + X_k(\bPn{n}))^{-\gamma}}
    \\
    &= \bd_k \bPn{n}(i)_k \E{(1 + X_k(\bPn{n}(-i)))^{-\gamma} - (2 +
    X_k(\bPn{n}(-i)))^{-\gamma}}
    \numberthis \label{eq:V-i-diff}
  \end{align*}
}
  Let $b(x) = \E{(1 + x)^{-\gamma} - (2 + x)^{-\gamma}}$. Note that $b(x) \ge 0$
  and $b'(x) \le 0$.
  Let $\delta_k(i; n) = \E{(1 + X_k(\bPn{n}(-i)))^{-\gamma} - (2 +
  X_k(\bPn{n}(-i)))^{-\gamma}} = \E{b(X_k(\bPn{n}(-i)))}$. 
  Because $X_k(\bPn{n}(-i))$ stochastically dominates
  $X_{k+1}(\bPn{n}(-i))$,
  \begin{equation}
    \label{eq:delta-bound}
    \delta_k(i; n) \le \delta_{k+1}(i; n).
  \end{equation}
  We will show that for any $k$,
  \begin{equation}
    \label{eq:V-diff-bound}
    V_{k+1}(\bPn{n}) - V_k(\bPn{n}) \le \frac{\bd_k}{k} \delta_k(i; n).
  \end{equation}
  First, we handle the case where $\barpn{n}_k = \barpn{n}_{k+1}$. In this case,
  we must have $\bPn{n}_k = \bPn{n}_{k+1}$ because of the constraint $\bPn{n}(i)
  \in \Delta(\pi)$ for all $i$. As a result, $X_k(\bPn{n})$ and
  $X_{k+1}(\bPn{n})$ are identically distributed, so
  \ifthenelse{\boolean{smallEqs}}{
  \begin{align*}
    V_{k+1}
    &(\bPn{n}) - V_k(\bPn{n}) \\
    &= \bd_{k+1} \E{(1 + X_k(\bPn{n}))^{-\gamma}} - \bd_k \E{(1 +
    X_k(\bPn{n}))^{-\gamma}} \\
    &\le 0.
  \end{align*}
}{\begin{align*}
    V_{k+1}(\bPn{n}) - V_k(\bPn{n})
    &= \bd_{k+1} \E{(1 + X_k(\bPn{n}))^{-\gamma}} - \bd_k \E{(1 +
    X_k(\bPn{n}))^{-\gamma}} \\
    &\le 0.
  \end{align*}
}

  Next, we handle the case where $\barpn{n}_k > \barpn{n}_{k+1}$.
  Assume towards contradiction that
  \begin{align*}
    V_{k+1}(\bPn{n}) - V_k(\bPn{n})
    &> \frac{\bd_k}{k} \delta_k(i; n).
  \end{align*}
  This implies that for all $i$,
  \ifthenelse{\boolean{smallEqs}}{
  \begin{align*}
    V_{k+1}(\bPn{n}(-i)) - \bd_{k+1} \bPn{n}(i)_{k+1} \delta_{k+1}(i; n) \\
    - (V_k(\bPn{n}(-i)) - \bd_k \bPn{n}(i)_k \delta_k(i; n)) \\
    > \frac{\bd_k}{k} \delta_k(i; n).
  \end{align*}
}{\begin{align*}
    V_{k+1}(\bPn{n}(-i)) - \bd_{k+1} \bPn{n}(i)_{k+1} \delta_{k+1}(i; n)
    - (V_k(\bPn{n}(-i)) - \bd_k \bPn{n}(i)_k \delta_k(i; n))
    &> \frac{\bd_k}{k} \delta_k(i; n).
  \end{align*}
}
  Rearranging,
  \ifthenelse{\boolean{smallEqs}}{
  \begin{align*}
    &V_{k+1}(\bPn{n}(-i)) \\
    &> V_k(\bPn{n}(-i)) + \bd_{k+1} \bPn{n}(i)_{k+1} \delta_{k+1}(i; n) \\
    &- \bd_k \bPn{n}(i)_k \delta_k(i; n) + \frac{\bd_k}{k} \delta_k(i; n) \\
    &\ge V_k(\bPn{n}(-i)) + \bd_{k+1} \bPn{n}(i)_{k+1} \delta_k(i; n) \\
    &- \bd_k \bPn{n}(i)_k \delta_k(i; n) + \frac{\bd_k}{k} \delta_k(i; n) \\
    &= V_k(\bPn{n}(-i)) + \delta_k(i; n) \cdot \\
    &\p{\bd_{k+1} \bPn{n}(i)_{k+1} - \bd_k \bPn{n}(i)_k + \frac{\bd_k}{k}} \\
    &\ge V_k(\bPn{n}(-i)) + \delta_k(i; n) \p{\frac{\bd_k}{k} - \bd_k
    \bPn{n}(i)_k} \\
    &\ge V_k(\bPn{n}(-i)) + \delta_k(i; n) \bd_k\p{\frac{1}{k} - \bPn{n}(i)_k} \\
    &\ge V_k(\bPn{n}(-i)) \tag{$\bPn{n}(i)_k \le \frac{1}{k}$}
  \end{align*}
}{\begin{align*}
    V_{k+1}(\bPn{n}(-i))
    &> V_k(\bPn{n}(-i)) + \bd_{k+1} \bPn{n}(i)_{k+1} \delta_{k+1}(i; n) - \bd_k
    \bPn{n}(i)_k \delta_k(i; n) + \frac{\bd_k}{k} \delta_k(i; n) \\
    &\ge V_k(\bPn{n}(-i)) + \bd_{k+1} \bPn{n}(i)_{k+1} \delta_k(i; n) - \bd_k
    \bPn{n}(i)_k \delta_k(i; n) + \frac{\bd_k}{k} \delta_k(i; n) \\
    &= V_k(\bPn{n}(-i)) + \delta_k(i; n) \p{\bd_{k+1} \bPn{n}(i)_{k+1} - \bd_k
    \bPn{n}(i)_k + \frac{\bd_k}{k}} \\
    &\ge V_k(\bPn{n}(-i)) + \delta_k(i; n) \p{\frac{\bd_k}{k} - \bd_k
    \bPn{n}(i)_k} \\
    &\ge V_k(\bPn{n}(-i)) + \delta_k(i; n) \bd_k\p{\frac{1}{k} - \bPn{n}(i)_k} \\
    &\ge V_k(\bPn{n}(-i)) \tag{$\bPn{n}(i)_k \le \frac{1}{k}$}
  \end{align*}
}
  Under the assumption that $\barpn{n}_k > \barpn{n}_{k+1}$, there must be some
  $i$ such that $\bPn{n}(i)_k > \bPn{n}(i)_{k+1}$. This player $i$ could
  transfer mass from $\bPn{n}(i)_{k+1}$ to $\bPn{n}(i)_k$ and strictly increase
  their utility without violating the $\bPn{n}(i) \in \Delta(\pi)$ constraint.
  This contradicts the fact that $\bPn{n}$ is an equilibrium and
  proves~\eqref{eq:V-diff-bound}.

  We can use this to derive
  \begin{align*}
    V_{k'}(\bpn{n}) - V_k(\bpn{n})
    &= \sum_{\ell=k}^{k'-1} V_{\ell+1} - V_\ell \\
    &\le \sum_{\ell=k}^{k'-1} \frac{\bd_k}{k} \delta_k(i; n) \\
    &\le \bd_1 \sum_{\ell=1}^{k-1} \frac{\delta_k(i; n)}{k}
    \numberthis \label{eq:V-bound-sum-delta}
  \end{align*}
  We bound $\delta_k(i; n)$ as follows:
  \ifthenelse{\boolean{smallEqs}}{
  \begin{align*}
    &\delta_k(i; n) \\
    &= \E{(1 + X_k(\bPn{n}(-i)))^{-\gamma} - (2 + X_k(\bPn{n}(-i)))^{-\gamma}}
    \\
    &= \E{\gamma(1 + \xi(X_k(\bPn{n}(-i))) + X_k(\bPn{n}(-i)))^{-1-\gamma}}
  \end{align*}
}{\begin{align*}
    \delta_k(i; n)
    &= \E{(1 + X_k(\bPn{n}(-i)))^{-\gamma} - (2 + X_k(\bPn{n}(-i)))^{-\gamma}}
    \\
    &= \E{\gamma(1 + \xi(X_k(\bPn{n}(-i))) + X_k(\bPn{n}(-i)))^{-1-\gamma}}
  \end{align*}
}
  for some $\xi(X_k(\bPn{n}(-i))) \in [0, 1]$ by Taylor's theorem. Because the
  function in our expectation is monotone decreasing,
  \begin{align*}
    \delta_k(i; n)
    &\le \E{\gamma(1 + X_k(\bPn{n}(-i)))^{-1-\gamma}} \\
    &\le \gamma (1 + \mu_k(\bPn{n}(-i))^{-1-\gamma} \p{1 + O\p{\sqrt{\frac{\log
    n}{n}}}} \tag{\Cref{lem:pb-expectation,lem:lower-bound-on-p}} \\
    &\le O\p{n^{-1-\gamma}}
    \numberthis \label{eq:delta-k-bound}
  \end{align*}
  where the last line again uses \Cref{lem:lower-bound-on-p}. In other words,
  because $\barpn{n}_k = \Omega(1)$, the terms in question concentrate
  asymptotically.

  Putting this into~\eqref{eq:V-bound-sum-delta},
  \begin{align*}
    V_{k'}(\bpn{n}) - V_k(\bpn{n})
    &\le \bd_1 \sum_{\ell=1}^{k-1} \frac{\delta_k(i; n)}{k} \\
    &\le \bd_1 O(n^{-1-\gamma}) \sum_{\ell=1}^{k-1} \frac{1}{k} \\
    &\le \bd_1 (\log K + 1) O(n^{-1-\gamma}) \\
    &\le O(n^{-1-\gamma}).
  \end{align*}
\end{proof}

\begin{lemma}
  \label{lem:V-decrease-bound}
  Assume without loss of generality that $\bd_k$ are decreasing.
  Choose $\{\bPn{n}\}_{n=2}^\infty$ such that $\bPn{n} \in
  \eqsetn{n}$. Then,
  \begin{align*}
    V_{1}(\bPn{n}) \le V_K(\bPn{n}) + O(n^{-1-\gamma}).
  \end{align*}
\end{lemma}
\begin{proof}
  Here, we show that $V_K(\bPn{n})$ can't be ``too much'' smaller than
  $V_1(\bPn{n})$. This holds because any player can always transfer
  mass from $\bPn{n}(i)_K$ to $\bPn{n}(i)_1$, so for all $i$ such that
  $\bPn{n}(i)_K > 0$,
  \begin{align*}
    V_K(\bPn{n}(-i)) \ge V_1(\bPn{n}(-i)).
  \end{align*}
  Because $\barpn{n}_K = \Omega(1)$, asymptotically, there must exist some
  player $i$ such that $\bPn{n}(i)_K > 0$. Choose such a player $i$.
  Using~\eqref{eq:V-i-diff} twice,
  \ifthenelse{\boolean{smallEqs}}{
  \begin{align*}
    V_K(\bPn{n}) - \bd_K \bPn{n}(i)_K \delta_K(i; n)
    &\ge V_1(\bPn{n}) \\
    &- \bd_1 \bPn{n}(i)_1 \delta_1(i; n) \\
    V_K(\bPn{n})
    &\ge V_1(\bPn{n}) - \bd_1 \delta_1(i; n) \\
    V_K(\bPn{n}) + O(n^{-1-\gamma})
    &\ge V_1(\bPn{n}) \tag{By~\eqref{eq:delta-k-bound}}
  \end{align*}
}{\begin{align*}
    V_K(\bPn{n}) - \bd_K \bPn{n}(i)_K \delta_K(i; n)
    &\ge V_1(\bPn{n}) - \bd_1 \bPn{n}(i)_1 \delta_1(i; n) \\
    V_K(\bPn{n})
    &\ge V_1(\bPn{n}) - \bd_1 \delta_1(i; n) \\
    V_K(\bPn{n}) + O(n^{-1-\gamma})
    &\ge V_1(\bPn{n}) \tag{By~\eqref{eq:delta-k-bound}}
  \end{align*}
}
\end{proof}

\begin{lemma}
  \label{lem:V-overall-bound}
  Choose $\{\bPn{n}\}_{n=2}^\infty$ such that $\bPn{n} \in \eqsetn{n}$. Then,
  for all $k < k'$,
  \begin{align*}
    |V_{k}(\bPn{n}) - V_{k'}(\bPn{n})| \le O(n^{-1-\gamma}).
  \end{align*}
\end{lemma}
\begin{proof}
  We must show
  \begin{align*}
    V_{k'}(\bPn{n}) - V_k(\bPn{n})
    &\le O(n^{-1-\gamma}) \\
    V_k(\bPn{n}) - V_{k'}(\bPn{n})
    &\le O(n^{-1-\gamma})
  \end{align*}
  The first inequality follows directly from \Cref{lem:V-increase-bound}. For
  the second, we proceed as follows:
  Therefore,
  \begin{align*}
    V_k(\bPn{n})
    &\le V_1(\bPn{n}) + O(n^{-1-\gamma})
    \tag{By \Cref{lem:V-increase-bound}}
    \\
    &\le V_K(\bPn{n}) + O(n^{-1-\gamma})
    \tag{By \Cref{lem:V-decrease-bound}} \\
    &\le V_{k'}(\bPn{n}) + O(n^{-1-\gamma}).
    \tag{By \Cref{lem:V-increase-bound}}
  \end{align*}
\end{proof}

\begin{lemma}
  \label{lem:gamma-power-convergence}
  Suppose we have:
  \begin{itemize}
    \item a sequence of nonnegative values $\{a_n\}_{n=1}^\infty$,
    \item positive constant $b > 0$,
    \item real number $\gamma$,
    \item $f(n)$ such that $f(n)$ is monotonically decreasing and $\lim_{n \to
      \infty} f(n) = 0$.
  \end{itemize}
  If $|a_n - b| \le O(f(n))$, then $|a_n^\gamma - b^\gamma| \le O(f(n))$.
\end{lemma}
\begin{proof}
  If $\gamma = 0$, the claim is trivially true. Assume $\gamma \ne 0$.
  Let $g(x) = x^\gamma$. Then,
  \begin{align*}
    a_n^\gamma - b^\gamma
    &= g(a_n) - g(b) \\
    &= g'(z_n)(a_n - b) \tag{Mean value theorem, $z_n$ between $a_n$ and $b$} \\
    &= \gamma z_n^{\gamma-1} (a_n - b)
  \end{align*}
  For sufficiently large $n$, $b/2 \le a_n \le 2b$, so
  \begin{align*}
    |\gamma z_n^{\gamma - 1}| \le |\gamma| (2b)^{|\gamma - 1|} \triangleq C,
  \end{align*}
  which is a constant. Therefore,
  \begin{align*}
    |a_n^\gamma - b^\gamma|
    &\le C |a_n - b| \\
    &\le O(f(n))
  \end{align*}
\end{proof}

\begin{lemma}
  \label{lem:pb-expectation}
  Let $Y$ be a Poisson binomial random variable: $Y = \sum_{i=1}^n Y_i$, where
  each $Y_i$ is a Bernoulli with parameter $p_i$. Let $\mu = \sum_{i=1}^n p_i$.
  Then, for any fixed $\gamma > 0$,
  \begin{align*}
    (1+\mu)^{-\gamma}
    \le \E{(1+Y)^{-\gamma}}
    \le (1 + \mu)^{-\gamma} \p{1 + O\p{\sqrt{\frac{\log \mu}{\mu}}}}.
  \end{align*}
\end{lemma}
\begin{proof}
  The lower bound follows from Jensen's inequality. For the upper bound, we
  use concentration. For any $\delta \in (0, 1)$,
  \ifthenelse{\boolean{smallEqs}}{
  \begin{align*}
    \E{(1+Y)^{-\gamma}}
    &= \E{(1+Y)^{-\gamma} \cdot \ind{Y > (1-\delta) \mu}} \\
    &+ \E{(1+Y)^{-\gamma} \cdot \ind{Y \le (1-\delta) \mu}}.
    \numberthis \label{eq:Y-breakdown}
  \end{align*}
}{\begin{equation}
    \label{eq:Y-breakdown}
    \E{(1+Y)^{-\gamma}}
    = \E{(1+Y)^{-\gamma} \cdot \ind{Y > (1-\delta) \mu}}
    + \E{(1+Y)^{-\gamma} \cdot \ind{Y \le (1-\delta) \mu}}.
  \end{equation}
}
  A Chernoff bound yields
  \begin{align*}
    \Pr[Y \le (1-\delta) \mu]
    \le e^{-\frac{\mu \delta^2}{2}}.
  \end{align*}
  Because $(1+Y)^{-\gamma} \in [0, 1]$,
  \ifthenelse{\boolean{smallEqs}}{
  \begin{align*}
    &\E{(1+Y)^{-\gamma} \cdot \ind{Y \le (1-\delta) \mu}} \\
    &= \E{(1+Y)^{-\gamma} \given Y \le (1-\delta) \mu} \Pr[Y \le (1-\delta) \mu]
    \\
    &\le e^{-\frac{\mu \delta^2}{2}}.
    \numberthis \label{eq:Y-prob-bound}
  \end{align*}
}{\begin{align*}
    \E{(1+Y)^{-\gamma} \cdot \ind{Y \le (1-\delta) \mu}}
    &= \E{(1+Y)^{-\gamma} \given Y \le (1-\delta) \mu} \Pr[Y \le (1-\delta) \mu]
    \\
    &\le e^{-\frac{\mu \delta^2}{2}}.
    \numberthis \label{eq:Y-prob-bound}
  \end{align*}
}
  Putting this into~\eqref{eq:Y-breakdown},
  \begin{equation}
    \label{eq:Y-first-bound}
    \E{(1+Y)^{-\gamma}}
    \le \E{(1+Y)^{-\gamma} \cdot \ind{Y > (1-\delta) \mu}}
    + e^{-\frac{\mu \delta^2}{2}}.
  \end{equation}
  Next, we use the fact that $(1+y)^{-\gamma}$ is decreasing in $y$ to get
  \ifthenelse{\boolean{smallEqs}}{
  \begin{align*}
    &\E{(1+Y)^{-\gamma} \cdot \ind{Y > (1-\delta) \mu}} \\
    &\le \E{(1+Y)^{-\gamma} \given Y > (1-\delta) \mu} \\
    &\le (1+(1-\delta)\mu)^{-\gamma}.
  \end{align*}
}{\begin{align*}
    \E{(1+Y)^{-\gamma} \cdot \ind{Y > (1-\delta) \mu}}
    &\le \E{(1+Y)^{-\gamma} \given Y > (1-\delta) \mu} \\
    &\le (1+(1-\delta)\mu)^{-\gamma}.
  \end{align*}
}
  Let $f(y) = (1+y)^{-\gamma}$. By Taylor's theorem,
  \begin{align*}
    (1+(1-\delta)\mu)^{-\gamma}
    &= (1 - \delta \mu + \mu)^{-\gamma} \\
    &= (1 + \mu)^{-\gamma} + f'(\xi)(-\delta \mu)
  \end{align*}
  for some $\xi \in [\mu - \delta \mu, \mu]$. Using the fact that $f'(y)
  = -\gamma (1+y)^{-1-\gamma}$ (which is an increasing function),
  \begin{align*}
    (1+(1-\delta)\mu)^{-\gamma}
    &= (1 + \mu)^{-\gamma} +f'(\xi) (-\delta \mu) \\
    &= (1 + \mu)^{-\gamma} + \gamma (1+\xi)^{-1-\gamma} (\delta \mu) \\
    &\le (1 + \mu)^{-\gamma} + \gamma (1 - \delta \mu + \mu)^{-1-\gamma} (\delta
    \mu) \\
    &= (1 + \mu)^{-\gamma} + \gamma \delta \mu (1 - \delta \mu +
    \mu)^{-1-\gamma}.
  \end{align*}
  Putting this into~\eqref{eq:Y-first-bound},
  \begin{align*}
    \E{(1+Y)^{-\gamma}}
    &\le (1 + \mu)^{-\gamma} + \gamma \delta \mu (1 - \delta \mu +
    \mu)^{-1-\gamma}
    + e^{-\frac{\mu \delta^2}{2}}.
  \end{align*}
  Choose $\delta = \sqrt{\frac{2(1+\gamma) \log \mu}{\mu}}$. Then,
  \ifthenelse{\boolean{smallEqs}}{
  \begin{align*}
    &\E{(1+Y)^{-\gamma}} \\
    &\le (1 + \mu)^{-\gamma} \\
    &+ \gamma \sqrt{\frac{2(1+\gamma) \log \mu}{\mu}}
    \mu \p{1 - \sqrt{\frac{2(1+\gamma) \log \mu}{\mu}} \mu + \mu}^{-1-\gamma} \\
    &+ e^{-\frac{\mu \frac{2(1+\gamma)\log \mu}{\mu}}{2}} \\
    &= (1 + \mu)^{-\gamma} \\
    &+ \gamma \sqrt{2(1+\gamma) \mu \log \mu} \p{1 - \sqrt{2(1+\gamma) \mu \log
    \mu} + \mu}^{-1-\gamma} \\
    &+ e^{- \log \mu} \\
    &= (1 + \mu)^{-\gamma} \\
    &+ \gamma \sqrt{2(1+\gamma) \mu \log \mu} \p{1 - \sqrt{2(1+\gamma) \mu \log
    \mu} + \mu}^{-1-\gamma} \\
    &+ \mu^{-1-\gamma} \\
    &= (1 + \mu)^{-\gamma} \cdot \\
    &\left(1 + \gamma \sqrt{2(1+\gamma) \mu \log \mu}
      \p{\frac{1+\mu}{1 - \sqrt{2(1+\gamma) \mu \log \mu} + \mu}}^{\gamma}
      \cdot \right. \\
    &\left. \p{1 - \sqrt{2(1+\gamma) \mu \log \mu} + \mu}^{-1}
    + \p{1 + \frac{1}{\mu}}^\gamma \mu^{-1}\right)
  \end{align*}
}{\begin{align*}
    &\E{(1+Y)^{-\gamma}} \\
    &\le (1 + \mu)^{-\gamma} + \gamma \sqrt{\frac{2(1+\gamma) \log \mu}{\mu}}
    \mu \p{1 - \sqrt{\frac{2(1+\gamma) \log \mu}{\mu}} \mu + \mu}^{-1-\gamma}
    + e^{-\frac{\mu \frac{2(1+\gamma)\log \mu}{\mu}}{2}} \\
    &= (1 + \mu)^{-\gamma} + \gamma \sqrt{2(1+\gamma) \mu \log \mu} \p{1 -
    \sqrt{2(1+\gamma) \mu \log \mu} + \mu}^{-1-\gamma}
    + e^{- \log \mu} \\
    &= (1 + \mu)^{-\gamma} + \gamma \sqrt{2(1+\gamma) \mu \log \mu} \p{1 -
    \sqrt{2(1+\gamma) \mu \log \mu} + \mu}^{-1-\gamma}
    + \mu^{-1-\gamma} \\
    &= (1 + \mu)^{-\gamma} \left(1 + \gamma \sqrt{2(1+\gamma) \mu \log \mu}
      \p{\frac{1+\mu}{1 - \sqrt{2(1+\gamma) \mu \log \mu} + \mu}}^{\gamma} \p{1
    - \sqrt{2(1+\gamma) \mu \log \mu} + \mu}^{-1} \right. \\
    &\left.+
    \p{1 + \frac{1}{\mu}}^\gamma \mu^{-1}\right)
  \end{align*}
}
  For fixed $\gamma$, this is
  \begin{align*}
    \E{(1 + Y)^{-\gamma}}
    \le (1+\mu)^{-\gamma} \p{1 + O\p{\sqrt{\frac{\log \mu}{\mu}}}}.
  \end{align*}
\end{proof}

\subsection{Omitted proofs for \Cref{sec:poa}}
\label{app:poa}

\citet[Thm. 5]{vetta2002nash} shows that the price of anarchy is at most 2 for any
\textit{valid utility game}. We will show that $\game(n, \bd, s)$ is a valid utility
game.
As before, we denote the $i$th player's strategy by $\bP(i)$, the strategy
matrix for the remaining $n-1$ players by $\bP(-i)$, and for any $S \subseteq
[n]$, $\bP(S)$ denotes the strategy matrix given by restricting to only players
in $S$. We use the same notation as defined in \Cref{sec:support-asymp}.

A valid utility game \citep{vetta2002nash} satisfies the following three
conditions:
\begin{enumerate}
  \item \textbf{Submodularity:} The social welfare function is submodular in the
    set of players participating. For $A \subseteq B \subseteq [n]$,
    \begin{align*}
      W(\bP(A \cup \{i\})) - W(\bP(A)) \ge W(\bP(B \cup \{i\})) - W(\bP(B)).
    \end{align*}
  \item \textbf{Players' marginal contributions are rewarded:} For each player,
    their utility is at least their marginal contribution to social welfare
    given all other players' behaviors:
    \begin{align*}
      U_i(\bP) \ge W(\bP) - W(\bP(-i)).
    \end{align*}
  \item \textbf{Social welfare is at least players' utilities:} The sum of all
    players' utilities is at most the social welfare:
    \begin{align*}
      \sum_{i \in [n]} U_i(\bP) \le W(\bP).
    \end{align*}
\end{enumerate}
With this, we can prove \Cref{thm:poa}.

\poa*
\begin{proof}

  Assume without loss of generality that $\bd$ is decreasing.
  We will show that $\game(n, \bd, s)$ is a valid utility game.

  \paragraph*{Submodularity.}

  It suffices to consider the case where $s \in \sup$, since when $s \in
  \sdown$, social welfare is equivalent to the case $s = s_1 \in \sup$.
  We want to show that for $A \subseteq B \subseteq [n]$,
  \begin{equation}
    \label{eq:W-submodularity}
    W(\bP(A \cup \{i\})) - W(\bP(A))
    \ge W(\bP(B \cup \{i\})) - W(\bP(B)).
  \end{equation}
  To do so, it suffices to show that for each $k \in [K]$,
  \begin{align*}
    \bd_k
     \E{\frac{X_k(\bP(A \cup \{i\}))}{s(X_k(\bP(A \cup
    \{i\})))}}
    - \E{\frac{X_k(\bP(A))}{s(X_k(\bP(A)))}} \\
    \ge
    \bd_k \E{\frac{X_k(\bP(B \cup \{i\}))}{s(X_k(\bP(B \cup
    \{i\})))}}
    - \E{\frac{X_k(\bP(B))}{s(X_k(\bP(B)))}}.
  \end{align*}
  Observe that because $B \supseteq A$,
  \begin{align*}
    X_k(\bP(B)) = X_k(\bP(B \backslash A)) + X_k(\bP(A)).
  \end{align*}
  Because $X_k(\cdot) \ge 0$, this means
  \begin{align*}
    X_k(\bP(B)) \ge X_k(\bP(A)).
  \end{align*}
  Because $s \in \sup$, by assumption, $x/s(x)$ is increasing and concave. For
  any $a \le b$ and $c \ge 0$,
  \begin{align*}
    \frac{a+c}{s(a+c)} - \frac{a}{s(a)} \ge \frac{b+c}{s(b+c)} - \frac{b}{s(b)}.
  \end{align*}
  Thus,
  \ifthenelse{\boolean{smallEqs}}{
  \begin{align*}
    &\frac{X_k(\bP(A) + X_k(\bP(i))}{s(X_k(\bP(A) + X_k(\bP(i)))}
    - \frac{X_k(\bP(A))}{s(X_k(\bP(A)))} \\
    &\ge
    \frac{X_k(\bP(B)) X_k(\bP(i)}{s(X_k(\bP(B)) X_k(\bP(i))}
    - \frac{X_k(\bP(B))}{s(X_k(\bP(B)))}.
  \end{align*}
}{\begin{align*}
    \frac{X_k(\bP(A) + X_k(\bP(i))}{s(X_k(\bP(A) + X_k(\bP(i)))}
    - \frac{X_k(\bP(A))}{s(X_k(\bP(A)))}
    \ge
    \frac{X_k(\bP(B)) X_k(\bP(i)}{s(X_k(\bP(B)) X_k(\bP(i))}
    - \frac{X_k(\bP(B))}{s(X_k(\bP(B)))}.
  \end{align*}
}
  Taking expectations of both sides, multiplying both sides by $\bd_k$, and
  summing over $k \in [K]$ yields~\eqref{eq:W-submodularity}.

  \paragraph*{Players' marginal contributions are rewarded.}

  Next, we must show 
  \begin{equation}
    \label{eq:marginal-cont}
      U_i(\bP) \ge W(\bP) - W(\bP(-i)).
  \end{equation}
 For $s \in \sup$,
 \ifthenelse{\boolean{smallEqs}}{
  \begin{align*}
    &W(\bP) \\
    &= \sum_{k \in [K]} \bd_k \E{\frac{X_k(\bP)}{s(X_k(\bP))}} \\
    &= \sum_{k \in [K]} \bd_k \E{\frac{X_k(\bP(-i)) +
    X_k(\bP(i))}{s(X_k(\bP(-i)) + X_k(\bP(i)))}} \\
    &= \sum_{k \in [K]} \bd_k \p{\E{\E{\frac{X_k(\bP(-i)) +
    X_k(\bP(i))}{s(X_k(\bP(-i)) + X_k(\bP(i)))}} \; \Big| \; X_k(\bP(i))}} \\
    &= \sum_{k \in [K]} \bd_k
    \left(\bP(i)_k \E{\frac{1}{s(1 + X_k(\bP(-i)))}} \right. \\
    &\left.+ (1
    - \bP(i)_k) \E{\frac{X_k(\bP(-i))}{s(X_k(\bP(-i)))}}\right) \\
    &= U_i(\bP) + \sum_{k \in [K]} \bd_k (1 - \bP(i)_k)
    \E{\frac{X_k(\bP(-i))}{s(X_k(\bP(-i)))}} \\
    &\le U_i(\bP) + \sum_{k \in [K]} \bd_k
    \E{\frac{X_k(\bP(-i))}{s(X_k(\bP(-i)))}} \\
    &= U_i(\bP)  + W(\bP(-i)).
  \end{align*}
}{\begin{align*}
    W(\bP)
    &= \sum_{k \in [K]} \bd_k \E{\frac{X_k(\bP)}{s(X_k(\bP))}} \\
    &= \sum_{k \in [K]} \bd_k \E{\frac{X_k(\bP(-i)) +
    X_k(\bP(i))}{s(X_k(\bP(-i)) + X_k(\bP(i)))}} \\
    &= \sum_{k \in [K]} \bd_k \p{\E{\E{\frac{X_k(\bP(-i)) +
    X_k(\bP(i))}{s(X_k(\bP(-i)) + X_k(\bP(i)))}} \; \Big| \; X_k(\bP(i))}} \\
    &= \sum_{k \in [K]} \bd_k \p{\bP(i)_k \E{\frac{1}{s(1 + X_k(\bP(-i)))}} + (1
    - \bP(i)_k) \E{\frac{X_k(\bP(-i))}{s(X_k(\bP(-i)))}}} \\
    &= U_i(\bP) + \sum_{k \in [K]} \bd_k (1 - \bP(i)_k)
    \E{\frac{X_k(\bP(-i))}{s(X_k(\bP(-i)))}} \\
    &\le U_i(\bP) + \sum_{k \in [K]} \bd_k
    \E{\frac{X_k(\bP(-i))}{s(X_k(\bP(-i)))}} \\
    &= U_i(\bP)  + W(\bP(-i)).
  \end{align*}
}
  For $s \in \sdown$,
  \ifthenelse{\boolean{smallEqs}}{
  \begin{align*}
    &W(\bP) \\
    &= \sum_{k \in [K]} \bd_k \Pr[X_k(\bP) > 0] \\
    &= \sum_{k \in [K]} \bd_k \Pr[X_k(\bP(-i)) > 0 \vee X_k(\bP(i)) > 0] \\
    &= \sum_{k \in [K]} \bd_k (\Pr[X_k(\bP(i)) > 0 \wedge X_k(\bP(-i)) = 0] \\
    &+ \Pr[X_k(\bP(-i)) > 0]) \\
    &= W(\bP(-i)) + \sum_{k \in [K]} \bd_k \bP(i)_k \Pr[X_k(\bP(-i)) = 0] \\
    &= W(\bP(-i)) \\
    &+ \sum_{k \in [K]} \bd_k \bP(i)_k \E{\frac{1}{s(1 +
    X_k(\bP(-i)))} \cdot \ind{X_k(\bP(-i)) = 0}} \\
    &\le W(\bP(-i)) +  \sum_{k \in [K]} \bd_k \bP(i)_k \E{\frac{1}{s(1 +
    X_k(\bP(-i)))}} \\
    &= U_i(\bP) + W(\bP(-i)).
  \end{align*}
}{\begin{align*}
    W(\bP)
    &= \sum_{k \in [K]} \bd_k \Pr[X_k(\bP) > 0] \\
    &= \sum_{k \in [K]} \bd_k \Pr[X_k(\bP(-i)) > 0 \vee X_k(\bP(i)) > 0] \\
    &= \sum_{k \in [K]} \bd_k \p{\Pr[X_k(\bP(i)) > 0 \wedge X_k(\bP(-i)) = 0] +
    \Pr[X_k(\bP(-i)) > 0]} \\
    &= W(\bP(-i)) +  \sum_{k \in [K]} \bd_k \bP(i)_k \Pr[X_k(\bP(-i)) = 0] \\
    &= W(\bP(-i)) +  \sum_{k \in [K]} \bd_k \bP(i)_k \E{\frac{1}{s(1 +
    X_k(\bP(-i)))} \cdot \ind{X_k(\bP(-i)) = 0}} \\
    &\le W(\bP(-i)) +  \sum_{k \in [K]} \bd_k \bP(i)_k \E{\frac{1}{s(1 +
    X_k(\bP(-i)))}} \\
    &= U_i(\bP) + W(\bP(-i)).
  \end{align*}
}
  Thus,~\eqref{eq:marginal-cont} holds.

  \paragraph*{Social welfare is at least players' utilities.}

  Finally, we must show
  \begin{equation}
    \label{eq:sum-utilities-welfare}
    \sum_{i \in [n]} U_i(\bP) \le W(\bP).
  \end{equation}
  We proceed as follows.
  \begin{align*}
    \sum_{i \in [n]} U_i(\bP)
    &= \sum_{k \in [K]} \bd_k \sum_{i \in [n]} \bP(i)_k \E{\frac{1}{s(1 +
    X_k(\bP(-i)))}} \\
    &= \sum_{k \in [K]} \bd_k \sum_{i \in [n]}
    \E{\frac{X_k(\bP(i))}{s(X_k(\bP(i)) + X_k(\bP(-i)))}} \\
    &= \sum_{k \in [K]} \bd_k \sum_{i \in [n]}
    \E{\frac{X_k(\bP(i))}{s(X_k(\bP))}} \\
    &= \sum_{k \in [K]} \bd_k
    \E{\frac{X_k(\bP)}{s(X_k(\bP))}}.
  \end{align*}
  For $s \in \sup$, this last line is equal to $W(\bP)$,
  and~\eqref{eq:sum-utilities-welfare} holds. For $s \in \sdown$,
  \begin{align*}
    \sum_{k \in [K]} \bd_k
    \E{\frac{X_k(\bP)}{s(X_k(\bP))}}
    &\le \sum_{k \in [K]} \bd_k \E{\ind{X_k(\bP) > 0}} \\
    &= \sum_{k \in [K]} \bd_k \Pr[X_k(\bP) > 0] \\
    &= W(\bP).
  \end{align*}
  Thus, $\game(n, \bd, s)$ is a valid utility game.

  To show that this is tight, for some $n$, let
  \begin{align*}
    \bd = [1; ~ \nicefrac{1}{n}; ~ \dots ~ \nicefrac{1}{n}].
  \end{align*}
  Choose $K > n^2$.
  Let the score function be the identity $s_1$. We will show that there
  exists a strategy profile with social welfare approaching $2$ and a
  (symmetric) equilibrium with social welfare $1$. Define $\bPp$ as follows:
  \begin{align*}
    \bPp(1) &= [1; ~ 0; ~ \dots ~ 0] \\
    \bPp(i) &= [\nicefrac{1}{K}; ~ \dots ~ \nicefrac{1}{K}]. \tag{$i > 1$}
  \end{align*}
  Because $s_1 \in \sdown$, social welfare under $\bPp$ is
  \begin{align*}
    W(\bPp)
    &= \sum_{k \in [K]} \bd_k (1 - \Pr[X_k(\bPp) = 0]) \\
    &= \bd_1 + \sum_{k=2}^K \bd_k \p{1 - \Pr\b{X\p{n-1, \frac{1}{K}} = 0}} \\
    &= 1 + \sum_{k=2}^K \frac{1}{n} \p{1 - \p{1 - \frac{1}{K}}^{n-1}} \\
    &= 1 + \frac{K-1}{n} \p{1 - \p{1 - \frac{1}{K}}^{n-1}} \\
    &\ge 1 + \frac{K-1}{n}
    \p{1 - \p{1 - \frac{n-1}{K} + \binom{n-1}{2} \frac{1}{K^2}}}
    \tag{by \Cref{lem:1-x-bound} since $K > n$} \\
    &= 1 + \frac{K-1}{n}
    \p{\frac{n-1}{K} - \binom{n-1}{2} \frac{1}{K^2}} \\
    &= 1 + \frac{(K-1)(n-1)}{Kn} \p{1  - \frac{n-2}{2K}} \\
    &\ge 1 + \frac{(n^2-1)(n-1)}{n^3} \p{1 - \frac{1}{2n}} \tag{$K > n^2$} \\
    &\ge 2 - \frac{n^2 + n}{n^3} \p{1 - \frac{1}{2n}} - \frac{1}{2n} \\
    &\ge 2 - \frac{3}{2n} - \frac{1}{n^2}.
  \end{align*}
  For sufficiently large $n$, we can make this arbitrarily close to 2. In
  contrast, we will show that $\bP$ is an equilibrium, where
  \begin{align*}
    \bP(i) &= [1; ~ 0; ~ \dots ~ 0]
  \end{align*}
  for all $i$. To show that it is an equilibrium, observe that each player gets
  utility $U_i(\bP) = \frac{1}{n}$, which weakly greater than $\bd_k$ for any $k
  > 1$. Therefore, $\bP$ is (weakly) an equilibrium. However, $W(\bP) = 1$. The
  price of anarchy is therefore 2.
\end{proof}

\subsection{Auxiliary results}
\label{app:aux}

\subsubsection{Results from prior work}

\begin{lemma}[{\citet[Thm 3.41]{rudin1964principles}}]
  \label{lem:rudin-partial-sum}
  For any sequences $\{x_i\}, \{y_i\}$, let $X_i = \sum_{j=0}^i x_i$. Then,
  \begin{equation}
    \label{eq:rudin-partial-sum}
    \sum_{i=0}^n x_i y_i = -\sum_{i=0}^{n-1} X_i (y_{i+1} - y_i) + X_n y_n.
  \end{equation}
\end{lemma}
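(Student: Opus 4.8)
The plan is to prove this by a direct application of Abel's summation-by-parts trick: express each $x_i$ as a difference of consecutive partial sums and then reindex. I would adopt the convention $X_{-1} = 0$, so that $x_i = X_i - X_{i-1}$ holds for every $i \ge 0$ (at $i=0$ this reads $x_0 = X_0 - 0$, which is correct).

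First I would substitute this into the left-hand side and split the resulting sum:
\[
\sum_{i=0}^n x_i y_i = \sum_{i=0}^n (X_i - X_{i-1}) y_i = \sum_{i=0}^n X_i y_i - \sum_{i=0}^n X_{i-1} y_i.
\]
Next I would shift the index in the second sum by setting $j = i - 1$, which turns it into $\sum_{j=-1}^{n-1} X_j y_{j+1}$; the $j=-1$ term vanishes since $X_{-1} = 0$, leaving $\sum_{j=0}^{n-1} X_j y_{j+1}$. Pulling the $i=n$ term out of the first sum and renaming $j$ back to $i$, the difference becomes
\[
X_n y_n + \sum_{i=0}^{n-1} X_i y_i - \sum_{i=0}^{n-1} X_i y_{i+1} = X_n y_n + \sum_{i=0}^{n-1} X_i \left( y_i - y_{i+1} \right) = X_n y_n - \sum_{i=0}^{n-1} X_i \left( y_{i+1} - y_i \right),
\]
which is exactly the claimed identity.

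There is no real obstacle here; the only thing to be careful about is the index bookkeeping — introducing the convention $X_{-1} = 0$ so that $x_i = X_i - X_{i-1}$ is valid at $i=0$, and correctly isolating the boundary term $X_n y_n$ when the two sums are realigned. An equally routine alternative would be induction on $n$: the base case $n=0$ is immediate, and the inductive step adds $x_{n+1} y_{n+1}$ to both sides and uses $X_{n+1} = X_n + x_{n+1}$ together with a one-line rearrangement. Since this is precisely the standard statement of \citet[Thm 3.41]{rudin1964principles}, one could simply cite it, but the self-contained argument above is short enough to include in full.
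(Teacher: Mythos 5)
Your proof is correct: the paper gives no proof of this lemma at all, simply citing \citet[Thm 3.41]{rudin1964principles}, and your Abel summation-by-parts argument (writing $x_i = X_i - X_{i-1}$ with the convention $X_{-1}=0$, reindexing, and isolating the boundary term $X_n y_n$) is exactly the standard argument in that reference. Nothing further is needed.
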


\begin{lemma}[{\citet{sah1991effects}}]
  \label{lem:sah-diff}
  Define
  \begin{align*}
    f(n, p)
    &\triangleq \sum_{\ell=0}^n b(\ell, n, p) g(\ell) \\
    \df(n, p)
    &\triangleq f(n+1, p) - f(n, p) \\
    \ddf(n, p)
    &\triangleq \df(n+1, p) - \df(n, p) \\
    \dg(\ell)
    &\triangleq g(1+\ell) - g(\ell) \\
    \ddg(\ell)
    &\triangleq \dg(1+\ell) - \dg(\ell)
  \end{align*}
  Then,
  \begin{align}
    \label{eq:sah-diff-1}
    \df(n, p)
    &= p \sum_{\ell=0}^n b(\ell, n, p) \dg(\ell) \\
    \label{eq:sah-diff-2}
    \ddf(n, p)
    &= p^2 \sum_{\ell=0}^n b(\ell, n, p) \ddg(\ell) \\
    \label{eq:sah-partial-1}
    \frac{\partial}{\partial p} f(n, p)
    &= \frac{n}{p} \df(n-1, p) \\
    \label{eq:sah-partial-2}
    \frac{\partial}{\partial p} \df(n, p)
    &= \frac{1}{p} \p{\df(n, p) + n \ddf(n-1, p)}.
  \end{align}
\end{lemma}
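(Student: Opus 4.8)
The plan is to prove the four identities in order, bootstrapping the later ones from the earlier, using only two elementary facts about the binomial PMF $b(\ell,n,p)$: a conditioning recursion in $n$, and the derivative formula $\frac{\partial}{\partial p}b(\ell,n,p) = n\big(b(\ell-1,n-1,p)-b(\ell,n-1,p)\big)$, which follows from $n\binom{n-1}{\ell-1}=\ell\binom n\ell$ and $n\binom{n-1}{\ell}=(n-\ell)\binom n\ell$. Throughout, I would note that $g$ is an arbitrary sequence, so each identity, once proven, may be reapplied with $g$ replaced by the first difference $\dg$.

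For \eqref{eq:sah-diff-1}, write $f(n,p)=\E{g(X(n,p))}$ and use $X(n+1,p)\overset{d}{=}X(n,p)+Y$ with $Y\sim\Ber p$ independent; conditioning on $Y$ gives $f(n+1,p)=(1-p)f(n,p)+p\sum_{\ell=0}^n b(\ell,n,p)g(\ell+1)$, so subtracting $f(n,p)$ yields $\df(n,p)=p\sum_{\ell=0}^n b(\ell,n,p)\dg(\ell)$. For \eqref{eq:sah-partial-1}, I would differentiate $f(n,p)=\sum_\ell b(\ell,n,p)g(\ell)$ term by term with the derivative formula; reindexing the $b(\ell-1,n-1,p)$ sum by $\ell\mapsto\ell+1$ (the boundary terms $b(-1,n-1,p)$ and $b(n,n-1,p)$ vanish) collapses the two sums into $\frac{\partial}{\partial p}f(n,p)=n\sum_{\ell=0}^{n-1}b(\ell,n-1,p)\dg(\ell)$, which is $\frac np\df(n-1,p)$ by \eqref{eq:sah-diff-1}.

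The remaining two identities follow by applying the first two to $\dg$ in place of $g$. Set $h(n,p)\triangleq\sum_{\ell=0}^n b(\ell,n,p)\dg(\ell)$, so that \eqref{eq:sah-diff-1} reads $\df(n,p)=p\,h(n,p)$, with the factor $p$ independent of $n$. Then $\ddf(n,p)=p\big(h(n+1,p)-h(n,p)\big)=p^2\sum_\ell b(\ell,n,p)\ddg(\ell)$ by \eqref{eq:sah-diff-1} applied to $\dg$, which is \eqref{eq:sah-diff-2}. For \eqref{eq:sah-partial-2}, differentiate $\df(n,p)=p\,h(n,p)$ to get $\frac{\partial}{\partial p}\df(n,p)=h(n,p)+p\frac{\partial}{\partial p}h(n,p)$; applying \eqref{eq:sah-partial-1} to $\dg$ gives $\frac{\partial}{\partial p}h(n,p)=\frac np\big(h(n,p)-h(n-1,p)\big)=\frac n{p^2}\ddf(n-1,p)$, where the last equality uses $\ddf(n-1,p)=\df(n,p)-\df(n-1,p)=p\big(h(n,p)-h(n-1,p)\big)$; substituting and using $h(n,p)=\frac1p\df(n,p)$ yields $\frac{\partial}{\partial p}\df(n,p)=\frac1p\df(n,p)+\frac np\ddf(n-1,p)$, as claimed.

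There is no substantive obstacle here — the result is classical (it is exactly \citet{sah1991effects}) — so the write-up is essentially bookkeeping; the only spots needing care are the vanishing boundary terms in the reindexing for \eqref{eq:sah-partial-1} and the observation that \eqref{eq:sah-diff-2} and \eqref{eq:sah-partial-2} are just \eqref{eq:sah-diff-1} and \eqref{eq:sah-partial-1} applied one difference-level up. Alternatively one could simply cite the reference.
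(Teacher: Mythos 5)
Your proposal is correct, but it takes a different (and more self-contained) route than the paper. The paper does not reprove \eqref{eq:sah-diff-1}, \eqref{eq:sah-diff-2}, or \eqref{eq:sah-partial-2} at all: it imports them directly from \citet[eq.~(4--6)]{sah1991effects} and only writes out a proof of \eqref{eq:sah-partial-1}, which it obtains by Abel summation (\Cref{lem:rudin-partial-sum}) applied to $f(n,p)=\sum_\ell b(\ell,n,p)g(\ell)$ together with the binomial CDF derivative $\frac{\partial}{\partial p}B(\ell,n,p)=-nb(\ell,n-1,p)$, and then \eqref{eq:sah-diff-1} to rewrite the result as $\frac{n}{p}\df(n-1,p)$. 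You instead prove all four identities from scratch: \eqref{eq:sah-diff-1} by the coupling $X(n+1,p)\overset{d}{=}X(n,p)+\Ber{p}$, \eqref{eq:sah-partial-1} by differentiating the PMF term by term via $\frac{\partial}{\partial p}b(\ell,n,p)=n\p{b(\ell-1,n-1,p)-b(\ell,n-1,p)}$ and reindexing (with the boundary terms correctly noted to vanish), and \eqref{eq:sah-diff-2}, \eqref{eq:sah-partial-2} by the clean observation that they are \eqref{eq:sah-diff-1} and \eqref{eq:sah-partial-1} applied to $\dg$ in place of $g$, using $\df(n,p)=p\,h(n,p)$ with $h(n,p)=\sum_\ell b(\ell,n,p)\dg(\ell)$; I checked the algebra in the \eqref{eq:sah-partial-2} step and it is right. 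What each approach buys: yours makes the lemma fully verifiable without consulting \citet{sah1991effects} and avoids both the summation-by-parts lemma and the CDF derivative fact; the paper's version is shorter on the page because it defers three identities to the citation, and its CDF-based manipulation mirrors Sah's original appendix, which is why the authors reproduce only the one identity not stated there in the form they need. Either write-up would be acceptable.
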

\begin{proof}

  \eqref{eq:sah-diff-1},~\eqref{eq:sah-diff-2}, and~\eqref{eq:sah-partial-2} are
  directly from \citet[eq.~(4--6)]{sah1991effects}.
  A similar argument, which we reproduce below for completeness,
  gives us~\eqref{eq:sah-partial-1}. See \citet{sah1989comparative} for
  additional detail.

  From \citet[eq. (A6)]{sah1991effects} (see also \citet[eq.
  (10.9) p. 173]{feller1971introduction}),
  \begin{equation}
    \label{eq:feller-partial}
    \frac{\partial}{\partial p} B(\ell, n, p) = -n b(\ell, n-1, p).
  \end{equation}
  Therefore,
  \begin{align*}
    \frac{\partial}{\partial p} f(n, p)
    &= \frac{\partial}{\partial p} \sum_{\ell=0}^n b(\ell, n, p) g(\ell) \\
    &= \frac{\partial}{\partial p} \b{g(n) - \sum_{\ell=0}^{n-1} B(\ell, n, p)
    \dg(\ell)} \tag{by \Cref{lem:rudin-partial-sum}} \\
    &= - \sum_{\ell=0}^{n-1} (-n b(\ell, n-1, p)) \dg(\ell)
    \tag{by~\eqref{eq:feller-partial}} \\
    &= n \sum_{\ell=0}^{n-1} b(\ell, n-1, p) \dg(\ell) \\
    &= \frac{n}{p} \df(n-1, p).
    \tag{by \Cref{eq:sah-diff-1}}
  \end{align*}

\end{proof}

\subsubsection{Additional results.}

\begin{lemma}
  \label{lem:potential-game}
  $\game$ is a potential game.
\end{lemma}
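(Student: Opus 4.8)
The plan is to exhibit an explicit exact potential function, obtained by taking the expectation of Rosenthal's congestion-game potential over the randomness induced by the players' mixed strategies. I would define $g(0) \triangleq 0$ and $g(j) \triangleq \sum_{\ell=1}^{j} 1/s(\ell)$ for $j \ge 1$, so that $g(j+1) - g(j) = 1/s(j+1)$ for every $j \ge 0$ (in particular we only evaluate $1/s$ at arguments at least $1$, so the convention $s(0) = 1$ is never invoked). For a strategy matrix $\bP$, set
\[
  \Phi(\bP) \triangleq \sum_{k \in [K]} \bd_k \, \E{g(X_k(\bP))}.
\]
This is finite since $g$ is bounded on $\{0, \dots, n\}$ and $K < \infty$, and it is the expectation of the standard Rosenthal congestion-game potential evaluated on a realized pure profile of types.

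The crux is a one-line conditioning computation. I would fix a player $i$, hold $\bP(-i)$ fixed, and set $Y_k \triangleq X_k(\bP(-i))$, which does not depend on player $i$'s strategy. Since $X_k(\bP) = Y_k + \ind{k_i = k}$, conditioning on whether player $i$ samples type $k$ gives
\[
  \E{g(X_k(\bP))} = \E{g(Y_k)} + \bP(i)_k \, \E{g(Y_k + 1) - g(Y_k)} = \E{g(Y_k)} + \bP(i)_k \, \E{\frac{1}{s(1 + Y_k)}}.
\]
Summing over $k$ and comparing with $U_i(\bP) = \sum_{k \in [K]} \bd_k \bP(i)_k \E{1/s(1 + X_k(\bP(-i)))}$ gives
\[
  \Phi(\bP) = \sum_{k \in [K]} \bd_k \, \E{g(Y_k)} + U_i(\bP),
\]
where the first summand depends only on $\bP(-i)$.

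Consequently, for any unilateral deviation $\bP(i) \to \bP'(i)$ with the other players held fixed, $\Phi(\bP'(i), \bP(-i)) - \Phi(\bP(i), \bP(-i)) = U_i(\bP'(i), \bP(-i)) - U_i(\bP(i), \bP(-i))$, which is exactly the statement that $\Phi$ is an exact potential function for $\game$~\citep{monderer1996potential}; hence $\game$ is a potential game. The argument uses nothing about the strategy set beyond the fact that each player samples a type from a fixed distribution, so it applies verbatim whether strategies are restricted to $\Delta(\pi)$ or range over the full simplex. I do not expect a genuine obstacle here: the content is the recognition that the expected Rosenthal potential works, plus the one-line conditioning identity above; the only mild care needed is to state the right notion of ``potential game'' for our distribution-valued strategy space, namely the one under which best-response dynamics converge to maximizers of $\Phi$, which are precisely the equilibria of $\game$.
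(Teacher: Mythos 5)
Your proof is correct and is essentially the same as the paper's: both define the expected Rosenthal potential $\Phi(\bP) = \sum_{k} \bd_k \E{\sum_{\ell=1}^{X_k(\bP)} 1/s(\ell)}$ and use the conditioning identity to write $\Phi(\bP) = U_i(\bP) + \Phi(\bP(-i))$, from which exactness of the potential follows immediately.
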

\begin{proof}
  Define potentially asymmetric strategy profiles $\bP$ as in \Cref{app:poa}.
  $\game$ is a potential game if there exists some function $\Phi$ such that,
  for any $\bP$ and $\bP'$ differing only in player $i$'s strategy,
  \begin{equation}
    \label{eq:potential-game}
    \Phi(\bP') - \Phi(\bP)
    = U_i(\bP') - U_i(\bP).
  \end{equation}
  We will show that the this holds for the following definition of $\Phi$, which
  is fairly standard in the literature on congestion
  games~\citep{rosenthal1973class}.
  \begin{align*}
    \Phi(\bP)
    \triangleq \E{\sum_{k \in [K]} \bd_k
      \sum_{\ell=1}^{X_k(\bP)} \frac{1}{s(\ell)}}.
  \end{align*}
  Observe that
  \ifthenelse{\boolean{smallEqs}}{
  \begin{align*}
    &\E{\sum_{k \in [K]} \bd_k \sum_{\ell=1}^{X_k(\bP)} \frac{1}{s(\ell)}} \\
    &= \sum_{k \in [K]} \bd_k
    \E{\frac{X_k(\bP(i))}{s(1+X_k(\bP(-i)))} + \sum_{\ell=1}^{X_k(\bP(-i))}
    \frac{1}{s(\ell)}} \\
    &= \sum_{k \in [K]} \bd_k \bP(i)_k \frac{1}{1 + X_k(\bP(-i))} \\
    &+ \sum_{k \in [K]} \bd_k
    \E{\sum_{\ell=1}^{X_k(\bP(-i))} \frac{1}{s(\ell)}} \\
    &= U_i(\bP) + \Phi(\bP(-i)).
  \end{align*}
}{\begin{align*}
    \E{\sum_{k \in [K]} \bd_k \sum_{\ell=1}^{X_k(\bP)} \frac{1}{s(\ell)}}
    &= \sum_{k \in [K]} \bd_k
    \E{\frac{X_k(\bP(i))}{s(1+X_k(\bP(-i)))} + \sum_{\ell=1}^{X_k(\bP(-i))}
    \frac{1}{s(\ell)}} \\
    &= \sum_{k \in [K]} \bd_k \bP(i)_k \frac{1}{1 + X_k(\bP(-i))}
    + \sum_{k \in [K]} \bd_k
    \E{\sum_{\ell=1}^{X_k(\bP(-i))}
    \frac{1}{s(\ell)}} \\
    &= U_i(\bP) + \Phi(\bP(-i)).
  \end{align*}
}
  Therefore, if $\bP'$ and $\bP$ only differ in column $i$,
  \begin{align*}
    \Phi(\bP') - \Phi(\bP)
    = U_i(\bP') - U_i(\bP).
  \end{align*}
\end{proof}

\begin{lemma}
  \label{lem:dp-dn-ratio}
  \begin{align*}
    \frac{\partial}{\partial p} \frac{u(n, p)}{u(n-1, p)} \le 0.
  \end{align*}
\end{lemma}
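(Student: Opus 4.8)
The plan is to convert this differential inequality into an algebraic one relating three consecutive values of $u$, and then to establish that. Differentiating, $\frac{\partial}{\partial p}\frac{u(n,p)}{u(n-1,p)}\le 0$ is equivalent to $u(n-1,p)\,\partial_p u(n,p)\le u(n,p)\,\partial_p u(n-1,p)$. Applying \Cref{lem:sah-diff} with $g(\ell)=1/s(1+\ell)$ gives $\partial_p u(m,p)=\tfrac{m}{p}\bigl(u(m,p)-u(m-1,p)\bigr)$; substituting and cancelling $1/p>0$, the claim reduces to
\[
u(n,p)\bigl(u(n-1,p)+(n-1)\,u(n-2,p)\bigr)\;\le\;n\,u(n-1,p)^2 .
\]
It is worth recording an equivalent form that makes the statement transparent: with $q=1-p$ and $t=p/q$ (increasing in $p$ on $(0,1)$) one has $u(m,p)=q^m P_m(t)$ for $P_m(t)=\sum_j\binom{m}{j}t^j/s(1+j)$, and after simplification the displayed inequality becomes $\Var_{\pi_{n,t}}(J)\le \tfrac{t}{1+t}\bigl(n-\EE{\pi_{n,t}}{J}\bigr)$, where $\pi_{n,t}(j)\propto\binom{n}{j}t^j/s(1+j)$. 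For $s\equiv 1$ this holds with equality (it is the identity $\Var=np(1-p)$, $\E=np$ for the binomial), so the content is precisely that tilting the binomial by the decreasing, discrete-convex sequence $1/s(1+\cdot)$ can only shrink the left side relative to the right.

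To prove the algebraic inequality I would expand each $u$ via the one-step recursion $u(m,p)=(1-p)u(m-1,p)+p\,u^{+}(m-1,p)$ with $u^{+}(m,p)=\E{1/s(2+X(m,p))}$, so that everything is written through $u(n-2,p)$, $u^{+}(n-2,p)$, $u^{++}(n-2,p)$ together with the binomial identities of \Cref{lem:sah-diff}. The structural hypotheses on $s$ enter here: discrete-convexity of $1/s$ (controlling $u-2u^{+}+u^{++}$), the normalization $s(1)=1$, and, crucially, the dichotomy in \Cref{def:S-class} — $x/s(x)$ increasing and discrete-concave when $s\in\sup$ versus $x/s(x)$ decreasing when $s\in\sdown$. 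I expect to treat these two cases separately, in each reducing to the nonnegativity of bracketed terms in a binomial double sum $\sum_{\ell_1\le\ell_2}b(\ell_1,\cdot,p)\,b(\ell_2,\cdot,p)\,[\,\cdots\,]$, in the style of the proof of \Cref{thm:diversity-gamma}; the probabilistic engine is a monotone-likelihood-ratio comparison of the tilted binomials $\pi_{m,t}$ as $t$ varies, and $u^{+}/u$ is the natural auxiliary quantity to track.

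The main obstacle will be the case $s\in\sup$. If $1/s(1+\cdot)$ happened to be (discrete) log-concave the argument is short: the relevant covariance of $J$ against $s(1+J)/s(2+J)$ under the tilted binomial is then nonpositive, and the inequality follows at once. But this already fails for $s_\gamma(x)=x^\gamma$, for which $1/s(1+\cdot)$ is log-convex and the ratios $s(1+\ell)/s(2+\ell)$ increase in $\ell$, so that covariance can be positive. What rescues the general $\sup$ case is that discrete-concavity of $x/s(x)$ caps how fast those ratios can grow relative to $\ell$; converting this bound into the needed covariance estimate (equivalently, checking that the bracketed double-sum terms stay nonnegative once the concavity of $x/s(x)$ has been invoked) is the delicate computational core. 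A look at the $t\to\infty$ asymptotics — where equality is attained at $s\equiv 1$ — confirms that this estimate must be run tightly, with no slack to spare.
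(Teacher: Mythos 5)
Your reduction of the derivative inequality to $u(n,p)\bigl(u(n-1,p)+(n-1)u(n-2,p)\bigr)\le n\,u(n-1,p)^2$ via \Cref{lem:sah-diff} matches the paper's first step exactly. Past that point the routes diverge, and the divergence is worth flagging because the paper's route appears to contain an error. The paper peels off $u(n)u(n-1)\le u(n-1)^2$ and reduces to the sub-claim $u(n-2,p)u(n,p)\le u(n-1,p)^2$ (log-concavity of $u$ in $n$), arguing that discrete-convexity of $1/s$ forces $\ddu(n-2,p)\le 0$. But \Cref{lem:sah-diff} gives $\ddu(n-2,p)=p^2\sum_\ell b(\ell,n-2,p)\,\ddg(\ell)$ with $g(\ell)=1/s(1+\ell)$, and convexity of $1/s$ makes each $\ddg(\ell)\ge 0$, so $u$ is discrete-\emph{convex} in $n$ --- the wrong direction for the paper's AM--GM step. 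The sub-claim is in fact false: for $s=s_1$, $n=3$, $p=\tfrac12$ one has $u(1)=\tfrac34$, $u(2)=\tfrac{7}{12}$, $u(3)=\tfrac{15}{32}$, hence $u(1)u(3)=\tfrac{45}{128}>\tfrac{49}{144}=u(2)^2$, even though the full reduced inequality does hold there and becomes an equality at $p\to 1$. This is consistent with your observation that the variance form is tight at $s\equiv 1$: there is no slack for a crude convexity bound, which is precisely why the paper's shortcut cannot work.

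Your proposal therefore avoids a pitfall the paper walked into, and the variance reformulation plus the $s\equiv 1$ equality case are genuine insights. But you have not proved the lemma either: the step you call the ``delicate computational core'' --- expanding $u(n),u(n-1),u(n-2)$ through $u,u^{+},u^{++}$ at index $n-2$, regrouping the reduced inequality into bracketed double sums over $b(\ell_1,\cdot,p)\,b(\ell_2,\cdot,p)$, and proving the brackets nonnegative from discrete-concavity of $x/s(x)$ in $\sup$ or its monotone decrease in $\sdown$ --- is exactly where the proof lives, and you stop at describing it. Until that calculation is carried out, the proposal is a plan, not a proof; but unlike the paper's argument it does not rest on a false intermediate claim, and it correctly identifies that the $\sup$ case is where the difficulty concentrates.
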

\begin{proof}
  We proceed as follows.
  \ifthenelse{\boolean{smallEqs}}{
  \begin{align*}
    &
    \frac{\partial}{\partial p} \frac{u(n, p)}{u(n-1, p)} \le 0 \\
    \Longleftrightarrow &
    \frac{\partial}{\partial p} \log \p{\frac{u(n, p)}{u(n-1, p)}} \le 0 \\
    \Longleftrightarrow &
    \frac{\frac{\partial}{\partial p} u(n, p)}{u(n, p)} - 
    \frac{\frac{\partial}{\partial p} u(n-1, p)}{u(n-1, p)} \le 0 \\
    \Longleftrightarrow &
    \frac{\frac{n}{p} \du(n-1, p)}{u(n, p)} \le
    \frac{\frac{n-1}{p} \du(n-2, p)}{u(n-1, p)} \\
    \Longleftrightarrow &
    \frac{n(u(n, p) - u(n-1, p))}{u(n, p)} \\
    &
    \le \frac{(n-1)(u(n-1, p) - u(n-2,
    p))}{u(n-1, p)} \\
    \Longleftrightarrow &
    n - \frac{n u(n-1, p)}{u(n, p)} \le (n-1) - \frac{(n-1)u(n-2, p)}{u(n-1, p)}
    \\
    \Longleftrightarrow &
    1 - \frac{n u(n-1, p)}{u(n, p)} \le - \frac{(n-1)u(n-2, p)}{u(n-1, p)}
    \\
    \Longleftrightarrow &
    u(n-1, p) u(n, p) + (n-1) u(n-2, p) u(n, p) \\
                        &\le n u(n-1, p)^2.
  \end{align*}
}{\begin{align*}
    \frac{\partial}{\partial p} \frac{u(n, p)}{u(n-1, p)} \le 0
    \Longleftrightarrow &
    \frac{\partial}{\partial p} \log \p{\frac{u(n, p)}{u(n-1, p)}} \le 0 \\
    \Longleftrightarrow &
    \frac{\frac{\partial}{\partial p} u(n, p)}{u(n, p)} - 
    \frac{\frac{\partial}{\partial p} u(n-1, p)}{u(n-1, p)} \le 0 \\
    \Longleftrightarrow &
    \frac{\frac{n}{p} \du(n-1, p)}{u(n, p)} \le
    \frac{\frac{n-1}{p} \du(n-2, p)}{u(n-1, p)} \\
    \Longleftrightarrow &
    \frac{n(u(n, p) - u(n-1, p))}{u(n, p)} \le \frac{(n-1)(u(n-1, p) - u(n-2,
    p))}{u(n-1, p)} \\
    \Longleftrightarrow &
    n - \frac{n u(n-1, p)}{u(n, p)} \le (n-1) - \frac{(n-1)u(n-2, p)}{u(n-1, p)}
    \\
    \Longleftrightarrow &
    1 - \frac{n u(n-1, p)}{u(n, p)} \le - \frac{(n-1)u(n-2, p)}{u(n-1, p)}
    \\
    \Longleftrightarrow &
    u(n-1, p) u(n, p) + (n-1) u(n-2, p) u(n, p) \le n u(n-1, p)^2.
  \end{align*}
}
  Using the fact that $u(n, p) \le u(n-1, p)$ (because $X(n, p)$
  stochastically dominates $X(n-1, p)$), $u(n, p) u(n-1, p) \le u(n-1, p)^2$.
  Therefore, it suffices to show that
  \begin{align*}
    (n-1) u(n-2, p) u(n, p)
    &\le (n-1) u(n-1, p)^2 \\
    u(n-2, p) u(n, p)
    &\le u(n-1, p)^2.
    \numberthis \label{eq:need-n-convex}
  \end{align*}
  From \Cref{lem:sah-diff}, we have
  \ifthenelse{\boolean{smallEqs}}{
  \begin{align*}
    &u(n, p) - u(n-1, p) - (u(n-1, p) - u(n-2, p))\\
    =\, &\ddu(n-2, p) \\
    =\, &p^2 \sum_{\ell=0}^{n-2} b(\ell, n-2, p) \cdot \\
        &\p{\frac{1}{s(3+\ell)} -
    \frac{1}{s(2+\ell)} - \p{\frac{1}{s(2+\ell)} - \frac{1}{s(1+\ell)}}}.
  \end{align*}
}{\begin{align*}
    &u(n, p) - u(n-1, p) - (u(n-1, p) - u(n-2, p))\\
    =\, &\ddu(n-2, p) \\
    =\, &p^2 \sum_{\ell=0}^{n-2} b(\ell, n-2, p) \p{\frac{1}{s(3+\ell)} -
    \frac{1}{s(2+\ell)} - \p{\frac{1}{s(2+\ell)} - \frac{1}{s(1+\ell)}}}.
  \end{align*}
}
  Under the assumption that $1/s(x)$ is convex (\labelcref{def:convex}), each
  term in the sum is weakly negative. As a result, $u$ is discrete convex in
  $n$:
  \ifthenelse{\boolean{smallEqs}}{
  \begin{align*}
    u(n, p) - u(n-1, p) - (u(n-1, p) - u(n-2, p))
    \le 0 \\
    u(n-1, p)
    \ge \frac{u(n, p) + u(n-2, p)}{2}.
  \end{align*}
}{\begin{align*}
    u(n, p) - u(n-1, p) - (u(n-1, p) - u(n-2, p))
    & \le 0 \\
    u(n-1, p)
    &\ge \frac{u(n, p) + u(n-2, p)}{2}.
  \end{align*}
}
  Putting this into~\eqref{eq:need-n-convex}, we have
  \ifthenelse{\boolean{smallEqs}}{
  \begin{align*}
    &u(n-1, p)^2 - u(n, p) u(n-2, p) \\
    &\ge \p{\frac{u(n, p) + u(n-2, p)}{2}}^2 - u(n, p) u(n-2, p) \\
    &= \frac{u(n, p)^2 + 2u(n, p) u(n-2, p) + u(n-2, p)^2}{4} \\
    &- u(n, p) u(n-2, p) \\
    &= \frac{u(n, p)^2 + u(n-2, p)^2}{4} - \frac{u(n, p) u(n-2, p)}{2} \\
    &= \frac{1}{2} \p{u(n, p)^2 - 2u(n, p) u(n-2, p) + u(n-2, p)^2} \\
    &= \frac{1}{2} \p{u(n, p) - u(n-2, p)}^2 \\
    &\ge 0.
  \end{align*}
}{\begin{align*}
    u(n-1, p)^2 - u(n, p) u(n-2, p)
    &\ge \p{\frac{u(n, p) + u(n-2, p)}{2}}^2 - u(n, p) u(n-2, p) \\
    &= \frac{u(n, p)^2 + 2u(n, p) u(n-2, p) + u(n-2, p)^2}{4} \\
    &- u(n, p) u(n-2, p) \\
    &= \frac{u(n, p)^2 + u(n-2, p)^2}{4} - \frac{u(n, p) u(n-2, p)}{2} \\
    &= \frac{1}{2} \p{u(n, p)^2 - 2u(n, p) u(n-2, p) + u(n-2, p)^2} \\
    &= \frac{1}{2} \p{u(n, p) - u(n-2, p)}^2 \\
    &\ge 0.
  \end{align*}
}
\end{proof}

\begin{lemma}
  \label{lem:u-dec-in-p}
  Suppose $g$ is weakly decreasing and $g(1) < g(0)$. Let $f(n, p) = \E{g(X(n,
  p))}$. Then,
  \begin{align*}
    \frac{\partial}{\partial p} f(n, p) < 0.
  \end{align*}
\end{lemma}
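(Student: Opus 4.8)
The plan is to reduce $\frac{\partial}{\partial p} f(n,p)$ to a manifestly nonpositive sum with at least one strictly negative term, using the binomial-difference identities of \Cref{lem:sah-diff}. Writing $f(n,p) = \sum_{\ell=0}^n b(\ell,n,p)\, g(\ell)$, identity \eqref{eq:sah-partial-1} gives $\frac{\partial}{\partial p} f(n,p) = \frac{n}{p}\,\df(n-1,p)$, while identity \eqref{eq:sah-diff-1} gives $\df(n-1,p) = p \sum_{\ell=0}^{n-1} b(\ell,n-1,p)\, \dg(\ell)$. Multiplying these, the factors of $p$ cancel, so
\begin{align*}
  \frac{\partial}{\partial p} f(n,p) = n \sum_{\ell=0}^{n-1} b(\ell,n-1,p)\, \dg(\ell).
\end{align*}
This is a polynomial in $p$ with no $1/p$ singularity, so it is valid on all of $[0,1]$.

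Next I would do the sign bookkeeping. Each $b(\ell,n-1,p) \ge 0$, and since $g$ is weakly decreasing, $\dg(\ell) = g(\ell+1) - g(\ell) \le 0$ for every $\ell$; hence every summand is weakly negative and $\frac{\partial}{\partial p} f(n,p) \le 0$. To get a strict inequality I would isolate the $\ell = 0$ term: $b(0,n-1,p) = (1-p)^{n-1} > 0$ for $p \in [0,1)$, and $\dg(0) = g(1) - g(0) < 0$ by hypothesis. Therefore $\frac{\partial}{\partial p} f(n,p) \le n\, b(0,n-1,p)\, \dg(0) < 0$ for all $p \in [0,1)$, which is exactly the range in which the lemma is invoked (e.g.\ to conclude that $u(n-1,\cdot)$ is strictly decreasing in the proof of \Cref{lem:unique-eq-decreasing}).

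There is no genuinely hard step here; the only subtlety is at the right endpoint $p = 1$, where the displayed identity collapses to $\frac{\partial}{\partial p} f(n,1) = n\, \dg(n-1) = n\,(g(n) - g(n-1))$, which is merely weakly negative unless $g$ fails to be locally constant at $n$. So the strict inequality is to be understood on $p \in [0,1)$; for the applications to $u$ one supplements it with the direct observation that $u(n-1,1) = 1/s(n) < 1 = u(n-1,0)$ (using $s(n) > 1$ for $n \ge 2$, which follows from $s(2) > 1$ and monotonicity of $s$). Apart from that caveat, the argument is just the two substitutions from \Cref{lem:sah-diff} followed by the elementary sign analysis above.
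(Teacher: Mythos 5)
Your proof is correct and follows essentially the same route as the paper's: both apply the two identities of \Cref{lem:sah-diff} to write $\frac{\partial}{\partial p} f(n,p) = n \sum_{\ell=0}^{n-1} b(\ell,n-1,p)\,\dg(\ell)$ and then bound the sum by the strictly negative $\ell = 0$ term $n(1-p)^{n-1}(g(1)-g(0))$. Your extra remarks — handling $n=1$ uniformly rather than as a separate case, and flagging that strictness can fail at the endpoint $p=1$ (a point the paper's proof silently glosses over, and which is harmless in all of its applications) — are accurate but do not change the substance of the argument.
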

\begin{proof}
  If $n = 1$, the claim is true because $\frac{\partial}{\partial p} f(n, p) =
  g(0) - g(1)$. Otherwise,
  \begin{align*}
    \frac{\partial}{\partial p} f(n, p)
    &= \frac{n}{p} \df(n-1, p) \tag{\Cref{lem:sah-diff}} \\
    &= n \sum_{\ell=0}^{n-1} b(\ell, n-1, p) \dg(\ell) \tag{\Cref{lem:sah-diff}}
    \\
    &= n \sum_{\ell=0}^{n-1} b(\ell, n-1, p) (g(1+\ell) - g(\ell)) \\
    &\le n b(0, n-1, p) (g(1) - g(0)) \\
    &= n (1-p)^{n-1} \p{g(1) - g(0)} \\
    &< 0.
  \end{align*}
\end{proof}

\begin{lemma}
  \label{lem:binom-ex-convex}
  If $g(x)$ is convex, then $\E{g(X(n, p))}$ is convex in $p$.
  If $g(x)$ is concave, then $\E{g(X(n, p))}$ is concave in $p$.
  Additionally, if $n > 1$ and
  \begin{align*}
    \frac{g(2) + g(0)}{2} \ne g(1),
  \end{align*}
  then these are strict.
\end{lemma}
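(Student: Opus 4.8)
The plan is to reduce the whole statement to a single explicit formula for the second derivative $\frac{\partial^2}{\partial p^2}\E{g(X(n,p))}$, obtained by applying Sah's difference identities (\Cref{lem:sah-diff}) twice, and then to read off convexity, concavity, and their strict versions directly from the sign of the second discrete difference of $g$.

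First I would dispose of the degenerate cases $n \le 1$: here $\E{g(X(n,p))}$ equals $g(0)$ (if $n=0$) or $(1-p)g(0)+p\,g(1)$ (if $n=1$), both affine in $p$ and hence simultaneously convex and concave; since the ``additionally'' clause only asserts strictness for $n>1$, nothing more is needed. For $n\ge 2$ I would write $f(n,p)=\E{g(X(n,p))}=\sum_{\ell=0}^{n}b(\ell,n,p)g(\ell)$, with $\dg(\ell)=g(\ell+1)-g(\ell)$ and $\ddg(\ell)=g(\ell+2)-2g(\ell+1)+g(\ell)$ as in \Cref{lem:sah-diff}. By~\eqref{eq:sah-diff-1} and~\eqref{eq:sah-partial-1}, $\frac{\partial}{\partial p}f(n,p)=\tfrac{n}{p}\df(n-1,p)=n\sum_{\ell=0}^{n-1}b(\ell,n-1,p)\,\dg(\ell)=n\,\E{\dg(X(n-1,p))}$. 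Applying the same two identities to the function $\dg$ in place of $g$ (valid since $n-1\ge 1$) gives $\frac{\partial}{\partial p}\E{\dg(X(n-1,p))}=(n-1)\,\E{\ddg(X(n-2,p))}$, so that
\begin{align*}
  \frac{\partial^2}{\partial p^2}\E{g(X(n,p))}
  &= n(n-1)\,\E{\ddg(X(n-2,p))} \\
  &= n(n-1)\sum_{\ell=0}^{n-2}b(\ell,n-2,p)\,\ddg(\ell),
\end{align*}
which for $n=2$ reads $2\,\ddg(0)$ since $X(0,p)\equiv 0$. Every quantity here is a polynomial in $p$, so the identity is valid on all of $[0,1]$.

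With this formula the conclusions are immediate. If $g$ is (discrete-)convex then every $\ddg(\ell)\ge 0$, so the right-hand side is nonnegative and $f$ is convex in $p$; if $g$ is concave the inequalities reverse and $f$ is concave. For the strict statement, assume $n>1$ and $\tfrac{g(2)+g(0)}{2}\neq g(1)$, i.e.\ $\ddg(0)\neq 0$; convexity of $g$ then forces $\ddg(0)>0$, and since $b(0,n-2,p)=(1-p)^{n-2}>0$ for $p\in(0,1)$ while all the remaining summands are nonnegative, the second derivative is strictly positive on $(0,1)$, which together with continuity on $[0,1]$ yields strict convexity; the concave case is symmetric. The only step that demands a little care is the bookkeeping in the second invocation of \Cref{lem:sah-diff}---tracking that the relevant summand is $\dg$ and that the binomial parameter has dropped from $n$ to $n-1$---along with the degenerate instance $n=2$ where $X(0,p)$ is deterministic; neither is a genuine obstacle, so I expect the proof to be short.
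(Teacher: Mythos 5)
Your proof is correct and follows essentially the same route as the paper's: both use Sah's difference identities (\Cref{lem:sah-diff}) to arrive at the explicit formula $\frac{\partial^2}{\partial p^2}\E{g(X(n,p))} = n(n-1)\sum_{\ell=0}^{n-2} b(\ell,n-2,p)\,\ddg(\ell)$ and then read off (strict) convexity or concavity from the sign of $\ddg$. The only stylistic difference is that the paper invokes the second-order identity~\eqref{eq:sah-partial-2} directly and simplifies, whereas you iterate the first-order identity~\eqref{eq:sah-partial-1} on the shifted function $\dg$; both yield the same expression, and your iterated version is arguably a little cleaner. Your handling of the degenerate cases $n\in\{0,1\}$ and of the strictness clause (via $b(0,n-2,p)=(1-p)^{n-2}>0$ on $(0,1)$ together with $\ddg(0)\ne 0$) matches the paper's argument.
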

\begin{proof}
  Let $f(n, p) = \E{g(X(n, p))}$. If $n = 1$, $f(1, p) = p g(1) + (1-p) g(0)$,
  which is both convex and concave. Otherwise,
  \ifthenelse{\boolean{smallEqs}}{
  \begin{align*}
    &\frac{\partial^2}{\partial p^2} f(n, p) \\
    &= \frac{\partial}{\partial p} \frac{n}{p} \df(n-1, p)
    \tag{\Cref{lem:sah-diff}} \\
    &= n \left(\frac{1}{p} \cdot \frac{1}{p} \p{\df(n-1, p) + (n-1) \ddf(n-2,
      p)} \right. \\
    &+ \left. \df(n-1, p) (-p^{-2})\right) \tag{\Cref{lem:sah-diff}} \\
    &= \frac{n(n-1)}{p^2} \ddf(n-2, p) \\
    &= n(n-1) \sum_{\ell=0}^{n-2} b(\ell, n, p) \ddg(\ell)
    \tag{\Cref{lem:sah-diff}} \\
    &= n(n-1) \sum_{\ell=0}^{n-2} b(\ell, n, p) (\dg(1+\ell) - \dg(\ell)) \\
    &= n(n-1) \cdot \\
    &\sum_{\ell=0}^{n-2} b(\ell, n, p) (g(2+\ell) - g(1+\ell) -
    (g(1+\ell) - g(\ell))) \\
    &= n(n-1) \sum_{\ell=0}^{n-2} b(\ell, n, p) (g(2+\ell) + g(\ell) -
    2g(1+\ell)).
  \end{align*}
}{\begin{align*}
    \frac{\partial^2}{\partial p^2} f(n, p)
    &= \frac{\partial}{\partial p} \frac{n}{p} \df(n-1, p)
    \tag{\Cref{lem:sah-diff}} \\
    &= n \p{\frac{1}{p} \cdot \frac{1}{p} \p{\df(n-1, p) + (n-1) \ddf(n-2, p)} +
    \df(n-1, p) (-p^{-2})} \tag{\Cref{lem:sah-diff}} \\
    &= \frac{n(n-1)}{p^2} \ddf(n-2, p) \\
    &= n(n-1) \sum_{\ell=0}^{n-2} b(\ell, n, p) \ddg(\ell)
    \tag{\Cref{lem:sah-diff}} \\
    &= n(n-1) \sum_{\ell=0}^{n-2} b(\ell, n, p) (\dg(1+\ell) - \dg(\ell)) \\
    &= n(n-1) \sum_{\ell=0}^{n-2} b(\ell, n, p) (g(2+\ell) - g(1+\ell) -
    (g(1+\ell) - g(\ell))) \\
    &= n(n-1) \sum_{\ell=0}^{n-2} b(\ell, n, p) (g(2+\ell) + g(\ell) -
    2g(1+\ell)).
  \end{align*}
}
  If $g$ is convex, then
  \begin{align*}
    \frac{g(2+\ell) + g(\ell)}{2} \ge g(1+\ell).
  \end{align*}
  Similarly, if $g$ is concave, then
  \begin{align*}
    \frac{g(2+\ell) + g(\ell)}{2} \le g(1+\ell).
  \end{align*}
  This proves convexity or concavity. To show strictness, observe that under the
  given assumption, the term
  \begin{align*}
    b(0, n, p) ((g(2) - g(1)) - (g(1) - g(0)) \ne 0,
  \end{align*}
  meaning $f$ is strictly convex/concave in $p$.
\end{proof}

\begin{lemma}
  \label{lem:inc-product}
  Let $\{a_\ell\}_{\ell=1}^{k}$ be a positive, non-decreasing sequence. Let $f$
  be a concave, nonnegative, increasing function. Then, for
  any positive $c$,
  \begin{align*}
    \max_b
    &\sum_{\ell=1}^k a_\ell f(b_\ell) \\
    \text{s.t.}
    &\sum_{\ell} b_\ell \le c \\
     & \{b_\ell\}_{\ell=1}^{k} \text{ is non-increasing} 
  \end{align*}
  is attained by $b_\ell^* = c/k$ for all $\ell$. If $a_k > a_1$ or $f$ is
  strictly concave, then this is the unique maximizer.
\end{lemma}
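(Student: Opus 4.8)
The plan is to reduce this constrained maximization to two classical inequalities: Jensen's inequality (using concavity of $f$) and Chebyshev's sum inequality (using that $\{a_\ell\}$ is ascending while every feasible $\{b_\ell\}$ is descending). First I would note that $b^* = (c/k, \dots, c/k)$ is feasible --- it is non-increasing and $\sum_\ell b^*_\ell = c$ --- so it suffices to show $\sum_{\ell=1}^k a_\ell f(b_\ell) \le \sum_{\ell=1}^k a_\ell f(c/k)$ for every feasible $b$, and then to examine when equality forces $b = b^*$. Write $A = \sum_{\ell=1}^k a_\ell$, which is strictly positive since the $a_\ell$ are positive.

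For the upper bound I would chain three steps. (i) Applying Jensen to the concave $f$ with the probability weights $a_\ell/A$ gives
\[
  \sum_{\ell=1}^k a_\ell f(b_\ell) \;\le\; A\, f\!\left(\frac{1}{A}\sum_{\ell=1}^k a_\ell b_\ell\right).
\]
(ii) Since $\{a_\ell\}$ is non-decreasing and $\{b_\ell\}$ is non-increasing, these two sequences are oppositely sorted, so Chebyshev's sum inequality yields $\frac1k\sum_\ell a_\ell b_\ell \le \bigl(\frac1k\sum_\ell a_\ell\bigr)\bigl(\frac1k\sum_\ell b_\ell\bigr)$; dividing through by $\frac1k A$ and using $\sum_\ell b_\ell \le c$ gives $\frac1A\sum_\ell a_\ell b_\ell \le \frac1k\sum_\ell b_\ell \le c/k$. (iii) Because $f$ is increasing and $A>0$, steps (i) and (ii) combine to $\sum_\ell a_\ell f(b_\ell) \le A f(c/k) = \sum_\ell a_\ell f(b^*_\ell)$. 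Hence $b^*$ attains the maximum, with value $A f(c/k)$.

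For the uniqueness claim I would trace the equality conditions back through (i)--(iii). Equality in Jensen for a concave $f$ and strictly positive weights forces $f$ to be affine on the interval spanned by $\{b_\ell\}$; if $f$ is strictly concave this is impossible unless $b_1 = \cdots = b_k$. Equality in Chebyshev's inequality for oppositely sorted sequences forces one of them to be constant; if $a_k > a_1$ then $\{a_\ell\}$ is nonconstant, so $\{b_\ell\}$ must be constant. Thus under either hypothesis $b_\ell \equiv v$ for some $v \le c/k$, and then $A f(v) = A f(c/k)$ forces $f(v) = f(c/k)$; since $f$ is increasing (and a strictly concave non-decreasing function is automatically strictly increasing), this gives $v = c/k$, i.e. $b = b^*$.

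I do not anticipate a real obstacle: the only insight needed is that the monotonicity constraint on $b$ is present precisely to make $\{a_\ell\}$ and $\{b_\ell\}$ oppositely sorted, so that Chebyshev converts the $a$-weighted mean $\frac1A\sum_\ell a_\ell b_\ell$ into the plain mean $\frac1k\sum_\ell b_\ell \le c/k$, after which Jensen and monotonicity of $f$ close the argument. The only point requiring a little care is the uniqueness analysis, where the two sufficient conditions ($a_k > a_1$ versus $f$ strictly concave) must be handled on separate branches, and one must observe that a strictly concave increasing $f$ is strictly increasing so that $f(v) = f(c/k)$ pins down $v$.
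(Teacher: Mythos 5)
Your proof is correct, but it takes a genuinely different route from the paper. The paper's argument is a KKT/Lagrangian one: it writes multipliers for the budget and ordering constraints, sums the stationarity conditions to identify $\mu = \frac{1}{k}\sum_\ell a_\ell f'(b_\ell)$, shows every ordering multiplier $\lambda_j$ is strictly positive because the non-decreasing sequence $\{a_\ell f'(b_\ell)\}$ strictly increases somewhere, and concludes via complementary slackness that all coordinates of $b$ coincide. Your argument replaces all of this with Jensen's inequality (weights $a_\ell/A$) followed by Chebyshev's sum inequality for oppositely sorted sequences, which converts the $a$-weighted mean of $b$ into the plain mean and hence bounds the objective by $A f(c/k)$; uniqueness then falls out of the equality cases of the two inequalities. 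What your route buys: it is calculus-free (no $f'$, no differentiability or constraint-qualification assumptions), it proves directly that the uniform point is a global maximizer rather than only characterizing stationary points, and the equality analysis of Chebyshev cleanly handles the $a_k>a_1$ branch, which in the paper is the somewhat delicate ``average of the last $k-j$ terms exceeds the average of the first $j$ terms'' step. What the paper's route buys is consistency with the Lagrangian style used elsewhere (e.g., the proof of \Cref{lem:unique-opt-decreasing}). One small point to make explicit in your write-up: in the $a_k>a_1$ branch, passing from $f\p{\frac{1}{A}\sum_\ell a_\ell b_\ell}=f(\bar b)=f(c/k)$ to equality of the arguments (and hence to equality in Chebyshev and tightness of the budget) uses that $f$ is \emph{strictly} increasing; this is the intended reading of ``increasing'' here, and the paper's own proof relies on the same fact implicitly through $f'(b_k)>0$, so it is not a gap relative to the paper, just worth stating.
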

\begin{proof}
  In the case where $a_k = a_1$, then all the $a_\ell$'s are equal and the claim
  is trivially true. From here, we consider the case where $a_k > a_1$.
  The Lagrangian is
  \ifthenelse{\boolean{smallEqs}}{
  \begin{align*}
    \mc{L}(b; \mu, \lambda)
    &= -\sum_{\ell=1}^k a_\ell f(b_\ell) + \mu \p{\sum_{\ell=1}^k b_\ell - c} \\
    &+ \sum_{\ell=1}^{k-1} \lambda_\ell (b_{\ell+1} - b_\ell).
  \end{align*}
}{\begin{align*}
    \mc{L}(b; \mu, \lambda)
    &= -\sum_{\ell=1}^k a_\ell f(b_\ell) + \mu \p{\sum_{\ell=1}^k b_\ell - c} +
    \sum_{\ell=1}^{k-1} \lambda_\ell (b_{\ell+1} - b_\ell).
  \end{align*}
}
  An optimal solution must satisfy stationarity: for all $\ell$,
  \begin{align*}
    \frac{\partial}{\partial b_\ell} \mc{L}(b; \mu, \lambda)
    &= 0.
  \end{align*}
  This means
  \begin{align*}
    -a_1 f'(b_1) + \mu - \lambda_1
    &= 0 \\
    -a_\ell f'(b_\ell) + \mu + \lambda_{\ell-1} - \lambda_\ell
    &= 0 \tag{$2 \le \ell \le k-1$} \\
    -a_k f'(b_k) + \mu + \lambda_{k-1}
    &= 0.
  \end{align*}
  Summing all these equations, we have
  \begin{align*}
    -\sum_{\ell=1}^k a_\ell f'(b_\ell) + k \mu
    &= 0 \\
    \mu = \frac{1}{k} \sum_{\ell=1}^k a_\ell f'(b_\ell).
  \end{align*}
  We will show that $\lambda_j > 0$ for all $j \le k-1$.
  Sum the first $j$ equations above to get
  \ifthenelse{\boolean{smallEqs}}{
  \begin{align*}
    &-\sum_{\ell=1}^j a_\ell f'(b_\ell) + j \mu - \lambda_j
    = 0 \\
    \lambda_j
    &= j \mu - \sum_{\ell=1}^j a_\ell f'(b_\ell) \\
    &= \frac{j}{k} \sum_{\ell=1}^k a_\ell f'(b_\ell) - \sum_{\ell=1}^j a_\ell
    f'(b_\ell) \\
    &= \frac{j}{k} \p{\sum_{\ell=1}^j a_\ell f'(b_\ell) + \sum_{j+1}^k a_\ell
    f'(b_\ell)} -
    \sum_{\ell=1}^j a_\ell f'(b_\ell) \\
    &= \frac{j}{k} \sum_{j+1}^k a_\ell f'(b_\ell) -
    \frac{k - j}{k} \sum_{\ell=1}^j a_\ell f'(b_\ell) \\
    &= \frac{j(k-j)}{k} \frac{1}{k-j} \sum_{j+1}^k a_\ell f'(b_\ell) -
    \frac{j(k - j)}{k} \frac{1}{j} \sum_{\ell=1}^j a_\ell f'(b_\ell) \\
    &> 0.
  \end{align*}
}{\begin{align*}
    -\sum_{\ell=1}^j a_\ell f'(b_\ell) + j \mu - \lambda_j
    &= 0 \\
    \lambda_j
    &= j \mu - \sum_{\ell=1}^j a_\ell f'(b_\ell) \\
    &= \frac{j}{k} \sum_{\ell=1}^k a_\ell f'(b_\ell) - \sum_{\ell=1}^j a_\ell
    f'(b_\ell) \\
    &= \frac{j}{k} \p{\sum_{\ell=1}^j a_\ell f'(b_\ell) + \sum_{j+1}^k a_\ell
    f'(b_\ell)} -
    \sum_{\ell=1}^j a_\ell f'(b_\ell) \\
    &= \frac{j}{k} \sum_{j+1}^k a_\ell f'(b_\ell) -
    \frac{k - j}{k} \sum_{\ell=1}^j a_\ell f'(b_\ell) \\
    &= \frac{j(k-j)}{k} \frac{1}{k-j} \sum_{j+1}^k a_\ell f'(b_\ell) -
    \frac{j(k - j)}{k} \frac{1}{j} \sum_{\ell=1}^j a_\ell f'(b_\ell) \\
    &> 0.
  \end{align*}
}
  The last step follows because both $a_\ell$ and $f'(b_\ell)$ are weakly
  increasing, and as long as either $a_k > a_1$ or $f'(b_k) > f(b_1)$, the
  overall sequence must strictly increase somewhere. The average of the last
  $k-j$ terms of a non-decreasing sequence that strictly increases somewhere
  must be strictly larger than the average of the first $j$ terms. Complementary
  slackness tells us that $\lambda_\ell > 0$ implies $b_\ell = b_{\ell+1}$ for
  all $\ell \le k-1$, meaning that the unique optimizer is $b_\ell^* = c/k$ for
  all $\ell$.
\end{proof}

\begin{lemma}
  \label{lem:1-x-bound}
  For $n > 0$ and $0 < x < 1/n$,
  \begin{align*}
    (1-x)^n \le 1 - nx + \binom{n}{2} x^2.
  \end{align*}
\end{lemma}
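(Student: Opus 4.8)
The plan is to prove the inequality by induction on $n$. The base case $n=1$ is immediate: $\binom{1}{2}=0$, so the claim reads $(1-x)\le 1-x$, which holds with equality; likewise $n=2$ reads $(1-x)^2\le 1-2x+x^2$, again with equality. (More conceptually, for $n\le 2$ the right-hand side is exactly the binomial expansion of $(1-x)^n$, so there is nothing to prove.)

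For the inductive step, suppose the bound holds for some fixed $n\ge 1$ and every $0<x<1/n$, and fix $x$ with $0<x<1/(n+1)$. Since $1/(n+1)<1/n$, the inductive hypothesis applies at this same $x$, and since $x<1$ we have $1-x>0$, so multiplying the hypothesis through by $1-x$ preserves the direction of the inequality: $(1-x)^{n+1}\le (1-x)\bigl(1-nx+\binom{n}{2}x^2\bigr)$. Expanding the right-hand side and applying Pascal's identity $\binom{n}{2}+n=\binom{n+1}{2}$ gives $(1-x)^{n+1}\le 1-(n+1)x+\binom{n+1}{2}x^2-\binom{n}{2}x^3$, and discarding the term $-\binom{n}{2}x^3\le 0$ yields exactly the claimed bound for $n+1$.

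There is no substantive obstacle here; the only points needing (minor) care are that the admissible interval for $x$ shrinks with $n$ (so the constraint $x<1/n$ propagates down the induction correctly) and that $1-x\ge 0$, which is what licenses multiplying through the inductive inequality. As an alternative one can give a one-line calculus argument: with $f(x)=1-nx+\binom{n}{2}x^2-(1-x)^n$ one checks $f(0)=f'(0)=0$ and $f''(x)=n(n-1)\bigl(1-(1-x)^{n-2}\bigr)\ge 0$ on $[0,1]$ for $n\ge 2$, so $f$ is nondecreasing and hence nonnegative on $[0,1]$; this in fact proves the inequality on the full range $0\le x\le 1$, with the hypothesis $x<1/n$ being more than enough for the application in \Cref{thm:poa}.
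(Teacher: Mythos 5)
Your proof is correct, but it takes a genuinely different route from the paper's. The paper expands $(1-x)^n = \sum_{i=0}^{\infty} \binom{n}{i}(-x)^i$ and argues that the tail $\sum_{i=3}^{\infty}\binom{n}{i}(-x)^i$ is nonpositive because it is an alternating series whose first term is negative and whose terms decrease in absolute value when $x < 1/n$ (since $\frac{n-i}{i+1}x < nx < 1$ for $i \ge 3$). You instead induct on $n$, multiplying the hypothesis by $(1-x) \ge 0$ and invoking Pascal's identity, and you also sketch a convexity argument via $f''(x) = n(n-1)\bigl(1 - (1-x)^{n-2}\bigr) \ge 0$. Both approaches are elementary. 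The paper's argument makes essential use of the hypothesis $x < 1/n$ (it is precisely what yields the decreasing-magnitude alternating tail), whereas your induction and your calculus variant need only $0 \le x \le 1$; you correctly observe this, so your argument establishes the inequality on a strictly wider range than stated, at the modest cost of an induction rather than a one-shot series bound. One stylistic note: in the inductive step the remark that ``the inductive hypothesis applies at this same $x$ since $1/(n+1) < 1/n$'' is fine but superfluous once you notice that the induction never actually uses $x < 1/n$; only $0 \le x \le 1$ is needed, which is what licenses both the multiplication by $1-x$ and the discard of $-\binom{n}{2}x^3$.
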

\begin{proof}
  \begin{align*}
    (1-x)^n
    &= \sum_{i=0}^\infty
    \binom{n}{i} (-x)^i
    = 1 - nx + \binom{n}{2} x^2 + \sum_{i=3}^\infty \binom{n}{i} (-x)^i.
  \end{align*}
  We need to show
  \begin{align*}
    \sum_{i=3}^\infty \binom{n}{i} (-x)^i \le 0.
  \end{align*}
  Observe that the first term is negative, and for $x < 1/n$, the absolute value
  of each term is decreasing. Therefore, the sum is bounded from above by 0.
\end{proof}

\section{Omitted proofs for Section~\ref{sec:inter-tool}}
\label{app:pairwise-proofs}

\dominantstrength*
\begin{proof}
  We will provide an example where $|\Pi| = 2$. Let $\bd = [1; ~ 1 -
  \varepsilon; ~ 1 - \varepsilon; ~ 0; ~\dots ~ 0]$. Let $\bd(\pi)$ denote $\bd$
  ordered according to $\pi$. We consider two GAITs $\pin{1}$ and $\pin{2}$ given by
  \ifthenelse{\boolean{smallEqs}}{
  \begin{align*}
    \pin{1}
    &= [1; ~ 4; ~ 5; ~ \dots ~ K; ~ 2; ~ 3] \\
    &\Longrightarrow
    \bd(\pin{1}) = [1; ~ 0; ~ \dots ~ 0; ~ 1 - \varepsilon; ~ 1 - \varepsilon] \\
    \pin{2}
    &= [2; ~ 3; ~ 4; ~ \dots ~ K; ~ 1] \\
    &\Longrightarrow
    \bd(\pin{2}) = [1 - \varepsilon; ~ 1 - \varepsilon; ~ 0; ~ \dots ~ 0; ~ 1]
  \end{align*}
}{\begin{align*}
    \pin{1}
    &= [1; ~ 4; ~ 5; ~ \dots ~ K; ~ 2; ~ 3]
    &\Longrightarrow&
    & \bd(\pin{1}) &= [1; ~ 0; ~ \dots ~ 0; ~ 1 - \varepsilon; ~ 1 - \varepsilon] \\
    \pin{2}
    &= [2; ~ 3; ~ 4; ~ \dots ~ K; ~ 1]
    &\Longrightarrow&
    & \bd(\pin{2}) &= [1 - \varepsilon; ~ 1 - \varepsilon; ~ 0; ~ \dots ~ 0; ~ 1]
  \end{align*}
}
  First, observe that $\minm_1(1, \pi, \bd) = 1$. This is because the
  distribution $\bp_1 = [1; ~ 0; ~ \dots ~ 0] \in \Delta(\pin{1})$ yields utility
  1. In contrast, a single player's optimal strategy in $\Delta(\pin{2})$ is $[0;
  ~ 1; ~ 0; ~ \dots ~ 0]$, which yields utility $1 - \varepsilon$.

  Next, we will show that there exists $n$ such that $\maxm_1(n, \Pi, \bd) <
  \minm_2(n, \Pi, \bd)$. In particular, choose $n=3$. Assume towards
  contradiction that there is an equilibrium in which $m_1 \ge 2$. For
  sufficiently large $K$, these $m_1$ players must choose the equilibrium
  strategy $\bp = [1; ~ 0; ~ \dots ~ 0]$, yielding utility at most $1/m_1$. This
  is because the constraint $\bp \in \Delta(\pin{1})$ means that $\bp_1 +
  \bp_2(K-2) \le 1$ and $\bp_1 + \bp_3(K-1) \le 1$, so for large $K$, it is
  optimal to choose $\bp_2 = \bp_3 = 0$.

  Since $m_2 \le 1$, if a player chooses $\pin{2}$, their equilibrium strategy
  $\bq$ satisfies $\bq_2 + \bq_3 = 1$ for a similar reason: for $\bq \in
  \Delta(\pin{2})$, $\bq_2 + \bq_3 + \bq_1(K-2) \le 1$. Again, for large $K$, it
  is optimal to choose $\bq_1 = 0$. This means that all players who choose
  $\pin{1}$ get utility $1/m_1$. They could deviate to choosing $\pin{2}$ with $\bq'
  = [0; ~ \nicefrac{1}{2}; ~ \nicefrac{1}{2}; ~ 0; ~ \dots ~ 0]$, which would
  yield utility at least
  \ifthenelse{\boolean{smallEqs}}{
  \begin{align*}
    &(1-\varepsilon) \frac{1}{2} \p{\E{\frac{1}{1 + X(1, \bq_2)}} + \E{\frac{1}{1
    + X(1, \bq_3)}}} \\
    &= \frac{1-\varepsilon}{2} \p{1 - \bq_2 + \frac{\bq_2}{2} + 1 - \bq_3 +
    \frac{\bq_3}{2}} \\
    &= \frac{1-\varepsilon}{2} \p{2 - \frac{\bq_2}{2} - \frac{\bq_3}{2}} \\
    &= \frac{1-\varepsilon}{2} \p{2 - \frac{1}{2}} \tag{$\bq_2 + \bq_3 = 1$} \\
    &= \frac{3(1-\varepsilon)}{4} \\
    &> \frac{1}{m_1} 
  \end{align*}
}{\begin{align*}
    (1-\varepsilon) \frac{1}{2} \p{\E{\frac{1}{1 + X(1, \bq_2)}} + \E{\frac{1}{1
    + X(1, \bq_3)}}}
    &= \frac{1-\varepsilon}{2} \p{1 - \bq_2 + \frac{\bq_2}{2} + 1 - \bq_3 +
    \frac{\bq_3}{2}} \\
    &= \frac{1-\varepsilon}{2} \p{2 - \frac{\bq_2}{2} - \frac{\bq_3}{2}} \\
    &= \frac{1-\varepsilon}{2} \p{2 - \frac{1}{2}} \tag{$\bq_2 + \bq_3 = 1$} \\
    &= \frac{3(1-\varepsilon)}{4} \\
    &> \frac{1}{m_1} 
  \end{align*}
}
  for sufficiently small $\varepsilon$. By contradiction, there is no pure
  partially symmetric equilibrium in which $m_1 \ge 2$, meaning $\maxm_1(3, \Pi,
  \bd) \le 1$. Finally, we will show that there is an pure partially symmetric
  equilibrium in which $m_1 = 1$ and $m_2 = 2$. Let
  \begin{align*}
    \bp^+ = [1; ~ 0; ~ \dots ~ 0]
    &&
    \bq^+ = [0; ~ \nicefrac{1}{2}; ~ \nicefrac{1}{2}; ~ 0; ~ \dots ~ 0].
  \end{align*}
  With $m_1 = 1$ and $m_2 = 2$, the player who chooses $\pin{1}$ gets utility 1,
  and players who choose $\pin{2}$ each get utility $3(1-\varepsilon)/4$. No
  player can strictly improve their utility by either switching to the other
  tool or changing their distribution. Therefore, $\maxm_1(3, \Pi, \bd) <
  \minm_2(3, \Pi, \bd)$.
\end{proof}

\strict*
\begin{proof}

  Without loss of generality, let $\pin{1} = [1; ~ 2; ~ \dots ~ K]$ and let
  $\bd_{\hat k} = 0$. Let $\pin{2} = [\hat k; ~ 1; ~ \dots ~ \hat k-1; ~ \hat k+1;
  ~ K]$. Our goal is to show that $\maxm_2(n, \Pi, \bd) = 0$ for all $n$, $\Pi
  \subseteq \{\pin{1}, \pin{2}\}$. To do so, assume towards contradiction that at
  equilibrium, some player $i$ chooses $\pin{2}$ with distribution $\bp$. Then,
  player $i$ could strictly improve their utility by:
  \begin{itemize}
    \item switching to $\pin{1}$, and
    \item transferring all probability mass from $\bp_{\hat k}$ to $\bp_1$.
  \end{itemize}
  This is because for all $k \ne \hat k$, player $i$ still puts at least as much
  probability mass on type $k$. Unless $\bp$ is the uniform distribution, at
  least one of these increases must be strict, and because $s(x) < \infty$,
  this strictly increases expected utility. If $\bp$ is the uniform
  distribution, then transferring probability mass from $\hat k$ to 1 strictly
  increases utility. Therefore, there can be no equilibrium in which a player
  chooses $\pin{2}$.
\end{proof}

\perfect*
\begin{proof}
  
  Without loss of generality, let $\pin{1} = [1; ~ 2; ~ \dots ~ K]$. First, we
  consider the case where $\pin{1}$ is a perfect ordering. For a given $n$, let
  $\bp$ be the equilibrium distribution for $\game(n, \bd, s)$ when players
  can only use $\pin{1}$. We will show that this remains an equilibrium even when
  they are given the option to switch to other GAITs.

  To do so, recall that by \Cref{lem:unique-eq-decreasing}, $\bp$ satisfies
  \begin{align*}
    U(\bp; n, \bd, s) \ge \bd_k u(n-1, \bp_k)
  \end{align*}
  for all $k$. Suppose a player deviates to some distribution $\bp'$ via some
  other GAIT $\pi'$. Given that the remaining $n-1$ players are all playing
  $\bp$, the deviating player's utility is
  \ifthenelse{\boolean{smallEqs}}{
  \begin{align*}
    \sum_{k \in [K]} \bp_k' \bd_k u(n-1, \bp_k)
    &\le \sum_{k \in [K]} \bp_k' U(\bp; n, \bd, s) \\
    &= U(\bp; n, \bd, s).
  \end{align*}
}{\begin{align*}
    \sum_{k \in [K]} \bp_k' \bd_k u(n-1, \bp_k)
    \le \sum_{k \in [K]} \bp_k' U(\bp; n, \bd, s)
    = U(\bp; n, \bd, s).
  \end{align*}
}
  Thus, no player can strictly improve their utility by deviating, meaning that
  $\bp$ is an equilibrium.

  Next, if $\pin{1}$ is not a perfect ordering, then let $\hat k$ be such that
  $\bd_{\hat k-1} < \bd_{\hat k}$. Assume towards contradiction that there
  exists an equilibrium in which all players use $\pin{1}$. This means they must
  play the strategy $\bp = \peq(n, \bd, s) = \peq(n, \tbd, s)$ where $\tbd$ is
  the result of the application of the Pool Adjacent Violators algorithm to
  $\bd$ (\Cref{thm:unique-eq}). Let $n_0$ be large enough that $\peq(n_0, \tbd,
  s)_{\hat k} > 0$. (Such an $n_0$ must exist by a similar argument to the one
  used in the proof of \Cref{lem:n-to-infty}.) Using
  \Cref{lem:unique-eq-decreasing}, as long as $n \ge n_0$,
  \ifthenelse{\boolean{smallEqs}}{
  \begin{align*}
    \bd_{\hat k} u(n-1, \bp_{\hat k})
    > \tbd_{\hat k} u(n-1, \bp_{\hat k})
    &= U(\bp; n, \tbd, s) \\
    &= U(\bp; n, \bd, s).
  \end{align*}
}{\begin{align*}
    \bd_{\hat k} u(n-1, \bp_{\hat k})
    > \tbd_{\hat k} u(n-1, \bp_{\hat k})
    = U(\bp; n, \tbd, s)
    = U(\bp; n, \bd, s).
  \end{align*}
}
  Let $\pin{2}$ rank $\hat k$ first. Then, a player could deviate to choosing
  $\pin{2}$ and place all of their probability mass on $\hat k$ to get utility
  $\bd_{\hat k} u(n-1, \bp_{\hat k}) > U(\bp; n, \bd, s)$. Therefore, there is
  no equilibrium in which all players use $\pin{1}$.
\end{proof}

\section{Additional Details and Figures for \Cref{sec:empirical}}
\label{app:empirical}

\subsection{Additional figures}
\label{app:additional-figures}

\ifdefined\smallfigs
\begin{figure}[ht]
  \centering
  \begin{subfigure}[ht]{0.49\textwidth}
    \centering
    \includegraphics[width=\textwidth]{./img/lim_opt_3d.png}
    \caption{$\popt(n, \bd, s_\gamma)_1$ for $\bd = [5; ~ 2]$.}
    \label{fig:ps-to-inf-opt}
  \end{subfigure}
  \hfill
  \begin{subfigure}[ht]{0.49\textwidth}
    \centering
    \includegraphics[width=\textwidth]{./img/lim_eq_3d.png}
    \caption{$\peq(n, \bd, s_\gamma)_1$ for $\bd = [5; ~ 2]$.}
    \label{fig:ps-to-inf-eq}
  \end{subfigure}
  \caption{For further intuition on \Cref{lem:n-to-infty},
    we show how $\popt(n, \bd, s_\gamma)_1$ and $\peq(n, \bd, s_\gamma)_1$
    respectively vary with both $n$ and $\gamma$ for the instance given by $\bd
    = [5; ~ 2]$. For $\gamma < 1$, both $\popt$ and $\peq$ converge to $\bplim$,
    shown in magenta. For $\gamma \ge 1$, $\popt$ converges to the uniform
    distribution, with a discontinuity at $\gamma = 1$. Further, while $\peq$ is
    monotone in both $n$ and $\gamma$ (as predicted by
    \Cref{thm:diversity-n,thm:diversity-gamma}), $\popt$ is not. Increasing $n$
  can lead to a \textit{less} diverse symmetric socially optimal strategy.}
  \label{fig:ps-to-inf}
\end{figure}
\fi

\def\panelwidth{0.45}
\begin{figure}[ht]
    \centering
    \begin{subfigure}[t]{\panelwidth\textwidth}
        \centering
        \includegraphics[width=\textwidth]{./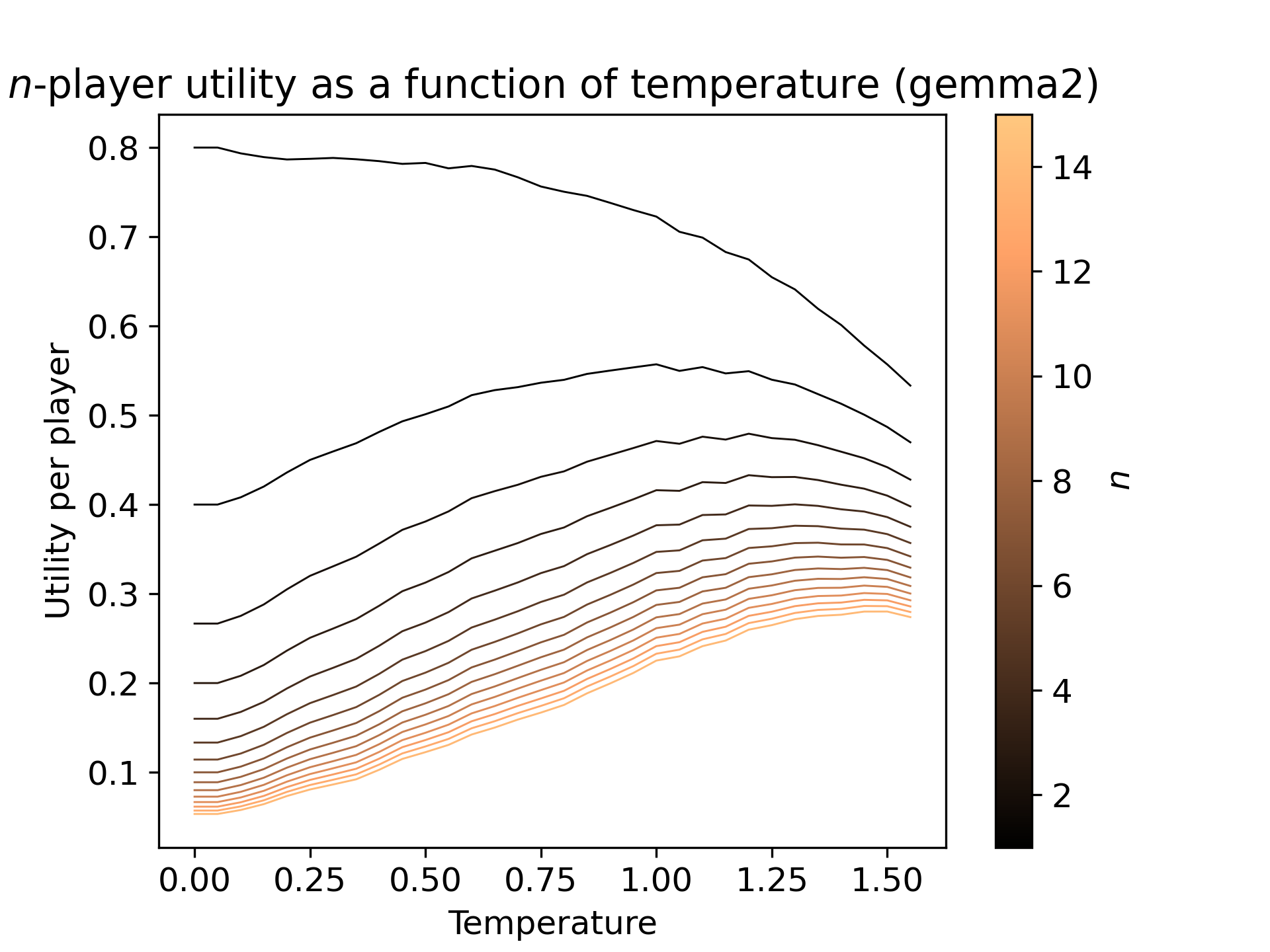}
    \end{subfigure}
    \hfill
    \begin{subfigure}[t]{\panelwidth\textwidth}
        \centering
        \includegraphics[width=\textwidth]{./img/llama3.1_opt_over_temp.png}
    \end{subfigure}

    \begin{subfigure}[t]{\panelwidth\textwidth}
        \centering
        \includegraphics[width=\textwidth]{./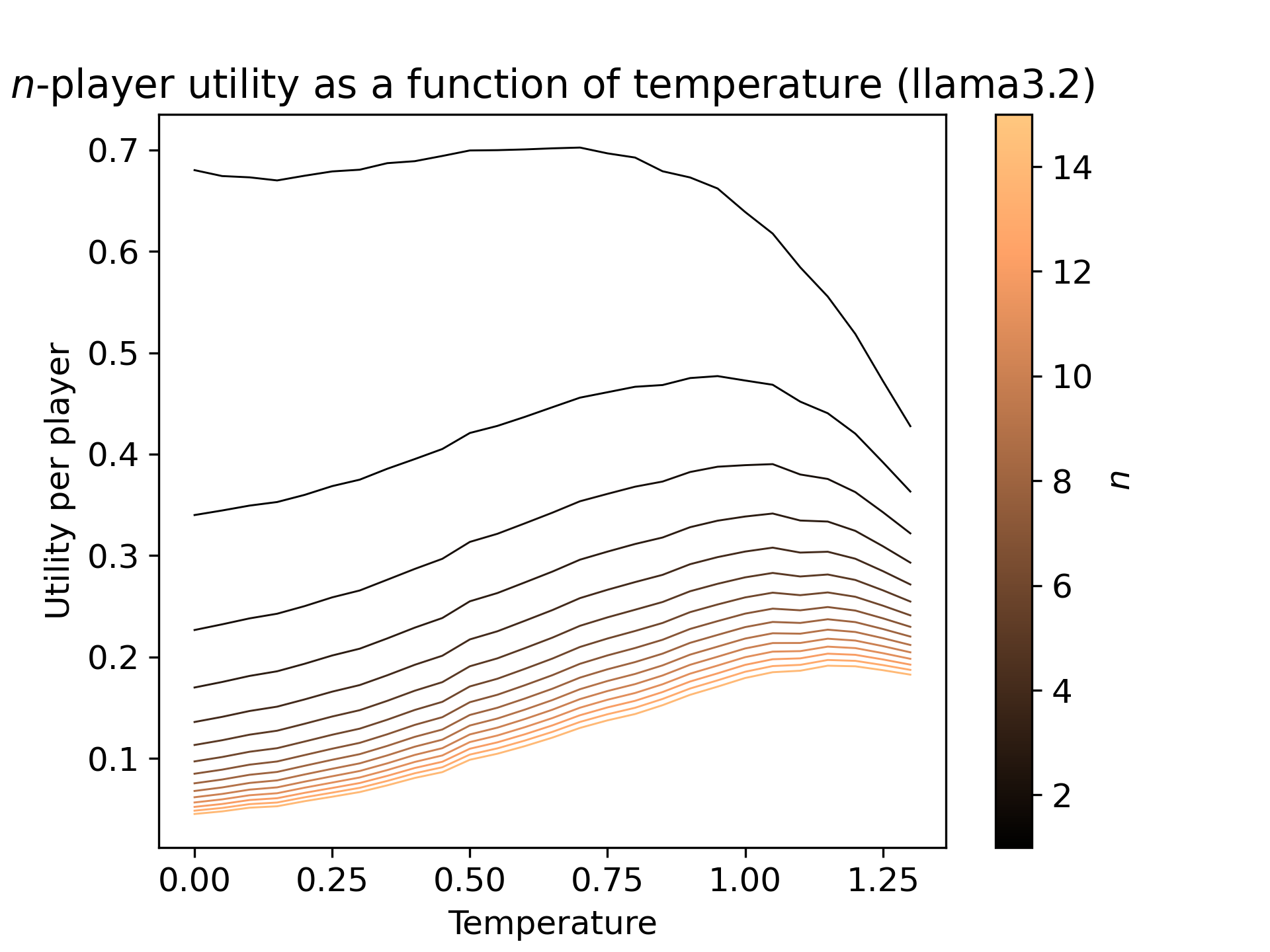}
    \end{subfigure}
    \hfill
    \begin{subfigure}[t]{\panelwidth\textwidth}
        \centering
        \includegraphics[width=\textwidth]{./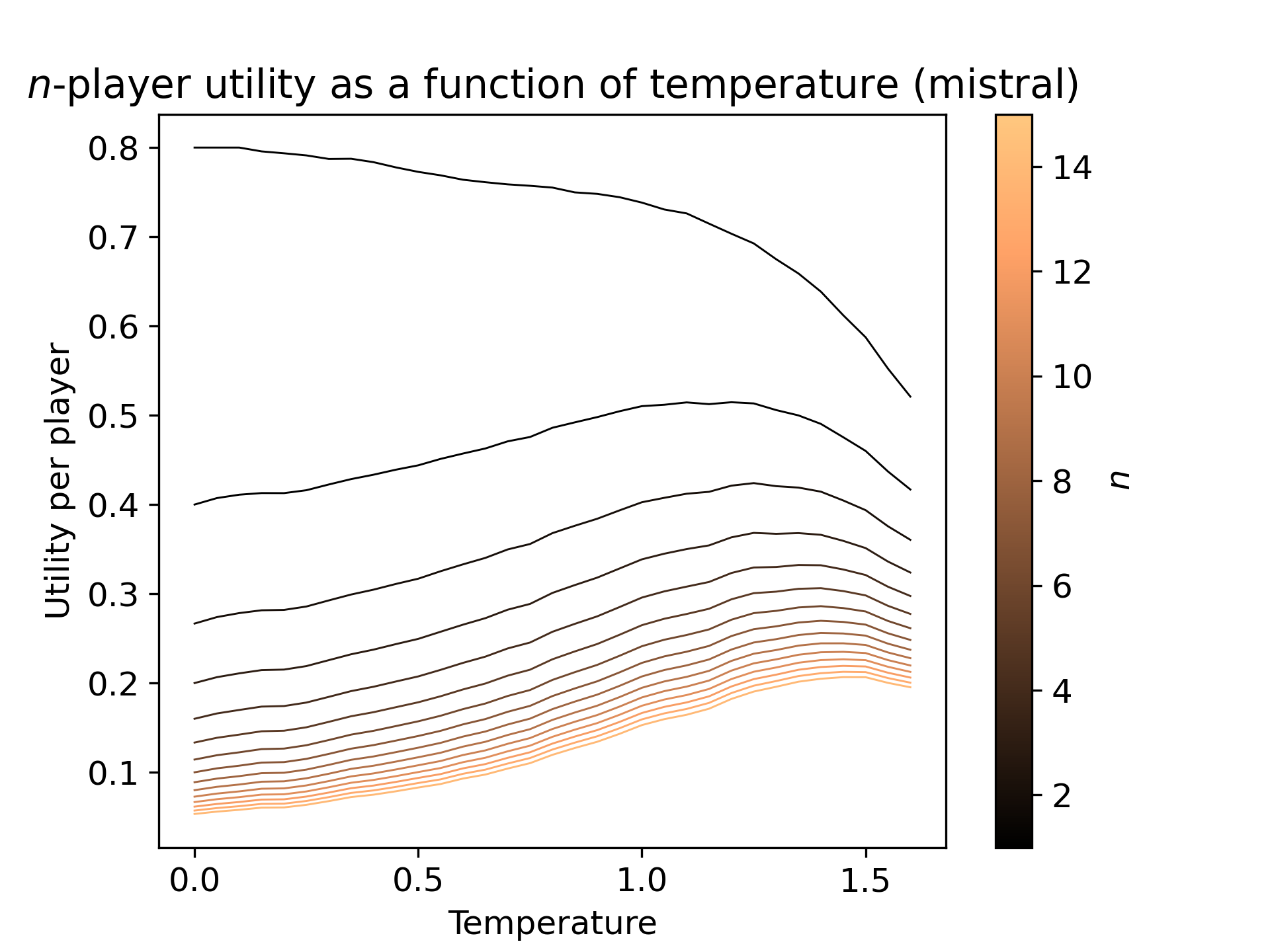}
    \end{subfigure}

    \begin{subfigure}[t]{\panelwidth\textwidth}
        \centering
        \includegraphics[width=\textwidth]{./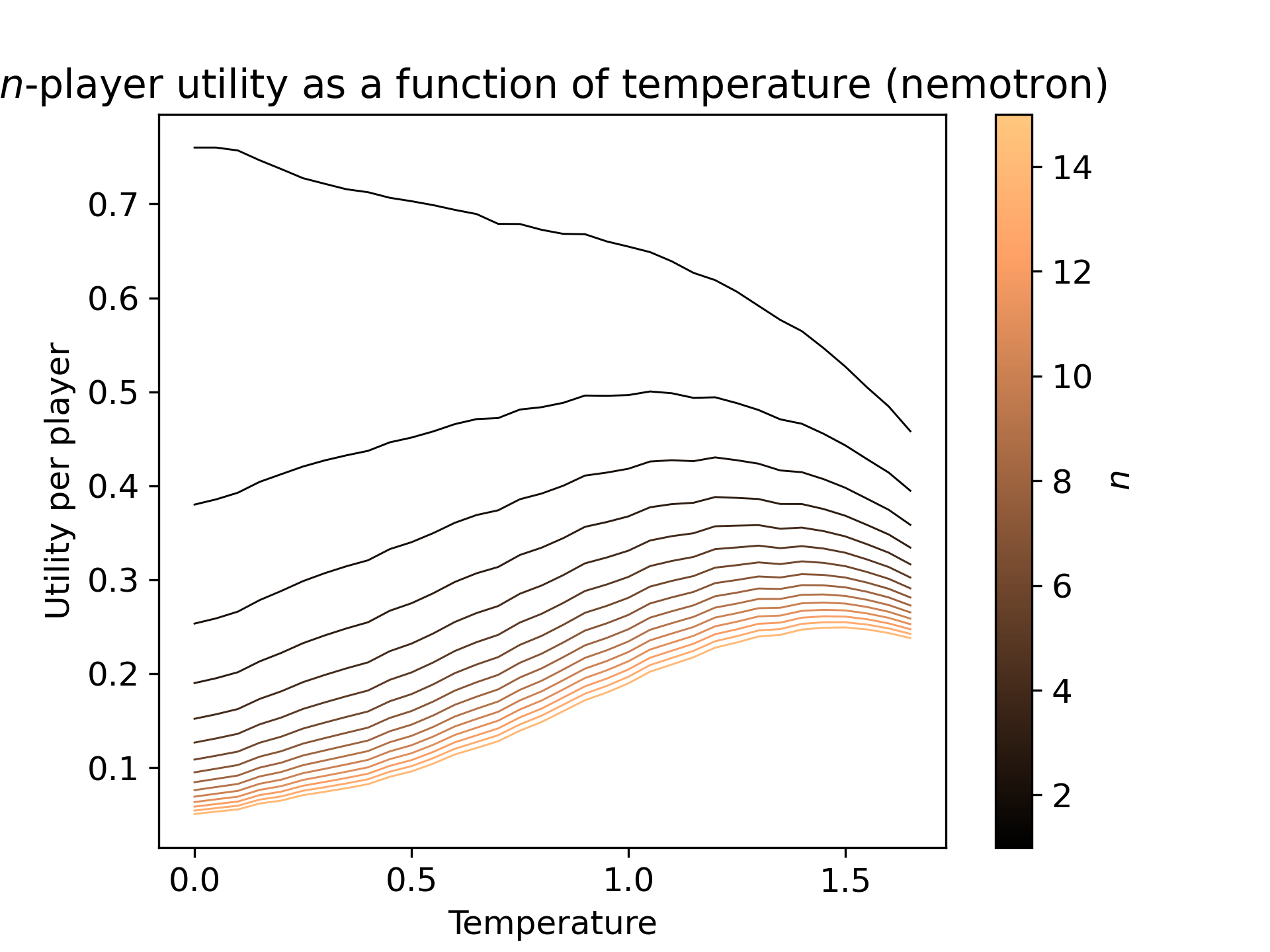}
    \end{subfigure}
    \hfill
    \begin{subfigure}[t]{\panelwidth\textwidth}
        \centering
        \includegraphics[width=\textwidth]{./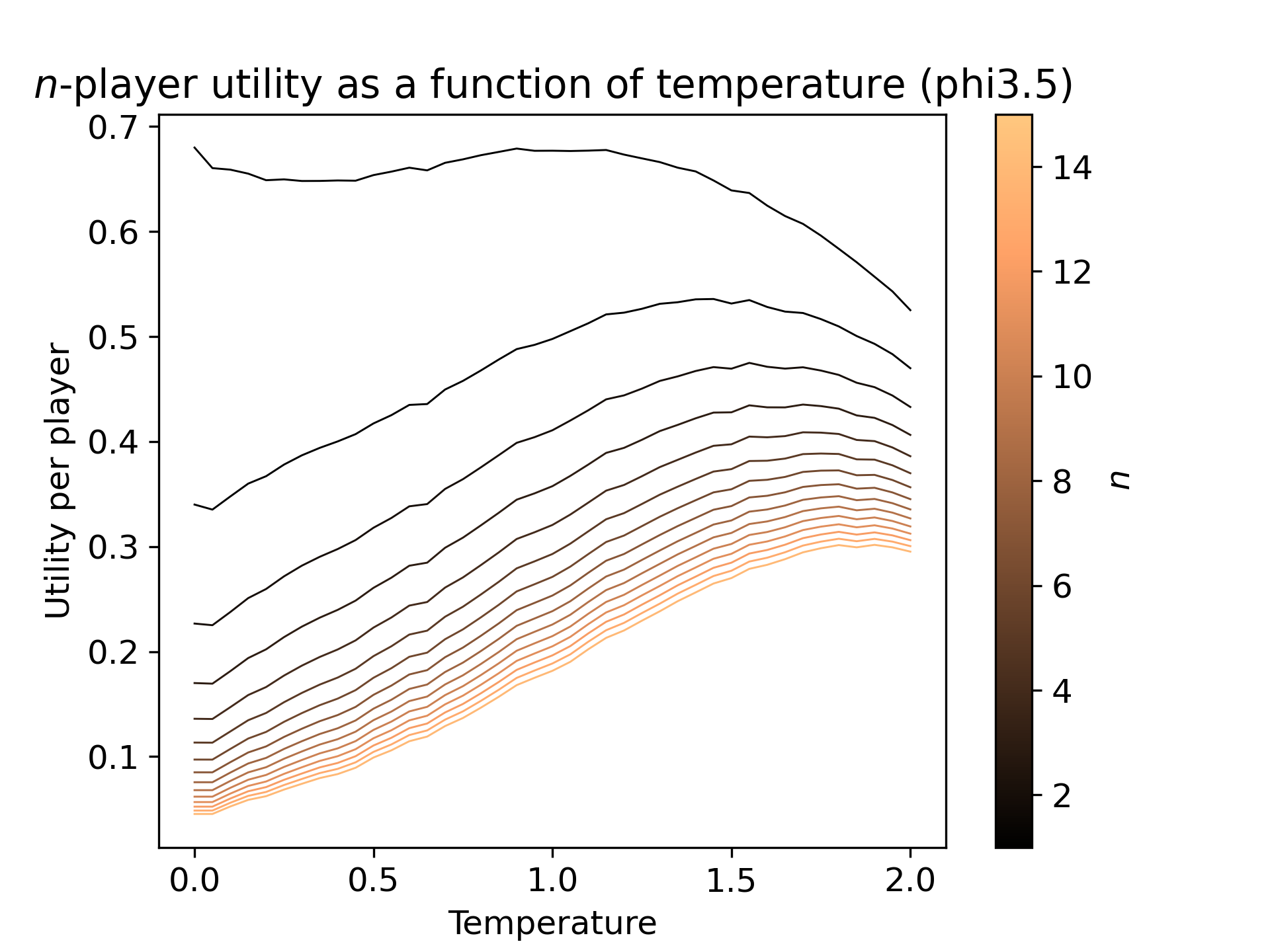}
    \end{subfigure}

    \caption{Social welfare as a function of $\tau$ for different values of $n$.
    $\gamma = 1.0$.}
    \label{fig:utility_over_temp}
\end{figure}

\ifdefined\smallfigs
\Cref{fig:ps-to-inf} visualizes \Cref{lem:n-to-infty} and shows that $\popt$ can
be non-monotone.
\fi
\Cref{fig:utility_over_temp} replicates \Cref{fig:llama3.1-opt-over-temp} for
all the language models we use.
\Cref{fig:opt_and_eq_temp_over_gamma,fig:opt_and_eq_welfare_over_gamma,fig:opt_and_eq_util_over_gamma},
provide analogous results to
\Cref{fig:opt_and_eq_temp_over_n,fig:opt_and_eq_util_over_n}
respectively using $\gamma$ instead of $n$. Because social welfare is different
from the sum of players' utilities when $\gamma > 1.0$, we plot them separately.
\Cref{fig:all-pairwise} shows pairwise equilibria for all pairs of language
models in a style similar to \Cref{fig:market-shares}.
We provide the prompts we use in
\Cref{tab:default-prompt,tab:verification-prompt}. For \gemma, we omit the
``System'' role because \gemma\ does not support it.

\begin{figure}[ht]
  \centering
  \includegraphics[width=0.8\textwidth]{./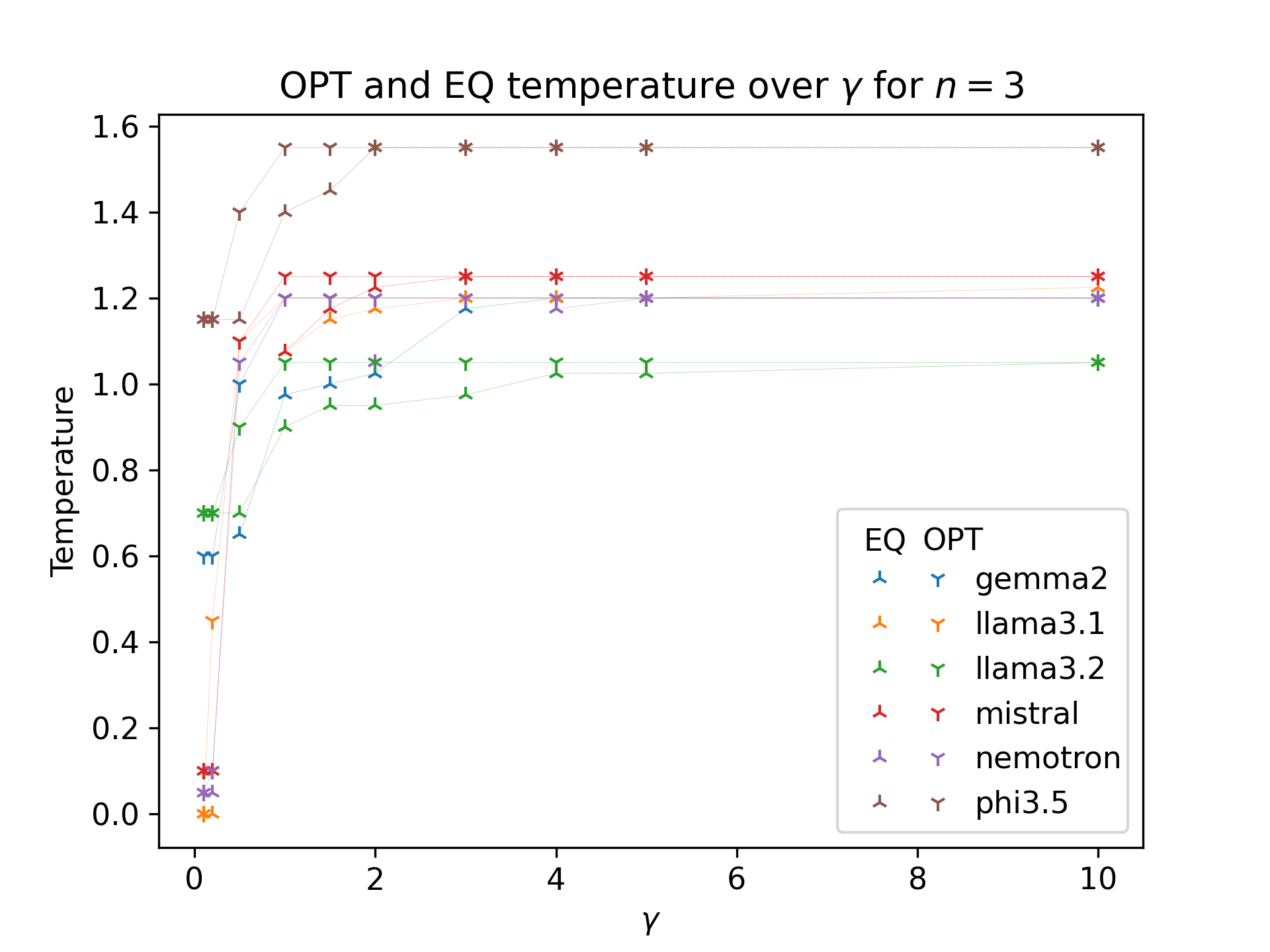}
  \caption{Socially optimal and equilibrium temperatures for each language model
    as a function of $\gamma$. Observe that (1) all curves are increasing in
    $\gamma$, and (2) equilibrium temperatures are lower than their socially
    optimal counterparts. Optimal and equilibrium temperatures converge for
    large $\gamma$, as predicted by \Cref{lem:opt-inf-equivalence}. Note that
    \llamaone\ appears to have higher equilibrium temperature than optimal
    temperature for $\gamma \ge 5$. This is an artifact of discretization of the
  temperature space. $n=3$.}
  \label{fig:opt_and_eq_temp_over_gamma}
\end{figure}

\begin{figure}[ht]
  \centering
  \includegraphics[width=0.8\textwidth]{./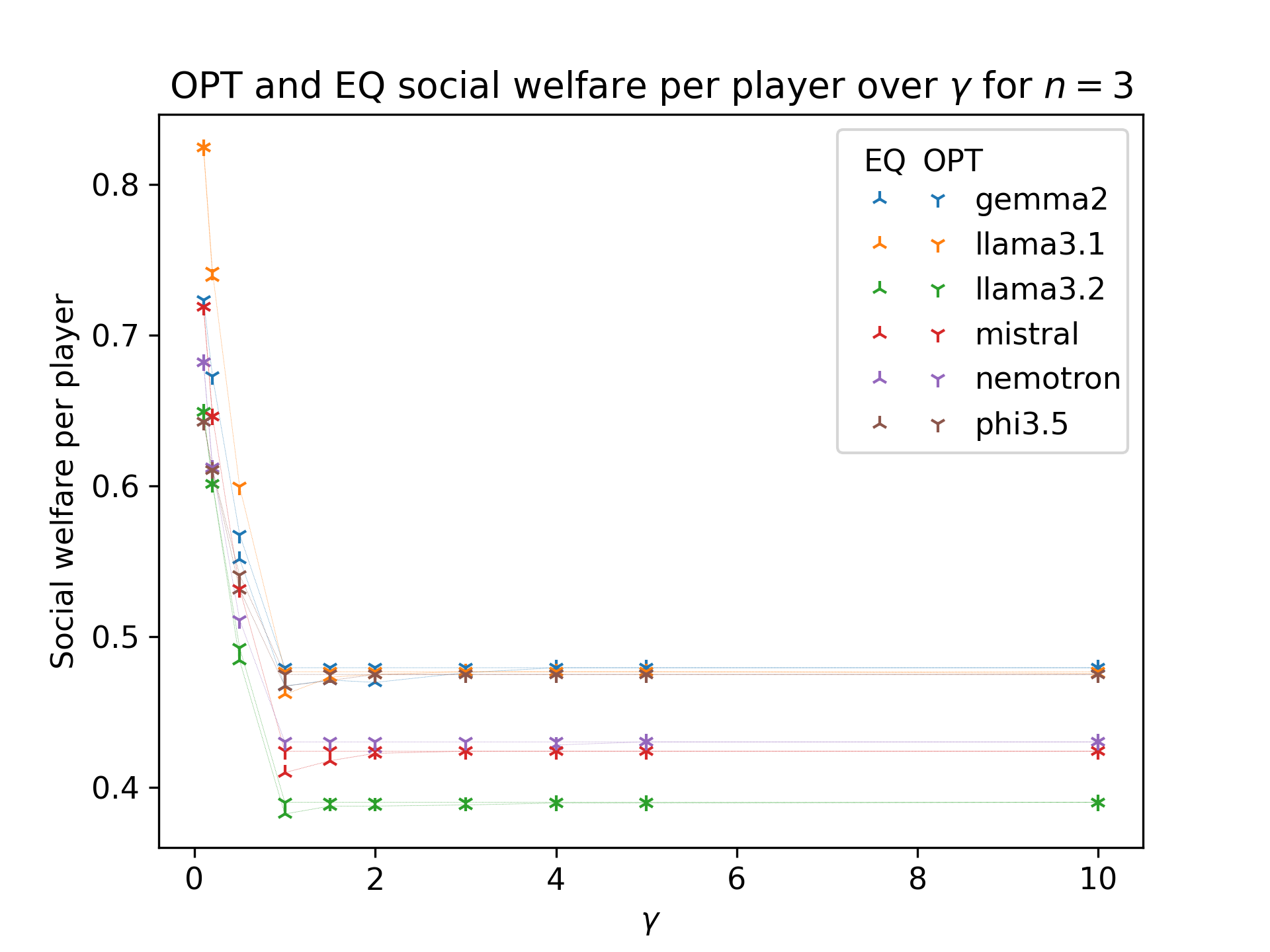}
  \caption{Social welfare for each
    language model at optimal and equilibrium temperatures
    as a function of $\gamma$. Note that for $\gamma \ge 1$, optimal social
    welfare is constant.
    Observe that (1) social welfare is decreasing in
    $\gamma$ for $\gamma \le 1$, and (2) equilibrium welfare is not too far from
    optimal. The nonmonotonicity at $\gamma = 1$ is due to the fact that our
    social welfare changes at this point. Interestingly, social welfare for
    \gemma, \phithree, and \llamaone\ are nearly identical for large values of
    $\gamma$ despite the fact that they yield very different per-player
    utilities (\Cref{fig:opt_and_eq_util_over_gamma}). $n=3$.}
  \label{fig:opt_and_eq_welfare_over_gamma}
\end{figure}

\begin{figure}[ht]
  \centering
  \includegraphics[width=0.8\textwidth]{./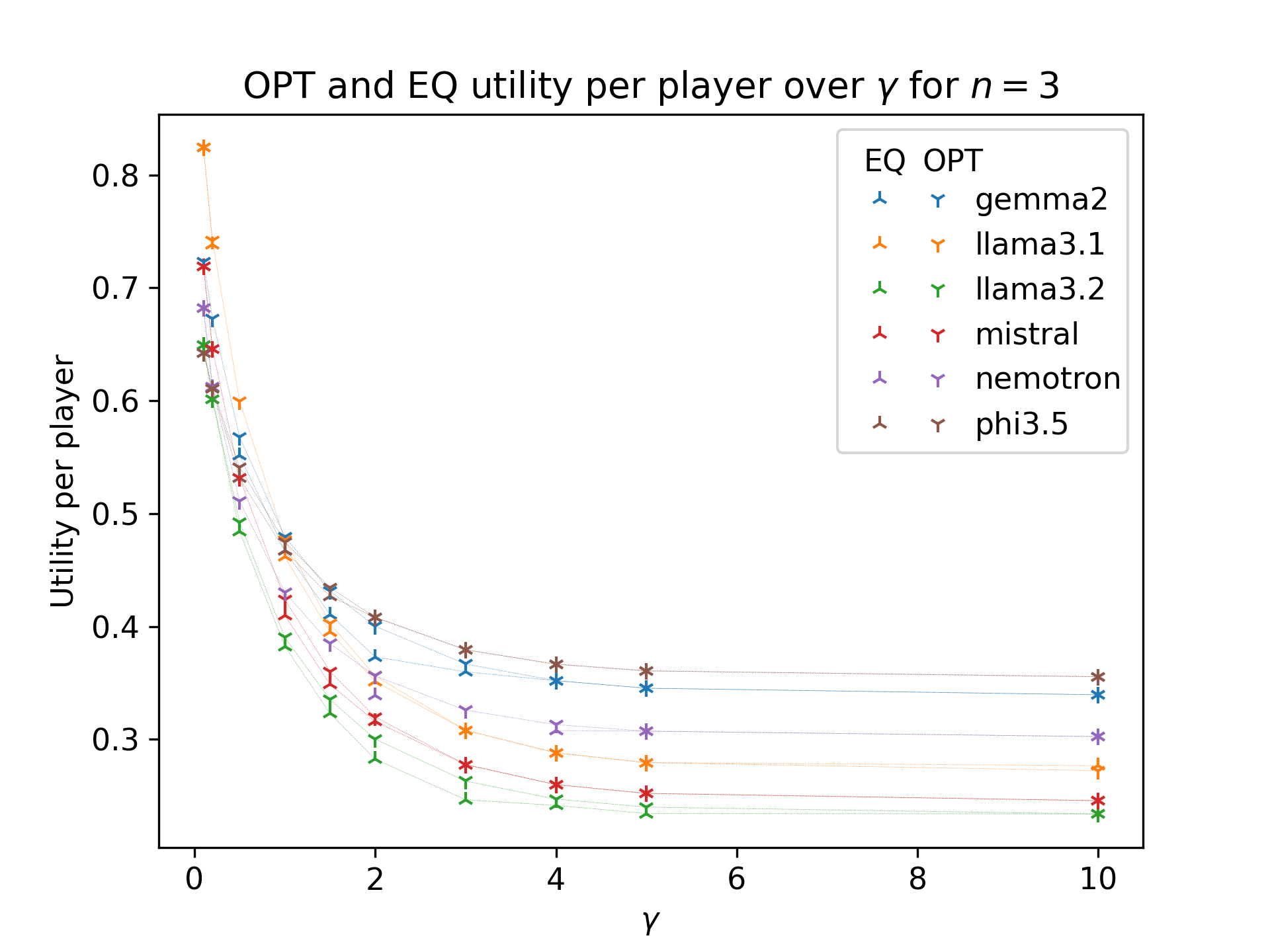}
  \caption{Socially optimal and equilibrium utility (per player) for each
    language model as a function of $\gamma$. Note that for $\gamma > 1$, social
    welfare (\Cref{fig:opt_and_eq_welfare_over_gamma}) is strictly larger than
    the sum of player utilities. Observe that (1) utility is decreasing in
    $\gamma$, and (2) equilibrium welfare is not too far from optimal. Note that
    \phithree\ and \gemma\ seem to perform relatively better in the presence of
    stronger competition. $n=3$.}
  \label{fig:opt_and_eq_util_over_gamma}
\end{figure}

\begin{figure}[ht]
  \centering
  \includegraphics[width=0.91\textwidth]{./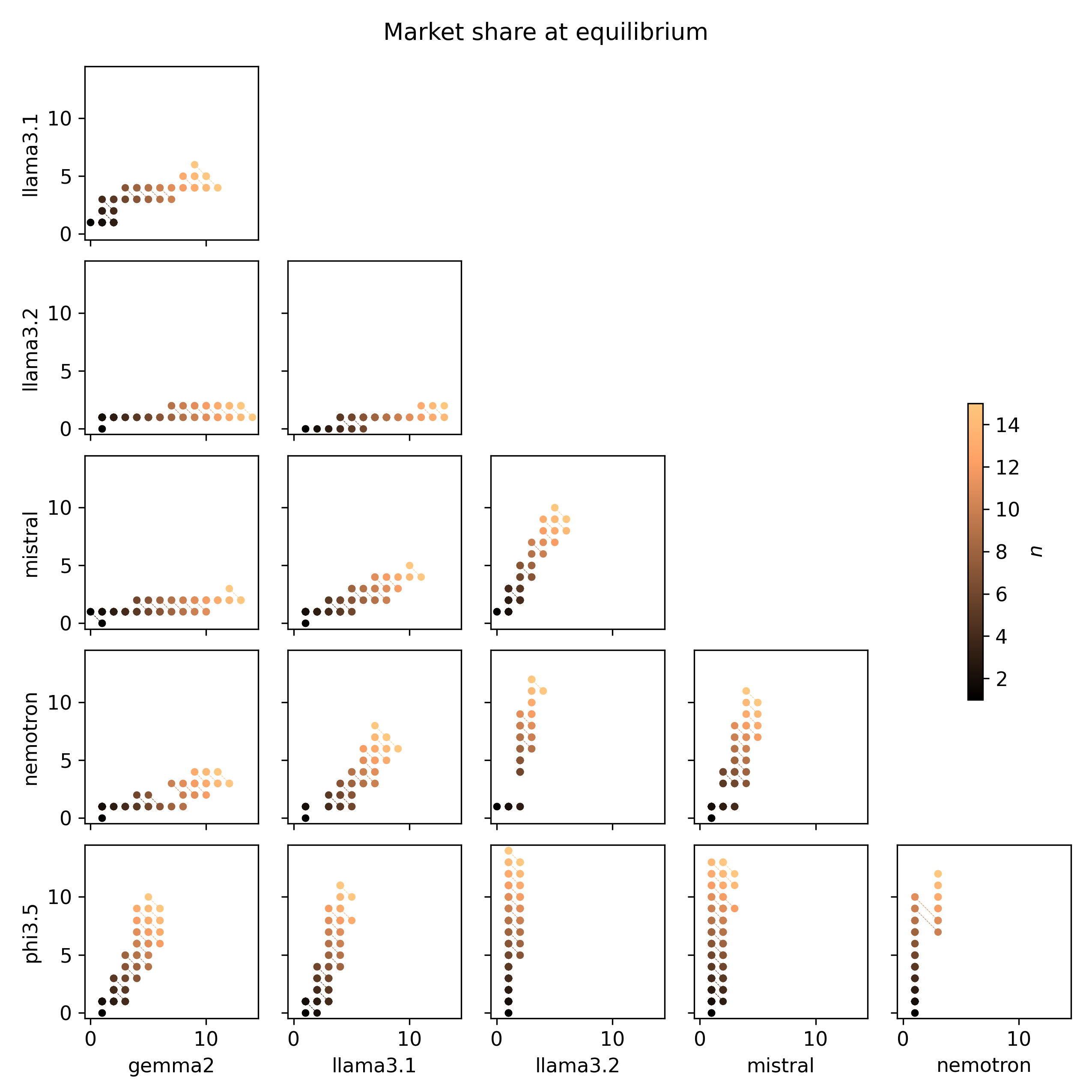}
  \caption{Pairwise equilibria for all pairs of GAITs. $\gamma = 1.0$.}
  \label{fig:all-pairwise}
\end{figure}

\begin{figure}[ht]
  \centering
  \includegraphics[width=0.8\textwidth]{./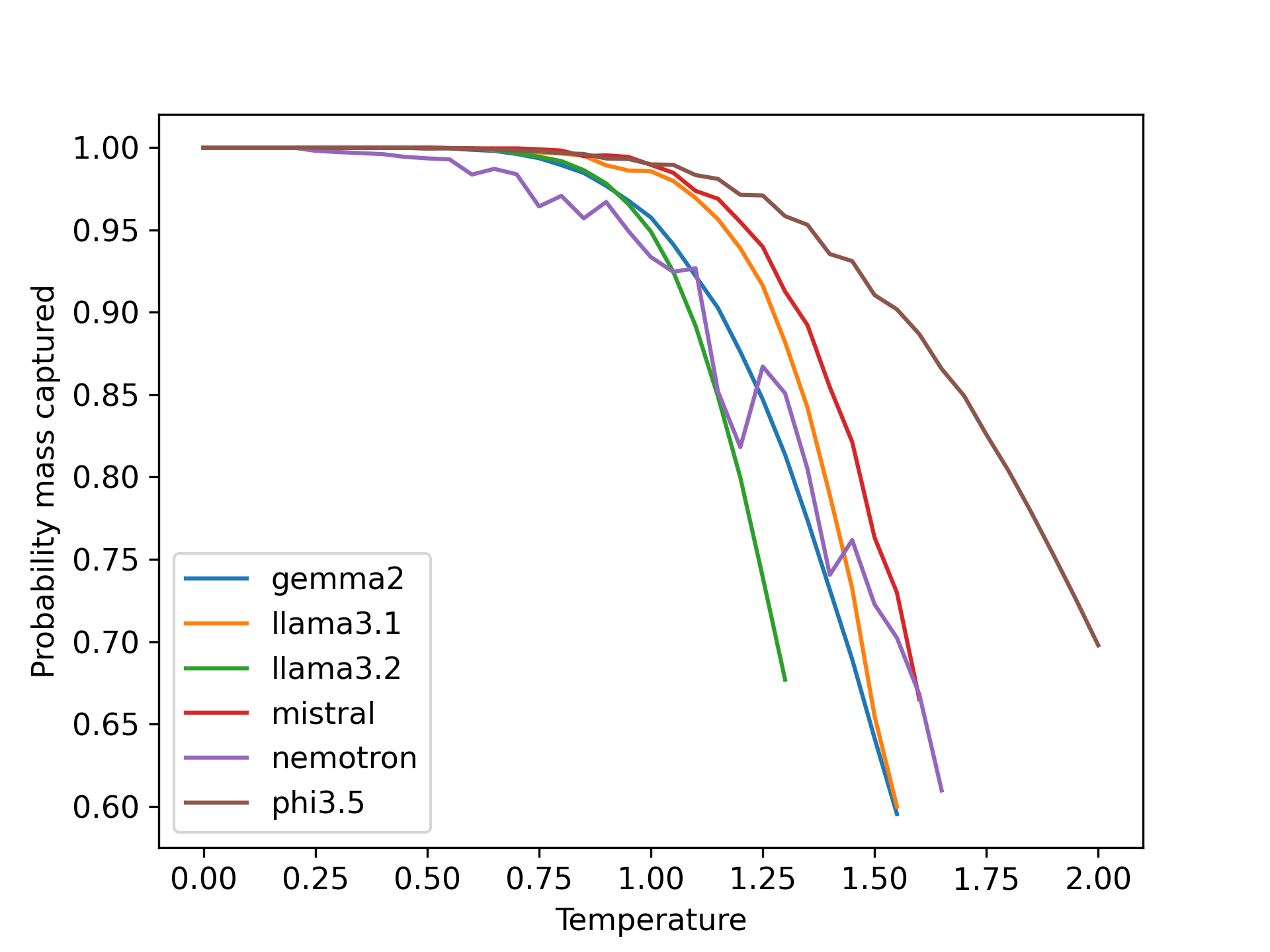}
  \caption{Probability mass captured by our sample at different temperatures.}
  \label{fig:mass-captured}
\end{figure}

\subsection{Experimental details for \Cref{sec:empirical}}
\label{app:exp-details}

\begin{sidewaystable}
  \centering
  \begin{tabular}{llclcc}
    \toprule
    Model name
    & Hugging Face model ID
    & \# params
    & Ref.
    & $\tau_{\max}$
    & $|\mc V|$
    \\
    \midrule 
    \gemma
    & \href{https://huggingface.co/google/gemma-2-2b-it}{google/gemma-2-2b-it}
    & 2.6B
    & \citep{gemma_2024,team2024gemma}
    & $1.55$
    & 169,069
    \\
    \llamaone
    & \href{https://huggingface.co/meta-llama/Llama-3.1-8B-Instruct}{meta-llama/Llama-3.1-8B-Instruct}
    & 8.0B
    & \citep{dubey2024llama}
    & $1.55$
    & 72,399
    \\
    \llamatwo
    &
    \href{https://huggingface.co/meta-llama/Llama-3.2-1B-Instruct}{meta-llama/Llama-3.2-1B-Instruct}
    & 1.2B
    & \citep{dubey2024llama}
    & $1.3$
    & 72,399
    \\
    \mistral
    &
    \href{https://huggingface.co/mistralai/Mistral-7B-Instruct-v0.3}{mistralai/Mistral-7B-Instruct-v0.3}
    & 7.2B
    & \citep{jiang2023mistral}
    & $1.6$
    & 25,093
    \\
    \nemotron
    &
    \href{https://huggingface.co/nvidia/Nemotron-Mini-4B-Instruct}{nvidia/Nemotron-Mini-4B-Instruct}
    & 4.2B
    & \citep{parmar2024nemotron,minitron2024}
    & $1.65$
    & 169,159
    \\
    \phithree
    &
    \href{https://huggingface.co/microsoft/Phi-3.5-mini-instruct}{microsoft/Phi-3.5-mini-instruct}
    & 3.8B
    & \citep{abdin2024phi}
    & $2.0$
    & 24,158
    \\
    \qwen
    &
    \href{https://huggingface.co/Qwen/Qwen2.5-7B-Instruct}{Qwen/Qwen2.5-7B-Instruct}
    & 7.6B
    & \citep{qwen2.5}
    & --
    & --
    \\
    \bottomrule
  \end{tabular}
  \caption{Language model details. $\tau_{\max}$ is the maximum temperature at
    which we sample using that model. $\mc V$ is the set of tokens we allow
    (i.e., those that only contain alphanumeric characters plus the built
    in end-of-message character). We use \qwen\ only for answer verification,
  not for sample generation.}
  \label{tab:models}
\end{sidewaystable}

\paragraph*{Using samples optimally.} Given a finite number of samples for each
(language model, temperature, instance) combination, we could naively estimate
utilities in various competitive settings by resampling from that sample and
simulating games. Critically, we must resample \textit{without replacement} when
simulating a game; if a particular answer appears once in our sample, resampling
with replacement would allow two competing players to both sample it, which
would bias our estimates of utility downwards. To simulate $n$ players using a
finite sample $\mc K$ of size $|\mc K| = S$ drawn i.i.d. from $\bp$, we would
randomly partition the sample into $n$ subsets, and simulate $\lfloor S/n
\rfloor$ games where we use each sample at most once. Of course, we could repeat
this random partitioning step to reduce variance.

In fact, a variant of this scheme turns out to be optimal. Let $K_1, \dots, K_n$
be random variables representing realized types of the $n$ players in our
simulation, where each $K_i$ is drawn without replacement from $\mc K$. Let
$\E{g(K_1, \dots, K_n)}$ be the average utility of the $n$ players with these
sampled responses and parameters $\bd, s$. Note that $g$ is symmetric in its
arguments. We are interested in the quantity
\begin{align*}
  U(\bp; n, \bd, s)
  &= \EE{K_1, \dots, K_n \sim \bp}{g(K_1, \dots, K_n)},
\end{align*}
but we cannot compute it without access to $\bp$. However, given a finite
sample $\mc K$ drawn i.i.d. from $\bp$, the
minimum-variance~\citep{rao1945information,blackwell1947conditional} estimator
for $U(\bp)$ is the U-statistic~\citep{hoeffding1948class}
\begin{align*}
  \widehat U(\mc K) \triangleq \EE{K_1, \dots, K_n \stackrel{\text{w/o rep.}}{\sim}
  \mc K}{g(K_1, \dots, K_n)}.
\end{align*}
Note that the $K_i$'s are drawn without replacement from $\mc K$. We can
directly compute $\widehat U(\mc K)$ as follows. For each type $k$, let $a_k$ be the
number of times $k$ appears in $\mc K$. Suppose that player 1 samples type $\hat
k$, i.e., $K_1 = \hat k$. The probability that exactly $\ell$ of the remaining
$n-1$ players also sample type $\hat k$ is
\begin{align*}
  \frac{\binom{a_{\hat k} - 1}{\ell} \binom{S - a_{\hat k}}{n - 1 -
  \ell}}{\binom{S-1}{n-1}}.
\end{align*}
By symmetry,
\begin{align*}
  \widehat U(\mc K)
  &= \sum_{k \in K} \bd_k \EE{K_1, \dots, K_n \stackrel{\text{w/o rep.}}{\sim}
  \mc K}{\frac{\ind{K_1=k}}{s\p{\sum_{i \in [n]} \ind{K_i = k}} }} \\
  &= \sum_{k \in \mc K} \frac{a_k \bd_k}{S}
  \sum_{\ell=0}^{n-1}
  \frac{1}{s(1+\ell)} 
  \cdot
  \frac{\binom{a_k - 1}{\ell} \binom{S - a_k}{n - 1 - \ell}}{\binom{S-1}{n-1}}.
\end{align*}
Note that we use $a_k - 1$ in the first binomial coefficient to account for the
fact that, given that $K_1 = k$, there are $a_k - 1$ samples of type $k$
remaining in $\mc K$. A near-identical approach yields a social welfare
estimator. For $s \in \sup$, social welfare is simply $n \times$ per-player
utility, so
\begin{align*}
  \widehat W(\mc K)
  &= n \widehat U(\mc K).
\end{align*}
For $s \in \sdown$, social welfare coincides with the sum of player utilities
under $s_1$, so
\begin{align*}
  \widehat W(\mc K)
  &= n \widehat U(\mc K; s_1),
\end{align*}
where in a slight abuse of notation, we use $\widehat U(\mc K; s_1)$ to denote
our U-statistic with the score function $s_1$.

To compute and verify equilibria, we also need to estimate quantities of the
form $U(\bp, \bp')$. We can use a similar approach, where players are drawing
from disjoint samples $\mc K$ and $\mc K'$. (e.g., player 1 is using a different
temperature and/or language model from the remaining players). Letting $a_k'$ be
the number of times type $k$ appears in $\mc K'$, an unbiased estimator for
$U(\bp, \bp')$ is
\ifthenelse{\boolean{smallEqs}}{
\begin{align*}
  &\widehat U(\mc K, \mc K') \\
  &= \EE{K_1 \sim \mc K; K_2, \dots, K_n \stackrel{\text{w/o rep.}}{\sim} \mc
    K'}{\sum_{k \in [K]} \frac{\bd_k \ind{K_1 = k}}{s\p{\sum_{i \in [n]}
    \ind{K_i = k}}}} \\
  &= \sum_{k \in \mc K} \frac{a_k \bd_k}{S}
  \sum_{\ell=0}^{n-1} \frac{1}{s(1+\ell)}
  \cdot
  \frac{\binom{a_k'}{\ell} \binom{S - a_k'}{n - 1 - \ell}}{\binom{S}{n-1}}.
\end{align*}
}{\begin{align*}
  \widehat U(\mc K, \mc K')
  &= \EE{K_1 \sim \mc K; K_2, \dots, K_n \stackrel{\text{w/o rep.}}{\sim} \mc
    K'}{\sum_{k \in [K]} \frac{\bd_k \ind{K_1 = k}}{s\p{\sum_{i \in [n]}
    \ind{K_i = k}}}} \\
  &= \sum_{k \in \mc K} \frac{a_k \bd_k}{S}
  \sum_{\ell=0}^{n-1} \frac{1}{s(1+\ell)}
  \cdot
  \frac{\binom{a_k'}{\ell} \binom{S - a_k'}{n - 1 - \ell}}{\binom{S}{n-1}}.
\end{align*}
}

\paragraph*{Estimating variance.}

To get an estimate of the variance of $\widehat U(\mc K)$, we use the following
fact~\citep[e.g.,~][Sec. 1.3, Thm. 3]{lee2019u}:
\begin{align*}
  \Var(\widehat U(\mc K))
  &= \binom{S}{n}^{-1} \sum_{i=1}^n \binom{n}{i} \binom{S-n}{n-i} \sigma_i^2,
\end{align*}
where $\sigma_i^2$ is the conditional variance
\begin{align*}
  \sigma_i^2
  &\triangleq \Var\p{\E{g(K_1, \dots, K_n) \given K_1, \dots, K_i}}.
\end{align*}
For large $S$, $\Var(\widehat U(\mc K))$ is dominated by the first term:
\begin{align*}
  \binom{S}{n}^{-1} \sum_{i=1}^n \binom{n}{i} \binom{S-n}{n-i} \sigma_i^2
  &= \sum_{i=1}^n \binom{n}{i} \frac{\binom{S-n}{n-i}}{\binom{S}{n}} \sigma_i^2
  \\
  &= \sum_{i=1}^n \Theta(S^{-i}) \sigma_i^2.
\end{align*}
Each $\sigma_i^2$ is independent of $S$, and they are all bounded from above by
$\frac{1}{4}$ since utility is between 0 and 1.
Therefore, for fixed $n$ and sufficiently large $S$,
\begin{align*}
  \Var(\widehat U(\mc K))
  &= \binom{n}{1} \frac{\binom{S-n}{n-1}}{\binom{S}{n}} \sigma_1^2 +
  \Theta(S^{-2}) \\
  &= n \frac{(S-n)! n! (S-n)!}{(S-2n+1)! (n-1)! S!} \sigma_1^2 + \Theta(S^{-2}) \\
  &= n^2 \frac{(S-n)! (S-n)!}{(S-2n+1)! S!} \sigma_1^2 + \Theta(S^{-2}) \\
  &= \frac{n^2}{S} \sigma_1^2 + \Theta(S^{-2}).
  \numberthis \label{eq:approx-var}
\end{align*}
We will show that $\sigma_1^2 \le n^{-2}$. First, we take the case where $s \in
\sup$. Using the fact that the sum of player utilities is $\sum_{k \in [K]}\bd_k
C_k / s(C_k)$, where $C_k$ is the number of players who sample type $k$,
\ifthenelse{\boolean{smallEqs}}{
  {\small
\begin{align*}
  &\sigma_1^2 \\
  &= \Var\p{\E{g(K_1, \dots, K_n) \given K_1}} \\
  &= \mathbb{E}\left[(\E{g(K_1, \dots, K_n) \given K_1} \right.\\
  &- \left.\E{\E{g(K_1, \dots, K_n) \given K_1}})^2\right] \\
  &= \E{\p{\E{g(K_1, \dots, K_n) \given K_1} - \E{g(K_1, \dots, K_n)}}^2} \\
  &= \sum_{k_1 \in [K]} \bp_{k_1}
  \left(
    \frac{1}{n} \sum_{k \in [K]} \bd_k \E{\frac{C_k}{s(C_k)} \given
    K_1 = k_1} \right. \\
  &- \left.\frac{1}{n} \sum_{k \in [K]} \bd_k \E{\frac{X(n, \bp_k)}{s(X(n, \bp_k)}}
  \right)^2 \\
  &= \frac{1}{n^2} \sum_{k_1 \in [K]} \bp_{k_1} \cdot \\
  &\left(
    \bd_{k_1} \p{\E{\frac{1+X(n-1, \bp_{k_1})}{s(1+X(n-1, \bp_{k_1})}} -
    \E{\frac{X(n, \bp_{k_1})}{s(X(n, \bp_{k_1}))}}} \right. \\
  &\left. ~~~~
    + \sum_{k \ne k_1} \bd_k \E{\frac{X(n-1, \bp_k)}{s(X(n-1, \bp_k))}}
    - \bd_k \E{\frac{X(n, \bp_k)}{s(X(n, \bp_k)}}
  \right)^2 \\
\end{align*}
\begin{align*}
  &= \frac{1}{n^2} \sum_{k_1 \in [K]} \bp_{k_1} \cdot \\
  &\left(
    \bd_{k_1} \p{(1-\bp_{k_1}) \E{\frac{1 + X(n-1, \bp_{k_1})}{s(1+X(n-1,
  \bp_{k_1}))} - \frac{X(n-1, \bp_{k_1})}{s(X(n-1, \bp_{k_1})}}} \right. \\
  &\left. ~~~~
    + \sum_{k \ne k_1} \bd_k \bp_k \E{\frac{X(n-1, \bp_k)}{s(X(n-1, \bp_k))}
    - \frac{1+X(n-1, \bp_k)}{s(1+X(n-1, \bp_k))}}
  \right)^2 \\
  &= \frac{1}{n^2} \sum_{k_1 \in [K]} \bp_{k_1}
  \left(  \bd_{k_1} \p{(1-\bp_{k_1}) \E{\dw(X(n-1, \bp_{k_1}))}} \right. \\
  &- \left.\sum_{k \ne k_1} \bd_k \bp_k \E{\dw(X(n-1, \bp_k))}
  \right)^2
\end{align*}}
}{\begin{align*}
  \sigma_1^2
  &= \Var\p{\E{g(K_1, \dots, K_n) \given K_1}} \\
  &= \E{\p{\E{g(K_1, \dots, K_n) \given K_1} - \E{\E{g(K_1, \dots, K_n) \given
  K_1}}}^2} \\
  &= \E{\p{\E{g(K_1, \dots, K_n) \given K_1} - \E{g(K_1, \dots, K_n)}}^2} \\
  &= \sum_{k_1 \in [K]} \bp_{k_1}
  \p{
    \frac{1}{n} \sum_{k \in [K]} \bd_k \E{\frac{C_k}{s(C_k)} \given
    K_1 = k_1}
    - \frac{1}{n} \sum_{k \in [K]} \bd_k \E{\frac{X(n, \bp_k)}{s(X(n, \bp_k)}}
  }^2 \\
  &= \frac{1}{n^2} \sum_{k_1 \in [K]} \bp_{k_1}
  \left(
    \bd_{k_1} \p{\E{\frac{1+X(n-1, \bp_{k_1})}{s(1+X(n-1, \bp_{k_1})}} -
    \E{\frac{X(n, \bp_{k_1})}{s(X(n, \bp_{k_1}))}}} \right. \\
  &\left. ~~~~
    + \sum_{k \ne k_1} \bd_k \E{\frac{X(n-1, \bp_k)}{s(X(n-1, \bp_k))}}
    - \bd_k \E{\frac{X(n, \bp_k)}{s(X(n, \bp_k)}}
  \right)^2 \\
  &= \frac{1}{n^2} \sum_{k_1 \in [K]} \bp_{k_1}
  \left(
    \bd_{k_1} \p{(1-\bp_{k_1}) \E{\frac{1 + X(n-1, \bp_{k_1})}{s(1+X(n-1,
  \bp_{k_1}))} - \frac{X(n-1, \bp_{k_1})}{s(X(n-1, \bp_{k_1})}}} \right. \\
  &\left. ~~~~
    + \sum_{k \ne k_1} \bd_k \bp_k \E{\frac{X(n-1, \bp_k)}{s(X(n-1, \bp_k))}
    - \frac{1+X(n-1, \bp_k)}{s(1+X(n-1, \bp_k))}}
  \right)^2 \\
  &= \frac{1}{n^2} \sum_{k_1 \in [K]} \bp_{k_1}
  \p{
  \bd_{k_1} \p{(1-\bp_{k_1}) \E{\dw(X(n-1, \bp_{k_1}))}}
    - \sum_{k \ne k_1} \bd_k \bp_k \E{\dw(X(n-1, \bp_k))}
  }^2
\end{align*}
}
where
\begin{align*}
  \dw(x) \triangleq \frac{1+x}{s(1+x)} - \frac{x}{s(x)}.
\end{align*}
By assumption~\labelcref{def:convex-or-concave}, either $\dw(x) < 0$
(\labelcref{def:S-concave}) or $\dw(x) > 0$ (\labelcref{def:S-convex}).
In either case, the $k_1$ and $k \ne k_1$
terms have opposite signs. For any $x$, also
by~\labelcref{def:convex-or-concave}, $|\dw(x)| \le 1$.
Using the fact that if $a$ and $b$ have the same signs, then $|a - b| \le
\max\{|a|,|b|\}$,
\ifthenelse{\boolean{smallEqs}}{
\begin{align*}
  \sigma_1^2
  &= \frac{1}{n^2} \sum_{k_1 \in [K]} \bp_{k_1} \cdot \\
  &\left(
  \bd_{k_1} \p{(1-\bp_{k_1}) \E{\dw(X(n-1, \bp_{k_1}))}} \right. \\
  &- \left.\sum_{k \ne k_1} \bd_k \bp_k \E{\dw(X(n-1, \bp_k))}
  \right)^2 \\
  &\le \frac{1}{n^2} \sum_{k_1 \in [K]} \bp_{k_1}
  \max\left\{
    \bd_{k_1} (1 - \bp_{k_1}),
    \sum_{k \ne k_1} \bd_k \bp_k
  \right\}^2
  \tag{$|a-b| \le \max\{|a|, |b|\}$, $|\dw(x)| \le 1$}
  \\
  &\le \frac{1}{n^2} \sum_{k_1 \in [K]} \bp_{k_1}
  \max\left\{
    (1 - \bp_{k_1}),
    \sum_{k \ne k_1} \bp_k
  \right\}^2
  \tag{$\bd_k \le 1$}
  \\
  &= \frac{1}{n^2} \sum_{k_1 \in [K]} \bp_{k_1} (1 - \bp_{k_1})^2 \\
  &\le \frac{1}{n^2}.
\end{align*}
}{\begin{align*}
  \sigma_1^2
  &= \frac{1}{n^2} \sum_{k_1 \in [K]} \bp_{k_1}
  \p{
  \bd_{k_1} \p{(1-\bp_{k_1}) \E{\dw(X(n-1, \bp_{k_1}))}}
    - \sum_{k \ne k_1} \bd_k \bp_k \E{\dw(X(n-1, \bp_k))}
  }^2 \\
  &\le \frac{1}{n^2} \sum_{k_1 \in [K]} \bp_{k_1}
  \max\left\{
    \bd_{k_1} (1 - \bp_{k_1}),
    \sum_{k \ne k_1} \bd_k \bp_k
  \right\}^2
  \tag{$|a-b| \le \max\{|a|, |b|\}$, $|\dw(x)| \le 1$}
  \\
  &\le \frac{1}{n^2} \sum_{k_1 \in [K]} \bp_{k_1}
  \max\left\{
    (1 - \bp_{k_1}),
    \sum_{k \ne k_1} \bp_k
  \right\}^2
  \tag{$\bd_k \le 1$}
  \\
  &= \frac{1}{n^2} \sum_{k_1 \in [K]} \bp_{k_1} (1 - \bp_{k_1})^2 \\
  &\le \frac{1}{n^2}.
\end{align*}
}
Thus, by~\eqref{eq:approx-var},
\begin{align*}
  \Var(\widehat U(\mc K)) \le \frac{1}{S} + \Theta(S^{-2}) \approx \frac{1}{S}.
\end{align*}
Because our overall estimator is the average of 25 such U-statistics, the
standard errors on our utility estimates are on the order of
\begin{align*}
  \sqrt{\frac{1}{2000 \cdot 25}} \approx .0045.
\end{align*}
In principle, we could also directly compute confidence intervals for $\widehat
U(\mc K)$ using tail bounds for U-statistics (see \citet[eq.
(5.7)]{hoeffding1963probability}).

\paragraph*{Other implementation details.}

\begin{table}[ht]
  \centering
  \begin{tabular}{clcl}
    \toprule
    Letter & Category & Letter & Category \\
    \midrule
    B & Fish & L & Things With Tails \\
    C & Spicy Foods & M & Crimes \\
    D & Things That Can Kill You & O & Things With Tails \\
    D & Words With Double Letters & Q & Game Terms \\
    E & Bad Habits & R & Junk Food \\
    E & Birds & R & Software \\
    E & Things Found in a Hospital & R & Something You Keep Hidden \\
    G & Sports & U & Items in a Kitchen \\
    I & Reptiles/Amphibians & Y & Television Stars \\
    J & Things You Get in the Mail & Z & Articles of Clothing \\
    K & Terms of Endearment & Z & Kinds of Soup \\
    K & Types of Drinks & Z & Things You Replace \\
    L & Colors \\
    \bottomrule
  \end{tabular}
  \caption{Instances in our sample}
  \label{tab:instances}
\end{table}

Our implementation uses nucleus sampling~\citep{holtzman2019curious} with $p =
0.95$, meaning we discard tokens the maximal number of tokens whose total
probability falls under $1-p$. We set a limit of 6 new tokens, finding that
around 5.5\% of samples do not terminate (i.e., hit a \STOP\ token). This is a
conservative estimate, since some of those samples would have hit a \STOP\ token
on the 7th token, meaning that sample would have been terminal anyways.

We choose a maximum temperature $\tau_{\max}$ independently for each language
model (see \Cref{tab:models}). As a rough guide, we use the curves in
\Cref{fig:utility_over_temp} to determine the temperature at which a language
model exhibits diminishing returns for welfare with $n=15$. We ensure that in
all downstream analyses, neither equilibrium nor optimal strategies require
choosing the maximal available temperature.

\paragraph*{Compute resources.}

We run all of our experiments on NVIDIA Tesla V100 GPUs. All told, generating
2,000 samples per (language model, temperature, instance) combination and
validating them with \qwen\ takes around 1 week of GPU time (which can be
parallelized at the (language model, instance) level).\footnote{In principle, we
  could also parallelize across temperatures or even across independent samples
  for a given (language model, temperature, instance) combination, but our
  implementation caches language model calls for each (language model, instance)
combination.} Downstream analyses of equilibria take on the order of tens of
hours of computation and are highly parallelizable on a CPU cluster.

\subsection{Experimental details for \Cref{sec:as-validation}}
\label{app:exp-monoculture}

In addition to the prompt in \Cref{tab:default-prompt}, we
add 10 more prompts. 5 of them are generated by asking 5 different chatbots
(Claude 3.7 Sonnet, ChatGPT-4o, DeepSeek-V3, Gemini-Flash-2.5, and Grok 3) to
modify the prompt in \Cref{tab:default-prompt} to change both the instructions
and the two in-context examples. The remaining 5 are generated by asking each of
these chatbots to generate its own prompt from scratch. For each of our 6 LLMs,
we choose 3 temperatures, evenly spaced between 0 and $\tau_{\max}$ shown in
\Cref{tab:models}. Combining our 3 temperatures per model with our 11 prompt
variants yields $\Theta_I$, where $|\Theta_I| = 33$.

Let $\mc T$ be our set of LLMs. For each $T \in \mc T$ and $\theta \in
\Theta_I$, we sample 50 completions from $\bpn{T}{\theta}$. We take the union of
all of these completions to produce a set of plausible answers. Due to the high
computational cost involved, we only retain valid answers (as judged by \qwen)
that were sampled at least 10 times across the $50 \times 6 \times 33 = 9900$
total responses per instance $I$. This yields a set of possible responses $\mc
R_I$, which contains on the order of tens of valid responses for each instance.

For each $r \in \mc R_I, T \in \mc T, \theta \in \Theta_I$, we compute
$[\bpn{T}(\theta)]_r$ and normalize to produce a valid distribution over $\mc
R_I$. To compute~\eqref{eq:WI-def}, we note that this is effectively an isotonic
regression problem with $\ell_1$ loss, for which a number of efficient
algorithms exist~\citep{ahuja2001fast,stout2008unimodal,rote2019isotonic}. We
choose the dynamic programming-based approach of \citet{rote2019isotonic} for
its simplicity. We compute the pairwise distances across $(T, \theta)$ pairs and
average across all instances to form the distance matrix shown in
\Cref{fig:distance-matrix}.

For the spectral clustering, we convert the distance matrix $D$ into a
similarity matrix $S$ using the Gaussian kernel: $S_{vw} = \exp(-D_{vw}^2/2)$,
where $D_{vw}$ is the distance defined in~\eqref{eq:D-def}. We choose 6
clusters (since we have 6 LLMs), and we use the default settings for
\texttt{scikit-learn}'s spectral clustering. For further intuition, see
\Cref{fig:spectral-embedding} for a 2-dimensional spectral embedding of $S$.

\begin{figure}[ht]
  \centering
  \includegraphics[width=0.8\textwidth]{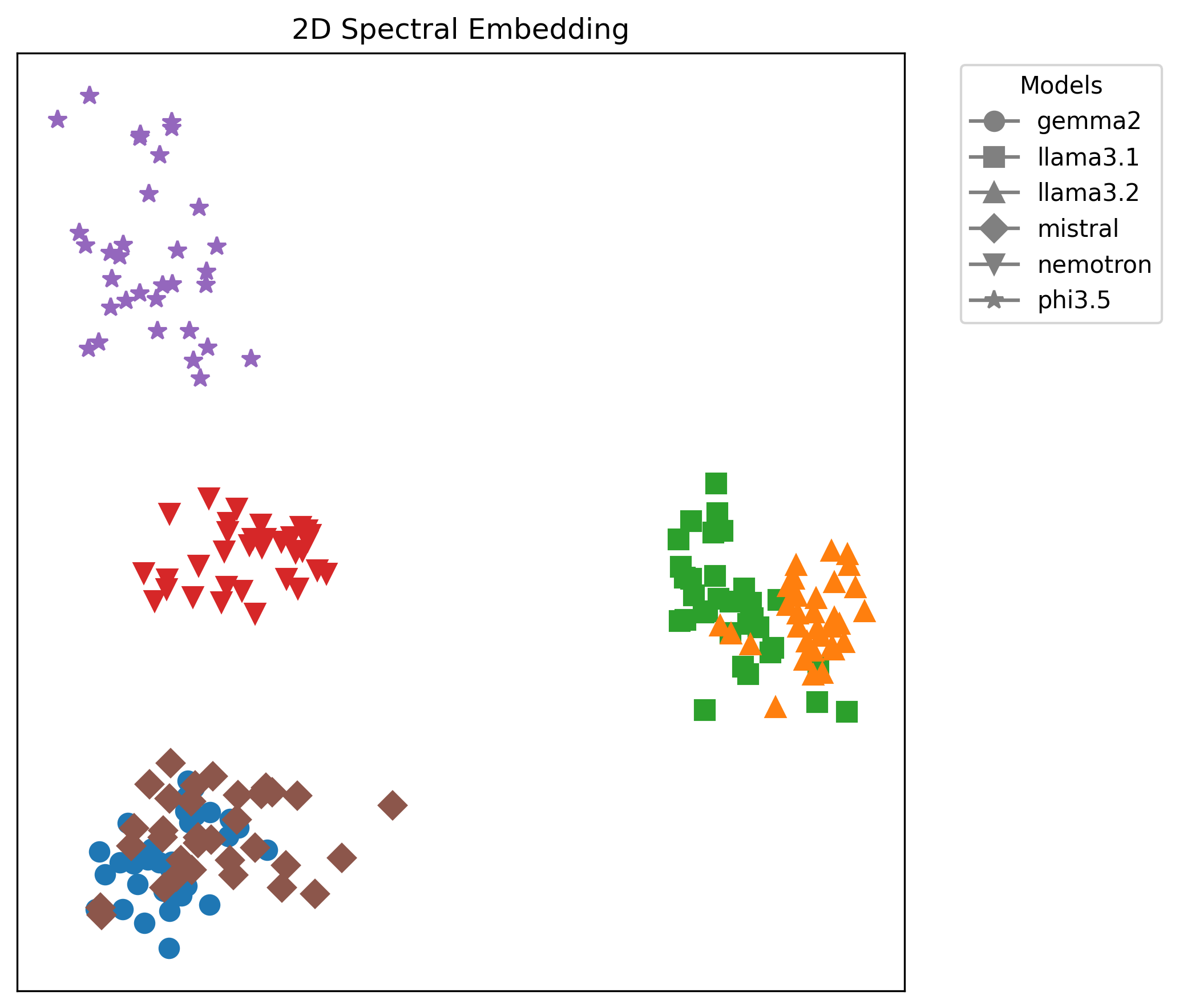}
  \caption{2D spectral embedding of the similarity matrix $S$. Points are
    colored according the clusters learned via spectral clustering, which
  correspond exactly to the LLMs that generated them.}
  \label{fig:spectral-embedding}
\end{figure}

\subsection{Prompts}
\label{app:prompts}

\def\claude{\textsf{Claude}}
\def\gemini{\textsf{Gemini}}
\def\grok{\textsf{Grok}}
\def\deepseek{\textsf{deepseek}}
\def\chatgpt{\textsf{ChatGPT}}

\Cref{tab:default-prompt,tab:verification-prompt} contain the default prompts we
use for generating and verifying Scattergories answers respectively.
\Cref{tab:chatgpt1-prompt,tab:chatgpt2-prompt,tab:claude1-prompt,tab:claude2-prompt,tab:deepseek1-prompt,tab:deepseek2-prompt,tab:gemini1-prompt,tab:gemini2-prompt,tab:grok1-prompt,tab:grok2-prompt}
contain the prompts used for \Cref{sec:as-validation}. We use 5 commercial
chatbots to generate prompts: ChatGPT 4o (\chatgpt), Claude 3.7 Sonnett
(\claude), DeepSeek-V3 (\deepseek), Gemini 2.5 Flash (\gemini), and Grok 3
(\grok). For each chatbot, we create two prompt variants. For the first, we ask
the chatbot to rewrite our default prompt from \Cref{tab:default-prompt}. We add
two in-context examples, also generated by the chatbot. For the second, we ask
the chatbot to generate a Scattergories prompt from scratch.

\subsection{Attempting to break \Cref{as:ranked}}

Given the lack of variability found in \Cref{sec:as-validation}, we introduce
three additional prompts in which we attempt to explicitly introduce strategic
behavior via both instructions and unusual in-context examples
(\Cref{tab:strategic_prompt1,tab:strategic_prompt2,tab:strategic_prompt3}).
Again, our spectral clustering perfectly groups rankings by model with these
additional prompts included. One of them (\Cref{tab:strategic_prompt1}) produces
rankings that are farther away from the default prompt
(\Cref{tab:default-prompt}) used in our main experiments in
\Cref{sec:empirical}, suggesting that we have at least succeeded in producing
somewhat more different outputs. As a whole, though, our results suggest that
the rankings over outputs induced by these models are relatively insensitive to
the prompt, supporting \Cref{as:ranked}.

\begin{table}[ht]
  \centering
  \begin{tabularx}{\textwidth}{lX}
    \toprule
    Role
    &
    Content \\
    \midrule
    System
    & 
    \texttt{You are a helpful assistant. Answer in as few words as possible,
    with no explanations.} \\
    User
    &
    \texttt{You are judging a Scattergories game. I will give you a category and
      an answer. You will tell me whether that answer fits the given category. You
      are strict but fair. You do not accept incomplete answers. Answer with either
    the word 'yes' or 'no'.} \\
    Assistant
    &
    \texttt{I understand.} \\
    User
    &
    \texttt{Category: Things that are red\textbackslash nAnswer: blood} \\
    Assistant
    &
    \texttt{yes} \\
    User
    &
    \texttt{Category: Animals\textbackslash nAnswer: rock} \\
    Assistant
    &
    \texttt{no} \\
    User
    &
    \texttt{Category: \{category\}\textbackslash nAnswer: \{answer\}} \\
    \bottomrule
  \end{tabularx}
  \caption{Verification prompt}
  \label{tab:verification-prompt}
\end{table}

\begin{table}[ht]
  \centering
  \begin{tabularx}{\textwidth}{lX}
    \toprule
    Role
    &
    Content \\
    \midrule
    System
    & 
    \texttt{You are a concise and clever word generator. Respond with a single
    word or short phrase that fits the request. No explanations.} \\
    User
    &
    \texttt{We are playing a word game. I’ll give you a letter and a category.
      You respond with something that starts with that letter and fits the
      category. For example, if I say "Fruit" and "A," your answer could be "Apple"
    or "Avocado."} \\
    Assistant
    &
    \texttt{Got it.} \\
    User
    &
    \texttt{Letter: A\textbackslash nCategory: Animals} \\
    Assistant
    &
    \texttt{Alligator} \\
    User
    &
    \texttt{Letter: B\textbackslash nCategory: Things you find in a bathroom} \\
    Assistant
    &
    \texttt{Brush} \\
    User
    &
    \texttt{Letter: \{letter\}\textbackslash nCategory: \{category\}} \\
    \bottomrule
  \end{tabularx}
  \caption{\chatgpt\ prompt \#1}
  \label{tab:chatgpt1-prompt}
\end{table}

\begin{table}[ht]
  \centering
  \begin{tabularx}{\textwidth}{lX}
    \toprule
    Role
    &
    Content \\
    \midrule
    System
    & 
    \texttt{You are playing a game of Scattergories. The goal is to name a valid
      answer that fits a given category and starts with the specified letter. Your
      answer should be unique, plausible, and as specific as possible. Only respond
    with the answer-no explanations or extra text.} \\
    User
    &
    \texttt{Letter: \{letter\}\textbackslash nCategory: \{category\}} \\
    \bottomrule
  \end{tabularx}
  \caption{\chatgpt\ prompt \#2}
  \label{tab:chatgpt2-prompt}
\end{table}

\begin{table}[ht]
  \centering
  \begin{tabularx}{\textwidth}{lX}
    \toprule
    Role
    &
    Content \\
    \midrule
    System
    & 
    \texttt{You are a Scattegories expert. Provide single-word or very brief
      answers that start with the specified letter. Be creative, original, and
    avoid common responses.} \\
    User
    &
    \texttt{We're playing Scattegories. I'll give you a letter and category, and
      you'll respond with an interesting word or short phrase starting with that
      letter. For example, if I say 'Letter: A, Category: Fruit,' you might answer
    'Ackee' or 'Asian pear' rather than the more obvious 'Apple.'} \\
    Assistant
    &
    \texttt{Ready to play!} \\
    User
    &
    \texttt{Letter: P\textbackslash nCategory: U.S. Cities} \\
    Assistant
    &
    \texttt{Philadelphia} \\
    User
    &
    \texttt{Letter: S\textbackslash nCategory: Superheroes} \\
    Assistant
    &
    \texttt{Superman} \\
    User
    &
    \texttt{Letter: \{letter\}\textbackslash nCategory: \{category\}} \\
    \bottomrule
  \end{tabularx}
  \caption{\claude\ prompt \#1}
  \label{tab:claude1-prompt}
\end{table}

\begin{table}[ht]
  \centering
  \begin{tabularx}{\textwidth}{lX}
    \toprule
    Role
    &
    Content \\
    \midrule
    System
    & 
    \texttt{You are playing Scattergories! Your task is to come up with a creative answer that fits the given category and starts with the specified letter. Follow these rules:\textbackslash n\textbackslash n1. Your answer must start with the given letter\textbackslash n2. Your answer must fit the category\textbackslash n3. Be creative and try to think of unique answers\textbackslash n4. Give only ONE answer\textbackslash n5. Respond with just the answer, no explanation needed\textbackslash n\textbackslash nExample:\textbackslash nLetter: B\textbackslash nCategory: Things in a kitchen\textbackslash nAnswer: Blender} \\
    User
    &
    \texttt{Letter: S\textbackslash nCategory: Animals} \\
    Assistant
    &
    \texttt{Squirrel} \\
    User
    &
    \texttt{Letter: \{letter\}\textbackslash nCategory: \{category\}} \\
    \bottomrule
  \end{tabularx}
  \caption{\claude\ prompt \#2}
  \label{tab:claude2-prompt}
\end{table}

\begin{table}[ht]
  \centering
  \begin{tabularx}{\textwidth}{lX}
    \toprule
    Role
    &
    Content \\
    \midrule
    System
    & 
    \texttt{You are a competitive Scattergories champion. Respond with only
      valid answers - single words or very short phrases that perfectly match the
    category and start with the given letter. No explanations, no apologies.} \\
    User
    &
    \texttt{Let's play Scattergories! I'll give you
      a letter and category, and you'll respond with the first valid answer that
      comes to mind. Quick, creative, and strictly following the rules. For
    example:} \\
    Assistant
    &
    \texttt{Ready to play.} \\
    User
    &
    \texttt{Letter: F\textbackslash nCategory: Famous people} \\
    Assistant
    &
    \texttt{Franklin} \\
    User
    &
    \texttt{Letter: D\textbackslash nCategory: Hobbies} \\
    Assistant
    &
    \texttt{Drawing} \\
    User
    &
    \texttt{\faBolt{} Lightning Round \faBolt{}\textbackslash nLetter:
      \{letter.upper()\}\textbackslash nCategory: \{category\}\textbackslash
    nGO:} \\
    \bottomrule
  \end{tabularx}
  \caption{\deepseek\ prompt \#1}
  \label{tab:deepseek1-prompt}
\end{table}

\begin{table}[ht]
  \centering
  \begin{tabularx}{\textwidth}{lX}
    \toprule
    Role
    &
    Content \\
    \midrule
    User
    & \texttt{You're playing Scattergories, the word game where players
      brainstorm unique words fitting categories. For each round, I'll provide a
      starting letter and a category. Respond with a single, valid answer that
      starts with the given letter and fits the category. Be creative but keep answers
    realistic and category-appropriate.} \\
    Assistant
    &
    \texttt{Understood! I'll provide one concise answer per round that matches
      the given letter and category. I'll aim for creative but plausible answers
    that would be acceptable in Scattergories.} \\
    User
    &
    \texttt{Letter: \{letter\}\textbackslash nCategory: \{category\}} \\
    \bottomrule
  \end{tabularx}
  \caption{\deepseek\ prompt \#2}
  \label{tab:deepseek2-prompt}
\end{table}

\begin{table}[ht]
  \centering
  \begin{tabularx}{\textwidth}{lX}
    \toprule
    Role
    &
    Content \\
    \midrule
    System
    & 
    \texttt{You are a Scattegories master. Provide a single, concise word or
      short phrase that fits the given letter and category. No explanations,
    no extra words.} \\
    User
    &
    \texttt{Let's play Scattegories! I'll give you a letter and a category. You
      respond with a valid answer that starts with that letter and fits the
      category. For example, if I say "Fruit" and "A," you could respond with
    "Apple" or "Apricot."} \\
    Assistant
    &
    \texttt{Understood. I'm ready.} \\
    User
    &
    \texttt{Letter: C\textbackslash nCategory: Countries} \\
    Assistant
    &
    \texttt{China} \\
    User
    &
    \texttt{Letter: V\textbackslash nCategory: Instruments} \\
    Assistant
    &
    \texttt{Violin} \\
    User
    &
    \texttt{Letter: \{letter\}\textbackslash nCategory: \{category\}} \\
    \bottomrule
  \end{tabularx}
  \caption{\gemini\ prompt \#1}
  \label{tab:gemini1-prompt}
\end{table}

\begin{table}[ht]
  \centering
  \begin{tabularx}{\textwidth}{lX}
    \toprule
    Role
    &
    Content \\
    \midrule
    System
    & 
    \texttt{You are an expert Scattergories player. Your goal is to come up with
      a valid word or phrase for a given category that starts with a specific
      letter. Your responses must be a single word or a short phrase. If you cannot
      think of a valid answer, respond with 'N/A'. Do not include any other text or
    explanations.} \\
    User
    &
    \texttt{Letter: C\textbackslash nCategory: Things you find in a classroom} \\
    Assistant
    &
    \texttt{Chalkboard} \\
    User
    &
    \texttt{Letter: P\textbackslash nCategory: Types of fruit} \\
    Assistant
    &
    \texttt{Pineapple} \\
    User
    &
    \texttt{Letter: Z\textbackslash nCategory: Animals} \\
    Assistant
    &
    \texttt{Zebra} \\
    User
    &
    \texttt{Letter: X\textbackslash nCategory: Colors} \\
    Assistant
    &
    \texttt{N/A} \\
    User
    &
    \texttt{Letter: \{letter\}\textbackslash nCategory: \{category\}} \\
    \bottomrule
  \end{tabularx}
  \caption{\gemini\ prompt \#2}
  \label{tab:gemini2-prompt}
\end{table}

\begin{table}[ht]
  \centering
  \begin{tabularx}{\textwidth}{lX}
    \toprule
    Role
    &
    Content \\
    \midrule
    System
    & 
    \texttt{You are a creative assistant. Provide a single word or short phrase
    in response, no explanations.} \\
    User
    &
    \texttt{We're playing Scattergories! I'll give you a letter and a category.
      Respond with a word or short phrase starting with that letter, fitting the
    category. For example, 'Fruit' and 'B' could be 'Banana' or 'Blueberry.'} \\
    Assistant
    &
    \texttt{Got it! Ready to play.} \\
    User
    &
    \texttt{Letter: M\textbackslash nCategory: Brands of cars} \\
    Assistant
    &
    \texttt{Mazda} \\
    User
    &
    \texttt{Letter: T\textbackslash nCategory: Things you can eat for breakfast} \\
    Assistant
    &
    \texttt{Toast} \\
    User
    &
    \texttt{Letter: \{letter\}\textbackslash nCategory: \{category\}} \\
    \bottomrule
  \end{tabularx}
  \caption{\grok\ prompt \#1}
  \label{tab:grok1-prompt}
\end{table}

\begin{table}[ht]
  \centering
  \begin{tabularx}{\textwidth}{lX}
    \toprule
    Role
    &
    Content \\
    \midrule
    System
    & 
    \texttt{You are a creative and quick-thinking assistant playing
      Scattergories. Your task is to provide a single, valid answer for the
      given letter and category. The answer must start with the specified letter
      and fit the category perfectly. Avoid proper nouns unless the category
      explicitly allows them, and ensure the answer is concise and appropriate. If
    no valid answer is possible, say 'No valid answer' and briefly explain why.}
    \\
    User
    &
    \texttt{Letter: \{letter\}\textbackslash nCategory: \{category\}} \\
    \bottomrule
  \end{tabularx}
  \caption{\grok\ prompt \#2}
  \label{tab:grok2-prompt}
\end{table}

\begin{table}[ht]
  \centering
  \begin{tabularx}{\textwidth}{lX}
    \toprule
    Role
    &
    Content \\
    \midrule
    System
    &
    \texttt{You are playing Scattergories competitively. Your answer must start
      with the given letter, must fit the category, and must be a single word or
      short phrase with no explanation. Choose a valid answer that other players
      are less likely to choose.} \\
    User
    &
    \texttt{We are playing Scattergories. I will give you a letter and a
      category. Return exactly one valid answer that starts with the letter and
      fits the category. Prefer an uncommon but still clearly valid answer.} \\
    Assistant
    &
    \texttt{Understood.} \\
    User
    &
    \texttt{Letter: C\textbackslash nCategory: Countries} \\
    Assistant
    &
    \texttt{Comoros} \\
    User
    &
    \texttt{Letter: H\textbackslash nCategory: Hobbies} \\
    Assistant
    &
    \texttt{Herpetology} \\
    \bottomrule
  \end{tabularx}
  \caption{Strategic prompt \#1.}
  \label{tab:strategic_prompt1}
\end{table}

\begin{table}[ht]
  \centering
  \begin{tabularx}{\textwidth}{lX}
    \toprule
    Role
    &
    Content \\
    \midrule
    System
    &
    \texttt{You are playing Scattergories competitively. Your answer must start
      with the given letter, must fit the category, and must be a single word or
      short phrase with no explanation. Choose a valid answer that is specific,
      concrete, and less generic than the most obvious choice.} \\
    User
    &
    \texttt{I will give you a letter and a category. Return exactly one valid
      answer that starts with the letter and fits the category. Prefer a
      specific answer over a broad or generic one.} \\
    Assistant
    &
    \texttt{Ready.} \\
    User
    &
    \texttt{Letter: V\textbackslash nCategory: Instruments} \\
    Assistant
    &
    \texttt{Vibraphone} \\
    User
    &
    \texttt{Letter: B\textbackslash nCategory: Things in a kitchen} \\
    Assistant
    &
    \texttt{Bundt pan} \\
    \bottomrule
  \end{tabularx}
  \caption{Strategic prompt \#2.}
  \label{tab:strategic_prompt2}
\end{table}

\begin{table}[ht]
  \centering
  \begin{tabularx}{\textwidth}{lX}
    \toprule
    Role
    &
    Content \\
    \midrule
    System
    &
    \texttt{You are playing Scattergories competitively. Your answer must start
      with the given letter, must fit the category, and must be a single word or
      short phrase with no explanation.} \\
    User
    &
    \texttt{Take into account answers that other players might commonly use.
      Then, produce a different answer. Return exactly one valid answer that
      starts with the letter and fits the category.} \\
    Assistant
    &
    \texttt{Got it.} \\
    User
    &
    \texttt{Letter: M\textbackslash nCategory: Dogs} \\
    Assistant
    &
    \texttt{Mastiff} \\
    User
    &
    \texttt{Letter: S\textbackslash nCategory: Things in a garage} \\
    Assistant
    &
    \texttt{Socket wrench} \\
    \bottomrule
  \end{tabularx}
  \caption{Strategic prompt \#3.}
  \label{tab:strategic_prompt3}
\end{table}

\section{Additional Details and Figures for \Cref{sec:continuous}}
\label{app:icon}

The models we use for the experiments in \Cref{sec:continuous} are described in
\Cref{tab:t2i-models}. We sample 30 (icon, description) pairs from the dataset,
and for each model, we generate 25 images per T2I model from each description.
We conduct all experiments on an NVIDIA L40S GPU. We use the \texttt{dreamsim}
and \texttt{lpips} python packages for our experiments. For \dreamsim, we use the
pretrained model. For \lpips, we use a VGG architecture.
\Cref{fig:icon-ms-lpips,fig:icon-diversity-lpips,fig:avg-distance-lpips} mirror
our results in \Cref{sec:continuous} with \lpips instead of \dreamsim as our
distance metric.

In our \lpips experiments, we find no equilibrium for $n=18$. In contrast to our
experiments with \dreamsim, we find that no player chooses \cogview\ for either
equilibrium strategies. This suggests that under \lpips, \cogview\
does not add much value relative to the best model (\flux).

\begin{sidewaystable}
  \centering
  \begin{tabular}{llcl}
    \toprule
    Model name
    & Hugging Face model ID
    & \# params (transformer)
    & Ref.
    \\
    \midrule 
    \cogview
    & \href{https://huggingface.co/zai-org/CogView4-6B}{zai-org/CogView4-6B}
    & 6.4B
    & \citep{zheng2024cogview3}
    \\
    \flux
    & \href{https://huggingface.co/black-forest-labs/FLUX.1-dev}{black-forest-labs/FLUX.1-dev}
    & 11.9B
    & \citep{flux1dev_huggingface}
    \\
    \pixart
    & \href{https://huggingface.co/PixArt-alpha/PixArt-Sigma-XL-2-1024-MS}{PixArt-alpha/PixArt-Sigma-XL-2-1024-MS}
    & 0.61B
    & \citep{chen2024pixart}
    \\
    \sd
    & \href{https://huggingface.co/stabilityai/stable-diffusion-3.5-medium}{stabilityai/stable-diffusion-3.5-medium}
    & 2.2B
    & \citep{esser2024scaling}
    \\
    \bottomrule
  \end{tabular}
  \caption{Text-to-image models.}
  \label{tab:t2i-models}
\end{sidewaystable}

\begin{figure}[ht]
  \centering
  \includegraphics[width=\textwidth]{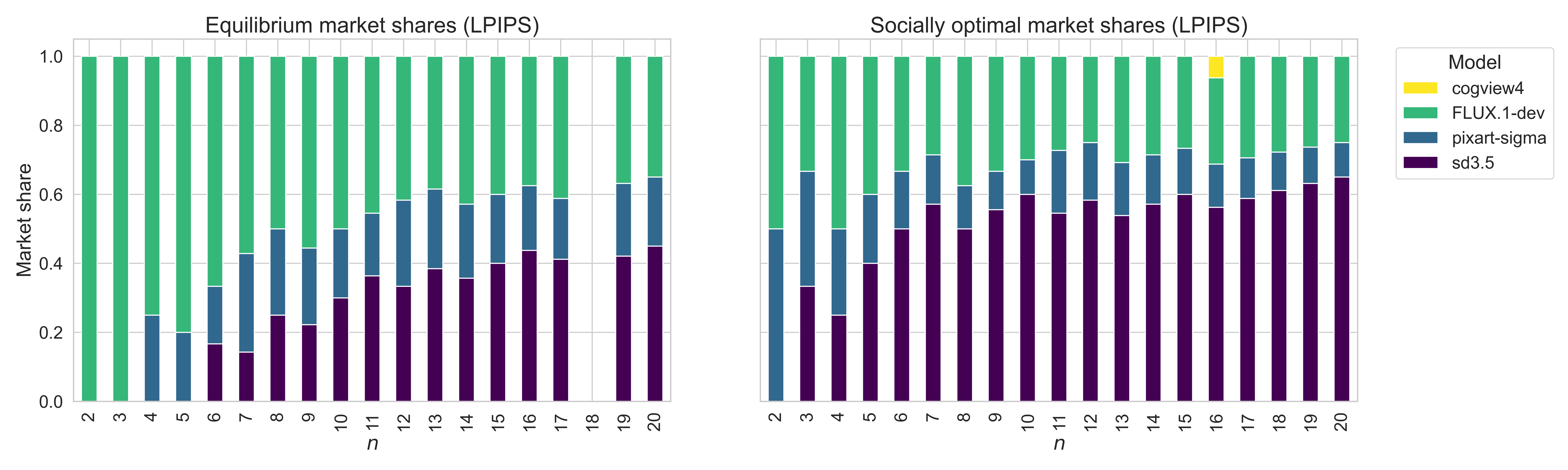}
  \caption{Market share for equilibrium and optimal strategies, using \lpips\ to
  measure similarity.}
  \label{fig:icon-ms-lpips}
\end{figure}

\begin{figure}[ht]
  \centering
  \includegraphics[width=\textwidth]{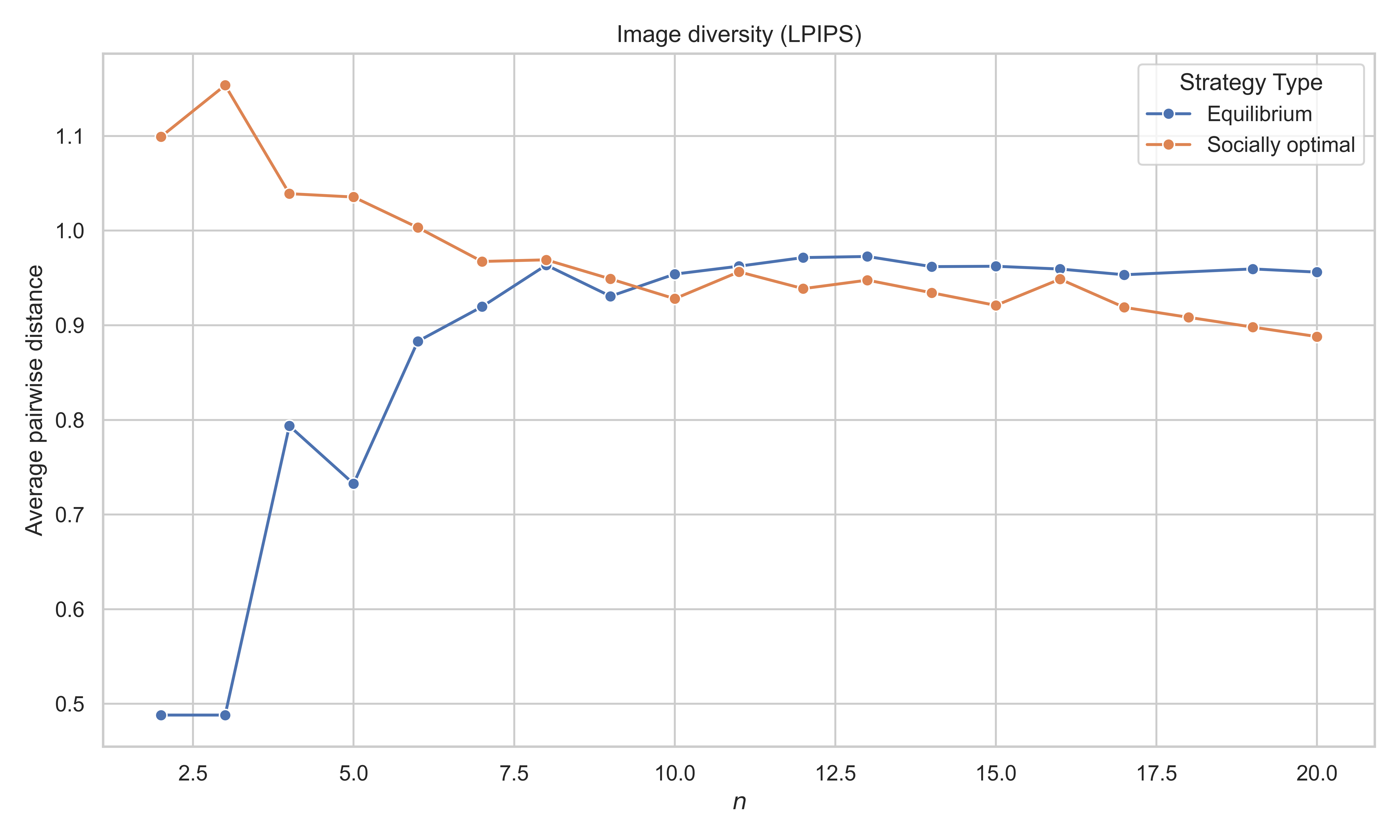}
  \caption{Image diversity with optimal and equilibrium strategies, as measured
  by average pairwise distance, using \lpips\ to measure similarity.}
  \label{fig:icon-diversity-lpips}
\end{figure}

\begin{figure}[ht]
  \centering
  \includegraphics[width=0.8\textwidth]{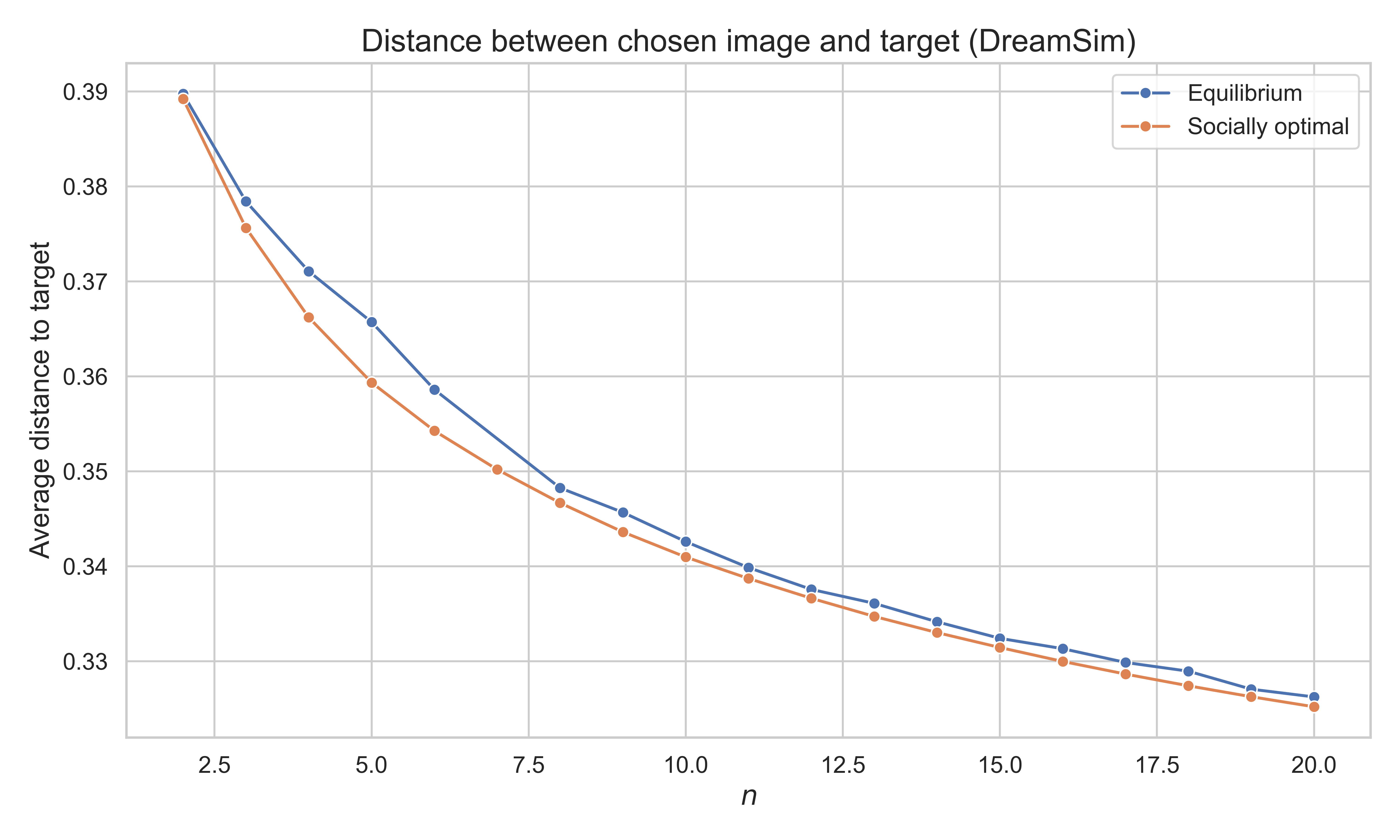}
  \caption{Average distance (\dreamsim) between selected image and target image
  for both equilibrium and optimal strategies.}
  \label{fig:avg-distance-dreamsim}
\end{figure}

\begin{figure}[ht]
  \centering
  \includegraphics[width=0.8\textwidth]{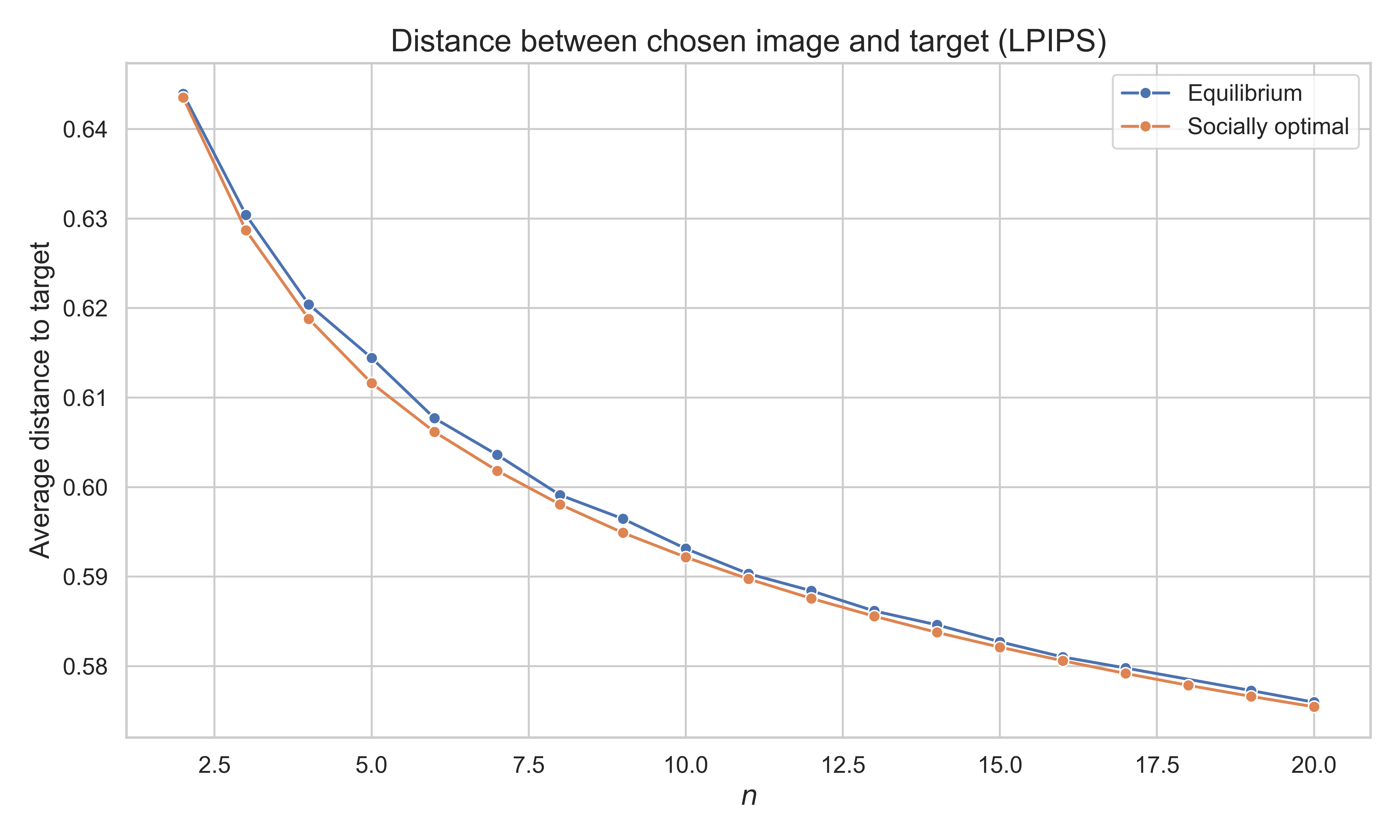}
  \caption{Average distance (\lpips) between selected image and target image
  for both equilibrium and optimal strategies.}
  \label{fig:avg-distance-lpips}
\end{figure}

\begin{table}
\centering
\begin{tabular}{lc}
\toprule
Model & Average Distance \\
Model &  \\
\midrule
cogview4 & 0.684 \\
flux1.dev & \textbf{0.672} \\
pixart-sigma & 0.678 \\
sd3.5 & 0.693 \\
\bottomrule
\end{tabular}
\caption{Average distance to target image (\lpips)}
\label{tab:model_performance_lpips}
\end{table}

\end{document}